\theoremstyle{plain}
\newtheorem{prop}[theorem]{Proposition}
\newtheorem{conj}[theorem]{Conjecture}
\newenvironment{defn}{\vspace*{1ex}\begin{definition}}{\end{definition}\vspace*{1ex}}
\setlist{  
  listparindent=\parindent,
}
\newcommand\Labs{\mathbf{Labs}}
\newcommand\Lab{\psi}
\newcommand\LAB{\Psi}
\newcommand\mult{\mathbf{mul}}
\newcommand\LTE{\Gamma}
\newcommand\rb{\raisebox}
\newcommand\pdt{\models}
\newcommand\mcup{\cup_{\mathtt{mul}}} %
\newcommand\toLty[1]{{#1}^!}
\newcommand\markers{\textit{Markers}}
\newcommand\lty{\delta}
\newcommand\lrty{\beta}
\newcommand\pity{\xi}
\newcommand\prty{\rho}
\newcommand\PTE{\Theta}
\newcommand\subPTE{\leq}
\newcommand\subptyk[1]{\leq_{#1}}
\newcommand\PTEcup{+}
\newcommand\PTEsigma{\Sigma}
\newcommand\term{t}
\newcommand\restrict[1]{\mathbin{\upharpoonright_{#1}}}
\newcommand\Markers{\textit{Markers}}
\newcommand\PMD{\PM^{\mathtt{dir}}}
\newcommand\PMm[1]{\p_{#1}}
\newcommand\eorder{\mathtt{eorder}}
\newcommand\Comp{\mathit{Comp}}
\newcommand\directedT[2]{#2^{\langle #1\rangle}}
\newcommand\plen[1]{|#1|_\mathrm{p}}
\newcommand\CanoTerms[2]{\mathbf{CTerms}_{#1,#2}}
\newcommand\contraction{homeomorphic embedding}
\newcommand\eqbetaeta{=_{\beta\eta}}
\newcommand\eval[1]{\tree(#1)}
\newcommand\tlt{\prec} %
\newcommand\tleq{\preceq} %
\newcommand\tle{\tleq} %
\newcommand\stlambda{\lambda^{\to}}
\newcommand\balanced{\mathbf{bal}}
\newcommand\Wlang{\mathcal{L}_{\mathtt{w}}}
\newcommand\uty{\delta}
\newcommand\rname[1]{(\rn{#1})}
\newcommand\ttytosty[1]{\mathbin{[\![}#1\mathbin{]\!]}}
\newcommand\utytosty[1]{\mathbin{[\![}#1\mathbin{]\!]}}
\newcommand\remeps[1]{#1{\uparrow_{\Te}}}
\newcommand\TTE{\Xi}
\newcommand\tty{\xi}
\newcommand\Tempty{\T_{\epsilon}}
\newcommand\Tplus{\T_{+}}
\newcommand\tropt[1]{}
\newcommand\TREE[1]{\mathbf{Tree}_{#1}}
\newcommand\pty{\gamma}
\newcommand\emphd[1]{\emph{#1}}
\newcommand\pK{\p_{\mathtt{ST}}}
\newcommand\SE{\mathcal{K}}
\newcommand\Lang{\mathcal{L}}
\newcommand\LLang{\mathcal{L}_{\mathtt{leaf}}}
\newcommand\leaves{\textbf{leaves}}
\newcommand\NONTERMS{\mathcal{N}}
\newcommand\tr{\Rightarrow}
\newcommand\order{\mathtt{order}}
\newcommand\app[1]{{#1}\,}
\newcommand\Hole{[\,]}
\newcommand\size[1]{|#1|}
\newcommand\FV{\mathbf{FV}}
\newcommand\nk[1]{{\footnotesize \color{blue}{[#1 -nk]}}}
\newcommand\T{\mathtt{o}}
\newcommand\DCOL{\mathbin{::}}
\newcommand\TE{\Gamma}
\newcommand\sty{\kappa}
\newcommand\ity{\sigma}
\newcommand\COL{\mathbin{:}}
\newcommand\p{\vdash}
\newcommand\PM{\p_n}
\newcommand\pLin{\p_{\mathtt{lin}}}
\newcommand\tree{\mathcal{T}}
\newcommand\GRAM{\mathcal{G}}
\newcommand\TERMS{\Sigma}
\newcommand\Ta{\mathtt{a}}
\newcommand\Tb{\mathtt{b}}
\newcommand\Tc{\mathtt{c}}
\newcommand\Te{\mathtt{e}}
\newcommand\Teps{\Te}
\newcommand\TT[1]{\mathtt{#1}}
\newcommand\RULES{\mathcal{R}}
\newcommand\red{\longrightarrow}
\newcommand\reds{\longrightarrow^*}
\newcommand\redwith[1]{\longrightarrow_{#1}}
\newcommand\redswith[1]{\longrightarrow^*_{#1}}
\newcommand\arity{\mathtt{ar}}
\newcommand\expn[2]{\mathbf{exp}_{#1}(#2)}
\newcommand\set[1]{\{#1\}}
\newcommand\Hra{\rightarrow}
\newcommand\ra{\rightarrow}
\newcommand\dom{\mathit{dom}}
\newcommand\seq[1]{\widetilde{#1}}
\newcommand\IFF{\Leftrightarrow}
\newcommand\InfruleS[3]{\begin{minipage}{#1\textwidth}\infrule{#2}{#3}\end{minipage}}
\newcommand\InfruleSR[4]{\begin{minipage}{#1\textwidth}\infrule[#2]{#3}{#4}\end{minipage}}
\newcommand\hascost{\triangleright}
\newcommand\UE{\Gamma}
\newcommand\myemph[1]{\emph{#1}}
\newif\ifa\afalse
\newcommand{\anp}{}
\newcommand{\urr}[2]{#1 \DCOL_{\mathrm{u}} #2}
\newcommand{\brr}[2]{#1 \DCOL_{\mathrm{b}} #2}
\newcommand{\envdom}[1]{\dom(#1)}
\newcommand{\ibr}{\mathbin{*}}
\newcommand{\lambdap}{{\ensuremath{\lambda^{\to,+}}}}
\newcommand{\defe}{\mathrel{:=}}
\newcommand{\rt}{\mathtt{oR}}
\newcommand{\appseq}[2]{{\overrightarrow{#1}}^{#2}}
\newcommand{\ind}[2]{#1 \le #2}
\newcommand{\br}{\TT{br}}
\newcommand{\empword}{\varepsilon}
\newcommand\wtt[1]{#1^{\star}} %
\newcommand{\LLangE}{\LLang^{\empword}}
\newcommand\G{\GRAM}
\newcommand\he{\preceq}
\newcommand\she{\prec}
\newcommand\sle{\inplus} %
\newcommand\pf{\p_{\mathrm{fst}}} %
\newcommand\ps{\p_{\mathrm{snd}}} %
\newcommand\pof{\p_{\mathrm{ofst}}} %
\newcommand\pofd{\p^{\mathrm{det}}_{\mathrm{ofst}}} %
\newcommand\pos{\p_{\mathrm{osnd}}} %
\newcommand\lab{\dagger}
\newcommand\tp[1]{{{#1}_{\mathrm{p}}}} %
\newcommand\cp{\mathbf{path}} %
\newcommand\wl[1]{{{#1}_{\mathit{l}}}} %
\newcommand\rmd{\mathbf{rmdir}} %
\newcommand\word{\mathbf{word}}
\newcommand\dt{\mathscr{D}}
\newcommand\ored[1]{\red_{#1}}
\newcommand\oreds[1]{\reds_{#1}}
\newcommand\cred{\red_{\mathrm{c}}}
\newcommand\creds{\reds_{\mathrm{c}}}
\newcommand\eset{\emptyset}
\newcommand\E[1]{E^{#1}}
\newcommand\dred[1]{\red_{#1}}
\newcommand\lo{\le_{\mathrm{lg}}}
\newcommand\loo{\ge_{\mathrm{lg}}}
\newcommand\slo{<_{\mathrm{lg}}}
\newcommand\ac[2]{\mspace{-0mu}\raisebox{-1pt}{\rm @}_{#1}\mspace{-1mu}(#2)} %
\newcommand\ntc[1]{\mathtt{NT}(#1)} %
\newcommand\asd[1]{{\footnotesize \color[RGB]{105,10,130}[#1 -asd]}}
\renewcommand\nk[1]{}
\renewcommand\asd[1]{}
\title{Pumping Lemma for Higher-order Languages}
\author[1]{Kazuyuki Asada}
\author[1]{Naoki Kobayashi}
\affil[1]{The University of Tokyo}
\authorrunning{K. Asada and N. Kobayashi}
\subjclass{F.4.3 Formal Languages}
\keywords{pumping lemma, higher-order grammars, Kruskal's tree theorem}%
\begin{document}
\maketitle
\setcounter{footnote}{0}
\begin{abstract}
We study a pumping lemma for the word/tree languages generated by
higher-order grammars. Pumping lemmas are known up to order-2 word languages
(i.e., for regular/context-free/indexed languages), and have been used to
show that a given language does not belong to the classes of regular/context-free/indexed languages.
We prove a pumping lemma for word/tree languages of arbitrary orders, modulo a conjecture
that a higher-order version of Kruskal's tree theorem holds. 
We also show that
the conjecture indeed holds for the order-2 case, which yields a pumping lemma for
order-2 tree languages and order-3 word languages. 
\end{abstract}
\section{Introduction}
\label{sec:intro}
We study a pumping lemma for higher-order languages, i.e., the languages generated
by higher-order word/tree grammars where non-terminals can take higher-order
functions as parameters. The classes of higher-order languages~\cite{wand74,Mas74,DammTCS,Engelfriet91,EngelfrietHTT}
form an infinite hierarchy, where the classes of order-0, order-1, and order-2 languages are
those of regular, context-free and indexed languages.
Higher-order grammars and languages have been extensively studied by Damm~\cite{DammTCS} and 
Engelfriet~\cite{Engelfriet91,EngelfrietHTT} and recently re-investigated 
in the context of model checking and program 
verification~\cite{Knapik01TLCA,Ong06LICS,KO09LICS,Salvati11ICALP,Kobayashi13JACM,Salvati15OI,Kobayashi13LICS,Parys16ITRS}.

Pumping lemmas~\cite{uvwxy,Hayashi73} are known up to order-2 word languages, and have been used to show that
a given language does not belong to the classes of regular/context-free/indexed languages.
To our knowledge, however, little is known about languages of order-3 or higher. 
Pumping lemmas~\cite{Parys12STACS,Kobayashi13LICS} are also known 
for higher-order \emph{deterministic} grammars (as generators of infinite \emph{trees}, rather than
tree languages), but they cannot be applied to non-deterministic grammars.

In the present paper, we state and prove a pumping lemma for 
unsafe\footnote{%
See, e.g.,
\cite{Salvati15OI} for the distinction between safe vs unsafe languages;
the class of unsafe languages subsumes that of safe languages.}
languages of arbitrary orders
modulo an assumption 
that a ``higher-order version'' of Kruskal's tree theorem~\cite{10.2307/1993287,KruskalSimpleProof}
holds. Let \(\tle\) be the homeomorphic embedding on finite ranked trees\footnote{I.e.,
\(T_1\tle T_2\) if there exists an injective map from the nodes of \(T_1\) to those of \(T_2\)
that preserves the labels of nodes and the ancestor/descendant-relation of nodes; see Section~\ref{sec:pre}
for the precise definition.}, and \(\tlt\) be the strict version of \(\tle\).
The statement of our pumping lemma\footnote{This should perhaps be called a pumping ``conjecture''
since it relies on the conjecture of the higher-order Kruskal's tree theorem.} 
is that for any order-\(n\) infinite tree language \(L\), 
there exist a constant \(c\) and 
a strictly increasing  infinite sequence of trees \(T_0\tlt T_1 \tlt T_2 \tlt \cdots\) in \(L\)
such that \(|T_i| \leq \expn{n}{ci}\) for every \(i\geq 0\),
where \(\expn{0}{x}=x\) and \(\expn{n+1}{x}=2^{\expn{n}{x}}\). 
Due to the correspondence between word/tree languages~\cite{DammTCS,asada_et_al:LIPIcs:2016:6246},
it also implies that for any order-\(n\) infinite word language \(L\) (where \(n\geq 1\)),
there exist a constant \(c\) and 
a strictly increasing  infinite sequence of words \(w_0\tlt w_1 \tlt w_2 \tlt \cdots\) in \(L\)
such that \(|w_i| \leq \expn{n-1}{ci}\) for every \(i\geq 0\),
where \(\tlt\) is the subsequence relation.
The pumping lemma can be used, for example, to show (modulo the conjecture)
that the order-(\(n+1\)) language \(\set{ a^{\expn{n}{k}} \mid k\geq 0}\)
does not belong to the class of order-\(n\) word languages, for \(n>0\).
Thus the lemma would also provide an alternative proof of the strictness of the hierarchy
of the classes of higher-order languages.\footnote{The strictness of the hierarchy
of higher-order \emph{safe} languages has 
 been shown by Engelfriet~\cite{Engelfriet91} using a complexity argument, 
and Kartzow~\cite{KartzowPC} observed that essentially the same argument is applicable to
obtain the strictness of the hierarchy of unsafe languages as well. Their argument cannot
be used for showing that a particular language does not belong to the class of order-\(n\) languages.}

We now informally explain the assumption of ``higher-order Kruskal's tree theorem''
(see Section~\ref{sec:pre} for details).
Kruskal's tree theorem~\cite{10.2307/1993287,KruskalSimpleProof} states that 
the homeomorphic embedding \(\tle\) is a well-quasi order, i.e., that for any infinite sequence
of trees \(T_0,T_1,T_2,\ldots\), there exist \(i<j\) such that \(T_i\tle T_j\).
The homeomorphic embedding \(\tle\) can be naturally lifted (e.g. via the logical relation)
to a family of relations \((\tle_\sty)_\sty\) on higher-order tree functions of type \(\sty\).
Our conjecture of ``higher-order Kruskal's theorem'' states that, for every simple type
\(\sty\), \(\tle_\sty\) is also a well-quasi
order on the functions expressed by the simply-typed \(\lambda\)-terms. 
We prove that the conjecture indeed holds up to order-2 functions, if we take \(\tle_\sty\) 
as the logical relation induced from the homeomorphic embedding \(\tle\).
Thus, our pumping ``lemma'' is indeed true for order-2 tree languages and order-3 word languages.
To our knowledge, the pumping lemma for those languages is novel.
The conjecture remains open for order-3 or higher, which should be of independent interest.

Our proof of the pumping lemma (modulo the conjecture) uses the recent work of Parys~\cite{Parys16ITRS}
on an intersection type system for deciding the infiniteness of the language generated
by a given higher-order grammar, and our previous work on the relationship between higher-order 
word/tree languages~\cite{asada_et_al:LIPIcs:2016:6246}. 

The rest of this paper is organized as follows.
Section~\ref{sec:pre} prepares several definitions and states our pumping lemma and
the conjecture more formally. 
Section~\ref{sec:Parys} derives some corollaries of Parys' result~\cite{Parys16ITRS}.
Section~\ref{sec:wordleafbody} prepares a simplified and specialized 
version of our previous result~\cite{asada_et_al:LIPIcs:2016:6246}. 
Using the results in Sections~\ref{sec:Parys} and~\ref{sec:wordleafbody}, we prove
our pumping lemma (modulo the conjecture) in Section~\ref{sec:indOrder}. Section~\ref{sec:Kruskal} proves the conjecture on
higher-order Kruskal's tree theorem for the order-2 case, by which we obtain the (unconditional)
pumping lemma for order-2 tree languages and order-3 word languages.
Section~\ref{sec:related} discusses related work and Section~\ref{sec:conc} concludes.

\section{Preliminaries}
\label{sec:pre}

We first give basic definitions needed for explaining our main theorem.
We then state the main theorem and provide an overview of its proof.

\subsection{\(\lambda\)-terms and Higher-order Grammars}
\label{sec:termANDgrammar}

This section gives basic definitions for terms and higher-order grammars.
\begin{defn}[types and terms]
The set of \emphd{simple types}, %
ranged over by \(\sty\), is given by:
\(
\sty %
::= \T \mid \sty_1\ra \sty_2
\).
The order %
 of a simple type \(\sty\), written
\(\order(\sty)\) %
is defined by
\(\order(\T)=0\) and \(\order(\sty_1\ra\sty_2)=\max(\order(\sty_1)+1,\order(\sty_2))\).
The type \(\T\) describes trees, and \(\sty_1\to\sty_2\) describes functions
from \(\sty_1\) to \(\sty_2\).
The set of \emph{\lambdap-terms} (or \emphd{terms}), ranged over by \(s,t,u,v\), is defined by:
\[
t %
::= x %
\mid a\,t_1\,\cdots\, t_k \mid \app{t_1}{t_2}\mid \lambda x\COL\sty.t
\mid t_1 + t_2
\]
Here, \(x\) ranges over variables, and \(a\) over constants (which represent tree constructors).
Variables are also called \emph{non-terminals}, ranged over by \(x,y,z,f,g,A,B\);
and constants are also called \emph{terminals}.
A ranked alphabet \(\TERMS\) is a map from a finite set of terminals to natural numbers called \emph{arities};
we implicitly assume a ranked alphabet whose domain contains all terminals discussed,
unless explicitly described.
\(+\) is non-deterministic choice.
As seen below, our simple type system forces that a terminal must be fully applied;
this does not restrict the expressive power, as \(\lambda x_1,\dots,x_k.a\,x_1\cdots x_k\) is available.
We often omit the type \(\sty\) of \(\lambda x : \sty. t\). %
A term is called an \emphd{applicative term} %
if it does not contain \(\lambda\)-abstractions %
nor \(+\), and
called a \emph{\(\stlambda\)-term} if it does not contain %
\(+\). As usual, we identify terms up to the \(\alpha\)-equivalence,
and implicitly apply \(\alpha\)-conversions.

A (simple) type environment \(\SE\) is a
sequence of type bindings of the form \(x\COL\sty\)
such that if \(\SE\) contains \(x\COL\sty\) and \(x'\COL\sty'\) in different positions
then \(x \neq x'\).
In type environments, {non-terminals} are also treated as variables. 
A term \(t\) has type \(\sty\) under \(\SE\) if \(\SE\pK t:\sty\)
is derivable from the following typing rules.\\[6pt]
\InfruleS{0.33}{}{\SE,\,x\COL\sty,\,\SE' \pK x:\sty}
\InfruleS{0.65}{
\TERMS(a)=k \andalso
\SE \pK t_i :\T \ \ (\text{for each \(i \in\set{1,\dots,k}\)})
}{\SE \pK a\,t_1\cdots t_k:\T}
\\[2pt]
\hspace{-19pt}
\InfruleS{0.38}{\SE\pK t_1:\sty_2\ra\sty\andalso \SE\pK t_2:\sty_2}
  {\SE\pK \app{t_1}{t_2}:\sty}
\InfruleS{0.3}{\SE,\,x\COL\sty_1 \pK t:\sty_2}
  {\SE\pK \lambda x\COL\sty_1.t:\sty_1\ra\sty_2} %
\InfruleS{0.3}{\SE\pK t_1:\T \andalso \SE\pK t_2:\T}
  {\SE\pK t_1 + t_2 :\T}\\[6pt]
We consider below only well-typed terms. 
Note that given \(\SE\) and \(t\), there exists at most one type \(\sty\) such that \(\SE\pK t:\sty\).
We call \(\sty\) the type of \(t\) (with respect to \(\SE\)). We often omit ``with respect to \(\SE\)''
if \(\SE\) is clear from context.
The (internal) \emph{order} of \(t\), written \(\order_\SE(t)\), is the largest order of the types of subterms of \(t\),
and the \emph{external order} of \(t\), written \(\eorder_\SE(t)\), is the order of the type of \(t\)
(both with respect to \(\SE\)). We often omit \(\SE\) when it is clear from context.
For example, for \(t = (\lambda x:\T.x)\Te\), \(\order_\emptyset(t)=1\) and \(\eorder_\emptyset(t)=0\).

We call a term \(t\) \emph{ground} (with respect to \(\SE\)) if \(\SE \pK t : \T\).
We call \(t\) a (finite, \(\TERMS\)-ranked) \emphd{tree} if \(t\) is a closed ground applicative
term consisting of only terminals.
We write \(\TREE{\TERMS}\) for the set of \(\TERMS\)-ranked trees,
and use the meta-variable \(\pi\) for trees.

The set of \emph{contexts}, ranged over by \(C,\,D,\,G,\,H\), is defined by \(C::= \Hole \mid C\,t \mid t\,C \mid \lambda x. C\). 
We write \(C[t]\) for the term obtained from \(C\) by replacing \(\Hole\) with \(t\).
Note that the replacement may capture variables; e.g., \((\lambda
x.\Hole)[x]\) is \(\lambda x.x\). 
We call \(C\) a \emph{\((\SE',\sty')\)-\((\SE,\sty)\)-context} if 
\(\SE \pK C : \sty\) is derived by using axiom \(\SE'\pK \Hole : \sty'\).
We also call a \((\emptyset,\sty')\)-\((\emptyset,\sty)\)-context a \emph{\(\sty'\)-\(\sty\)-context}.
The (internal) \emph{order} of a \((\SE',\sty')\)-\((\SE,\sty)\)-context, is the largest order of
the types occurring in the derivation of \(\SE \pK C : \sty\).
A context is called a \emph{\(\stlambda\)-context} if it does not contain %
\(+\).

We define the \emph{size} \(|t|\) of a term \(t\) by:
\(|x|\defe1\), \(|a\,t_1\,\cdots,t_k|\defe 1+|t_1|+\cdots+|t_k|\), \(|s\,t|\defe|s|+|t|+1\), \(|\lambda x.t| \defe |t|+1\), and \(|s+t|\defe|s|+|t|+1\).
The size %
\(|C|\) of a context \(C\) %
is defined similarly, with \(|\Hole|\defe0\).
\end{defn}

\begin{defn}[reduction and language]
The set of \emph{(call-by-name) evaluation contexts} is defined by:
\[
E ::= \Hole\,t_1\cdots t_k %
\mid a\,\pi_1 \cdots \pi_i\,E\,t_1\cdots t_k
\]
and the \emph{call-by-name reduction} for (possibly open) ground terms is defined by:
\begin{align*}&
E[(\lambda x . t)t'] \red E[[t'/x]t]
\qquad\qquad
E[t_1+t_2] \red E[t_i] \quad (i=1,2)
\end{align*}
where \([t'/x]t\) is the usual capture-avoiding substitution.
We write \(\reds\) for the reflexive transitive closure of \(\red\).
A \emph{call-by-name normal form} is a ground term \(t\) such that \(t \ \not\!\!\red t'\) for any \(t'\).
For a closed ground term \(t\), we define 
the \emphd{tree language \(\Lang(t)\) generated by} \(t\) by
\(\Lang(t) \defe \set{\pi \mid t \reds \pi}\). 
For a closed ground \(\stlambda\)-term \(t\), \(\Lang(t)\) is a singleton set \(\set{\pi}\);
we write \(\tree(t)\) for such \(\pi\) and call it \emph{the tree of \(t\)}.
\end{defn}
Note 
that \(t \reds t'\) implies \([s/x]t \reds [s/x]t'\), and
that the set of call-by-name normal forms equals the set of trees and ground terms of the form \(E[x]\).
For \(x : \sty \pK t : \T\) where \(t\) does not contain the non-deterministic choice, 
\(t\) is called \emph{linear} (with respect to \(x\))
if \(x\) occurs exactly once in the call-by-name normal form of \(t\).
A pair of contexts \(\Hole : \sty \pK C: \T\) and \(\Hole : \sty \pK D: \sty\) is called \emph{%
linear} %
if \(x : \sty \pK C[D^i[x]] : \T\) is linear for any \(i \ge 0\) where
\(x\) is a fresh variable that is not captured by the context applications.

\begin{defn}[higher-order grammar]
A \emphd{higher-order grammar} (or \emph{grammar} for short) is a quadruple \((\TERMS,\NONTERMS,\RULES,S)\), where
\begin{inparaenum}[(i)]
\item \(\TERMS\) is a ranked alphabet;
\item \(\NONTERMS\) is a map from a finite set of non-terminals 
to their types;
\item \(\RULES\) is a finite set of \emphd{rewriting rules} of the form
\( A\,\to \lambda x_1.\cdots \lambda x_\ell.t\),
where \(\NONTERMS(A) = \kappa_1\ra\cdots\ra\kappa_\ell\ra\T\),
\(t\) is an applicative term, and 
\(\NONTERMS,x_1\COL\kappa_1,\ldots,x_\ell\COL\kappa_\ell\pK t:\T\) holds;
\item \(S\) is a non-terminal called the \emphd{start symbol},
and \(\NONTERMS(S)=\T\). 
\end{inparaenum}
The \emphd{order} of a grammar \(\GRAM\)
is the largest order 
of the types of non-terminals.
We sometimes write \(\TERMS_\GRAM,\NONTERMS_\GRAM,\RULES_\GRAM,S_\GRAM\) for the four components
of \(\GRAM\).
We often write 
\(A\,x_1\,\cdots\,x_k\to t\) for the rule \(A\to \lambda x_1.\cdots \lambda x_k.t\).

For a grammar \(\GRAM=(\TERMS,\NONTERMS,\RULES,S)\), 
the rewriting relation \(\redwith{\GRAM}\) is defined by:\\[1ex]
\InfruleS{0.46}{(A\to \lambda x_1.\cdots \lambda x_k.t)\in \RULES}
{A\,t_1\,\cdots\,t_k \redwith{\GRAM} [t_1/x_1,\ldots,t_k/x_k]t}
\InfruleS{0.48}{t_i\redwith{\GRAM}t_i'\andalso i\in\set{1,\ldots,k}\andalso \TERMS(a)=k}
 {a\,t_1\,\cdots\,t_k \redwith{\GRAM}
  a\,t_1\,\cdots\,t_{i-1}\,{t'_{i}}\,t_{i+1}\,\cdots\,t_k
}\\[1ex]
We write \(\redswith{\GRAM}\) for the reflexive transitive closure of \(\redwith{\GRAM}\).
The \emphd{tree language generated by} \(\GRAM\), written \(\Lang(\GRAM)\), is
the set \(\set{\pi %
\mid S\redswith{\GRAM} \pi}\).
\end{defn}

\begin{remark}
An order-\(n\) grammar can also be represented as a ground closed order-\(n\) \(\lambdap\)-term extended with
the Y-combinator such that \(Y_\sty x.t\red [Y_\sty x.t/x]t\). Conversely, any ground closed order-\(n\) \(\lambdap\)-term
(extended with \(Y\)) can be represented as an equivalent order-\(n\) grammar. 
We shall make use of this correspondence in Appendix.
\end{remark}

The grammars defined above may also be viewed as generators of word languages.
\begin{defn}[word alphabet / \(\br\)-alphabet]
We call a ranked alphabet \(\TERMS\) a \emph{word alphabet} if
it has a special nullary terminal \(\TT{e}\) and
all the other terminals have arity \(1\);
also we call a grammar \(\GRAM\) a \emphd{word grammar} if its alphabet is a word alphabet.
For a tree \(\pi = a_1(\cdots (a_n\,\TT{e})\cdots)\) of a word grammar, we define \(\word(\pi) = a_1\cdots a_n\).
The \emphd{word language} generated by a word grammar \(\GRAM\), written \(\Wlang(\GRAM)\), is 
\(\set{\word(\pi) \mid \pi \in\Lang(\GRAM)}\).

The frontier word of a tree \(\pi\), written \(\leaves(\pi)\),
is the sequence of symbols in the leaves of \(\pi\). It is
defined inductively by:
\(\leaves(a) = a\)  when \(\TERMS(a)=0\), and \(\leaves(a\,\pi_1\,\cdots\,\pi_k) =
\leaves(\pi_1)\cdots \leaves(\pi_k)\) when \(\TERMS(a)=k>0\).  
The \emphd{frontier language} generated by \(\GRAM\), written \(\LLang(\GRAM)\),
is the set:
\(\set{\leaves(\pi) \mid S\redswith{\GRAM} \pi
}\).
A \emph{\(\br\)-alphabet} is a ranked alphabet such that it has
a special binary constant \(\br\) and a special nullary constant \(\Te\) and the other constants are nullary.
We consider \(\TT{e}\) as the empty word \(\empword\):
for a grammar with a \(\br\)-alphabet, we also define
\(
\LLangE(\GRAM) \defe
(\LLang(\GRAM) \setminus \set{\Te}) \cup \set{\empword \mid \Te \in \LLang(\GRAM)}
\).
We call a tree \(\pi\) an \emph{\(\Te\)-free \(\br\)-tree} if 
it is a tree of some \(\br\)-alphabet but does not contain \(\Te\).
\end{defn}

We note that the classes of order-0, order-1, and order-2 word languages coincide with those
of regular, context-free, and indexed languages, respectively~\cite{wand74}.

\anp
\subsection{Homeomorphic Embedding and Kruskal's Tree Theorem}
\label{sec:homeoKrus}

In our main theorem, we use the notion of homeomorphic embedding for trees.
\begin{definition}[homeomorphic embedding]
\label{def:homeoEmb}
Let \(\TERMS\) be an arbitrary ranked alphabet.
The \emph{homeomorphic embedding} order \(\he\) between \(\TERMS\)-ranked 
trees\footnote{In the usual definition, a quasi order on labels (tree constructors) is assumed.
Here we fix the quasi-order on labels to the identity relation.}
is inductively defined by the following rules:
\begin{align*}
\frac{
\pi_i \he \pi'_i \quad\text{(for all \(i\le k\))}
}{
a\,\pi_1\cdots \pi_k \he a\,\pi'_1\cdots \pi'_k
}
(k=\TERMS(a))
\qquad
\frac{
\pi \he \pi_i
}{
\pi \he a\,\pi_1\cdots \pi_k
}
(k=\TERMS(a)>0,\ i\in\set{1,\dots, k})
\end{align*}
\end{definition}
For example, \(\br\,\Ta\,\Tb \he \br\,(\br\,\Ta\,\Tc)\,\Tb\).
We extend \(\he\) to words: for \(w=a_1\cdots a_n\) and \(w'=a'_1\cdots a'_{n'}\),
we define \(w \he w'\) if \(a_1(\cdots(a_n(\Te))) \he a'_1(\cdots(a'_{n'}(\Te)))\),
where \(a_i\) and \(a'_i\) are regarded as unary constants and \(\Te\) is a nullary constant
(this order on words is nothing but the (scattered) subsequence relation).
We write \(\pi \she \pi'\) %
if \(\pi \he \pi'\) and \(\pi' \not\he \pi\).

Next we explain a basic property on \(\he\), Kruskal's tree theorem.
A \emph{quasi-order} (also called a \emph{pre-order}) is a reflexive and transitive relation.
A \emph{well quasi-order} %
on a set \(S\) is a quasi-order \(\le\) on \(S\) such that
for any infinite sequence \((s_i)_i\) of elements in \(S\) there exist \(j<k\) such that
\(s_j \le s_k\).
\begin{prop}[Kruskal's tree theorem~\cite{10.2307/1993287}]\label{prop:Kruskal}
For any (finite) ranked alphabet \(\TERMS\),
the homeomorphic embedding \(\he\) on \(\TERMS\)-ranked trees is a well quasi-order.
\end{prop}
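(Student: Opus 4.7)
The plan is the classical \emph{minimal bad sequence} argument due to Nash-Williams, combined with pigeonhole on the finite set of root labels and the standard fact that finite products of well quasi-orders are well quasi-orders.

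Suppose for contradiction that \(\he\) is not a well quasi-order; then there exists an infinite \emph{bad} sequence, i.e., a sequence \((T_i)_{i\ge 0}\) of \(\TERMS\)-ranked trees with \(T_i\not\he T_j\) for all \(i<j\). Using dependent choice, construct a \emph{minimal} bad sequence \((T_i)_i\): pick \(T_0\) of minimum size among trees starting some bad sequence, and, having chosen \(T_0,\dots,T_n\), pick \(T_{n+1}\) of minimum size among trees \(T\) such that \(T_0,\dots,T_n,T\) extends to a bad sequence.

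The heart of the argument is to show that the set \(M\) of immediate subtrees of the \(T_i\)'s is itself well-quasi-ordered by \(\he\). If not, take a bad sequence \((\sigma_n)_n\) in \(M\), and, after passing to a subsequence, assume that the indices \(\iota(\sigma_n)\defe\min\set{i\mid\sigma_n\text{ is an immediate subtree of }T_i}\) are non-decreasing; let \(j\defe\iota(\sigma_0)\). Then \(T_0,\dots,T_{j-1},\sigma_0,\sigma_1,\dots\) is again bad: the only nontrivial case \(T_a\he\sigma_k\) with \(a<j\) would, by the second rule of Definition~\ref{def:homeoEmb}, yield \(T_a\he T_{\iota(\sigma_k)}\) with \(\iota(\sigma_k)\ge j>a\), contradicting badness of \((T_i)_i\). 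Since \(|\sigma_0|<|T_j|\), this contradicts the minimality of \(T_j\).

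Once \(M\) is known to be a well quasi-order, the proof concludes quickly. Since \(\TERMS\) is finite, an infinite subsequence \((T_{i_n})_n\) has a common root label \(a\) of arity \(k\); writing \(T_{i_n}=a\,\pi_{n,1}\cdots\pi_{n,k}\), we obtain an infinite sequence in \(M^k\), which is well-quasi-ordered componentwise (iterating \(k\) times the fact that every infinite sequence in a well quasi-order contains an infinite non-decreasing subsequence). Hence there exist \(m<n\) with \(\pi_{m,l}\he\pi_{n,l}\) for all \(l\), and the first rule of Definition~\ref{def:homeoEmb} gives \(T_{i_m}\he T_{i_n}\), contradicting badness. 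The main obstacle is the minimality argument establishing that \(M\) is a well quasi-order; the finite-product lemma and pigeonhole steps are routine.
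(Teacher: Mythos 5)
Your argument is correct: it is precisely the Nash-Williams minimal bad sequence proof from the reference \cite{KruskalSimpleProof} that the paper cites, and the paper itself states Proposition~\ref{prop:Kruskal} as a known result without giving a proof. All the delicate points (badness of the spliced sequence $T_0,\dots,T_{j-1},\sigma_0,\sigma_1,\dots$ via the second rule of Definition~\ref{def:homeoEmb}, the size contradiction with the minimal choice of $T_j$, and the reduction to $M^k$ via pigeonhole on the finitely many root labels) are handled correctly.
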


\anp
\subsection{Conjecture and Pumping Lemma for Higher-order Grammars}

As explained in Section~\ref{sec:intro}, our pumping lemma makes use of a conjecture on
``higher-order'' Kruskal's tree theorem, which is stated below.
\begin{conj}
\label{conj:abstHighKruskal}
There exists a family \((\he_\sty)_\sty\) of relations indexed by simple types such that
\begin{itemize}
\item \(\he_{\sty}\) is a well quasi-order on the set of closed \(\stlambda\)-terms of type \(\sty\) modulo
      \(\beta\eta\)-equivalence; i.e., for an infinite sequence \(t_1,t_2,\dots\) of closed
      \(\stlambda\)-terms of type \(\sty\), there exist \(i<j\) such that \(t_i \he_\sty t_j\).
\item \(\he_{\T}\) is a conservative extension of \(\he\), i.e.,
\(t\he_{\T}t'\) if and only if \(\tree(t)\he \tree(t')\).
\item \((\he_\sty)_\sty\) is closed under applications, i.e.,
      if \(t \he_{\sty_1 \to \sty_2} t'\) and \(s \he_{\sty_1} s'\) then \(t\,s \he_{\sty_2} t'\,s'\).
\end{itemize}
\end{conj}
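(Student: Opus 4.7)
The plan is to take the family $(\he_\sty)_\sty$ to be the logical relation generated by $\he$. Concretely: at base type, set $t \he_\T t'$ iff $\tree(t) \he \tree(t')$; at arrow type, declare $t \he_{\sty_1 \to \sty_2} t'$ iff $t\,s \he_{\sty_2} t'\,s'$ holds whenever $s \he_{\sty_1} s'$. With this definition, item~(ii) of the conjecture holds by construction and item~(iii) is just the fundamental lemma for a logical relation (indeed, it is built into the definition). The technical content is therefore concentrated entirely in item~(i), the well-quasi-order property at each type $\sty$.

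To address item~(i), I would proceed by induction on the type $\sty$. The base case $\sty = \T$ is a direct consequence of Kruskal's tree theorem (Proposition~\ref{prop:Kruskal}) applied to the sequence $\tree(t_1), \tree(t_2), \ldots$. For the inductive case $\sty = \sty_1 \to \sty_2$, given an infinite sequence $t_1, t_2, \ldots$ of closed $\stlambda$-terms of type $\sty_1 \to \sty_2$, we must find $i<j$ with $t_i\,s \he_{\sty_2} t_j\,s'$ for every pair $s \he_{\sty_1} s'$. The natural attempt is to pass to $\beta\eta$-long normal forms, decompose each $t_k$ as $\lambda x.u_k$ for a fresh $x:\sty_1$, and then run a minimal-bad-sequence (Nash--Williams) argument on the outer tree skeleton of $u_k$, combining the induction hypothesis at $\sty_2$ with a Kruskal-style combinatorial argument on the skeleton and on the subterms placed at each of its nodes.

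The main obstacle, and the reason the statement is stated only as a conjecture, lies precisely in this inductive step. Unlike Higman's or Kruskal's theorem, where one builds words or trees over a finitely quasi-ordered alphabet of labels, inside a normal form the ``labels'' can be higher-order variables applied to arbitrarily complex argument terms; those arguments must themselves be related via the logical relation, at types that are not strictly smaller than $\sty$ but at best one order lower. A naive induction therefore stalls: at a variable node $x\,v_1\cdots v_\ell$ inside $u_k$, the subterms $v_i$ can carry function type, so the induction hypothesis does not apply to them directly as a WQO on $\stlambda$-terms. At order~$2$ (which Section~\ref{sec:Kruskal} handles unconditionally) every such $v_i$ has base type $\T$, which brings one back to ordinary Kruskal's theorem via a suitable encoding of normal forms as ranked trees. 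Beyond order~$2$ the argument positions can themselves be higher-order, and a genuinely new combinatorial ingredient---some higher-order analogue of Higman's lemma phrased for simply typed terms modulo $\beta\eta$---appears to be required. I would expect any eventual proof to proceed by a carefully chosen well-founded measure on types combined with a minimal-bad-sequence argument taken at the right level of the term structure.
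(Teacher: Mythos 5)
Your proposal matches the paper's own treatment of this statement: the paper likewise takes \((\he_\sty)_\sty\) to be the logical relation induced from \(\he\), observes that conservativity and closure under application hold essentially by construction, and leaves the well-quasi-order property as the genuinely open content of the conjecture, discharging it only up to order 2 (Theorem~\ref{thm:order2-kruskal}) by exactly the reduction you indicate --- at order 2 every argument position in a long normal form is ground, so Lemma~\ref{lem:representative} reduces \(\he_\sty\) to finitely many instances of \(\he_\T\) over canonical argument terms and ordinary Kruskal's theorem applies. One minor correction: closure under application is literally the definition of the logical relation at arrow type, whereas the fundamental (abstraction) lemma is what the paper invokes to establish reflexivity of \(\he_\sty\), which the well-quasi-order claim also presupposes.
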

A candidate of \((\he_\sty)_\sty\) would be the logical relation induced from \(\he\). 
Indeed, if we choose the logical relation as \((\he_\sty)_\sty\),
the above conjecture holds up to order-2 (see Theorem~\ref{thm:order2-kruskal} in Section~\ref{sec:Kruskal}).

Actually, for our pumping lemma, 
the following, slightly weaker property called the \emph{periodicity} is sufficient.
\begin{conj}[Periodicity]
\label{conj:abstHighPeriodic}
There exists a family \((\he_\sty)_\sty\) indexed by simple types such that
\begin{itemize}
\item \(\he_{\sty}\) is a quasi-order on the set of closed \(\stlambda\)-terms of type \(\sty\) modulo \(\beta\eta\)-equivalence.
\item for any %
\(\pK t: \sty \to \sty\) and \(\pK s: \sty\), 
there exist \(i,j>0\) such that 
\[
t^i\,s \he_\sty t^{i+j}\,s \he_\sty t^{i+2j}\,s \he_\sty \cdots.
\]
\item \(\he_{\T}\) is a conservative extension of \(\he\). %
\item \((\he_\sty)_\sty\) is closed under applications.
\end{itemize}
\end{conj}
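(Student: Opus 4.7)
The plan is to take $(\he_\sty)_\sty$ to be the logical relation induced from the tree homeomorphic embedding $\he$, and to derive the periodicity clause by combining the existence of a single embeddable pair in the orbit with the closure-under-applications axiom to propagate that embedding along the whole orbit.

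Concretely, I would define $\he_\sty$ by induction on $\sty$: set $t \he_\T t'$ iff $\tree(t)\he\tree(t')$, and at arrow type, $t \he_{\sty_1 \to \sty_2} t'$ iff $t\,u \he_{\sty_2} t'\,u'$ whenever $u \he_{\sty_1} u'$. Under this definition the closure-under-applications clause and the conservative extension of $\he$ at $\T$ hold by construction. Reflexivity of $\he_\sty$---in particular the instance $t \he_{\sty \to \sty} t$ that drives the propagation below---amounts to the standard ``basic lemma'' of logical relations, which I would prove by induction on $\stlambda$-terms, strengthened to open terms under related closing substitutions. Transitivity at each type reduces by induction on $\sty$ (using reflexivity at $\sty_1$ to test the right-hand pair at a diagonal) to transitivity at $\T$, which is inherited directly from $\he$.

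For the periodicity clause itself, fix closed $\p t : \sty \to \sty$ and $\p s : \sty$ and consider the orbit $(t^i\,s)_{i \ge 0}$. The key step is to produce indices $a < b$ with $t^a\,s \he_\sty t^b\,s$. Once this is in hand, the reflexivity $t \he_{\sty \to \sty} t$ together with $k$ applications of closure under applications yields $t^{a+k}\,s \he_\sty t^{b+k}\,s$ for every $k \ge 0$; chaining these embeddings at $k = 0,\, b-a,\, 2(b-a),\, \ldots$ produces the required periodic sequence with $i := a$ and $j := b - a$.

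The hard part is precisely the production of $a < b$ with $t^a\,s \he_\sty t^b\,s$, i.e., a wqo-style statement restricted to orbit sequences at type $\sty$. At $\sty = \T$ this follows immediately from Proposition~\ref{prop:Kruskal} applied to the tree sequence $(\tree(t^i\,s))_i$, so the base case is unconditional. At higher-order $\sty$, however, unfolding the logical-relation definition reintroduces an arbitrary quantification over related test arguments, and I do not see how to certify such a pairwise embedding in the orbit without going through the full wqo statement of Conjecture~\ref{conj:abstHighKruskal}; this matches the paper's strategy of establishing the logical-relation wqo up to order~$2$ in Section~\ref{sec:Kruskal} and leaving higher orders open.
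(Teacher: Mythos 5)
Your proposal matches the paper's own treatment: the paper likewise takes $(\he_\sty)_\sty$ to be the logical relation induced from $\he$, establishes reflexivity, transitivity and closure under applications via the abstraction lemma (Lemmas~\ref{lem:Gtle-reflexivity} and~\ref{lem:tle-is-transitive}, Corollary~\ref{lem:monotonicity}), and derives periodicity from a single pair $t^i\,s \he_\sty t^{i+j}\,s$ by monotonicity of $u \mapsto t^j\,u$, exactly as in the remark following Conjecture~\ref{conj:abstHighPeriodic}. Your identification of the residual obstacle --- that producing such a pair at higher types requires the full wqo statement of Conjecture~\ref{conj:abstHighKruskal}, discharged in the paper only up to order~2 via Lemma~\ref{lem:representative} and Theorem~\ref{thm:order2-kruskal} --- is also exactly where the paper leaves the matter open.
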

Note that Conjecture~\ref{conj:abstHighKruskal} implies
Conjecture~\ref{conj:abstHighPeriodic}, since if the former holds,
for the infinite sequence \((t^i\,s)_i\), there exist \(i<i+j\) such that
\(t^i\,s \he_\sty t^{i+j}\,s\), and then by the monotonicity of \(u \mapsto t^{j}\,u\), we have \(t^{i+kj}\,s \he_\sty
t^{i+(k+1)j}\,s\) for any \(k\geq 0\).

We can now state our pumping lemma.
\begin{theorem}[pumping lemma]
\label{thm:pumping}
Assume that Conjecture~\ref{conj:abstHighPeriodic} holds.
Then, for any order-\(n\) tree grammar \(\G\) such that \(\Lang(\G)\) is infinite,
there exist an infinite sequence of trees \(\pi_0,\pi_1,\pi_2,\ldots\in \Lang(\G)\),
and constants \(c\), \(d\) such that:
\begin{inparaenum}[(i)]
\item
\label{item:mainthmstrict}
\(\pi_0 \she \pi_1 \she \pi_2 \she \cdots \), and %
\item
\label{item:mainthmsize}
\(\size{\pi_i} \le \expn{n}{ci+d}\) for each \(i\ge 0\).
\end{inparaenum}
Furthermore, we can drop the assumption on Conjecture~\ref{conj:abstHighPeriodic}
when \(\G\) is of order up to 2.
\end{theorem}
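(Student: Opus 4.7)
The plan is to combine Parys' intersection-type characterization of infiniteness with the periodicity conjecture and the standard iterated-exponential bound on normalization of simply-typed terms of bounded order. First, from \(\G\) with \(\Lang(\G)\) infinite, I would extract via the corollaries in Section~\ref{sec:Parys} (together with the word/tree correspondence of Section~\ref{sec:wordleafbody}) a \emph{productive pump}: a simple type \(\sty\) of order at most \(n\), a closed \(\stlambda\)-term \(F\) of type \(\sty\to\sty\), a closed \(\stlambda\)-term \(s\) of type \(\sty\), and a \(\sty\)-\(\T\)-context \(C\), all of order at most \(n\), such that \(\tree(C[F^i\,s]) \in \Lang(\G)\) for every \(i\ge 0\) and the sizes of these trees are unbounded as \(i\) grows. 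The term \(C[F^i\,s]\) has size linear in \(i\), and being a closed ground \(\stlambda\)-term of order at most \(n\) its normal form has size at most \(\expn{n}{c_1 i + d_1}\) for some constants \(c_1,d_1\) by the classical bound on simply-typed normalization, which yields the upper bound in clause~(\ref{item:mainthmsize}).

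Next, apply Conjecture~\ref{conj:abstHighPeriodic} to \(F\) and \(s\) to obtain \(i_0, j > 0\) with \(F^{i_0}\,s \he_\sty F^{i_0+j}\,s \he_\sty F^{i_0+2j}\,s \he_\sty \cdots\). Treating \(C\) as the \(\stlambda\)-term \(\lambda x\COL\sty.\,C[x]\) of type \(\sty\to\T\) and invoking closure under applications gives \(C[F^{i_0}\,s] \he_\T C[F^{i_0+j}\,s] \he_\T \cdots\). Since \(\he_\T\) is a conservative extension of \(\he\), the trees \(\pi'_k \defe \tree(C[F^{i_0+kj}\,s])\) satisfy \(\pi'_0 \he \pi'_1 \he \pi'_2 \he \cdots\). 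The unbounded growth of sizes forces infinitely many \(\pi'_k\) to be distinct, and since \(\he\) is antisymmetric on finite trees the \(\he\)-chain can be thinned to a strictly \(\she\)-increasing infinite subsequence \((\pi_\ell)_\ell = (\pi'_{k_\ell})_\ell\) with \(k_\ell\) linear in \(\ell\); absorbing \(j\) and the thinning factor into the exponent constants yields \(|\pi_\ell| \le \expn{n}{c\ell + d}\) as required.

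For the unconditional order-2 case, when \(\G\) has order at most \(2\) the type \(\sty\) produced by the pump extraction can also be chosen of order at most \(2\), so Theorem~\ref{thm:order2-kruskal}---which establishes Conjecture~\ref{conj:abstHighKruskal}, and hence Conjecture~\ref{conj:abstHighPeriodic}, for the relevant types---discharges the assumption.

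The hardest step is the first: distilling, from a non-deterministic higher-order grammar that Parys' type system merely certifies to be infinite, a concrete \(\stlambda\)-term pump \((F,s,C)\) whose iteration is both genuinely realized by the grammar's rewriting and tightly order-bounded. Non-determinism must be resolved into a deterministic skeleton so that \(\tree(\cdot)\) is well-defined, the same type \(\sty\) must witness the loop uniformly across all iterations, and the produced sequence must still contain arbitrarily large trees (so that the homeomorphically increasing chain given by the periodicity conjecture can be upgraded to a strictly increasing one). The simplifications of Section~\ref{sec:wordleafbody} appear to be what reduces this extraction to a manageable fragment while preserving the unbounded-size growth; verifying this carefully will be the technical heart of the argument.
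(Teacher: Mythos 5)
Your overall frame matches the paper's: extract an iterable pump from Parys' type system, apply Conjecture~\ref{conj:abstHighPeriodic} to get a $\he$-chain, bound sizes by the standard $\expn{n}{\cdot}$ normalization bound, and discharge the conjecture at order $2$ via Theorem~\ref{thm:order2-kruskal}. But there is a genuine gap at the step where you upgrade the $\he$-chain to a $\she$-chain: you assert that the chain ``can be thinned to a strictly $\she$-increasing infinite subsequence $(\pi'_{k_\ell})_\ell$ with $k_\ell$ linear in $\ell$.'' Antisymmetry of $\he$ on finite trees plus infiniteness of $\set{\pi'_k}$ does give \emph{some} strictly increasing subsequence, but nothing in your argument controls where the strict increases occur. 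Nothing rules out, say, blocks of equal trees of superlinearly growing length (equality of $\tree(C[F^{k}s])$ and $\tree(C[F^{k+1}s])$ does not propagate forward, since $C[\cdot]$ is not injective, and the periodicity conjecture yields no strictness whatsoever). If the indices $k_\ell$ grow superlinearly, clause~(\ref{item:mainthmsize}) degrades (e.g.\ to $\expn{n+1}{c\ell+d}$) and the theorem as stated is not obtained. The paper is explicit that showing the strict inequality holds \emph{periodically} --- with a fixed gap $k$ --- is ``the main remaining difficulty,'' and the entire induction on order in Lemma~\ref{lem:orderInd} (direction annotations to isolate a path of unbounded length via Lemma~\ref{lem:parys5}, the linearity-preserving word-to-frontier transformation of Lemma~\ref{lem:iteLinPres} to drop the order, Lemma~\ref{lem:strictlemma} to transport strictness from frontiers back to trees, and the lcm-alignment of the period from the induction hypothesis with the period from the conjecture) exists precisely to supply this. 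Your proposal omits that machinery and does not offer a substitute.

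A secondary point: you locate ``the technical heart'' in resolving non-determinism and extracting the pump. That extraction (Lemma~\ref{lem:parys3}, proved in the appendix via a linear-type refinement of Parys' system) is indeed substantial, but it is not where your argument breaks; the break is in converting ``infinitely many distinct trees in a $\he$-chain'' into ``strictly increasing with periodic spacing.'' This is also why the paper's induction hypothesis is phrased with $\she$ rather than with size inequalities or mere distinctness --- as the remark following Lemma~\ref{lem:orderInd} explains, the weaker formulations do not survive the passage through $\leaves$, $\rmd$, and the direction-annotated intermediate grammars.
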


By the correspondence between order-\(n\) tree grammars and order-\((n+1)\) grammars~\cite{DammTCS,asada_et_al:LIPIcs:2016:6246}, we also have:
\begin{corollary}[pumping lemma for word languages]
\label{thm:pumping-word}
Assume that Conjecture~\ref{conj:abstHighPeriodic} holds.
Then, for any order-\(n\) word grammar \(\G\) (where \(n\geq 1\)) such that \(\Wlang(\G)\) is infinite,
there exist an infinite sequence of words \(w_0,w_1,w_2,\ldots\in \Wlang(\G)\),
and constants \(c\), \(d\) such that:
\begin{inparaenum}[(i)]
\item
\(w_0 \she w_1 \she w_2 \she \cdots \), and %
\item
\(\size{w_i} \le \expn{n-1}{ci+d}\) for each \(i\ge 0\).
\end{inparaenum}
Furthermore, we can drop the assumption on Conjecture~\ref{conj:abstHighPeriodic}
when \(\G\) is of order up to 3.
\end{corollary}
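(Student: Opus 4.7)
The plan is to reduce the word pumping lemma to the tree pumping lemma (Theorem~\ref{thm:pumping}) via the word/tree correspondence already invoked in the paper. By the correspondence of~\cite{DammTCS,asada_et_al:LIPIcs:2016:6246}, the order-$n$ word grammar $\G$ (with $n \ge 1$) yields an order-$(n-1)$ tree grammar $\G'$ over a $\br$-alphabet with $\LLangE(\G') = \Wlang(\G)$; since a finite tree language has only finitely many frontier words, the infinity of $\Wlang(\G)$ transfers to that of $\Lang(\G')$. I would then apply Theorem~\ref{thm:pumping} to $\G'$ at order $n-1$, obtaining a strictly increasing tree sequence $\pi_0 \she \pi_1 \she \pi_2 \she \cdots$ in $\Lang(\G')$ and constants $c, d$ with $|\pi_i| \le \expn{n-1}{ci+d}$. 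For $n \le 3$, $\G'$ has order at most $2$, so this application is unconditional, matching the second clause of the corollary.

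Next, I would transport the tree sequence to a word sequence. Set $w_i \defe \leaves(\pi_i)$ with $\Te$ treated as the empty word, so that $w_i \in \LLangE(\G') = \Wlang(\G)$ and $|w_i| \le |\pi_i| \le \expn{n-1}{ci+d}$. A short induction on Definition~\ref{def:homeoEmb} shows that $\he$ on trees preserves the left-to-right order of children, so any witness of $\pi_i \he \pi_{i+1}$ sends leaves to leaves (with matching labels) in order; hence $w_i \he w_{i+1}$ in the scattered subsequence order, which is precisely the restriction of $\he$ to words.

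The remaining step, and the main obstacle, is to upgrade this non-strict word embedding to a strict one: the strictness $\pi_i \she \pi_{i+1}$ may be witnessed purely by extra internal $\br$-nodes, leaving $w_i = w_{i+1}$. I plan to resolve this by inspecting the proof of Theorem~\ref{thm:pumping}, where the tree chain arises from iterating some higher-order pumping context; since $\LLangE(\G') = \Wlang(\G)$ is infinite, one can choose this context so that its iteration also extends the frontier, because otherwise $\Lang(\G')$ would grow only in its internal $\br$-skeleton and $\Wlang(\G)$ would be forced to be finite. Under this choice the frontiers $w_i$ are pairwise distinct, and since $\he$ is antisymmetric on words, the resulting sequence is automatically strictly $\she$-increasing, completing the proof. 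The non-trivial content of the reduction is entirely concentrated in this last guarantee of frontier growth; all other steps are routine bookkeeping over the correspondence and the definition of $\he$.
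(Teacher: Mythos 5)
Your overall route is the paper's: translate the order-$n$ word grammar to an order-$(n-1)$ tree grammar $\G'$ over a $\br$-alphabet with $\Wlang(\G)=\LLangE(\G')$, apply Theorem~\ref{thm:pumping} to $\G'$, and read off frontiers. You also correctly isolate the one non-trivial point, namely that strictness of $\pi_i \she \pi_{i+1}$ could a priori be lost when passing to frontier words. The gap is in how you resolve it. Your argument --- ``choose the pumping context so that iteration extends the frontier, because otherwise $\Lang(\G')$ would grow only in its internal $\br$-skeleton and $\Wlang(\G)$ would be forced to be finite'' --- is not valid: the pumping family $\set{C[D^i[t]]}$ delivered by Theorem~\ref{thm:pumping} is one particular subfamily of $\Lang(\G')$, and its failure to extend frontiers says nothing about the rest of the language, so the infinitude of $\Wlang(\G)$ cannot be contradicted this way. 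Making ``one can choose this context'' precise would require re-opening the entire induction behind Theorem~\ref{thm:pumping}, which defeats the point of a corollary.

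The correct resolution is structural and is already built into the paper's machinery. The second transformation underlying the correspondence (Section~\ref{sec:SndTrans}, Theorem~\ref{th:tr2-correctness}) guarantees that every tree generated by $\G'$ is either $\Te$ or an $\Te$-free $\br$-tree; note that $\LLangE$ is only even well-behaved under this guarantee, since it does not erase $\Te$'s occurring inside longer frontiers. In a strictly increasing chain no $\pi_i$ can be $\Te$ (as $\Te \she \pi$ forces $\pi$ to contain an $\Te$-labelled leaf), so all $\pi_i$ are $\Te$-free $\br$-trees, and Lemma~\ref{lem:strictlemma} gives $\leaves(\pi_i) \she \leaves(\pi_{i+1})$ directly. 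The underlying reason is a counting one: an $\Te$-free $\br$-tree with $k$ leaves has exactly $2k-1$ nodes, so strict growth of the tree forces strict growth of the frontier --- the scenario you worry about (growth purely in the internal $\br$-skeleton) is impossible for such trees, and the genuinely dangerous scenario (pumping $\Te$-leaves, as in the remark after Lemma~\ref{lem:strictlemma}) is excluded by the transformation rather than by any choice of pumping context. With that in place, your remaining steps (infinitude transfer, the $\he$-preservation of frontiers, the size bound $|w_i|\le|\pi_i|$, and the unconditional case $n\le 3$) are fine.
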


We sketch the overall structure of the proof of Theorem~\ref{thm:pumping} below.
Let \(\GRAM\) be an order-\(n\) tree grammar.
By using the recent type system of Parys~\cite{Parys16ITRS}, if \(\Lang(\GRAM)\) is infinite,
we can construct order-\(n\) linear \(\stlambda\)-contexts \(C,D\) and an order-\(n\) \(\stlambda\)-term \(t\) such that
\(\set{\tree(C[D^i[t]])\mid i\ge 0}\) (\(\subseteq \Lang(\GRAM)\)) is infinite.
It then suffices to show that there exist constants \(p\) and \(q\) such that
\(\tree(C[D^{p}[t]]) \she \tree(C[D^{p+q}[t]]) \she \tree(C[D^{p+2q}[t]]) \she  \cdots\). The bound 
\(\tree(C[D^{p+iq}])\leq \expn{n}{c+id}\) would then follow
immediately from the standard result on an upper-bound on the size of \(\beta\)-normal forms.
Actually, assuming Conjecture~\ref{conj:abstHighPeriodic}, we can easily deduce 
\(\tree(C[D^{p}[t]]) \he \tree(C[D^{p+q}[t]]) \he \tree(C[D^{p+2q}[t]]) \he  \cdots\).
Thus, the main remaining difficulty is to show that the ``strict'' inequality
holds periodically. To this end, we prove it by induction
on the order, by making use of three ingredients: an extension of the result of
Parys' type system (again)~\cite{Parys16ITRS},
an extension 
of our previous work on a translation from word languages to tree languages~\cite{asada_et_al:LIPIcs:2016:6246},
and Conjecture~\ref{conj:abstHighPeriodic}. 
In Sections~\ref{sec:Parys} and \ref{sec:wordleafbody}, we derive corollaries from the results of
Parys' and our previous work respectively. We then provide the proof of Theorem~\ref{thm:pumping}
(except the statement ``Furthermore, ...'') in Section~\ref{sec:indOrder}. We then, in Section~\ref{sec:Kruskal},
 discharge the assumption on
Conjecture~\ref{conj:abstHighPeriodic} for order up to 2, by proving 
Conjecture~\ref{conj:abstHighKruskal}  for order up to 2.

\anp
\section{Corollaries of Parys' Results}
\label{sec:Parys}

\newcommand{\FF}{t}   %
\newcommand{\CC}{C_0} %
\newcommand{\DD}{C}   %
\newcommand{\EE}{D}   %
\newcommand{\TR}{\pi} %

Parys~\cite{Parys16ITRS} developed an intersection type system with
judgments of the form \(\TE \p s: \tau \hascost c\), where 
\(s\) is a term of a simply-typed, infinitary \(\lambda\)-calculus (that corresponds to the \(\lambda\)Y-calculus)
extended with choice, and \(c\) is a natural number.
He proved that for any order-\(n\) closed ground term \(s\),
(i) \(\emptyset \p s: \tau\hascost c\) implies that \(s\) can be reduced to a tree \(\TR\)
such that \(c\leq |\TR|\),
and (ii) if \(s\) can be reduced to a tree \(\TR\), then 
\(\emptyset \p s: \tau\hascost c\) holds for some \(c\) such that \(|\TR|\leq \expn{n}{c}\).

Let \(\GRAM\) be an order-\(n\) tree grammar and \(S\) be its start symbol.
By Parys' result,\footnote{See Section~6 of \cite{Parys16ITRS}. Parys considered
a \(\lambda\)-calculus with infinite regular terms, but the result can be easily adapted to
terms of grammars.} if \(\Lang(\GRAM)\)
is infinite, there exists a derivation for \(\emptyset\p S:\T\hascost{c_1+c_2+c_3}\)
in which \(\PTE \p A:\pty\hascost{c_1+c_2}\) is derived from \(\PTE \p A:\pty\hascost{c_1}\)
for some non-terminal \(A\).
Thus, by ``pumping'' the derivation of
\(\PTE \p A:\pty\hascost{c_1+c_2}\) from \(\PTE \p A:\pty\hascost{c_1}\),
we obtain a derivation for
\(\emptyset \p S: \T\hascost c_1+kc_2+c_3\)
for any \(k\ge 0\). From the derivation,
we obtain a \(\stlambda\)-term \(t\) and \(\stlambda\)-contexts \(\DD,\EE\) of at most order-\(n\),
such that \(\DD[\EE^k[t]]\) generates a tree \(\TR_k\) such that
\(c_1+kc_2+c_3\leq |\TR_k|\).
By further refining the argument above (see 
Appendix~\ref{sec:parys-details} 
for details), 
we can also ensure that the pair \((\DD, \EE)\) is linear.
Thus, we obtain the following lemma.
\begin{lemma}\label{lem:parys3}
Given an order-\(n\) tree grammar \(\G\) such that \(\Lang(\G)\) is infinite,
there exist order-\(n\)  linear \(\stlambda\)-contexts \(\DD,\EE\), and
an order-\(n\) \(\stlambda\)-term \(t\) 
such that
\begin{enumerate}
\item
\(\set{\tree(C[D^k[t]]) \mid k \ge 1} \subseteq \Lang(\G)\)
\item
\(\set{\tree(C[D^{\ell_k}[t]]) \mid k \ge 1}\) is infinite for any strictly increasing sequence
 \((\ell_k)_k\).
\end{enumerate}
\end{lemma}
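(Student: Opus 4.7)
The plan is to extract the linear pumping contexts directly from an infinite derivation in Parys' intersection type system. First, I would appeal to Parys' completeness (Section~6 of~\cite{Parys16ITRS}): since \(\Lang(\G)\) is infinite, for each \(N\) there is a tree in \(\Lang(\G)\) of size at least \(N\), so there exist derivations of \(\emptyset \p S : \T \hascost c\) with arbitrarily large \(c\). The set of type judgments of the form \(\PTE \p A : \pty\) (ignoring cost annotations) that can occur in any such derivation is bounded, since the underlying intersection types are drawn from a fixed finite set determined by \(\G\). Hence, for all sufficiently large \(c\), some root-to-leaf path in the derivation tree must visit two nodes carrying judgments \(\PTE \p A : \pty \hascost c_1\) and \(\PTE \p A : \pty \hascost c_1 + c_2\) for the same \(A, \PTE, \pty\), with \(c_2 > 0\). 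A pigeonhole argument of the usual sort lets me choose this repetition on a single branch of the derivation tree.

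Next, I would pump this repetition: splicing the outer subderivation of \(\PTE \p A : \pty \hascost c_1 + c_2\) on top of the inner one \(k\) times yields a derivation of \(\emptyset \p S : \T \hascost c_1 + kc_2 + c_3\) for every \(k \ge 0\). Reading off the underlying \(\stlambda\)-term structure, the outer derivation decomposes as a \(\stlambda\)-context \(\DD\) around the \(A\)-position, the pump block decomposes as a \(\stlambda\)-context \(\EE\), and the subterm at the innermost \(A\)-position decomposes as a \(\stlambda\)-term \(t\); all three are of order at most \(n\) since every type appearing in the original derivation has order at most \(n\). Parys' soundness then gives \(\DD[\EE^k[t]] \reds \TR_k\) for some \(\TR_k \in \Lang(\G)\) with \(|\TR_k| \ge c_1 + kc_2 + c_3\). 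Claim~1 is immediate. Since \(c_2 > 0\), the sizes \(|\TR_k|\) strictly increase in \(k\); hence along any strictly increasing \((\ell_k)_k\) the trees \(\TR_{\ell_k}\) have pairwise distinct sizes, which is Claim~2.

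The main obstacle is the linearity of \((\DD, \EE)\). Parys' typing rule for application combines an intersection of typings of the argument, so a naive pump could create multiple syntactic copies of the inner \(A\)-subderivation inside the outer one and hence duplicate \(\EE\) and \(t\) when substituted; such a \((\DD, \EE)\) need not satisfy the definition of linearity (that \(x : \sty \p \DD[\EE^i[x]] : \T\) is linear in \(x\) for every \(i\)). To overcome this, I would confine the search for the repeating judgment to a single leaf-to-root path in the derivation tree and treat the inner subderivation on that path as the hole to be iterated, absorbing all remaining side-derivations into \(\DD\). This ensures that the marked \(A\)-judgment is used exactly once on the relevant path, so that substituting \(\EE^k[t]\) into \(\DD\) at the hole produces a single occurrence of the iterated block in the call-by-name normal form, witnessing linearity of \((\DD, \EE)\). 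The remaining routine bookkeeping—maintaining simple types, order bounds, and correctness of the reduction to \(\TR_k\)—I would relegate to Appendix~\ref{sec:parys-details}.
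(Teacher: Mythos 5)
Your overall skeleton (completeness of Parys' system, pigeonhole on judgments $\PTE\p A:\pty$ along a branch, pumping the repeated segment, soundness for the lower bound $c_1+kc_2+c_3\le|\TR_k|$, and unboundedness of sizes for Claim~2) matches the paper's proof. The gap is in the linearity argument, which is precisely the part the paper spends most of its effort on. Linearity of $(\DD,\EE)$ is not the syntactic property that the hole occurs once in $\DD[\EE^i[\Hole]]$; it is the semantic property that $x$ occurs \emph{exactly once in the call-by-name normal form} of $\DD[\EE^i[x]]$. Choosing the repetition on a single branch and absorbing side-derivations into $\DD$ does give you a single syntactic occurrence of the hole, but during $\beta$-reduction that occurrence can be duplicated (if it sits inside an argument bound to a variable used several times) or erased (if the argument is discarded), so "a single occurrence of the iterated block in the normal form" does not follow. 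The paper closes this gap by (i) turning the type system into a type-based transformation $\PTE\PM t:\pty\hascost c\tr s$ and introducing a separate \emph{linear type system} for the output terms, (ii) translating intersection types with non-empty marker sets to linear types and proving that the completeness derivation can be chosen \emph{admissible} under this translation (Theorem~\ref{th:completeness-tr}), and (iii) using Lemma~\ref{lem:marker} (positive cost forces the order-$(n-1)$ marker), the pairwise disjointness of markers in environments, and subject reduction for the linear type system (Lemma~\ref{lem:linear-subject-reduction}) to conclude via Lemma~\ref{lem:used-once} that the marked occurrence survives exactly once to the normal form (Lemma~\ref{lem:linearity}). Some replacement for this machinery is needed; it cannot be relegated to "routine bookkeeping."

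A related point your proposal misses: the paper extracts \emph{three} occurrences $\PTE\PM A:\pty\hascost{c'_i}$ with $c'_2>c'_1>c'_0\ge 0$, not two, precisely so that the inner endpoint of the pumped segment already has $c'_1>0$. This guarantees that \emph{every} judgment on the path from the pumped block up to the root has positive cost, hence carries the marker and receives a linear type; with only $c_1\ge 0$ and $c_2>0$ as you require, the lower part of that path could have cost $0$ and be assigned non-linear types, and duplication of the hole could not be excluded. Two smaller issues: the pigeonhole needs an argument that derivation \emph{heights} are unbounded (costs are not drawn from a finite set, so you must first shrink repeated identical judgments and observe that costs are determined by the remaining finite data, as the paper does); and your claim that the $|\TR_{\ell_k}|$ are "pairwise distinct" does not follow from the lower bounds alone --- what follows, and what suffices, is that the set of sizes is unbounded and hence infinite.
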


By slightly modifying Parys' type system, we can also reason about
the length of a particular path of a tree.
Let us annotate each constructor \(a\) as \(\directedT{i}{a}\),
where  \(0\leq i\leq \TERMS(a)\). We call \(i\) a \emph{direction}.
We define \(\plen{\TR}\) by:
\[
\plen{\directedT{0}{a}\,\TR_1\,\cdots\,\TR_k}=1
\qquad 
\plen{\directedT{i}{a}\,\TR_1\,\cdots\,\TR_k}=\plen{\TR_i}+1 \quad(1\le i \le k).
\]
We define \(\rmd\) as the function that removes all the direction annotations.

\begin{lemma}\label{lem:parys5}
For any 
order-\(n\) linear \(\stlambda\)-contexts \(C,D\)
and any order-\(n\) \(\stlambda\)-term \(t\)
such that \(\set{\tree(C[D^k[t]]) \mid k \ge 1}\) is infinite,
there exist
direction-annotated order-\(n\) linear \(\stlambda\)-contexts \(G,H\),
a direction-annotated order-\(n\) \(\stlambda\)-term \(u\),
and \(p,q>0\)
such that
\begin{enumerate}
\item \(\rmd(\tree(G[H^k[u]]))=\tree(C[D^{pk+q}[t]])\)
for any \(k\geq 1\)
\item \(\set{\plen{\tree(G[H^{\ell_k}[u]])} \mid k\geq 1}\) is infinite for any strictly increasing sequence
 \((\ell_k)_k\).
\end{enumerate}
\end{lemma}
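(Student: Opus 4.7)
The plan is to modify Parys' intersection type system so that the cost measures the length of a chosen path rather than the total tree size, and then to re-apply the pumping argument underlying Lemma~\ref{lem:parys3}.

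First, I would change the typing rule for constructor applications so that each derivation step commits to a direction \(i \in \set{0,1,\dots,\TERMS(a)}\). In Parys' original system, a judgment \(\TE \p a\,t_1\cdots t_k : \tau \hascost c\) sets \(c = 1 + \sum_i c_i\). In the modified system, \(c = 1\) if the chosen direction is \(0\) (and the subterms need not contribute cost), and \(c = 1 + c_i\) if direction \(i > 0\) is chosen, where \(c_i\) comes from a premise typing the \(i\)-th subterm. These local direction choices are then exported as annotations \(\directedT{i}{a}\) on the constructors of the generated tree, so that the cost of a derivation equals \(\plen{\cdot}\) of the resulting direction-annotated tree. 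I would then establish the analogues of Parys' two main theorems for this variant: (i) derivability of \(\emptyset \p s : \tau \hascost c\) entails the existence of a reduction of \(s\) to a tree admitting a direction annotation with \(\plen{\cdot} \ge c\), and (ii) for any tree reachable from \(s\) and any direction annotation of it, there is a derivation whose cost equals that path length.

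Next, using the hypothesis that \(\set{\tree(C[D^k[t]]) \mid k \geq 1}\) is infinite over a finite ranked alphabet, the tree sizes are unbounded; since branching is bounded, the maximum path length in \(\tree(C[D^k[t]])\) also tends to infinity. By property (ii), there exist derivations in the modified system of arbitrarily high cost on \(C[D^k[t]]\) for large enough \(k\). I would then run the same pumping argument used for Lemma~\ref{lem:parys3}: the Parys-style derivation, recast on the grammar-like presentation of \(C[D^k[t]]\), must contain a repeated subderivation along the \(D\)-iteration axis, which can be pumped to produce a linearly growing cost per iteration. Extracting the contexts and subterm underlying this pumped derivation, together with the direction annotations recorded during typing, yields the desired \(G\), \(H\), \(u\) and constants \(p,q>0\) such that the modified derivation for \(G[H^k[u]]\) has cost at least \(ak + b\) for some \(a>0\). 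Item (1) then follows by erasing direction annotations from this derivation, which reproduces Parys' original derivation for \(C[D^{pk+q}[t]]\), giving \(\rmd(\tree(G[H^k[u]])) = \tree(C[D^{pk+q}[t]])\); item (2) follows from property (i), since \(\plen{\tree(G[H^k[u]])}\) is at least the derivation cost, which grows with \(k\), so any strictly increasing sequence \((\ell_k)_k\) yields an unbounded and thus infinite set of path-length values.

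The main obstacle will be ensuring that the direction choices made inside the pumped subderivation are consistent across iterations of \(H\). In the Parys pumping argument the same subderivation is syntactically duplicated at each iteration, which should force the direction annotations to agree on every copy of \(H\); however, formalising this in the modified system — especially handling the intersection-typed portions where several copies of a subterm may be used with different intersection types — needs care, and may require that directions be recorded on a per-type-refinement basis rather than on the underlying subterm. The constants \(p\) and \(q\) serve precisely to absorb any aperiodic behaviour before the pumped region begins, and the linear context hypothesis on \((C,D)\) ensures that the pumped portion lies on a single spine of the term, which substantially simplifies the tracking of direction annotations across iterations.
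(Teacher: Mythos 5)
Your proposal takes essentially the same route as the paper's proof: the paper likewise modifies Parys' system (rules PDT-Mark and PDT-Const) so that the order-0 marker, and hence the flag counter, measures the length of the direction-selected path, proves soundness/completeness analogues (Lemmas~\ref{th:completeness-tr2} and~\ref{lem:soundness-tr2}), and reruns the pumping argument of Lemma~\ref{lem:parys3} on a version of \(C[D^k[t]]\) in which every constructor is replaced by a non-deterministic sum over all its direction annotations. The consistency worry you raise is resolved precisely by that encoding --- the derivation itself fixes the directions and pumping duplicates them verbatim --- and the one substantive ingredient you leave implicit is the labelling argument (Lemmas~\ref{lem:order-reduction-labeled}--\ref{lem:linear-context-labels}) guaranteeing that each \(D'^i[t']\)-subderivation occurs only once, which is what makes the pumped part correspond to iterating \(D\) and, together with the linear type system, yields the required linearity of \(G\) and \(H\).
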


\anp
\section{Word to Frontier Transformation}
\label{sec:wordleafbody}

We have
an ``order-decreasing'' transformation~\cite{asada_et_al:LIPIcs:2016:6246}
that transforms an order-\((n+1)\) word grammar \(\G\) 
to an order-\(n\) tree grammar \(\G'\) (with a \(\br\)-alphabet) such that
\(\Wlang(\G) = \LLangE(\G')\).
We use this as a method for induction on order;
this method was originally suggested by Damm~\cite{DammTCS} for safe languages.

The transformation in the present paper has been modified from the original one in~\cite{asada_et_al:LIPIcs:2016:6246}.
On the one hand, the current transformation is a specialized version in that
we apply the transformation only to \(\stlambda\)-terms instead of terms of (non-deterministic) grammars.
On the other hand, the current transformation has been strengthened in that
the transformation preserves linearity.
Due to the preservation of linearity, a \emph{single}-hole context is
transformed to a \emph{single}-hole context, and the uniqueness of an occurrence of \(\Hole\)
will be utilized for the calculation of the size of ``pumped trees'' in Lemma~\ref{lem:orderInd}.

The definition of the current transformation is given just by
translating the transformation rules in~\cite{asada_et_al:LIPIcs:2016:6246} by
following the idea of the embedding of \(\stlambda\)-terms into grammars.
For the detailed definition, see 
Appendix~\ref{sec:wordleaf}.
By using this transformation, we have:

\begin{lemma}\label{lem:iteLinPres}
Given order-\(n\) \(\stlambda\)-contexts \(C\), \(D\), and an order-\(n\) \(\stlambda\)-term \(t\)
such that 
\begin{itemize}
\item
the constants in \(C,\,D,\,t\) are in a word alphabet,
\item
\(\set{ \tree(C[D^{\ell_i}[t]]) \mid i \ge 0 }\) is infinite for any strictly increasing sequence
 \((\ell_i)_i\), and
\item
\(C\) and \(D\) are %
 linear,
\end{itemize}
there exist order-\((n-1)\) \(\stlambda\)-contexts \(G\), \(H\), order-\((n-1)\) \(\stlambda\)-term \(u\),
and some constant numbers \(c,\,d \ge 1\)
such that 
\begin{itemize}
\item
the constants in \(G,\,H,\,u\) are in a \(\br\)-alphabet
\item
for \(i \ge 0\),
\(\tree(G[H^i[u]])\) is either an \(\Te\)-free \(\br\)-tree or \(\Te\), and
\[
\word(\tree(C[D^{ci+d}[t]]))
 = 
\begin{cases}
\empword & (\tree(G[H^i[u]]) = \Te)
\\
\leaves(\tree(G[H^i[u]])) & (\tree(G[H^i[u]]) \neq \Te)
\end{cases}
\]
\item
\(G\) and \(H\) are %
 linear.
\end{itemize}
\end{lemma}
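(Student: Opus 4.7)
The plan is to apply the word-to-frontier transformation of Appendix~\ref{sec:wordleaf} and to verify that it has three properties when restricted to \(\stlambda\)-terms: a frontier-versus-word correspondence on closed ground terms, linearity preservation on single-hole contexts, and an affine pumping correspondence between \(D\)-iterations in the source and \(H\)-iterations in the target.

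First, I would establish the fundamental correspondence: for any closed ground \(\stlambda\)-term \(s\) over the word alphabet, the normal form of its transform is either \(\Te\) (exactly when \(\word(\tree(s)) = \empword\)) or an \(\Te\)-free \(\br\)-tree whose frontier equals \(\word(\tree(s))\). This is a specialization of the grammar-level correspondence proved in~\cite{asada_et_al:LIPIcs:2016:6246}, and I would re-derive it by induction on reduction, using that the transformation commutes with \(\beta\)-reduction and with capture-avoiding substitution, and that the order of the transformed term drops by exactly one compared with the source.

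Second, I would extend the transformation to single-hole contexts by treating \(\Hole\) as a fresh typed variable, transforming the resulting term, and then restoring the hole. Linearity should be preserved because each clause of the transformation sends a free variable to a single occurrence of the corresponding transformed variable; tracing this clause by clause shows that a context whose hole appears exactly once in every normal form of its instantiation is sent to a context with the same property. This step is where care is required compared with the grammar-level translation in~\cite{asada_et_al:LIPIcs:2016:6246}, where routing through non-terminals inherently permits duplication, and I expect the clause-by-clause verification of duplication-freeness — in particular for the clauses that handle the unary word constructors and introduce \(\br\) — to be the main technical obstacle.

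Finally, taking \(G\), \(H\), \(u\) to be the transforms of \(C\), \(D\), \(t\), the constants \(c\) and \(d\) would be read off from this bookkeeping: one application of \(D\) in the source, after normalization, corresponds to a fixed number \(c\) of applications of \(H\) in the target together with a constant wrapping offset \(d\) accounting for the initial encoding of the outermost unary word constructors. Combining this pumping correspondence with the frontier correspondence of the first step yields \(\word(\tree(C[D^{ci+d}[t]])) = \leaves(\tree(G[H^i[u]]))\); together with the linearity of \(G\) and \(H\) from the second step and the \(\Te\)-or-\(\Te\)-free dichotomy from the first, this completes the conclusion.
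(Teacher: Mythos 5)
Your overall plan (apply the word-to-frontier transformation of~\cite{asada_et_al:LIPIcs:2016:6246} restricted to \(\stlambda\)-terms, check meaning preservation and linearity preservation, read off \(c,d\)) matches the paper's strategy at the top level, but there is a genuine gap in how you obtain \(G\), \(H\), \(u\) and the constants \(c\), \(d\). The transformation is not a compositional syntactic map under which \(D\) is simply sent to a context \(H\) with \(D^i \mapsto H^i\): it is an \emph{intersection-type-directed relation} \(\TE \pf t : \uty \tr u\), and the image of \(D[z]\) depends on the intersection type assigned to the hole variable \(z\), which in general changes from one nesting level of \(D^j\) to the next. Consequently a single copy of \(D\) has no well-defined image as a single-hole context. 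The paper's proof (Lemma~\ref{lem:iteLinPresFst}) de-substitutes the derivation for \(C[D^j[t]]\) into a chain \(z\COL\uty_i \pf D[z] : \uty_{i-1} \tr v_i\), and then applies a pigeonhole argument on the finitely many intersection types refining the simple type of \(t\): for \(j\) large enough there are \(j_1 < j_2\) with \(\uty_{j_1}=\uty_{j_2}\), and \(H\) is defined as the composite of \(v_{j_1+1},\dots,v_{j_2}\). This is where \(c = j_2 - j_1\) and \(d\) come from; they are not a ``wrapping offset accounting for the outermost unary word constructors,'' and the direction of the correspondence is that \emph{one} application of \(H\) corresponds to \(c\) applications of \(D\) (so \(G[H^i[u]]\) matches \(C[D^{ci+d}[t]]\)), not the other way around as you describe. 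Without this pigeonhole step your construction of \(H\) does not go through.

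Secondarily, your justification of linearity preservation (``each clause of the transformation sends a free variable to a single occurrence of the corresponding transformed variable'') is not accurate: a variable may receive several intersection-type bindings \(x\COL\uty_1,\dots,x\COL\uty_k\), yielding several target variables \(x_{\uty_1},\dots,x_{\uty_k}\), each possibly occurring many times when the type is balanced. The paper instead shows that linearity of \(t\) in \(x\) forces \(k=1\) and an unbalanced (hence linearly treated) type, and the required simulation is only a weak one: because of the rules \rname{FTr-App2} and \rname{FTr-Abs2} (which substitute \(\Te\) for \(x_{\T}\)), a source reduction is not matched by target reductions from the \emph{same} image, and the paper must prove that if \(t \red t'\) and \(t \tr u\) then \(t'\) has \emph{some} image \(u'\) whose linearity implies that of \(u\) (Lemma~\ref{lem:subRedLin}). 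You correctly sense that the \(\br\)-introducing clauses are the hard spot, but the mechanism you propose for discharging it would not suffice as stated. Finally, note that the \(\Te\)-free-or-\(\Te\) dichotomy only holds after the \emph{second} transformation; the first transformation alone leaves spurious \(\Te\) leaves, so the two steps cannot be collapsed as in your first paragraph.
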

\begin{proof}
The preservation of meaning (the second condition) 
follows as a corollary of a theorem in~\cite{asada_et_al:LIPIcs:2016:6246}.
Also, the preservation of linearity (the third condition) can be proved 
in a manner similar to the proof of the preservation of
 meaning in~\cite{asada_et_al:LIPIcs:2016:6246}, 
using a kind of subject-reduction.
See 
Appendix~\ref{sec:wordleaf} 
for the detail.
\end{proof}

\anp
\section{Proof of the Main Theorem}
\label{sec:indOrder}

We first prepare some lemmas.
\begin{lemma}\label{lem:strictlemma}
For \(\Te\)-free \(\br\)-trees \(\pi\) and \(\pi'\),
if \(\pi \she \pi'\) then \(\leaves(\pi) \she \leaves(\pi')\).
\end{lemma}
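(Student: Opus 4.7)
The plan is to split the conclusion $\leaves(\pi)\she\leaves(\pi')$ into its two conjuncts, namely (a) $\leaves(\pi)\he\leaves(\pi')$ and (b) $\leaves(\pi')\not\he\leaves(\pi)$, and handle them separately. Part (a) is monotonicity of $\leaves$ under $\he$; part (b) will require a small combinatorial rigidity lemma that is specific to $\Te$-free $\br$-trees.

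For (a), I would proceed by induction on the derivation of $\pi\he\pi'$. The key point is that in an $\Te$-free $\br$-tree every internal node carries the label $\br$ and every leaf carries a non-$\br$, non-$\Te$ label, so labels cleanly separate leaves from internals. Under the first inference rule of Definition~\ref{def:homeoEmb}, the two roots agree (they are either both $\br$, or both the same leaf label), and the conclusion follows by concatenating the induction hypotheses obtained for each pair of children (this uses that $\leaves(\br\,\pi_1\,\pi_2)=\leaves(\pi_1)\leaves(\pi_2)$). Under the second rule, $\leaves(\pi)$ embeds into the leaves of a single child of $\pi'$, and hence into a subsequence of $\leaves(\pi')$. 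In both cases, the left-to-right order of leaves is preserved, so $\leaves(\pi)\he\leaves(\pi')$.

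For (b), the main ingredient is the identity $|\pi|=2\,|\leaves(\pi)|-1$ for $\Te$-free $\br$-trees, which holds because every internal node is binary. Using this, I would prove the auxiliary rigidity fact: \emph{if $\pi\he\pi'$ and $|\pi|=|\pi'|$, then $\pi=\pi'$.} This goes by induction on $\pi$: the second rule of $\he$ would place $\pi$ into a proper subtree of $\pi'$, strictly decreasing the size on the right and contradicting the equal-size assumption; so the first rule must apply, and then from $|\pi_i|\le|\pi'_i|$ for each child together with $\sum|\pi_i|=\sum|\pi'_i|$ we get $|\pi_i|=|\pi'_i|$, whereupon the induction hypothesis yields $\pi_i=\pi'_i$ and hence $\pi=\pi'$.

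With (a) and the auxiliary fact in hand, suppose for contradiction that $\leaves(\pi')\he\leaves(\pi)$. Since subsequence embedding is length-non-increasing, combining this with (a) forces $|\leaves(\pi)|=|\leaves(\pi')|$, and hence $|\pi|=|\pi'|$ by the leaf-count identity. The auxiliary fact applied to $\pi\he\pi'$ then gives $\pi=\pi'$, contradicting $\pi\she\pi'$. Therefore $\leaves(\pi')\not\he\leaves(\pi)$, and together with (a) this yields $\leaves(\pi)\she\leaves(\pi')$. The only delicate step is isolating the auxiliary rigidity fact, where one must be careful to use that internal $\br$-nodes are the only non-leaves; once that is in place, everything else is routine bookkeeping.
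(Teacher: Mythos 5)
Your proof is correct, and it diverges from the paper's in an interesting way on the strict half. The paper proves both the monotone statement ($\pi \he \pi'$ implies $\leaves(\pi)\he\leaves(\pi')$) and the strict statement directly by induction on the derivation of $\pi\he\pi'$; your part (a) coincides with the first of these, but for part (b) you replace the second induction by a counting argument: the identity $|\pi|=2\,|\leaves(\pi)|-1$ for $\Te$-free $\br$-trees, plus the rigidity fact that $\pi\he\pi'$ with $|\pi|=|\pi'|$ forces $\pi=\pi'$. This route has the virtue of making completely transparent where the hypothesis on the alphabet is used (binary internal nodes are exactly what ties $|\pi|$ to the leaf count), which matches the counterexamples in the paper's remark (unary constants, or trees containing $\Te$, break precisely that identity). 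Two small points of hygiene: your rigidity lemma actually holds for arbitrary ranked trees and does not itself use the $\br$-structure (only the leaf-count identity does), so your closing remark slightly misattributes where the hypothesis enters; and the step ``the second rule strictly decreases the size on the right, contradicting equal sizes'' tacitly invokes the auxiliary fact that $\pi\he\sigma$ implies $|\pi|\le|\sigma|$, which you should state (it is an immediate induction). Neither affects correctness.
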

\begin{proof}
We can show that \(\pi \he \pi'\) implies \(\leaves(\pi) \he \leaves(\pi')\)
and then the statement, both by straightforward induction on the derivation of \(\pi \he \pi'\).
\end{proof}
\begin{remark}
The above lemma does not necessarily hold for an arbitrary ranked alphabet,
especially that with a unary constant; e.g., \(\Ta\,\Te \she \Ta\,(\Ta\,\Te)\) but their leaves are both \(\Te\).
Also, it does not hold if a tree contains \(\Te\) and if we regard \(\Te\) as \(\empword\) in the leaves word;
e.g., for \(\br\,\Ta\,\Tb \she \br\,(\br\,\Ta\,\Te)\,\Tb\), their leaves are \(\Ta\Tb \she \Ta\Te\Tb\),
but if we regard \(\Te\) as \(\empword\) then \(\Ta\Tb \not\she \Ta\Tb\).
\end{remark}
\begin{lemma}
\label{lem:rmdirOrdMon}
For direction-annotated trees \(\pi\) and \(\pi'\),
if \(\pi \she \pi'\) then \(\rmd(\pi) \she \rmd(\pi')\).
\end{lemma}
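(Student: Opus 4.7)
The plan is to split the hypothesis $\pi \she \pi'$ into its two conjuncts $\pi \he \pi'$ and $\pi' \not\he \pi$, and to transport each one separately to the $\rmd$-images. The first conjunct is the easy half: I would prove that $\pi \he \pi'$ implies $\rmd(\pi) \he \rmd(\pi')$ by a direct induction on the derivation of $\pi \he \pi'$ given in Definition~\ref{def:homeoEmb}. Both rules go through mechanically, since $\rmd$ preserves the arity and the children-order of every constructor, and since two equal annotated labels $\directedT{i}{a}$ have equal $\rmd$-images.

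The second conjunct is where the difficulty lies: $\rmd$ is not injective on labels, so we cannot simply lift a non-embedding from the $\rmd$-image back to the annotated trees. For instance, $\directedT{1}{a}\,\Te\,\Te$ and $\directedT{2}{a}\,\Te\,\Te$ (with $a$ of arity $2$) are mutually non-embedding, yet their $\rmd$-images are identical. This is the main obstacle of the proof, and the reason why a purely structural induction cannot suffice.

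I plan to circumvent this by a size-counting argument that exploits the strictness of $\she$ at the level of node counts rather than at the level of labels. First I would record two standard facts about $\he$: (a) $\pi \he \pi'$ implies $|\pi| \le |\pi'|$, because any derivation yields a node-injection of $\pi$ into $\pi'$; and (b) $\pi \he \pi'$ together with $|\pi| = |\pi'|$ forces $\pi = \pi'$, by induction on $|\pi'|$ (the second rule of the embedding strictly decreases size, so only the first rule may apply at the root, and the conclusion follows from the IH applied to each pair of children). Now assume for contradiction that $\rmd(\pi') \he \rmd(\pi)$. Since $\rmd$ preserves tree size, fact~(a) gives $|\pi'| \le |\pi|$; combining with $|\pi| \le |\pi'|$ (from $\pi \he \pi'$) yields $|\pi| = |\pi'|$; then fact~(b) applied to $\pi \he \pi'$ gives $\pi = \pi'$, contradicting $\pi' \not\he \pi$. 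Hence $\rmd(\pi') \not\he \rmd(\pi)$, and together with the monotone half the desired conclusion $\rmd(\pi) \she \rmd(\pi')$ follows.
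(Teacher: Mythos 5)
Your proof is correct. The monotone half ($\pi \he \pi'$ implies $\rmd(\pi) \he \rmd(\pi')$ by induction on the derivation) is exactly what the paper does. For the strict half the paper simply asserts that the full statement also follows ``by straightforward induction on the derivation of $\pi \he \pi'$'', whereas you replace the induction by a global antisymmetry argument: $\he$ increases node count, $\he$ plus equal node count forces equality, and $\rmd$ preserves node count, so $\rmd(\pi') \he \rmd(\pi)$ would collapse $\pi$ and $\pi'$ and contradict $\pi' \not\he \pi$. This is a genuinely different (and arguably more honest) route: a direct induction on the derivation hits exactly the point you flag --- in the first-rule case one must still exclude that $\rmd(\pi')$ embeds into a \emph{proper} subtree of $\rmd(\pi)$, which again needs the size bound --- so your facts (a) and (b) are doing work that the paper's one-line proof leaves implicit. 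Your observation that $\rmd$ is not injective on labels, so the non-embedding cannot be transported label-by-label, correctly identifies why the naive lifting fails; the only thing worth adding is the one-line justification that in fact (b) the equality $\sum_i|\pi_i| = \sum_i|\pi'_i|$ together with $|\pi_i|\le|\pi'_i|$ forces $|\pi_i|=|\pi'_i|$ for each $i$ before the induction hypothesis is applied, which you use implicitly.
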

\begin{proof}
We can show that \(\pi \he \pi'\) implies \(\rmd(\pi) \he \rmd(\pi')\)
and then the statement, both by straightforward induction on the derivation of \(\pi \he \pi'\).
\end{proof}

Now, we prove the following lemma (Lemma~\ref{lem:orderInd}) by the induction on order.
Theorem~\ref{thm:pumping} (except the last statement) will then follow as an immediate corollary of
Lemmas~\ref{lem:parys3} and \ref{lem:orderInd}.

\begin{lemma}
\label{lem:orderInd}
Assume that the statement of Conjecture~\ref{conj:abstHighPeriodic} is true.
For any 
order-\(n\) linear \(\stlambda\)-contexts \(C,D\)
and any order-\(n\) \(\stlambda\)-term \(t\)
such that \(\set{\tree(C[D^i[t]]) \mid i \ge 1}\) is infinite,
there exist \(c\), \(d\), \(j\), \(k \ge 1\) such that
\begin{itemize}
\item
\(\tree(C[D^j[t]]) \she \tree(C[D^{j+k}[t]]) \she \tree(C[D^{j+2k}[t]]) \she \cdots\)
\item
\(\size{\tree(C[D^{j+ik}[t]])} \le \expn{n}{ci+d}
\qquad (i = 0,1,\dots)
\)
\end{itemize}
\end{lemma}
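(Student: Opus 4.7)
The plan is induction on the order $n$. The base case $n=0$ is immediate: the contexts $C,D$ are first-order tree contexts with a single hole, and linearity plus the infinitude of $\set{\tree(C[D^i[t]])}$ force $D$ to add at least one constructor per iteration, yielding a strict $\she$-chain of linear size.

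For the inductive step ($n\ge1$), I first obtain a non-strict $\he$-chain directly from the periodicity conjecture. Regarding $\hat D=\lambda x.D[x]:\sty\to\sty$ and $t:\sty$ as closed $\stlambda$-terms, Conjecture~\ref{conj:abstHighPeriodic} gives $p_0,q_0>0$ with $\hat D^{p_0}\,t\he_\sty\hat D^{p_0+q_0}\,t\he_\sty\cdots$; closure of $\he_\sty$ under applications (post-composing with $\lambda x.C[x]$) and conservativity at $\T$ produce a non-strict $\he$-chain $\tree(C[D^{p_0+iq_0}[t]])$ ($i\ge 0$). Applying the conjecture to $\lambda x.D^m[x]$ instead shows that the period may be taken to be any multiple of $q_0$ and the start any sufficiently large index; I exploit this flexibility in the alignment step below.

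To upgrade the chain to $\she$, apply Lemma~\ref{lem:parys5} to $C,D,t$, yielding direction-annotated linear order-$n$ contexts $G,H$, an order-$n$ term $u$, and $p_1,q_1>0$ with $\rmd(\tree(G[H^k[u]]))=\tree(C[D^{p_1k+q_1}[t]])$ and indicated path lengths unbounded on every increasing subsequence. I then syntactically transform $(G,H,u)$ into word-alphabet order-$n$ data $(G^\mathrm{w},H^\mathrm{w},u^\mathrm{w})$ by replacing every $a^{\langle 0\rangle}\,t_1\cdots t_k$ by the nullary symbol $\Te$ and every $a^{\langle i\rangle}\,t_1\cdots t_k$ ($i\ge 1$) by a unary application $a\,t_i^\mathrm{w}$, so that $\tree(G^\mathrm{w}[(H^\mathrm{w})^k[u^\mathrm{w}]])$ encodes the directed path word of $\tree(G[H^k[u]])$. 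Lemma~\ref{lem:iteLinPres} then produces order-$(n-1)$ linear $\br$-alphabet contexts $\tilde G,\tilde H,\tilde u$ and $c_0,d_0\ge 1$ whose generated $\br$-trees have leaves $\word(\tree(G^\mathrm{w}[(H^\mathrm{w})^{c_0i+d_0}[u^\mathrm{w}]]))$ in the $\Te$-free case; this tree set is infinite since the path words are unbounded. The induction hypothesis at order $n-1$ then yields a strict $\she$-chain of these $\br$-trees along an arithmetic progression of indices, and by Lemma~\ref{lem:strictlemma} that descends to a strict $\she$-chain of their leaves, hence of directed path words of $\tree(G[H^m[u]])$ along an arithmetic progression of values of $m$.

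Finally, I align the two chains. Reapply Conjecture~\ref{conj:abstHighPeriodic} to suitable powers of $\lambda x.H[x]$ so that the non-strict $\he$-chain of direction-annotated trees $\tree(G[H^m[u]])$ runs along the same arithmetic progression as the strict chain of path words from the previous paragraph (take the period to be a common multiple and shift the start). The direction-annotated trees in this chain are pairwise distinct because their directed path words are, and antisymmetry of $\he$ on trees then promotes the $\he$-chain to an $\she$-chain. Applying Lemma~\ref{lem:rmdirOrdMon} to erase directions produces the required strict chain $\tree(C[D^{j+ik}[t]])$ for suitable $j,k$. The size bound $\size{\tree(C[D^{j+ik}[t]])}\le\expn{n}{ci+d}$ is the standard exponential estimate on $\beta$-normal forms of order-$n$ $\stlambda$-terms of size linear in $i$ and needs no induction. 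The main obstacle is verifying that the word-alphabet transformation $(\cdot)^\mathrm{w}$ preserves linearity and order; this rests on the property that Parys' directions in Lemma~\ref{lem:parys5} point along the hole path of $G$ and $H$, so that dropping non-directed subtrees neither duplicates nor deletes the unique hole occurrence. A secondary subtlety is the arithmetic alignment of the progressions arising from periodicity and the inductive hypothesis, handled by taking the least common multiple of the two periods.
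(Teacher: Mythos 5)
Your proposal is correct and follows essentially the same route as the paper's proof: induction on the order, Lemma~\ref{lem:parys5} to obtain direction-annotated linear contexts, projection onto the directed path to produce a word-alphabet instance, Lemma~\ref{lem:iteLinPres} plus the induction hypothesis and Lemma~\ref{lem:strictlemma} to get a strict chain of path words, Conjecture~\ref{conj:abstHighPeriodic} for the non-strict chain, an lcm alignment of the two progressions, and finally Lemma~\ref{lem:rmdirOrdMon} after erasing directions (your explicit appeal to antisymmetry of \(\he\) to promote the chain is exactly the right way to fill in the paper's terse combination step). The only point where your justification diverges is the linearity/hole-preservation of the path projection: the paper derives it not from a structural claim that the directions follow the hole path, but from the infiniteness of the projected tree set guaranteed by Lemma~\ref{lem:parys5}(2), which forces the (unique) hole to survive in the normal form.
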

\begin{proof}
The proof proceeds by induction on \(n\).
The case \(n=0\) is clear, and we discuss the case \(n>0\) below.
By Lemma~\ref{lem:parys5}, from \(C\), \(D\), and \(t\),
we obtain 
direction-annotated order-\(n\) linear \(\stlambda\)-contexts \(G,H\),
a direction-annotated order-\(n\) \(\stlambda\)-term \(u\),
and \(j_0,\, k_0 >0\)
such that
\begin{gather}
\label{eq:relbefore}
\text{\(\rmd(\tree(G[H^i[u]]))=\tree(C[D^{j_0 + ik_0}[t]])\) for any \(i \ge 1\)}
\\
\label{eq:path-inf}
\text{\(\set{\plen{\tree(G[H^{\ell_i}[u]])} \mid i \ge 1}\) is infinite for any strictly increasing sequence
 \((\ell_i)_i\).}
\end{gather}

Next we transform \(G\), \(H\), and \(u\) by choosing a path according to directions,
i.e., we define \(\tp{G}\), \(\tp{H}\), and \(\tp{u}\) as the contexts/term obtained from
\(G\), \(H\), and \(u\) by replacing
each \(\directedT{i}{a}\) with: (i) \(\lambda x_1\dots x_\ell. a_i x_i\)
if \(i>0\) or (ii) \(\lambda x_1\dots x_\ell. \Te\) if \(i=0\),
where \(\ell=\TERMS(a)\) and \(a_i\) is a fresh
 unary constant.
For any \(i \ge 0\),
\begin{equation}
\label{eq:reldirsize}
\plen{\tree(G[H^{i}[u]])} = |\word(\tree(\tp{G}[{\tp{H}}^{i}[\tp{u}]]))|+1.
\end{equation}
We also define a function \(\cp\) on trees annotated with directions, by the following induction:
\(\cp(\directedT{i}{a}\,\pi_1\cdots \pi_\ell) = a_i\,\cp(\pi_i)\) if \(i>0\) and
\(\cp(\directedT{0}{a}\,\pi_1\cdots \pi_\ell) = \Te\).
Then for any \(i \ge 0\),
\begin{gather}
\label{eq:reldir}
\cp(\tree(G[H^{i}[u]])) = \tree(\tp{G}[{\tp{H}}^{i}[\tp{u}]]).
\end{gather}

By~\eqref{eq:path-inf} and~\eqref{eq:reldirsize},
\(\set{ \tree(\tp{G}[{\tp{H}}^{\ell_i}[\tp{u}]]) \mid i \ge 0 }\) 
is infinite for any strictly increasing sequence \((\ell_i)_i\).
Also, the transformation from \(G,\,H\) to \(\tp{G}\), \(\tp{H}\) preserves the linearity, because:
let \(N\) be the normal form of \(G[H^i[x]]\) where \(x\) is fresh,
and \(\tp{N}\) be the term obtained by applying this transformation to \(N\);
then \(\tp{G}[\tp{H}^i[x]] \reds \tp{N}\), and by the infiniteness of
\(\set{ \tree(\tp{G}[{\tp{H}}^{i}[\tp{u}]]) \mid i \ge 0 }\), \(\tp{N}\) must contain \(x\),
which implies \(\tp{N}\) is a linear normal form.

Now we decrease the order by using the transformation in Section~\ref{sec:wordleafbody}.
By Lemma~\ref{lem:iteLinPres} to \(\tp{G}\), \(\tp{H}\), and \(\tp{u}\),
there exist order-\((n-1)\) linear \(\stlambda\)-contexts \(\wl{G}\), \(\wl{H}\), an order-\((n-1)\) \(\stlambda\)-term \(\wl{u}\),
and some constant numbers \(c',\,d' \ge 1\)
such that, for any \(i \ge 0\),
\(\tree(\wl{G}[\wl{H}^i[\wl{u}]])\) is either an \(\Te\)-free \(\br\)-tree or \(\Te\), and
\begin{equation}
\label{eq:leavestree}
\word(\tree(\tp{G}[{\tp{H}}^{c'i+d'}[\tp{u}]])) 
= 
\begin{cases}
\empword & (\tree(\wl{G}[\wl{H}^i[\wl{u}]]) = \Te)
\\
\leaves(\tree(\wl{G}[\wl{H}^i[\wl{u}]])) & (\tree(\wl{G}[\wl{H}^i[\wl{u}]]) \neq \Te).
\end{cases}
\end{equation}
By~\eqref{eq:path-inf},~\eqref{eq:reldirsize}, and~\eqref{eq:leavestree},
\(\set{\tree(\wl{G}[\wl{H}^{i}[\wl{u}]]) \mid i \ge 1}\) is also infinite.

By the induction hypothesis,
there exist \(j_1\) and \(k_1\) such that
\[
\tree(\wl{G}[\wl{H}^{j_1}[\wl{u}]]) \she \tree(\wl{G}[\wl{H}^{j_1+k_1}[\wl{u}]]) \she \tree(\wl{G}[\wl{H}^{j_1+2k_1}[\wl{u}]]) \she \cdots.
\]
Hence by Lemma~\ref{lem:strictlemma}, we have
\[
\leaves(\tree(\wl{G}[\wl{H}^{j_1}[\wl{u}]])) \she 
\leaves(\tree(\wl{G}[\wl{H}^{j_1+k_1}[\wl{u}]])) \she 
\leaves(\tree(\wl{G}[\wl{H}^{j_1+2k_1}[\wl{u}]])) \she \cdots.
\]
Then by~\eqref{eq:leavestree}, we have
\begin{equation*}
\tree(\tp{G}[{\tp{H}}^{c' j_1 + d'}[\tp{u}]]) \she 
\tree(\tp{G}[{\tp{H}}^{c' (j_1+k_1) + d'}[\tp{u}]]) \she 
\tree(\tp{G}[{\tp{H}}^{c' (j_1+2k_1) + d'}[\tp{u}]]) \she \cdots.
\end{equation*}
Let \(j'_1 = c' j_1 + d'\) and \(k'_1 = c' k_1\); then
\begin{equation}
\label{eq:pathStrict}
\tree(\tp{G}[{\tp{H}}^{j'_1        }[\tp{u}]]) \she 
\tree(\tp{G}[{\tp{H}}^{j'_1 +   k'_1}[\tp{u}]]) \she 
\tree(\tp{G}[{\tp{H}}^{j'_1 + 2 k'_1}[\tp{u}]]) \she \cdots.
\end{equation}

Now, by Conjecture~\ref{conj:abstHighPeriodic},
there exist \(j_2 \ge 0\) and \(k_2 >0\) such that
\begin{equation}
\label{eq:byKruskal}
H^{j_2}[u] \he_\sty H^{j_2+k_2}[u] \he_\sty H^{j_2+2 k_2}[u] \he_\sty \cdots.
\end{equation}
Let \(j_3\) be the least \(j_3\) such that \(j_3= j'_1+i_3\,k'_1 = j_2 + m_0\) for some \(i_3\) and \(m_0\),
and \(k_3\) be the least common multiple of \(k'_1\) and \(k_2\), whence \(k_3=m_1 k'_1=m_2 k_2\) for some
 \(m_1\) and \(m_2\).
Then since the mapping \(s \mapsto \tree(G[{H}^{m_0}[s]])\) is monotonic,
from~\eqref{eq:byKruskal} we have:
\[
\tree(G[{H}^{j_3}[u]]) \he \tree(G[{H}^{j_3+k_2}[u]]) \he \tree(G[{H}^{j_3+2k_2}[u]]) \he \cdots.
\]
Since \(j_3 + i k_3 = j_3 + (i m_2) k_2\), we have
\begin{equation}
\label{eq:byKruskalAdjusted}
\tree(G[{H}^{j_3}[u]]) \he \tree(G[{H}^{j_3+k_3}[u]]) \he \tree(G[{H}^{j_3+2k_3}[u]]) \he \cdots.
\end{equation}
Also, since \(j_3 + i k_3 = j'_1 + (i_3 + i m_1)k'_1\), from~\eqref{eq:pathStrict} we have
\begin{equation}
\label{eq:strictAdjusted}
\tree(\tp{G}[\tp{H}^{j_3}[\tp{u}]]) \she \tree(\tp{G}[\tp{H}^{j_3+k_3}[\tp{u}]]) 
\she \tree(\tp{G}[\tp{H}^{j_3+2k_3}[\tp{u}]]) \she \cdots.
\end{equation}
Thus, from~\eqref{eq:reldir},~\eqref{eq:byKruskalAdjusted}, and~\eqref{eq:strictAdjusted} we obtain
\begin{equation}
\label{eq:periDir}
\tree(G[{H}^{j_3}[u]]) \she \tree(G[{H}^{j_3+k_3}[u]]) \she \tree(G[{H}^{j_3+2k_3}[u]]) \she \cdots.
\end{equation}
By applying \(\rmd\) to this sequence, and by~\eqref{eq:relbefore} and Lemma~\ref{lem:rmdirOrdMon}, we have
\begin{equation}
\label{eq:original}
\tree(C[D^{j_0 + j_3 k_0}[t]]) \she \tree(C[D^{j_0 + (j_3+k_3) k_0}[t]])  
\she \tree(C[D^{j_0 + (j_3 + 2 k_3) k_0}[t]])  \she \cdots.
\end{equation}
We define \(j = j_0 + k_0 j_3\) and \(k = k_0 k_3 \); then we obtain
\[
\tree(C[D^j[t]]) \she \tree(C[D^{j+k}[t]]) \she \tree(C[D^{j+2k}[t]]) \she \cdots.
\]

Finally, we show that
\(\size{\tree(C[D^{j+ik}[t]])} \le \expn{n}{ci+d}\) for some \(c\) and \(d\).
Since \(C\) and \(D\) are single-hole contexts, 
\(|C[D^{j+ik}[t]]|= |C|+ (j+ik)|D|+|t|\).
Let \(c= k|D|\) and \(d= |C| + j|D| + |t|\);
then \(|C[D^{j+ik}[t]]|= ci+d\).
It is well-known that, for an order-\(n\) \(\stlambda\)-term \(s\),
we have \(|\tree(s)| \le \expn{n}{|s|}\)
(see, e.g., \cite[Lemma~3]{Terui12RTA}).
Thus, we have 
\(\size{\tree(C[D^{j+ik}[t]])} \le \expn{n}{ci+d}\).
\end{proof}

The step obtaining~\eqref{eq:periDir} (the steps using Lemma~\ref{lem:strictlemma}
and obtaining~\eqref{eq:original}, resp.)
indicates why we need to require
\(\tree(C[D^{j+ik}[t]])\she \tree(C[D^{j+i'k}[t]])\) for any \(i<i'\)
rather than \(|\tree(C[D^{j+ik}[t]])| < |\tree(C[D^{j+i'k}[t]])|\)
\ (\(\tree(C[D^{j+ik}[t]]) \neq \tree(C[D^{j+i'k}[t]])\), resp.)\,
to make the induction work.

\anp
\section{Second-order Kruskal's theorem}
\label{sec:Kruskal}

In this section, we prove Conjecture~\ref{conj:abstHighKruskal} (hence also
Conjecture~\ref{conj:abstHighPeriodic}) up to order-2.
First, we extend the \contraction{} \(\tle\) on trees to a family of relations
\(\tle_{\sty}\) by using logical relation:
\begin{inparaenum}[(i)]
\item \(t_1\tle_{\T}t_2\) if \(\emptyset\pK t_1:\T\),
\(\emptyset\pK t_2:\T\), and \(\eval{t_1} \tle \eval{t_2}\).
\item \(t_1\tle_{\sty_1\to\sty_2} t_2\) if 
\(\emptyset\pK t_1:\sty_1\to\sty_2\),
\(\emptyset\pK t_2:\sty_1\to\sty_2\), and
\(t_1s_1\tle_{\sty_2}t_2s_2\) holds 
for every \(s_1,s_2\) such that \(s_1\tle_{\sty_1}s_2\).
\end{inparaenum}
We often omit the subscript \(\sty\) and just write \(\tle\) for \(\tle_\sty\).
We also write \(x_1\COL\sty_1,\ldots,x_k\COL\sty_k \models t\tle_{\sty} t'\)
if 
\([s_1/x_1,\ldots,s_k/x_k]t \tle_{\sty} [s'_1/x_1,\ldots,s'_k/x_k]t'\)
for every \(s_1,\ldots,s_k,s_1',\ldots,s_k'\) such that \(s_i\tle_{\sty_i} s_i'\).

The relation \(\tle_\sty\) is well-defined for \(\beta\eta\)-equivalence classes,
and by the abstraction lemma of logical relation, it turns out that
the relation \(\tle_{\sty}\) is a pre-order for any \(\sty\)
(see 
Appendix~\ref{sec:treeFunction} 
for these).
Note that the relation is also preserved by applications %
by the definition of the logical relation.
It remains
 to show that \(\he_\sty\) is a well quasi-order for \(\sty\) of order up to 2.

For \(\ell\)-ary terminal \(a\) and \(k \ge \ell\),
we write \(\CanoTerms{a}{k}\) for the set of terms
\[
\set{\lambda x_1.\cdots\lambda x_k.a\,x_{i_1}\,\ldots\,x_{i_\ell}\mid 
    i_1\cdots i_\ell \mbox{ is a subsequence of } 1\cdots k}.
\]
We define \(\T^0 \to \T \ \defe \ \T\) and \(\T^{n+1}\to\T \ \defe \ \T \to (\T^{n} \to \T)\).

The following lemma allows us to reduce %
\(t\,{\tle_{\sty}}\,t'\) on any order-2 type \(\sty\) to 
(finitely many instances of) that on order-0 type \(\T\).
\begin{lemma}
\label{lem:representative}
Let \(\TERMS\) be a ranked alphabet;
\(\sty\) be \((\T^{k_1}\to\T)\to \cdots \to (\T^{k_m}\to\T)\to\T\);
\(a^{j}_{i}\) be a \(j\)-ary terminal not in \(\TERMS\) for \(1 \le i \le m\) and \(0 \le j \le k_i\);
and \(t, t'\) be \(\stlambda\)-terms whose type is \(\sty\) and whose terminals are in \(\TERMS\).
Then 
\(t\tle_{\sty} t'\) if and only if
\(t\,u_1\,\ldots\,u_m \tle_{\T} t'\,u_1\,\ldots\,u_m\)
for every \(u_i\in \cup_{j\le k_i} \CanoTerms{a^j_i}{k_i}\).
\end{lemma}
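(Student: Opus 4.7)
The plan is to handle the two directions separately. The forward direction is essentially free: each canonical representative \(u_i\) is a closed \(\stlambda\)-term of order-1 type, and by reflexivity of \(\tle_{\sty}\) (already established as a preorder just before the lemma) we have \(u_i \tle_{\T^{k_i}\to\T} u_i\); the closure of the logical relation under application then gives \(t\,u_1\cdots u_m \tle_{\T} t'\,u_1\cdots u_m\).

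For the backward direction, fix an arbitrary pair \(s_i \tle_{\T^{k_i}\to\T} s_i'\) and put them in \(\beta\eta\)-normal form \(s_i = \lambda x_1\cdots x_{k_i}.\, T_i\) and \(s_i' = \lambda x_1\cdots x_{k_i}.\, T_i'\), so \(T_i, T_i'\) are applicative trees built from \(\TERMS\) and the variables \(x_1,\dots,x_{k_i}\). Let \(\sigma_i,\sigma_i' \subseteq \{1,\dots,k_i\}\) denote the index sets of the variables actually occurring in \(T_i\) and \(T_i'\). The first technical step is to prove \(\sigma_i \subseteq \sigma_i'\): if some \(j \in \sigma_i \setminus \sigma_i'\) existed, then specializing \(s_i \tle s_i'\) with \(\pi_j\) a sufficiently large tree (over a possibly extended alphabet) and the other arguments fixed small and equal on both sides would contradict the size inequality forced by \(T_i[\vec\pi/\vec x] \he T_i'[\vec\pi/\vec x]\), since the right-hand side does not grow with \(\pi_j\).

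Next, I would pick the specific canonical representative \(u_i \defe \lambda x_1\cdots x_{k_i}.\, a^{|\sigma_i'|}_i\,x_{p_1}\cdots x_{p_{|\sigma_i'|}}\), where \(p_1<\cdots<p_{|\sigma_i'|}\) enumerate \(\sigma_i'\), and use the hypothesis for this specific \(\vec u\) to obtain \(\tree(t\,\vec u) \he \tree(t'\,\vec u)\). Define tree operations \(\mathit{sub}_i\) and \(\mathit{sub}_i'\) that replace each node \(a^{|\sigma_i'|}_i(\rho_1,\dots,\rho_{|\sigma_i'|})\) with \(T_i[\rho_q/x_{p_q}]_q\) and \(T_i'[\rho_q/x_{p_q}]_q\) respectively (the unsubstituted variables do not occur thanks to \(\sigma_i \subseteq \sigma_i'\)), and let \(\mathit{sub} = \mathit{sub}_m\circ\cdots\circ\mathit{sub}_1\), similarly \(\mathit{sub}'\). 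A direct induction on the \(\beta\)-normal form of \(t\), using that the \(a^j_i\) are fresh and so their occurrences in \(\tree(t\,\vec u)\) all come from the \(y_i\)-sites, yields \(\mathit{sub}(\tree(t\,\vec u)) = \tree(t\,\vec s)\) and \(\mathit{sub}'(\tree(t'\,\vec u)) = \tree(t'\,\vec{s'})\).

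The remaining and crucial sub-lemma is \(\pi \he \pi' \Rightarrow \mathit{sub}(\pi) \he \mathit{sub}'(\pi')\), proved by induction on the derivation of \(\pi \he \pi'\). The case of a common root in \(\TERMS\) is immediate from the inductive hypothesis; the case of a common root \(a^{|\sigma_i'|}_i\) with pairwise-matched children uses \(s_i \tle s_i'\) to convert the inductively related children into \(T_i[\mathit{sub}(\rho_q)/x_{p_q}] \he T_i'[\mathit{sub}'(\rho_q')/x_{p_q}]\). The subtle case is when \(\pi \he \pi_j'\) is obtained by the second embedding rule with \(\pi'\) rooted at \(a^{|\sigma_i'|}_i\): by IH one has \(\mathit{sub}(\pi) \he \mathit{sub}'(\pi_j')\), and the choice \(p_j \in \sigma_i'\) (the whole reason for indexing \(u_i\) by \(\sigma_i'\) rather than any other set) guarantees that \(x_{p_j}\) occurs in \(T_i'\), so \(\mathit{sub}'(\pi_j')\) appears as a subtree of \(\mathit{sub}'(\pi')\), and transitivity of \(\he\) closes the case. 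The hardest part of the plan is precisely this engineering of the substitutions so that both \(\mathit{sub}_i\) and \(\mathit{sub}_i'\) are simultaneously well-defined on the same fresh labels and the second-rule case of the sub-lemma survives; this is also what forces the preliminary inclusion \(\sigma_i\subseteq\sigma_i'\) to be treated up front.
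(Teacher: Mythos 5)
Your proposal is correct, and its combinatorial core is the same as the paper's: both proofs hinge on choosing, for each argument position, the canonical representative that records exactly the variables occurring in the \(\beta\eta\)-long normal form of the argument on the \emph{target} side, so that in the second-rule case of the homeomorphic embedding the recorded child reappears as a genuine subtree after substitution and transitivity of \(\he\) closes the case. The packaging differs in two ways. First, the paper only proves the diagonal instance \(t\,s_1\cdots s_m \tle_\T t'\,s_1\cdots s_m\) (the general case with \(s_i\tle s_i'\) then follows from monotonicity of \(t'\), i.e.\ reflexivity of \(\tle_\sty\) plus transitivity); since both sides receive the same arguments, your sets \(\sigma_i\) and \(\sigma_i'\) coincide and the preliminary inclusion \(\sigma_i\subseteq\sigma_i'\) is not needed. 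You attack the non-diagonal case directly, which works but costs you that extra lemma, and its size-based proof silently assumes that arbitrarily large trees exist among the admissible arguments of the logical relation; this is harmless under the paper's convention that the ambient alphabet contains all terminals discussed (in particular the fresh unary \(a_i^1\) when \(k_i\geq 1\)), but it is exactly the kind of corner case the diagonal reduction lets one sidestep. Second, the paper runs an induction on the size of the normal form \(t_0'\) of the body of \(t'\), manipulating terms and \(\tle_\T\) directly, whereas you first compute the trees \(\tree(t\,\vec{u})\) and \(\tree(t'\,\vec{u})\) and transport the embedding through tree-substitution operators; your sub-lemma \(\pi\he\pi'\Rightarrow\mathit{sub}(\pi)\he\mathit{sub}'(\pi')\) and its case analysis correspond essentially line-by-line to the paper's induction (your ``subtle case'' is the paper's case \(h'=f_i\) under the second rule, where the occurrence of \(x_{q(i,p)}\) as a leaf of \(s_{i,0}\) plays the role of your \(p_j\in\sigma_i'\)). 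Both routes are sound; the paper's is slightly leaner, while yours makes the semantic content of the substitution step more explicit.
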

\begin{proof}
The ``only if'' direction is trivial by the definition
of \(\tle_{\sty}\).
To show the opposite, assume the latter holds.
We need to show that
\(t\,s_1\,\ldots\,s_m \tle_{\T} t'\,s_1\,\ldots\,s_m\) holds for
every combination of \(s_1,\ldots,s_m\) such that \(\pK s_i:\sty_i\) for each \(i\).
Without loss of generality, 
we can assume that \(t\), \(t'\), \(s_1,\ldots,s_m\) are \(\beta\eta\) long normal forms,
and hence that
\[
\begin{aligned}
t &= \lambda f_1.\cdots\lambda f_m.t_0 &
f_1\COL\T^{k_1}\to\T,\ldots,f_m\COL\T^{k_m}\to\T &\pK t_0:\T
\\
t' &= \lambda f_1.\cdots\lambda f_m.t'_0 &
f_1\COL\T^{k_1}\to\T,\ldots,f_m\COL\T^{k_m}\to\T &\pK t'_0:\T
\\
s_i &= \lambda x_1.\ldots\lambda x_{k_i}.s_{i,0} \mspace{30mu}&
x_1\COL\T,\ldots,x_{k_i}\COL\T &\pK s_{i,0}:\T \quad\mbox{ (for each \(i\))}
\end{aligned}
\]

For each \(i \le m\), let \(\FV(s_{i,0}) = \set{x_{q(i,1)},\ldots,x_{q(i,\ell_i)}}\), and
\(u_i \in \CanoTerms{a^{\ell_i}_i}{k_i}\) be the term
\(\lambda x_1.\cdots\lambda x_{k_i}.a^{\ell_i}_i\,x_{q(i,1)}\,\cdots\,x_{q(i,\ell_i)}\).
Let \(\theta\) and \(\theta'\) be the substitutions \([u_1/f_1,\ldots,u_m/f_m]\) and 
\([s_1/f_1,\ldots,s_m/f_m]\) respectively.
It suffices to show that \(\theta t_0 \tle_{\T}\theta t_0'\) implies
\(\theta' t_0 \tle_{\T}\theta' t_0'\), which we prove
by induction on \(|t_0'|\).

By the condition \(f_1\COL\T^{k_1}\to\T,\ldots,f_m\COL\T^{k_m}\to\T\pK t_0:\T\),
\(t_0\) must be of the form \(h\,t_1\,\cdots\,t_\ell\) where \(h\) is \(f_i\)
or a terminal \(a\) in \(\TERMS\), and \(\ell\) may be \(0\).
Then we have
\[
\tree(\theta t_0) \!=\!
\begin{cases}
a\tree(\theta t_1)\cdots \tree(\theta t_\ell) &\mspace{-10mu} (h\,{=}\,a)
\\
a^{\ell_i}_i\tree(\theta t_{q(i,1)})\cdots \tree(\theta t_{q(i,\ell_i)}) &\mspace{-10mu} (h\,{=}\,f_i)
\end{cases}
\]
Similarly, \(t'_0\) must be of the form \(h'\,t'_1\,\cdots\,t'_{\ell'}\) and the corresponding
equality on \(\tree(\theta t'_0)\) holds.
By the assumption \(\theta t_0\tle_{\T}\theta t_0'\), we have 
\(\tree(\theta t_0) \he \tree(\theta t_0')\).
We perform case analysis on the rule used for deriving \(\tree(\theta t_0) \he \tree(\theta t_0')\)
(recall Definition~\ref{def:homeoEmb}).
\begin{itemize}
\item Case of the first rule:
In this case, the roots of \(\tree(\theta t_0)\) and \(\tree(\theta t'_0)\) are the same and hence
\(h=h'\) and \(\ell=\ell'\).
We further perform case analysis on \(h\).
\begin{asparaitem}
\item Case \(h=a\): For \(1 \le j \le \ell\), since \(\tree(\theta t_j) \he \tree(\theta t'_j)\),
by induction hypothesis, we have \(\theta' t_j \tle_\T \theta' t'_j\).
Hence \(\theta' t_0 \tle_\T \theta' t'_0\).
\item Case \(h=f_i\): For \(1 \le j \le \ell_i\), since \(\tree(\theta t_{q(i,j)}) \he \tree(\theta t'_{q(i,j)})\), 
by induction hypothesis, we have \(\theta' t_{q(i,j)} \tle_\T \theta' t'_{q(i,j)}\).
Hence, \([\theta' t_{q(i,j)}/x_{q(i,j)}]_{j \le \ell_i}s_{i,0} \tle_\T [\theta' t'_{q(i,j)}/x_{q(i,j)}]_{j \le \ell_i}s_{i,0}\).
By the definition of \(q(i,j)\), \(\theta' t_0 \red [\theta' t_j/x_j]_{j \le k_i}s_{i,0}
 = [\theta' t_{q(i,j)}/x_{q(i,j)}]_{j \le \ell_i}s_{i,0}\),
and similarly, \(\theta' t'_0 \red [\theta' t'_{q(i,j)}/x_{q(i,j)}]_{j \le \ell_i}s_{i,0}\);
hence we have \(\theta' t_0 \tle_\T \theta' t'_0\).
\end{asparaitem}
\item Case of the second rule:
We further perform case analysis on \(h'\).
\begin{asparaitem}
\item Case \(h'=a\): We have \(\tree(\theta t_0) \he \tree(\theta t'_p)\) for some \(1 \le p \le \ell'\).
Hence by induction hypothesis, we have \(\theta' t_0 \tle_{\T} \theta' t'_p\),
and then \(\theta' t_0 \tle_{\T} \theta' t'_0\).
\item Case \(h'=f_i\): We have \(\tree(\theta t_0) \he \tree(\theta t'_{q(i,p)})\) for some \(1 \le p \le \ell_i\).
Hence by induction hypothesis, we have \(\theta' t_0 \tle_{\T} \theta' t'_{q(i,p)}\).
Also, by the definition of \(q(i,p)\), \(x_{q(i,p)}\) occurs in \(s_{i,0}\).
Since \(s_{i,0}\) is a \(\beta\eta\) long normal form of order-0, the order-0 variable \(x_{q(i,p)}\)
occurs as a leaf of \(s_{i,0}\); 
hence \(\tree(\theta' t'_{q(i,p)}) \he [\tree(\theta' t'_{q(i,j)})/x_{q(i,j)}]_{j \le \ell_i}s_{i,0}\).
Therefore \(\theta' t_0 \tle_\T [\theta' t'_{q(i,j)}/x_{q(i,j)}]_{j \le \ell_i}s_{i,0}\).
Since \(\theta' t'_0 \red [\theta' t'_{q(i,j)}/x_{q(i,j)}]_{j \le \ell_i}s_{i,0}\), we have
\(\theta' t_0 \tle_{\T} \theta' t'_0\).
\end{asparaitem}
\end{itemize}
\end{proof}

As a corollary, we obtain a second-order version of Kruskal's tree theorem.
\begin{theorem}
\label{thm:order2-kruskal}
Let \(\TERMS\) be a ranked alphabet,
\(\sty\) be an at most order-2 type, and
\(t_0,t_1,t_2,\ldots\) be an infinite sequence of 
\(\stlambda\)-terms whose type is \(\sty\) and whose terminals are in \(\TERMS\).
Then, there exist \(i<j\) such that \(t_i \tle_{\sty} t_j\).
\end{theorem}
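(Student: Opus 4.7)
The plan is to combine Lemma~\ref{lem:representative} with Proposition~\ref{prop:Kruskal} and the classical fact that a finite product of well quasi-orders is again a well quasi-order.

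First, I would put the type $\sty$ into canonical form. Since $\order(\sty)\le 2$, every argument has order at most $1$, and every order-$\le 1$ type over the unique base $\T$ has the form $\T^k\to\T$ with $k\ge 0$ (taking $k=0$ for $\T$ itself). Hence we may assume
\[
\sty \;=\; (\T^{k_1}\to\T)\to\cdots\to(\T^{k_m}\to\T)\to\T
\]
for some $m,k_1,\dots,k_m\ge 0$. This is exactly the hypothesis of Lemma~\ref{lem:representative}. Setting $U_p\defe\bigcup_{j\le k_p}\CanoTerms{a_p^j}{k_p}$ for each $p$ and $U\defe U_1\times\cdots\times U_m$, each $U_p$ is a finite set of $\stlambda$-terms and therefore $U$ is finite; moreover Lemma~\ref{lem:representative} tells us that $t_i\tle_\sty t_j$ is equivalent to $\tree(t_i\,u_1\cdots u_m)\he\tree(t_j\,u_1\cdots u_m)$ holding for every tuple $(u_1,\dots,u_m)\in U$.

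I would then form the infinite sequence of $|U|$-tuples $\Phi_i\defe(\tree(t_i\,\vec u))_{\vec u\in U}$ for $i\ge 0$, whose components are finite trees over the finite ranked alphabet $\TERMS'\defe\TERMS\cup\{a_p^j\mid 1\le p\le m,\ 0\le j\le k_p\}$. By Proposition~\ref{prop:Kruskal}, $(\TREE{\TERMS'},\he)$ is a well quasi-order; and well quasi-orders are closed under finite products (provable in the standard way by iteratively extracting a weakly increasing infinite subsequence one coordinate at a time). Applying this to $(\Phi_i)_{i\ge 0}$ with the coordinatewise order yields a good pair: indices $i<j$ such that $\tree(t_i\,\vec u)\he\tree(t_j\,\vec u)$ holds \emph{simultaneously} for every $\vec u\in U$. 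Lemma~\ref{lem:representative} then converts this into $t_i\tle_\sty t_j$, proving the theorem.

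I do not expect a genuine obstacle: Lemma~\ref{lem:representative} carries all of the difficulty, reducing an order-2 wqo question to a \emph{finite} battery of order-0 Kruskal comparisons, after which Proposition~\ref{prop:Kruskal} and the finite-product closure of wqo's immediately deliver the simultaneous good pair. The argument is specific to order $\le 2$ precisely because the set $U$ of canonical argument tuples is finite; at order $3$ or higher, the argument positions themselves range over an infinite higher-order function space and no analogous finite test set is presently available, which is why Conjecture~\ref{conj:abstHighKruskal} remains open beyond order $2$.
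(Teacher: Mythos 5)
Your proposal is correct and follows essentially the same route as the paper: reduce via Lemma~\ref{lem:representative} to a finite battery of ground-type comparisons over the canonical argument tuples, then invoke Kruskal's tree theorem to obtain a pair $i<j$ for which all of these comparisons hold simultaneously. The only difference is in how that simultaneity is extracted: you appeal to the closure of well quasi-orders under finite products, whereas the paper bundles the $p$ trees $\tree(t_i\,u_{k,1}\cdots u_{k,m})$ under a fresh $p$-ary root constructor $b$ and applies Kruskal's theorem once to the resulting single sequence of trees, using the fact that $b$ occurs only at the root to recover the coordinatewise embeddings — both mechanisms are standard and sound.
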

\begin{proof}
Since \(\sty\) is at most order-2,
it must be of the form \((\T^{k_1}\to\T)\to \cdots \to (\T^{k_m}\to\T)\to\T\).
Let \(a^{j}_{i}\) be a \(j\)-ary terminal not in \(\TERMS\) for \(1 \le i \le m\) and \(0 \le j \le k_i\);
\((\cup_{j \le k_1} \CanoTerms{a^{j}_1}{k_1}) \times \cdots \times (\cup_{j \le k_m} \CanoTerms{a^j_m}{k_m})\)
be \(\set{(u_{1,1},\ldots,u_{1,m}), \ldots, (u_{p,1},\ldots,u_{p,m})}\);
\(b\) be a \(p\)-ary terminal not in \(\TERMS\cup\set{a^j_i \mid 1 \le i \le m,\, 0 \le j \le k_i}\);
and \(s_i\) be the term \(b\,(t_i\,u_{1,1}\,\cdots\,u_{1,m})\,\cdots\, (t_i\,u_{p,1}\,\cdots\,u_{p,m})\)
for each \(i\in\set{0,1,2,\ldots}\).
Since the set of terminals in \(s_0,s_1,s_2,\dots\) is finite,
by Kruskal's tree theorem, there exist \(i,j\) such that 
\(s_i \tle_{\T} s_j\) and \(i<j\).
Since \(b\) occurs just at the root of \(s_k\) for each \(k\),
\(s_i\tle_{\T}s_j\) implies \(t_i\,u_{k,1}\,\cdots\,u_{k,m}
\tle_{\T} t_j\,u_{k,1}\,\cdots\,u_{k,m} \) for every \(k\in\set{1,\ldots,p}\).
Thus, by Lemma~\ref{lem:representative}, we have \(t_i\tle_{\sty}t_j\) as required.
\end{proof}

\section{Related Work}
\label{sec:related}

As mentioned in Section~\ref{sec:intro}, to our knowledge,
pumping lemmas for higher-order word languages have been established only up to
order-2~\cite{Hayashi73}, whereas we have proved (unconditionally) a pumping lemma
for order-2 tree languages and order-3 word languages.
Hayashi's pumping lemma for indexed languages (i.e., order-2 word languages)
is already quite complex, and it is unclear how to generalize it to arbitrary orders.
In contrast, our proof of a pumping lemma works for arbitrary orders, although 
it relies on the conjecture on higher-order Kruskal's tree theorem. 
Parys~\cite{Parys12STACS} and Kobayashi~\cite{Kobayashi13LICS} studied pumping lemmas
for collapsible pushdown automata and higher-order recursion schemes respectively. Unfortunately,
they are not applicable to word/tree \emph{languages} generated by (non-deterministic)
grammars. 

As also mentioned in Section~\ref{sec:intro}, the strictness of
hierarchy of higher-order word languages has already been shown by using a complexity 
argument~\cite{Engelfriet91,KartzowPC}.
We can use our pumping lemma (if the conjecture is discharged) to obtain a simple alternative proof of
the strictness, using the language \(\set{ a^{\expn{n}{k}} \mid k\geq 0}\) as a witness
of the separation between the classes of order-\((n+1)\) word languages and order-\(n\) word languages.
In fact, the pumping lemma would imply that there is no order-\(n\) grammar that generates
\(\set{ a^{\expn{n}{k}} \mid k\geq 0}\), whereas an order-\((n+1)\) grammar that generates the same 
language can be easily constructed.

We are not aware of studies of the higher-order version of Kruskal's tree theorem (Conjecture~\ref{conj:abstHighKruskal})
or the periodicity of tree functions expressed by the simply-typed \(\lambda\)-calculus
(Conjecture~\ref{conj:abstHighPeriodic}), which seem to be of independent
interest. Zaionc~\cite{Zaionc87,Zaionc88} characterized the class of 
(first-order) word/tree functions definable in the simply-typed \(\lambda\)-calculus. To obtain
higher-order Kruskal's tree theorem, we may need some characterization of
\emph{higher-order} definable tree functions instead.

We have heavily used the results of Parys' work~\cite{Parys16ITRS} and 
our own previous work~\cite{asada_et_al:LIPIcs:2016:6246}, which both use
intersection types for studying properties of higher-order languages.
Other uses of intersection types in studying higher-order grammars/languages are found in
\cite{Kobayashi09POPL,KO09LICS,Parys14,Kobayashi13LICS,Parys16diagonality,KMS13HOSC,KIT14FOSSACS}.

\section{Conclusion}
\label{sec:conc}

We have proved a pumping lemma for higher-order languages of arbitrary orders,
modulo the assumption that a higher-order version of Kruskal's  tree theorem holds.
We have also proved the assumption indeed holds for the second-order case, yielding
a pumping lemma for order-2 tree languages and order-3 word languages.
Proving (or disproving) the higher-order Kruskal's  tree theorem is left for future work.

\subsection*{Acknowledgments}

We would like to thank Pawel Parys for discussions on his type system,
and anonymous referees for useful comments.
This work was supported by JSPS Kakenhi %
15H05706.

\anp

\bibliographystyle{plain}

\appendix
\asd{meta variables:
\begin{itemize}
\item term: \(s,t,u,v\) (often \(s,t\) correspond to \(v,u\))
\item context: \(C,D,G,H\)
\item evaluation context: \(E\)
\item normal form: \(N\)
\item variable: \(x,y,z,f,g\)
\item tree: \(\pi\)
\item word: \(w\)
\item grammar: \(\G\)
\item rule of grammar: \(r\)
\item non-terminal: \(A,B\)
\item terminal/constants: \(a\)
\item \(F\), \(M\) are used in Parys's type system
\item derivation: \(\dt\)
\item order: \(n\)
\item natural numbers: \(m,k,\ell,i,j,\dots\) %
\end{itemize}
}

\anp
\section{More Details on Section~\ref{sec:Parys}}
\label{sec:parys-details}

We provide more details on how Parys' type system~\cite{Parys16ITRS} can be modified to obtain
Lemmas~\ref{lem:parys3} and \ref{lem:parys5}.
Some familiarity (especially, intuitions on flags and markers) with Parys' type system
is required to understand this section.
In Section~\ref{sec:parys-type},
 we first present a variant of Parys' type system 
and state key lemmas. We then prove Lemma~\ref{lem:parys3} using the lemmas.
After preparing some basic lemmas about the type system~\ref{sec:basic}
we prove the key lemmas in Sections~\ref{sec:expansion} and \ref{sec:reduction}.
In Section~\ref{sec:parys-trans2}, we modify the type system to show how to extract a triple
\((G,H,u)\) that satisfies the requirement of Lemma~\ref{lem:parys5}.
\subsection{A Variant of Parys' Type System and its Key Properties}
\label{sec:parys-type}

Below we fix a grammar \(\GRAM\).

The set of types, ranged over by \(\pty\), is defined by:
\[
\begin{array}{l}
\pty \mbox{ (types) }::= (F, M, \prty) \qquad \prty \mbox{ (raw types) }::=\T \mid \pity\to \prty\\
\pity \mbox{ (intersection types) }::= \set{\pty_1,\ldots,\pty_k}
\end{array}
\]
Here, \(F\) and \(M\) range over the finite powerset of the set of natural numbers,
and \(F\) and \(M\) must be disjoint.
Intuitively, \((F, M,\T)\) is the type of trees which contain flags of orders in \(F\)
and markers of orders in \(M\). The type \((F, M, \pity\to \prty)\) describes a function term which,
when viewed as a function, takes an argument that has all the types in \(\pity\) and returns
a value of type \(\prty\), and when viewed as a term, contains  flags of orders in \(F\)
and markers of orders in \(M\). Thus, each type expresses the ``dual'' views of a term,
both as a function and a term.
In order for \(\pity\to \prty\) to be well-formed,
for each \((F,M,\prty')\in \pity \), it must be the case that \(F,M\subseteq \set{i
\mid 0\leq i < \order(\pity\to\prty)}\).
For a set of natural numbers \(S\) and a natural number \(n\), we define 
\(S\restrict{<n} \defe \set{m \in S \mid m < n}\) and
\(S\restrict{\ge n} \defe \set{m \in S \mid m \ge n}\).
We assume some total order \(<\) on types.

A type environment is a set of bindings of the form \(x\COL \pty\), which may contain 
more than one binding on the same variable. 
We use the meta-variable \(\PTE\) for a type environment.
We require that all the marker sets occurring in \(\PTE\) are mutually disjoint, i.e.,
if \(x\COL(F,M,\prty), x'\COL(F',M',\prty')\in \PTE\), then
either \(M\cap M'=\emptyset\) or \(x\COL(F,M,\prty)=x'\COL(F',M',\prty')\).
We write \(\Markers(\PTE)\) for the set of markers occurring in \(\PTE\), i.e.,
\(\uplus \set{M \mid x\COL(F,M,\prty)\in \PTE}\).
Also, we write \(\Markers(\pity)\) for \(\Markers(x\COL \pity)\).
We write \(\PTE\subPTE \PTE'\) if \(\PTE'\subseteq \PTE\) and
\(M=\emptyset\) for each \((x\COL(F,M,\prty))\in\PTE\setminus\PTE'\).
We define \(\PTE_1 \PTEcup \PTE_2\) as \(\PTE_1\cup\PTE_2\) only if
\(M=\emptyset\) for any \((x\COL (F,M,\prty)) \in \PTE_1\cap\PTE_2\).

The operation \(\Comp_{n}(\set{(F_1,c_1),\ldots,(F_k,c_k)},M)=(F,c)\),
where \(c, c_i\) are natural numbers, 
is defined by:
 \[
 \begin{array}{l}
   f'_0 = 0 \qquad f'_{\ell} = \left\{ \begin{array}{ll} f_{\ell-1} & \mbox{if \(\ell-1\in M\)}\\
        0 & \mbox{otherwise}
    \end{array}\right. \mbox{for each \(\ell\in\set{1,\ldots,n}\)}\\
   f_\ell = f'_\ell + |\set{i \mid \ell\in F_i}| \quad \mbox{for each \(\ell\in\set{0,\ldots,n-1}\)}\\
   F = \set{\ell\in\set{0,\ldots,n-1} \mid f_\ell>0}\setminus M\qquad
   c = f'_{n} + c_1+\cdots + c_k
 \end{array}
 \]
Here, (although we use a set notation) please note that the first argument of 
\(\Comp_n\) is a \emph{multiset}. In fact, \(\Comp_1(\set{(\set{0},0),(\set{0},0)},\set{0})
= 2\), but \(\Comp_1(\set{(\set{0},0)},\set{0})=1\).
We write \(\mcup\) for the union of multisets: \((X \mcup Y)(x) \defe X(x)+Y(x)\).
Note that, by unfolding the definition we have:
\begin{align*}
F 
&= \set{\ell <n \mid
\exists j \in \cup_{i \le k}F_i . \ 
\ell = \min\set{\ell' \ge j \mid \ell' \notin M }\,}
\\
f'_n &=
\big|\set{i \mid n-1 \in F_i}\big|
+ \cdots +
\big|\set{i \mid n-j \in F_i}\big|
=
\textstyle\sum_{i \le k} \big| \set{n-1,\dots,n-j} \cap F_i  \big|
\\&
\phantom{f'_n =}\ \text{where \(j\) is such that: }0\le j\le n;\ n-1,\dots,n-j \in M;\ n-(j+1) \notin M.
\end{align*}

A type judgment (or more precisely, a type-based transformation judgment)
is of the form \(\PTE\PM t:\pty\hascost{c}\tr s\) where 
\(n>0\),
\(\PTE\) is a type environment that contains types of orders at most \(n-1\),
\(\eorder(t) \le n\), \(c\) is a natural number called a (\emph{flag}) \emph{counter}, and \(s\) is a \(\stlambda\)-term. 
The flags and markers in the judgment must be at most \(n-1\).
We present typing rules below. 
For some technical convenience, 
we made some changes to Parys' original type system~\cite{Parys16ITRS}, which are summarized below.
\begin{itemize}
\item We have added the output \(s\) of the transformation. Intuitively, it simulates the behavior of \(t\).
\item We allow markers to be placed at any node of a derivation, not just at leaves (see the rule \rname{PTr-Mark} below).
\item We added a rule (\rname{PTr-NT} below) for explicitly unfolding non-terminals. 
Here we assume that a grammar is expressed as a set of equations of the form
\(\set{A_1=t_1,\ldots,A_\ell=t_\ell}\) where \(A_1,\ldots,A_\ell\) are distinct from each other. 
A set of rewriting rules \(\set{A\,x_1\,\ldots\,x_k\to t_i\mid i\in \set{1,\ldots,p}}\)
is expressed as the equation
\(A=\lambda x_1.\cdots\lambda x_k.(t_1+\cdots + t_p)\).
In the original formulation of Parys, infinite \(\lambda\)-terms (represented as regular trees) were considered instead.
\item We consider judgments \(\PTE\PM t:\pty\hascost{c}\) only when \(n>0\).
(In the original type system of Parys~\cite{Parys16ITRS}, \(n\) may be \(0\).)
\item In rule \rname{PTr-App}, we allow only a single derivation for each argument type.
\end{itemize}

\infrule[PTr-Weak]{\PTE \PM \term:\pty\hascost{c} \tr s\andalso M=\emptyset}
   {\PTE, x\COL(F,M,\prty)\PM \term:\pty\hascost{c} \tr s}

 \infrule[PTr-Mark]{\PTE\PM \term:(F',M',\prty)\hascost{c'}\tr s\\
M\subseteq \set{j \mid \eorder(\term)\leq j<n}\andalso   \Comp_{n}(\set{(F',c')},M\uplus M')= (F,c)
}
        {\PTE\PM\term:(F, M\uplus M',\prty )\hascost{c}\tr s}

         \infrule[PTr-Var]{}{x\COL \pty\PM x\COL \pty\hascost{0}\tr x_{\pty}}
         
         \infrule[PTr-Choice]{\PTE\PM \term_i:\pty\hascost{c}\tr s_i \qquad i=1\lor i=2}
                 {\PTE\PM \term_1+\term_2\COL \pty\hascost{c}\tr s_i}

\infrule[PTr-Abs]{\PTE, x\COL \pity\PM \term:(F,M,\prty) \hascost{c}\tr s
}
  {\PTE \PM \lambda x.\term: (F, %
    M\setminus \Markers(\pity),\pity\to \prty)\hascost{c}\tr
     \lambda \seq{x}_\pity.s
      }
Here, \(\lambda \seq{x}_\pity.s\) stands for \(\lambda x_{\pty_1}.\cdots \lambda x_{\pty_k}.s\)
when \(\pity=\set{\pty_1,\ldots,\pty_k}\) with \( \pty_1< \cdots <\pty_k\).

\infrule[PTr-App]{\eorder(\term_0)=\ell\\
\PTE_0\PM \term_0:(F_0,M_0,\set{(F_1,M_1,\prty_1),\ldots,
(F_k,M_k,\prty_k)}\to \prty)\hascost{c_0}\tr s_0\\
  \PTE_{i}\PM \term_1:(F_{i}',M_{i}',\prty_i)\hascost{c_{i}}\tr s_{i}
\andalso F'_{i}\restrict{<\ell}=F_i\andalso M'_{i}\restrict{<\ell}=M_i
\mbox{ for each $i\in\set{1,\ldots,k}$}\\
  M = M_0\uplus (\biguplus_{i\in\set{1,\ldots,k}} M'_{i})\\
  \Comp_{n}(\set{(F_0,c_0)}\mcup \set{(F'_{i}\restrict{\geq \ell},c_{i})\mid i\in\set{1,\ldots,k}}, M)
   = (F, c)\\
(F_1,M_1,\prty_1)<\cdots< (F_k,M_k,\prty_k)
}
{\PTE_0\PTEcup (\textstyle\sum_{i\in\set{1,\ldots,k}} \PTE_{i}) \PM \term_0\term_1:
  (F, M, \prty)\hascost{c}\tr s_0\, s_{1}\,\cdots\,s_{k}}

\infrule[PTr-Const]{\arity(a)=k \\ %
  \PTE_i \PM \term_i:(F_i,M_i,\T)\hascost{c_i}\tr s_i\mbox{ for each $i\in\set{1,\ldots,k}$}\\
  M = M_1\uplus\cdots\uplus M_k\\
  \Comp_{n}(\set{(\set{0},0),(F_1,c_1),\ldots,(F_k,c_k)},M)=(F,c)
  }
        {\PTE_1\PTEcup\cdots \PTEcup \PTE_k \PM a\,\term_1\,\cdots\,\term_k:
          (F,M,\T)\hascost c\tr a\,s_1\,\cdots\,s_k}

\infrule[PTr-NT]{\PTE \PM t:\pty\hascost{c}\tr s\andalso A=t\in \GRAM}
  {\PTE \PM A:\pty\hascost{c}\tr s}

We write \(\dt \pdt \PTE\PM t:\pty\hascost c\tr s\) if
\(\dt\) is a derivation tree for \(\PTE\PM t:\pty\hascost c\tr s\).

Next we state two key properties of the type system.
The following lemma states that if \(\Lang(\GRAM)\) is infinite, there exists a pumpable derivation tree
in which a part of the derivation can be repeated arbitrarily many times.
\begin{lemma}[existence of a pumpable derivation]
\label{lem:pumpable-derivation}
Let \(\GRAM\) be an order-\(n\) tree grammar and \(S\) be its start symbol.
If \(\Lang(\GRAM)\) is infinite, then 
there exists a derivation 
for \(\emptyset \PM S:(\emptyset,\set{0,\ldots,n-1},\T)\hascost c_1+c_2+c_3\tr C[D[s]]\) 
with \(c_1,c_2>0\), in which for some \(A\), \(\PTE\), and \(\pty\),
\(\PTE \PM A:\pty\hascost{c_1+c_2}\tr D[s]\) is derived from \(\PTE \PM A:\pty\hascost{c_1}\tr s\).
Furthermore, the contexts \(C\) and \(D\) are linear.
\end{lemma}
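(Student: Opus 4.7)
The plan is to combine Parys' size characterization with a pigeonhole argument on derivation trees. First, by Parys' result applied to \(\GRAM\), the infiniteness of \(\Lang(\GRAM)\) yields derivations \(\dt_k \pdt \emptyset \PM S:(\emptyset,\set{0,\ldots,n-1},\T)\hascost{c_k}\tr \pi_k\) with \(c_k\) growing unboundedly in \(k\). Each single inference step contributes only a bounded amount to the flag counter (the \(\Comp_n\) operation combines child counters and adds at most \(n\) from the flag-to-counter conversion), and the branching factor of the tree is bounded by the number of possible refinement types (finitely many). Hence unbounded \(c_k\) forces \(\dt_k\) to contain a root-to-leaf path of unbounded length.

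Next, observe that the set of judgment labels \(\PTE \PM A : \pty\) that can occur at \rname{PTr-NT} nodes is \emph{finite}: the non-terminals are finitely many, raw types of order at most \(n-1\) form a finite set, and flags and markers are subsets of the finite set \(\set{0,\ldots,n-1}\); the mutual-disjointness of markers in \(\PTE\) further bounds \(|\PTE|\). Consequently, once \(c_k\) exceeds this finite bound, some long path through \(\dt_k\) must contain two \rname{PTr-NT} nodes with identical label \(\PTE \PM A:\pty\), whose counters are some \(c_1\) and \(c_1 + c_2\) with \(c_2 > 0\) (strictness is secured by taking \(k\) large and by selecting, among all repeating pairs along the path, one whose intermediate segment contains some rule step contributing a positive flag). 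Reading off the transformation output along this split gives the decomposition \(C[D[s]]\), where \(s\) is the output of the inner sub-derivation, \(D\) arises from the path segment between the two nodes, and \(C\) from the root down to the outer node.

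The main obstacle is the linearity of \((C,D)\). The syntactic half is easy: at each step along the chosen path the sub-derivation we are tracking occupies a unique position in the parent's output (one specific \(s_i\) in \rname{PTr-App}, the body in \rname{PTr-Abs}, and so on), using that the modified \rname{PTr-App} rule allows a single sub-derivation per argument type; thus \(C\) and \(D\) each carry exactly one syntactic hole. The harder part is to preserve this single occurrence through \(\beta\eta\)-normalization of \(C[D^i[x]]\), since a bound variable with intersection type may be used multiply in the body and thereby duplicate the hole. My plan is to strengthen the pigeonhole step so that at every \rname{PTr-Abs} encountered along the chosen segment the bound variable is used exactly once with each relevant component type; this should follow by exploiting the marker-disjointness condition on \(\PTEcup\), which forbids duplicating marked-type bindings, together with a secondary pigeonhole on the finitely many ``use profiles'' of abstractions along the path so that a linear path can always be extracted once the derivation is long enough. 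Verifying that sufficiently long derivations always admit such a linear path is where I expect the main technical effort to concentrate.
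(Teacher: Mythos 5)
Your first half---Parys-style completeness to get derivations with unbounded counters, bounded local counter increments plus bounded branching forcing unbounded height, and a pigeonhole over the finitely many judgment labels \(\PTE \PM A:\pty\) to find a repeated judgment with strictly increasing counter---is essentially the paper's argument and is sound. Two remarks on that part: you only secure \(c_2>0\), but the lemma also needs \(c_1>0\) (this is what later forces a marker, hence a linear type, on the innermost term \(s\) and makes \(C[D^n]\) linear for all \(n\)); the paper gets both by finding \emph{three} occurrences \(c'_0<c'_1<c'_2\) on one path and pumping the segment between the latter two. You also need the paper's preliminary ``shrinking'' normalization (no repeated identical judgments on a path) to conclude that counters at repeated labels actually differ.

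The genuine gap is in the linearity argument, and I think your plan is on the wrong track rather than merely incomplete. You propose to extract, by a secondary pigeonhole on ``use profiles,'' a segment of an \emph{arbitrary} derivation in which every abstraction uses its bound variable once per component type. But an arbitrary derivation in this intersection type system can genuinely duplicate the position of the pumped subterm, and no choice of segment repairs that: the paper exhibits a concrete derivation \(\dt_1\) for \((\lambda x.\Ta\,x\,x)\Tc\) in which the variable carrying the relevant type occurs twice, and whose translation into a linear discipline is simply not admissible. Marker-disjointness of \(\PTEcup\) only prevents duplication of \emph{marked} bindings; the bindings that get duplicated in such examples are unmarked, and the hole can still be copied through them during \(\beta\)-normalization. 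So linearity is not a property you can find inside a given derivation by pigeonhole---it is a property you must \emph{build into} the derivation. The paper does this by (i) introducing an auxiliary linear/non-linear type system \(\pLin\) and a translation \(\toLty{\cdot}\) that maps judgments with non-empty marker set (equivalently, by Lemma~\ref{lem:marker}, judgments with positive counter) to linear types, (ii) proving in the completeness theorem that the derivation obtained by subject expansion along an order-stratified reduction sequence \(S \oreds{n} \cdots \oreds{1} t_0 \creds \pi\) always has an \emph{admissible} translation, and (iii) deriving linearity of \(C\) and \(C[D^n]\) from subject reduction and the used-exactly-once property of the linear system (Lemmas~\ref{lem:linear-subject-reduction}, \ref{lem:used-once}, \ref{lem:linearity}), using that all judgments on the path above a positive-counter node have positive counters. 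Without some analogue of this construction-time invariant, your argument does not go through.
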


The following lemma states that any simply-typed \(\lambda\)-term \(s\) 
obtained by the transformation generates a member of \(\Lang(\GRAM)\), and 
its size is bounded below by \(c\).
\begin{lemma}[soundness of transformation]
\label{lem:soundness-tr}
Let \(\GRAM\) be an order-\(n\) tree grammar and \(S\) be its start symbol.
If \(\emptyset \PM S: (\emptyset,\set{0,\ldots,n-1},\T) \hascost c\tr s\), then 
\(s\) is a \(\stlambda\)-term of order at most \(n\), and \(\tree(s)\in \Lang(\GRAM)\), with \(c \leq |\tree(s)|\). 
\end{lemma}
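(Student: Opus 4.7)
The plan is to strengthen the statement into a general claim about arbitrary judgments $\PTE \PM t : (F, M, \prty) \hascost c \tr s$, and prove it by induction on the derivation $\dt$. The generalized claim has three parts, one per conclusion of the lemma: (i) $s$ is a $\stlambda$-term whose underlying simple type (obtained from $\prty$ by erasing flags and markers and interpreting each intersection type $\pity = \set{\pty_1,\dots,\pty_k}$ as an iterated product) has the same order as the simple type of $t$, and whose free variables correspond to those of $\PTE$; (ii) $\dt$ implicitly encodes a $\GRAM$-reduction starting from $t$, so that in the closed ground case $\tree(s)\in \Lang(\GRAM)$; and (iii) $c$ is at most the total number of invocations of the PTr-Const rule in $\dt$.

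I would first verify (i), which is routine: the only interesting case is PTr-Abs, where $\lambda x.t$ becomes $\lambda x_{\pty_1}\cdots \lambda x_{\pty_k}.s$. Since every $\pty_i$ refines the same simple type of $x$, the order of the iterated product equals the order of $\lambda x.t$; the remaining rules all visibly preserve the simple-type shape. The order-$n$ bound on the starting judgment then carries over to $s$. For (ii), I would show by induction on $\dt$ that there is a reduction $t \redswith{\GRAM} t'$ such that $\tree(t') = \tree(s)$. The only non-routine cases are PTr-Choice and PTr-NT, which correspond directly to choosing a branch of $+$ and to unfolding a non-terminal in $\GRAM$, and PTr-App, where the subtle point is that a single $\GRAM$-reduction substituting $t_1$ for its bound variable corresponds to substituting the outputs $s_1,\dots,s_k$ (possibly different, coming from different refined types of the same variable) for several bound variables in the output of $t_0$'s sub-derivation. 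These substitutions are individually consistent with $\GRAM$'s reduction because each $s_i$ comes from its own derivation of the same $t_1$.

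The main obstacle is (iii), the flag-counting bound. My approach would be to trace each flag individually through $\dt$: every PTr-Const invocation introduces exactly one level-$0$ flag into its local $\Comp_n$, and subsequent $\Comp_n$ operations either keep a flag at its current order (when that order is absent from the marker set), promote it by one (when the current order is in the marker set), or, once the flag reaches order $n$, retire it into the contribution $f'_n$, after which it persists in the running $c$ at every ancestor up to the root. Thus the final $c$ is exactly the number of flags that reach order $n$ somewhere in $\dt$, which is bounded by the total number of PTr-Const invocations, i.e., the number of terminal constants appearing in $s$. Since $s$ is a $\stlambda$-term and $\beta$-reduction never deletes a terminal constant (only duplicates or rearranges them), this count is at most the size of the normal form $\tree(s)$, yielding $c \leq |\tree(s)|$. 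The most delicate bookkeeping is the PTr-App case, where the multiset union in the first argument of $\Comp_n$ ensures that flags coming from each of the $k$ argument sub-derivations are counted separately; this matches the per-PTr-Const tracking because those $k$ sub-derivations produce syntactically disjoint copies of constants in $s_1,\dots,s_k$.
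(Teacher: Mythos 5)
There is a genuine gap in part (iii), at the final link of your chain $c \le \#(\text{PTr-Const invocations}) = \#(\text{constants in } s) \le |\tree(s)|$. The last inequality rests on the claim that ``$\beta$-reduction never deletes a terminal constant,'' which is false for the terms $s$ produced by this transformation. The rule \rname{PTr-Weak} adds a binding $x\COL(F,\emptyset,\prty)$ to the environment without $x_{(F,\emptyset,\prty)}$ occurring in the output (cf.\ Lemma~\ref{lem:PTrFV}); \rname{PTr-Abs} then turns this into a vacuous binder $\lambda x_{(F,\emptyset,\prty)}$, and \rname{PTr-App} still supplies a full argument derivation producing some $s_i$ for that type. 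Upon normalization, $s_i$ and all of its constants are erased, so $\#(\text{constants in } s)$ can strictly exceed $|\tree(s)|$ (e.g.\ $s = (\lambda x_{\pty}.a)\,b \reds a$). The lemma is nonetheless true, but only because erased arguments contribute \emph{zero} to the counter: $c_i>0$ forces $n-1\in M_i$ (Lemma~\ref{lem:marker}), and a binding with a non-empty marker set can only enter the environment through \rname{PTr-Var}, never through \rname{PTr-Weak}, so a marked argument is genuinely used. Your per-flag bookkeeping never invokes this marker-linearity, so it cannot distinguish surviving constants from erased ones; this is precisely the step that needs the machinery of Lemmas~\ref{lem:marker}, \ref{lem:markContextType} and \ref{lem:empDisjComp}. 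The paper avoids the issue altogether by a different decomposition: it proves subject reduction preserving $c$ exactly (Lemmas~\ref{lem:substParys}, \ref{lem:SRandProg}, \ref{lem:subject-reduction-for-induction}, \ref{lem:progress}), eliminates all order-$n$ redexes, re-derives the judgment at order $n-1$ with a counter $c'\ge c$ (Lemma~\ref{lem:subRed2Parys2}), iterates down to order $1$, and only then counts against the (by now fully evaluated) tree in Lemma~\ref{lem:soundTreeParys}, where constants are in bijection with nodes of $\tree(s)$ and no deletion or duplication remains.

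A secondary, smaller gap is in part (ii): the generalized invariant ``there is a reduction $t \redswith{\GRAM} t'$ with $\tree(t')=\tree(s)$'' is only meaningful for closed ground $t$, so it cannot serve as the induction hypothesis at \rname{PTr-Abs} and \rname{PTr-Var} nodes, where $t$ is open and higher-order. You would need to reformulate it via a substitution lemma (as the paper does in Lemma~\ref{lem:substParys}) or a logical relation; your remark about the $k$ argument copies being ``individually consistent'' is the right intuition but is exactly the content that has to be proved there.
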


We will prove Lemmas~\ref{lem:pumpable-derivation} and \ref{lem:soundness-tr}
above in Sections~\ref{sec:expansion} and \ref{sec:reduction} respectively.
Using the lemmas above, we can prove Lemma~\ref{lem:parys3}.
\begin{proof}[Proof of Lemma~\ref{lem:parys3}.]
Suppose that \(\Lang(\GRAM)\) is infinite.
By Lemma~\ref{lem:pumpable-derivation}, we have a pumpable derivation for
\(\emptyset \PM S:(\emptyset,\set{0,\ldots,n-1},\T)\hascost c_1+c_2+c_3\tr C[D[s]]\) 
with \(c_1,c_2>0\), in which
\(\PTE \PM A:\pty\hascost{c_1+c_2}\tr D[s]\) is derived from \(\PTE \PM A:\pty\hascost{c_1}\tr s\),
and contexts \(C,D\) are linear. The orders of \(s\), \(C\), and \(D\) are at most \(n\);
by inserting a dummy subterm, we can assume that the orders of them are \(n\).
By repeating the subderivation from
\(\PTE \PM A:\pty\hascost{c_1}\tr s\) to \(\PTE \PM A:\pty\hascost{c_1+c_2}\tr D[s]\),
we obtain a derivation for 
\[\PTE\PM S:(\emptyset,\set{0,\ldots,n-1},\T)\hascost c_1+kc_2+c_3\tr C[D^k[s]]\]
for any \(k\geq 0\).
By Lemma~\ref{lem:soundness-tr}, we have 
\(\set{\tree(C[D^k[t]]) \mid k \ge 1} \subseteq \Lang(\G)\).
Let \((\ell_k)_k\) be a strictly increasing sequence.
Then, the set \(\set{c_1+\ell_k c_2+c_3\mid k\geq 1}\) is infinite. Thus, by the condition
\(|\tree(C[D^{\ell_k}[t]])|\geq c_1+\ell_k c_2+c_3\), 
\(\set{|\tree(C[D^{\ell_k}[t]])| \mid k \ge 1}\) must be infinite, which also implies that 
\(\set{\tree(C[D^{\ell_k}[t]]) \mid k \ge 1}\) is infinite.
\end{proof}

\anp
\subsection{Basic Definitions and Lemmas}
\label{sec:basic}
Here we prepare some definitions and lemmas that are 
 commonly used in Sections~\ref{sec:expansion} and \ref{sec:reduction}. 

We first define a refined notion of reductions. For proving 
the key lemmas (Lemmas~\ref{lem:pumpable-derivation} and~\ref{lem:soundness-tr}),
we consider a specific reduction sequence in which redexes of higher-order
are reduced first. Thus we consider a restricted version \(\ored{n}\) of 
the reduction relation, where only redexes of order-\(n\) can be reduced
(in addition to unfolding of non-terminals).

We define \emph{order-\(n\) reduction}, written by \(\ored{n}\), by the following rules.
\begin{align*}&
C[(\lambda x.t) s] \ored{n} C[[s/x]t] \quad\text{if}\quad (\eorder(\lambda x.t) =n)
\\&
C[A] \ored{n} C[t] \quad\text{if}\quad ((A=t) \in \G).
\end{align*}

Also we define a reduction on the nondeterministic choice, written \(\cred\), as the following:
\[
C[t_1 + t_2] \cred C[t_i] \quad(i=1,2).
\]

The following lemma states that any reduction sequence for generating a tree can be normalized,
so that reductions are applied in a decreasing order.
\begin{lemma}
\label{lem:order-reduction}
Let \(\GRAM\) be an order-\(n\) grammar and \(S\) be its start symbol.
If \(\pi\in \Lang(\GRAM)\), then 
\[
S \oreds{n} t_{n-1} \oreds{n-1} \cdots \oreds{1} t_0 \creds \pi.
\]
\end{lemma}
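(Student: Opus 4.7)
The plan is to prove the lemma by a standardization-style argument applied to an arbitrary finite reduction \(S = s_0 \red s_1 \red \cdots \red s_m = \pi\) witnessing \(\pi \in \Lang(\G)\). Since \(\pi\) is a ground tree, all \(\beta\)-redexes, non-terminals, and occurrences of \(+\) introduced along the way must eventually be eliminated. The goal is to reorder the steps so that order-\(n\) \(\beta\)-reductions (together with the non-terminal unfoldings, which by definition belong to every \(\oreds{k}\)) come first, then order-\((n-1)\), and so on, with the nondeterministic choice steps \(\cred\) pushed to the tail.

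I would first prove a choice-postponement lemma: whenever \(t \red u \cred v\) with \(\red\) a \(\beta\)-step or unfolding, the resolved \(s_1+s_2\) subterm of \(u\) already has an ancestor in \(t\), so one can exhibit \(t \cred t' \reds v\), at the cost of possibly issuing several uniform \(\cred\) steps on the left when the following \(\beta\)-step would have copied the \(+\)-subterm. Iterating this swap globally sweeps all \(\cred\) steps to the end of the sequence, leaving a prefix consisting only of \(\beta\)-steps and unfoldings.

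Next, I would prove a \(\beta\)-standardization lemma: if an order-\(k\) \(\beta\)-step immediately precedes an order-\(k'\) \(\beta\)-step with \(k' > k\), they can be swapped, possibly duplicating or erasing the lower-order step. The case analysis follows the classical standardization proof for the simply-typed \(\lambda\)-calculus (parallel redexes, outer redex containing the inner, inner appearing in argument position, etc.). To show that iterating these swaps terminates despite duplication, I would equip reduction sequences with a well-founded measure — for example, the lexicographic comparison of (i) the multiset of \(\beta\)-step orders with higher orders prioritized, and (ii) the number of order-inversions against the target ordering — and check that each local swap strictly decreases it; duplication only adds residuals that the measure accounts for as strictly smaller.

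The main obstacle will be the \(\beta\)-standardization lemma in the subcase where the lower-order redex lies inside the argument of the higher-order one. Substituting then replicates the lower-order redex, so several residual order-\(k\) steps must appear after the swap, and the measure must be designed so that these residuals still represent progress. This is a delicate but essentially standard adaptation of the classical standardization argument to our order-stratified setting with non-terminal unfolding and \(+\); once it is in place, the remaining work is straightforward bookkeeping to group consecutive order-\(k\) steps (and unfoldings) into the single multi-step \(\oreds{k}\) required by the statement.
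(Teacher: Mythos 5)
Your plan (postpone the choice steps, then stratify the $\beta$-steps by order) has the right overall shape, but two steps would fail as written. First, the commutation you state goes the wrong way: from \(t \red u \cred v\) you produce \(t \cred t' \reds v\), which moves the choice step towards the \emph{front}, yet you claim iterating it sweeps the \(\cred\) steps to the end. What is needed is the postponement \(t \cred u \red v \;\Rightarrow\; t \red t' \creds v\), where duplication by a $\beta$-step forces several residual \(\cred\) steps on the \emph{right}, not "several uniform \(\cred\) steps on the left"; also, in this setting unfoldings do introduce fresh \(+\)-subterms (rule right-hand sides contain \(+\)), so "the resolved subterm already has an ancestor" is not literally true, though this is harmless for the correct direction. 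Second, and more seriously, you cannot leave the non-terminal unfoldings interleaved on the grounds that they "belong to every \(\oreds{k}\)". Unfolding an order-\(n\) non-terminal in applied position \(A\,u_1\cdots u_m\) creates a $\beta$-redex whose functional part has the type of \(A\), hence order up to \(n\); if this happens during the \(\oreds{1}\) phase, the resulting high-order redex must still be contracted before reaching \(\pi\), and that contraction is not an \(\ored{1}\) step. Your swap lemma only permutes adjacent $\beta$-steps and cannot repair this, since a $\beta$-step can never be moved before the unfolding that created its redex. The paper's proof therefore moves \emph{all} unfoldings to the very beginning of the sequence, and your plan needs an analogous step.

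Beyond these fixes, note that the part you defer as "delicate but essentially standard" — the well-founded measure for the swapping process under duplication — is precisely what the paper's argument avoids. Once unfoldings are front-loaded and choices postponed, the middle segment \(t_n \reds t_0\) is a pure simply-typed $\beta$-reduction ending in a $\beta$-normal form, so instead of permuting the given sequence one constructs a \emph{fresh} normalizing sequence that always contracts the rightmost innermost redex of maximal order (such a contraction creates no redex of strictly higher order and copies no redex of the same order), obtaining \(t_n \oreds{n} t_{n-1}\oreds{n-1}\cdots\oreds{1} t_0'\), and then concludes \(t_0'=t_0\) by confluence and uniqueness of normal forms. If you insist on the permutation route you must actually exhibit and verify the measure; the confluence route makes that entire analysis unnecessary.
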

\begin{proof}
Given a reduction sequence \(S\reds \pi\), we can move any unfolding of a non-terminal
to the left, and any reduction of choice to the right, and obtain a reduction sequence
\[
S \oreds{n} t_n \reds t_0 \creds \pi
\]
in which only \(\beta\)-reductions are applied in \(t_n\reds t_0\) 
(recall that we use only the non-deterministic choice of the ground type,
by the assumption for grammars in Section~\ref{sec:parys-type}).
Note that if the largest order of redexes in \(t\) is \(k\), 
reducing the rightmost, innermost order-\(k\) redex neither introduces any new redex of order higher than \(k\),
nor copies any redex of order \(k\). Thus,
we can obtain a normalizing\footnote{Here, we consider non-terminals and the choice operator as constants.} 
sequence \(t_n \oreds{n}t_{n-1} \oreds{n-1} \cdots \oreds{1} t'_0\),
where \(t'_0\) does not contain any \(\beta\)-redex. By Church-Rosser theorem, \(t_0 = t'_0\).
Thus, we have 
\[
S \oreds{n} t_{n-1} \oreds{n-1} \cdots \oreds{1} t_0 \creds \pi
\]
as required.
\end{proof}

\begin{lemma}
\label{lem:PTrFV}
If \(x_1 \COL \prty_1 ,\dots, x_k \COL \prty_k \PM t:(F,M,\prty) \hascost{c}\tr s\),
then \(\FV(s) \subseteq \set{(x_1)_{\prty_1}, \dots, (x_k)_{\prty_k}}\).
\end{lemma}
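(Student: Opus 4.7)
The plan is to proceed by straightforward induction on the derivation of the judgment \(\PTE \PM t : (F,M,\prty) \hascost c \tr s\), inspecting each of the eight typing rules and checking that the output term \(s\) it produces has free variables contained in the subscripted versions of the variables declared in \(\PTE\). The base case is \rname{PTr-Var}, where \(s = x_\pty\) and the environment is exactly \(x \COL \pty\), so the conclusion is immediate. Rule \rname{PTr-Weak} preserves \(s\) while enlarging \(\PTE\), so the inclusion only becomes easier. Rule \rname{PTr-Mark} leaves both \(\PTE\) and \(s\) unchanged. Rules \rname{PTr-Choice} and \rname{PTr-NT} simply propagate \(s_i\) or \(s\) from a subderivation over the same environment, so the induction hypothesis gives the conclusion directly.

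The two substantive cases are \rname{PTr-App}/\rname{PTr-Const} on the one hand and \rname{PTr-Abs} on the other. For \rname{PTr-App}, the output is \(s_0\,s_1\,\cdots\,s_k\) and the conclusion environment is \(\PTE_0 \PTEcup (\sum_i \PTE_i)\); by the induction hypothesis \(\FV(s_j)\) is contained in the subscripted variables declared in \(\PTE_j\), so \(\FV(s_0\,s_1\,\cdots\,s_k) = \bigcup_j \FV(s_j)\) is contained in the subscripted variables of the combined environment. Rule \rname{PTr-Const} is identical in structure. For \rname{PTr-Abs}, the premise is \(\PTE, x\COL \pity \PM t : (F,M,\prty) \hascost c \tr s\) and the output is \(\lambda \seq{x}_\pity.s\), which by definition binds exactly the variables \(x_{\pty_1},\dots,x_{\pty_k}\) when \(\pity = \{\pty_1,\dots,\pty_k\}\); these are precisely the subscripted variables that the induction hypothesis permits in addition to those of \(\PTE\), and the \(\lambda\)-abstraction removes them from the free-variable set, leaving \(\FV(\lambda\seq{x}_\pity.s) \subseteq\) the subscripted variables of \(\PTE\) as required.

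There is no real obstacle: the rules are designed so that the output term mirrors the structure of the environment, the variable rule emits the uniquely tagged \(x_\pty\), the abstraction rule binds exactly the tagged variables corresponding to the intersection components that the premise adds to the environment, and the application/constant rules combine outputs over a combined environment (\(\PTEcup\) is just a disjointness-restricted union, so the subscripted variable sets simply accumulate). The only minor point to keep straight is the indexing convention in \rname{PTr-Abs} — that \(\seq{x}_\pity\) enumerates one bound variable per element of the intersection \(\pity\), matching one-to-one the new bindings \(x \COL \pity\) on the premise side — which is exactly the invariant the induction needs.
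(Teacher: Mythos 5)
Your proof is correct and follows exactly the route the paper takes: the paper's own proof is simply "by straightforward induction on the derivation tree and case analysis on the last rule," and your case-by-case verification fills in precisely the details that argument leaves implicit. The key observations you make — that \rname{PTr-Var} emits the tagged variable \(x_\pty\), that \rname{PTr-Abs} binds exactly the tagged variables corresponding to the intersection components added to the environment, and that the application/constant cases just accumulate the environments — are the right ones.
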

\begin{proof}
By straightforward induction on
a derivation tree of \(x_1 \COL \prty_1 ,\dots, x_k \COL \prty_k \PM t:(F,M,\prty) \hascost{c}\tr s\)
and by case analysis on the last rule of the derivation.
\end{proof}

We say \((F,M,\pity_1 \to \dots \to \pity_k \to\T)\) is \emph{\(n\)-clear}
if \(n\notin M\cup(\textstyle\bigcup_{i \le k}\Markers(\pity_i))\).
\begin{lemma}
\label{lem:markerAux}
For \(\PTE\PMm{n} t:\pty \hascost{c} \tr s\) with \(n>0\),
if \(\pty\) is \((n-1)\)-clear, then \(c=0\).
\end{lemma}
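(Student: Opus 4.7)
My plan is to proceed by induction on the derivation of $\PTE \PMm{n} t : \pty \hascost{c} \tr s$, with case analysis on the last rule. The core observation is that if the target type $(F, M, \rho)$ is $(n-1)$-clear, then $n-1 \notin M$, so the integer $j$ in the closed-form formula for $f'_n$ given just after the definition of $\Comp_n$ must be $0$. Consequently, the fresh contribution $f'_n$ to $c$ vanishes and $c$ becomes the plain sum of the subderivation counters; it then suffices to propagate $(n-1)$-clearness to those subderivations and invoke the induction hypothesis.

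For the rules \rname{PTr-Var}, \rname{PTr-Weak}, \rname{PTr-Choice}, and \rname{PTr-NT}, either $c = 0$ outright or the conclusion's target type coincides with a premise's target type, so the induction hypothesis applies immediately. For \rname{PTr-Mark}, because $M' \subseteq M \uplus M'$ and $\rho$ is unchanged, the premise target $(F', M', \rho)$ is also $(n-1)$-clear; combined with $f'_n = 0$ this gives $c = c' = 0$. For \rname{PTr-Abs}, the identity $(M \setminus \Markers(\pity)) \cup \Markers(\pity) \supseteq M$ together with the fact that the argument marker-sets of $\rho$ are also argument marker-sets of $\pity \to \rho$ shows that the premise target $(F, M, \rho)$ is $(n-1)$-clear whenever the conclusion is.

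The main technical work is in \rname{PTr-Const} and \rname{PTr-App}. In \rname{PTr-Const} the marker argument of $\Comp_n$ is $M = M_1 \uplus \cdots \uplus M_k$, so $n - 1 \notin M$ forces $n - 1 \notin M_i$ and each premise target $(F_i, M_i, \ot)$ is $(n-1)$-clear. In \rname{PTr-App} the marker argument is $M_0 \uplus \biguplus_i M'_i$, from which we conclude $n - 1 \notin M_0$ and $n - 1 \notin M'_i$ for every $i$; together with $M_i = M'_i \restrict{<\ell}$ this also yields $n - 1 \notin M_i$. For each argument premise, whose target is $(F'_i, M'_i, \rho_i)$, the well-formedness constraint recursively bounds all markers nested inside $\rho_i$ by $\order(\rho_i) - 1 \le \ell - 2$, which is strictly less than $n-1$, so the target is $(n-1)$-clear. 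For the function premise, whose target is $(F_0, M_0, \pity \to \rho)$, the top marker-sets of $\pity$ are exactly the $M_i$, and the argument marker-sets of $\rho$ inherit $(n-1)$-clearness from the conclusion, so this target too is $(n-1)$-clear. The main obstacle is just this bookkeeping in \rname{PTr-App}, where several marker bounds interact; once it is discharged, the induction closes.
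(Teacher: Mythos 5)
Your proof is correct and takes essentially the same route as the paper's: induction on the derivation, propagating $(n-1)$-clearness to each premise (using the order bound on the argument together with the well-formedness of types in the \rname{PTr-App} case), and observing that $\Comp_n$ contributes nothing to the counter when $n-1\notin M$. You simply spell out in more detail the cases the paper dismisses as clear.
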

\begin{proof}
The proof proceeds by induction on the derivation
\(\PTE\PMm{n} t:\pty \hascost{c} \tr s\)
and its case analysis.
The cases of \rname{PTr-Var}, \rname{PTr-Choice}, \rname{PTr-Abs}, and \rname{PTr-NT} are clear.
The cases of \rname{PTr-Mark} and \rname{PTr-Const} are similar to (and easier than) the case of \rname{PTr-App}.

In the case of \rname{PTr-App},
we have
\infrule{\eorder(\term_0)=\ell\\
\PTE_0\PM \term_0:(F_0,M_0,\set{(F_1,M_1,\prty_1),\ldots,
(F_k,M_k,\prty_k)}\to \prty)\hascost{c_0}\tr s_0\\
  \PTE_{i}\PM \term_1:(F_{i}',M_{i}',\prty_i)\hascost{c_{i}}\tr s_{i}
\andalso F'_{i}\restrict{<\ell}=F_i\andalso M'_{i}\restrict{<\ell}=M_i
\mbox{ for each $i\in\set{1,\ldots,k}$}\\
  M = M_0\uplus (\biguplus_{i\in\set{1,\ldots,k}} M'_{i})\\
  \Comp_{n}(\set{(F_0,c_0)}\mcup \set{(F'_{i}\restrict{\geq \ell},c_{i})\mid i\in\set{1,\ldots,k}}, M)
   = (F, c)\\
(F_1,M_1,\prty_1)<\cdots< (F_k,M_k,\prty_k)
}
{\PTE_0\PTEcup (\textstyle\sum_{i\in\set{1,\ldots,k}} \PTE_{i}) \PM \term_0\term_1:
  (F, M, \prty)\hascost{c}\tr s_0\, s_{1}\,\cdots\,s_{k}}
It is clear from the assumption that 
\((F_0,M_0,\set{(F_1,M_1,\prty_1),\ldots,(F_k,M_k,\prty_k)}\to \prty)\)
is \((n-1)\)-clear.
Also, each \((F'_{i},M'_{i},\prty_i)\) is \((n-1)\)-clear 
since \(\eorder(t_1) \le n-1\) and due to the well-formedness of types.
Hence, by induction hypothesis we have \(c_0 = 0\) and \(c_{i} = 0\) for any \(i\).
It is clear that, in general,
\(\Comp_{n}(\set{(F_1,0),\dots,(F_k,0)}, M)
   = (F, 0)\)
if \(n-1 \notin M\).
Thus \(c=0\), as required.
\end{proof}

\begin{lemma}
  \label{lem:marker}
  For \(\PTE\PMm{n} t:(F,M,\prty)\hascost c\tr s\) with \(\eorder(t) \le n-1\),
if \(c>0\), then \(n-1\in M\).
  \end{lemma}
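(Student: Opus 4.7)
The plan is to derive Lemma~\ref{lem:marker} as a direct corollary of the preceding Lemma~\ref{lem:markerAux}, using the hypothesis \(\eorder(t)\le n-1\) only to carry out a well-formedness bookkeeping on the type \((F,M,\prty)\). Writing the contrapositive of Lemma~\ref{lem:markerAux}, I obtain that \(c>0\) forces \((F,M,\prty)\) to fail to be \((n-1)\)-clear, so \(n-1\) must appear in \(M\) or in \(\Markers(\pity_j)\) for some \(\pity_j\) in the decomposition \(\prty=\pity_1\to\cdots\to\pity_k\to\T\). It therefore suffices to rule out the second alternative.

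For that, I will use \(\order(\prty)=\eorder(t)\le n-1\) together with the well-formedness condition recalled in Section~\ref{sec:parys-type}: for a raw function type \(\pity\to\prty'\), every \((F',M',\prty'')\in\pity\) satisfies \(M'\subseteq\{0,\dots,\order(\pity\to\prty')-1\}\). Applied to the sub-function type \(\pity_j\to\pity_{j+1}\to\cdots\to\T\), whose order is at most \(\order(\prty)\le n-1\), this bounds each outer marker of a member of \(\pity_j\) by \(n-2\). Iterating this constraint while descending through the nested intersection types, whose enclosing function types all have order at most \(\order(\prty)\), one sees that every marker occurring strictly inside \(\prty\) is at most \(n-2\). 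Hence \(n-1\notin\Markers(\pity_j)\) for every \(j\), and the first alternative is forced: \(n-1\in M\).

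The main and only challenge will be presenting this well-formedness descent cleanly. Since the order bound on markers strictly decreases as one moves deeper into a type and is already capped at \(\order(\prty)-1\le n-2\) from the first level onwards, a short induction on the structure of \(\prty\) suffices. No additional induction on the typing derivation is necessary, since Lemma~\ref{lem:markerAux} already carries out that induction, and the present lemma is just a specialization using the type-level constraint on \(t\).
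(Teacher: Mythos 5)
Your proof is correct and matches the paper's: the paper also derives Lemma~\ref{lem:marker} directly from Lemma~\ref{lem:markerAux} together with the well-formedness condition on types, which is exactly the contrapositive-plus-marker-bound argument you spell out. (The only cosmetic remark is that the full descent through nested intersection types is unnecessary, since \((n-1)\)-clearness only inspects the outermost marker sets of the argument types \(\pity_j\), for which the first application of the well-formedness bound already gives \(n-1\notin\Markers(\pity_j)\).)
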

\begin{proof}
This follows from Lemma~\ref{lem:markerAux} (and the well-formedness of types).
\end{proof}

The following lemma corresponds to~\cite[Lemma~24]{Parys16ITRS}
\begin{lemma}
\label{lem:markContextType}
If \(\PTE \PM t: (F,M,\prty) \hascost c\tr s\) then \(\Markers(\PTE) \subseteq M\).
\end{lemma}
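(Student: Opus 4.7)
The plan is a straightforward structural induction on the derivation $\PTE \PM t : (F,M,\prty) \hascost c \tr s$, analyzing each typing rule. The claim is essentially a bookkeeping statement: markers never disappear when we move from premises to conclusion, except when they are explicitly consumed by $\lambda$-abstraction over an intersection type carrying exactly those markers.

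The easy cases are \rname{PTr-Var}, \rname{PTr-Weak}, \rname{PTr-Choice}, and \rname{PTr-NT}. For \rname{PTr-Var} we have $\Markers(x : \pty) = M$ by definition. For \rname{PTr-Weak} the new binding contributes no markers since $M = \emptyset$, so the set of markers in the environment is unchanged. For \rname{PTr-Choice} and \rname{PTr-NT} the statement follows verbatim from the induction hypothesis.

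For \rname{PTr-Mark}, the conclusion's marker set is $M \uplus M'$, and the induction hypothesis yields $\Markers(\PTE) \subseteq M'$, which gives $\Markers(\PTE) \subseteq M \uplus M'$. For \rname{PTr-App} and \rname{PTr-Const}, the conclusion's marker set is the disjoint union of all the marker sets appearing on the premises, and the premise environments combine (via $\PTEcup$ and $\cup$) in such a way that $\Markers(\PTE_0 \PTEcup \sum_i \PTE_i) = \Markers(\PTE_0) \cup \bigcup_i \Markers(\PTE_i)$; applying the induction hypothesis to each premise immediately gives the required inclusion.

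The only case requiring care is \rname{PTr-Abs}. There the premise gives $\Markers(\PTE, x \COL \pity) = \Markers(\PTE) \cup \Markers(\pity) \subseteq M$ by induction, but the conclusion's marker set is $M \setminus \Markers(\pity)$. To conclude $\Markers(\PTE) \subseteq M \setminus \Markers(\pity)$ I will invoke the well-formedness requirement that marker sets appearing in distinct bindings of a type environment are mutually disjoint, which forces $\Markers(\PTE) \cap \Markers(\pity) = \emptyset$ and hence removing $\Markers(\pity)$ from $M$ does not delete anything in $\Markers(\PTE)$. This is the only step where the disjointness invariant on environments is genuinely used, and so is the main (albeit mild) obstacle in the proof.
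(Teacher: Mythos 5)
Your proof is correct and follows exactly the route the paper takes: the paper's proof is just the one-line remark ``by straightforward induction on the derivation with case analysis on the last rule,'' and your case analysis fills in precisely those details, correctly identifying \rname{PTr-Abs} as the only case where the mutual-disjointness invariant on marker sets in type environments is needed.
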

\begin{proof}
By straightforward induction on the derivation of \(\PTE \PM t: (F,M,\prty) \hascost c\tr s\)
and case analysis of the last rule of the derivation.
\end{proof}

\begin{lemma}
[\mbox{\cite[Lemma~26]{Parys16ITRS}}]
\label{lem:empDisjComp}
For \(F_0, M' \subseteq \set{0,\dots,n-1}\),
if
\[
\Comp_{n}(\set{(F_0,c_0)}\mcup \set{(\emptyset,c_{i})\mid i\in\set{1,\ldots,k}}, M')
   = (F', c')
\]
and \(F_0 \cap M' = \eset\)
then
\[
F' = F_0
\qquad
c' = c_0 + \textstyle\sum_{i\in\set{1,\ldots,k}} c_{i}.
\]
\end{lemma}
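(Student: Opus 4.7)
The plan is to unfold the definition of $\Comp_{n}$ and show, by a short induction, that the flag counters $f_\ell$ coincide with the indicator of membership in $F_0$, thanks to the hypothesis $F_0 \cap M' = \eset$. Since the only nonempty flag set in the input multiset is $F_0$, the quantity $|\{i \mid \ell \in F_i\}|$ appearing in the definition of $f_\ell$ equals $[\ell \in F_0]$ (the indicator), which is either $0$ or $1$.

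The main inductive step is: for every $\ell \in \{0,\ldots,n-1\}$, $f_\ell = [\ell \in F_0]$. For $\ell = 0$ this is immediate since $f'_0 = 0$. For the inductive step, if $\ell-1 \in M'$ then $\ell-1 \notin F_0$ by the disjointness hypothesis, so by the induction hypothesis $f'_\ell = f_{\ell-1} = 0$; otherwise $f'_\ell = 0$ by definition. In both cases $f'_\ell = 0$, and so $f_\ell = [\ell \in F_0]$. This is the heart of the lemma: the disjointness of $F_0$ and $M'$ prevents any ``carry'' of flag counts upward through the $\Comp_{n}$ recursion, so the flag set is preserved and no extra contribution is accumulated.

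From the claim the first conclusion is immediate: $F' = \{\ell < n \mid f_\ell > 0\} \setminus M' = F_0 \setminus M' = F_0$, using $F_0 \cap M' = \eset$ again. For the second conclusion, we need $f'_n = 0$: either $n-1 \notin M'$ and then $f'_n = 0$ by definition, or $n-1 \in M'$ and then $n-1 \notin F_0$, so by the claim $f_{n-1} = 0$ and hence $f'_n = 0$. Substituting into $c' = f'_n + c_0 + c_1 + \cdots + c_k$ (note that with the multiset $\{(F_0,c_0)\}\mcup \{(\emptyset,c_i)\}_{i=1}^{k}$, the counter sum includes $c_0$) gives $c' = c_0 + \sum_{i=1}^{k} c_i$, as required.

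There is no real obstacle here; the lemma is essentially a direct unfolding of the definition of $\Comp_{n}$. The only point one has to be a little careful about is the indexing convention for the input multiset (whether one counts $(F_0,c_0)$ as index $0$ or $1$), so that the final counter sum correctly includes $c_0$. The content of the lemma is the observation that a flag set disjoint from the marker set propagates through $\Comp_{n}$ unchanged and contributes nothing to the counter.
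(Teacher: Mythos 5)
Your proof is correct. The paper itself does not prove this lemma---it is quoted as Lemma~26 of Parys~\cite{Parys16ITRS}---but your argument is the expected one: the inductive claim \(f_\ell = [\ell \in F_0]\) (using \(F_0 \cap M' = \eset\) to kill the carry \(f'_\ell\)), the consequence \(f'_n = 0\), and the resulting identities \(F' = F_0 \setminus M' = F_0\) and \(c' = c_0 + \sum_i c_i\) all check out against the definition of \(\Comp_n\), including the indexing point that the counter sum ranges over all \(k+1\) entries of the input multiset.
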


\anp %
\subsection{Proof of Lemma~\ref{lem:pumpable-derivation}}
\label{sec:expansion}

To prove Lemma~\ref{lem:pumpable-derivation}, we first introduce a linear type system
for typing the output of the transformation, in Section~\ref{sec:linear-type}. 
The linear type system is required to
guarantee that the contexts \(C\) and \(D\) are linear.
We will then prove, in Section~\ref{sec:completeness-tr},
a certain completeness property of the transformation relation,
that if \(\pi\in \GRAM\), then there exists a derivation
\(\emptyset \PM S:(\emptyset,\set{0,\ldots,n-1},\T)\hascost c\tr s\)
such that \(\expn{n}{c}\geq |\pi|\), and \(s\) is well-typed in the linear type system.
We will then prove Lemma~\ref{lem:pumpable-derivation} in Section~\ref{sec:pumpability}.

\subsubsection{Linear Type System and Translation}
\label{sec:linear-type}

The syntax of \emph{linear/non-linear types} is given by:
\[
\begin{array}{l}
\lty ::= \lrty^m \\
\lrty ::= \T \mid \lty_1\to \lty_2\\
m ::= 1 \mid \omega
\end{array}
\]
When \(\lty=\lrty^m\), we write \(\mult(\lty)\) for \(m\) and call it the \emph{multiplicity} of \(\lty\). 
Intuitively, \(\lrty^1\) (\(\lrty^\omega\), resp.) represents the type of values that can be used
once (arbitrarily many times, resp.).
We call \(\lty\) \emph{linear} if \(\mult(\lty)=1\) and \emph{non-linear} if \(\mult(\lty)=\omega\).
We require the well-formedness condition that in every function type
 \((\lty_1\to\lty_2)^m\), if \(m\) or \(\mult(\lty_1)\) is \(1\), then so is \(\mult(\lty_2)\),
and exclude out types containing ill-formed types below.

We define \emph{partial} operations on multiplicities, types, and type environments by:
\[ 
\begin{array}{l}
\omega + \omega = \omega\\
1 \cdot m = m \qquad \omega \cdot \omega = \omega\\
\lrty^{m_1}+\lrty^{m_2} = \lrty^{m_1+m_2}\\
m' \lrty^{m} = \lrty^{m' \cdot m}\\
(\Gamma_0+\Gamma_1)(x) =
  \left\{
 \begin{array}{ll}
   \Gamma_0(x)+\Gamma_1(x) & \mbox{if \(x\in\dom(\Gamma_0)\cap\dom(\Gamma_1)\)}\\
   \Gamma_i(x) & \mbox{if \(x\in\dom(\Gamma_i)\setminus\dom(\Gamma_{1-i})\)}\\
\end{array}\right.
\\
(m\Gamma)(x) = m(\Gamma(x)) \quad \mbox{if \(x\in\dom(\Gamma)\)}
\end{array}
\]
Note that \(\Gamma_0+\Gamma_1\) is defined if and only if
\(\lrty_0=\lrty_1\) and \(m_0=m_1=\omega\) whenever \(x:\lrty_0^{m_0} \in \Gamma_0\) and \(x:\lrty_1^{m_1} \in
\Gamma_1\); in this case, \(\Gamma_0+\Gamma_1 = \Gamma_0 \cup \Gamma_1\).
Also, \(m\Gamma\) is defined if and only if either \(m=\omega\) and \(\Gamma\) does not contain a linear
type or \(m=1\); in this case, \(m\Gamma=\Gamma\).

The linear type judgment relation \(\LTE \pLin s:\lty\) is defined by the following typing rules;
note that rules \rname{LT-Const}, \rname{LT-Abs}, and \rname{LT-App} are applicable only when the above partial operations on environments occurring in the rules are defined.

\infrule[LT-Weak]{\Gamma\pLin s:\lty}
  {\Gamma, x\COL\lrty^\omega\pLin s:\lty}

\infrule[LT-Var]{}
 {x:\lty \pLin x:\lty}

\infrule[LT-Const]
 {  \arity(a)=k \\
\Gamma_i\pLin s_i:\T^{m} \quad \mbox{ for each \(i\in\set{1,\ldots,k}\)}}
 {\Gamma_1+\cdots+\Gamma_k \pLin a\,s_1\,\cdots\,s_k:\T^m}

\infrule[LT-Abs]{\Gamma,x:\lty_1\pLin s:\lty_2}
  {m\Gamma \pLin \lambda x.s:(\lty_1\to\lty_2)^m}

\infrule[LT-App]{\Gamma_0\pLin s_0:(\lty_1\to\lty)^m\andalso \Gamma_1\pLin s_1:\lty_1}
  {\Gamma_0+\Gamma_1 \pLin s_0s_1:\lty}

\infrule[LT-Dereliction]{\Gamma\pLin s:\lrty^\omega}
   {\Gamma\pLin s:\lrty^1}

\begin{lemma}
\label{lem:variable}
If \(\LTE \pLin s:\lty\) and \(x\) occurs in \(s\), then \(x\COL\lty'\in \LTE\) for some \(\lty'\).
\end{lemma}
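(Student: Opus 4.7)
The plan is to prove Lemma~\ref{lem:variable} by straightforward induction on the derivation of \(\LTE \pLin s : \lty\), with a case analysis on the last rule applied. Here ``occurs'' should be read as ``occurs freely''; with the usual \(\alpha\)-convention, this does not lose generality.

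The base cases are immediate. For \rname{LT-Var}, we have \(s = x\) and \(\LTE = x \COL \lty\), so taking \(\lty' = \lty\) works. For \rname{LT-Weak}, the environment grows from \(\LTE'\) to \(\LTE', y\COL\lrty^\omega\) without changing \(s\), so the induction hypothesis on \(\LTE' \pLin s : \lty\) already supplies a binding \(x\COL\lty' \in \LTE' \subseteq \LTE\). For \rname{LT-Dereliction}, the subderivation has the same environment and the same \(s\), and the induction hypothesis applies directly.

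The inductive cases are equally routine but require keeping track of how the partial operations \(+\) and \(m \cdot (-)\) on environments transport bindings. For \rname{LT-Const} with \(s = a\,s_1\,\cdots\,s_k\) and \(\LTE = \LTE_1 + \cdots + \LTE_k\), any free occurrence of \(x\) lies in some \(s_i\), so by the induction hypothesis \(x \COL \lty_i \in \LTE_i\) for some \(\lty_i\); by definition of \(+\) on type environments, \(x\) also appears in \(\LTE_1 + \cdots + \LTE_k\) with some type (either \(\lty_i\) itself if \(x\) is bound nowhere else, or the sum of the relevant types). The \rname{LT-App} case is analogous, using the IH on whichever of \(s_0, s_1\) contains the free occurrence of \(x\). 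Finally, for \rname{LT-Abs} with \(s = \lambda y.s'\), we may assume \(y \neq x\) by \(\alpha\)-conversion; the IH on \(\LTE_0, y\COL\lty_1 \pLin s' : \lty_2\) gives some \(x \COL \lty' \in \LTE_0\), and since the operation \(m \cdot (-)\) preserves the domain of the environment, \(x\) still has a type in \(m\LTE_0 = \LTE\).

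There is no real obstacle: every rule either leaves the term unchanged (so IH applies verbatim) or distributes the environment across subterms in a way that preserves the domain. The only point to be mindful of is bookkeeping in \rname{LT-Abs} regarding bound versus free occurrences, which is handled by the standard \(\alpha\)-convention.
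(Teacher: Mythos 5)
Your proof is correct and takes exactly the approach the paper intends: the paper's proof is simply ``straightforward induction on the derivation of \(\LTE\pLin s:\lty\)'', and your case analysis (including the observation that \(+\) and \(m\cdot(-)\) preserve the domain of the environment, and the \(\alpha\)-convention point for \rname{LT-Abs}) is a faithful elaboration of that.
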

\begin{proof}
Straightforward induction on the derivation of \(\LTE\pLin s:\lty\).
\end{proof}
\begin{lemma}
\label{lem:used-once}
If \(\LTE, x\COL\lrty^1 \pLin s:\lty\), then \(s\) contains exactly one occurrence of \(x\).
\end{lemma}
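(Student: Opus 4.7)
The plan is to prove Lemma~\ref{lem:used-once} by induction on the derivation of $\LTE, x\COL\lrty^1 \pLin s:\lty$, performing a case analysis on the last rule. The key observation underlying every case is that a linear binding $x\COL\lrty^1$ cannot be ``duplicated'' or ``introduced from nothing'' by any rule: the partial addition on contexts is undefined when a linear type would need to be shared, the scaling operation $m\Gamma$ preserves a linear binding only when $m=1$, and the weakening rule \rname{LT-Weak} only adds bindings of multiplicity $\omega$. Lemma~\ref{lem:variable} will supply the complementary ``at most once'' direction.

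First I would dispose of the easy cases. For \rname{LT-Var}, the context of the conclusion is a singleton, so $\LTE$ is empty, $s=x$, and $x$ occurs once. For \rname{LT-Weak}, the added binding has multiplicity $\omega$, hence the linear hypothesis $x\COL\lrty^1$ must already be present in the premise's context, and the induction hypothesis applies directly. For \rname{LT-Dereliction} the context is unchanged between premise and conclusion, so the induction hypothesis transfers verbatim.

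The more substantive cases are \rname{LT-Const}, \rname{LT-App}, and \rname{LT-Abs}. For \rname{LT-Const} and \rname{LT-App}, the context of the conclusion is a sum $\Gamma_1 + \cdots + \Gamma_k$; since $\lrty^1$ can occur on at most one side of $+$ (otherwise the sum would require the shared variable to have multiplicity $\omega$, making the sum undefined), there is a unique index $i$ with $x\COL\lrty^1 \in \Gamma_i$. The induction hypothesis gives exactly one occurrence of $x$ in $s_i$, and Lemma~\ref{lem:variable} applied to the other subderivations (whose contexts do not contain $x$ at all) ensures $x$ does not occur in the remaining subterms. For \rname{LT-Abs}, if the conclusion has $x\COL\lrty^1 \in m\Gamma$, then from $m \cdot m_x = 1$ with $m_x \in \{1,\omega\}$ we conclude $m = 1$ and the premise has $x\COL\lrty^1$; the induction hypothesis then transfers to the body, and hence to $\lambda y.s'$ (after an $\alpha$-renaming to ensure $y\neq x$).

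I do not anticipate a significant obstacle: the linear type system is set up so that the partiality of $+$ and $m(\cdot)$ precisely tracks linearity. The only point requiring care is making explicit why $\lrty^1$ cannot be split by $+$ in the \rname{LT-Const}/\rname{LT-App} cases, which follows directly from the definition that $\Gamma_0+\Gamma_1$ requires $m_0=m_1=\omega$ at any shared variable. Once that is noted, the remainder of the argument is a routine bookkeeping induction.
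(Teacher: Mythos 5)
Your proof is correct and follows essentially the same route as the paper: a straightforward induction on the typing derivation, with the key observations that the partiality of $+$ and of $m\Gamma$ prevents a linear binding from being split or scaled by $\omega$ (the paper likewise singles out the \rname{LT-Abs} case, noting $m$ must be $1$ since $\omega\cdot 1$ is undefined). Your explicit appeal to Lemma~\ref{lem:variable} for the ``does not occur in the other subterms'' direction in the \rname{LT-App}/\rname{LT-Const} cases is exactly the detail the paper leaves implicit under ``straightforward.''
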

\begin{proof}
This follows by straightforward induction on the derivation of \(\LTE, x\COL\lrty^1 \pLin s:\lty\).
We discuss only the case where the last rule is \rname{LT-Abs}.
In that case, \(s=\lambda y.s'\) and \(\lty=(\lty_1\to \lty_2)^m\), where
 \(m\) must be \(1\) (because \(\omega\cdot 1\)
is undefined). Thus, we have \(\LTE, x\COL\lrty^1, y\COL\lty_1\pLin s':\lty_2\).
By the induction hypothesis, \(x\) occurs exactly once in \(s'\), hence also in \(s\).
\end{proof}
\begin{lemma}
\label{lem:env-for-omega}
If \(\LTE \pLin s:\lrty^\omega\), then \(\mult(\lty)=\omega\) for every \(x\COL\lty \in \LTE\).
\end{lemma}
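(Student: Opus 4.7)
The plan is to proceed by induction on the derivation of \(\LTE \pLin s:\lrty^\omega\), with case analysis on the last rule used. The essential auxiliary observation is the contrapositive of the stated well-formedness condition: in a well-formed function type \((\lty_1 \to \lty_2)^m\), if \(\mult(\lty_2) = \omega\), then necessarily \(m = \omega\) and \(\mult(\lty_1) = \omega\). This is what lets the induction hypothesis propagate the \(\omega\) multiplicity up through every premise.

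The easy cases first. For \rname{LT-Var}, the sole binding is \(x\COL\lrty^\omega\) by construction. For \rname{LT-Weak}, the rule explicitly adds a binding of the form \(x\COL\lrty'^\omega\), and the induction hypothesis applies to the premise (whose subject type is unchanged). For \rname{LT-Dereliction} the conclusion has multiplicity \(1\), so this rule cannot occur last. For \rname{LT-Const} with conclusion \(a\,s_1\cdots s_k \COL \T^\omega\), each premise has subject type \(\T^\omega\); the induction hypothesis gives that every \(\Gamma_i\) contains only \(\omega\)-bindings, and the sum \(\Gamma_1 + \cdots + \Gamma_k\) inherits this property. For \rname{LT-Abs}, the conclusion type \((\lty_1 \to \lty_2)^\omega\) forces the scaling \(m = \omega\); by the partiality condition on \(m\Gamma\), the domain \(\Gamma\) of the premise must be free of linear bindings, so \(\omega\Gamma = \Gamma\) is as required (the bound variable \(x\COL\lty_1\) is removed and plays no role in the conclusion).

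The only slightly delicate case is \rname{LT-App}, with conclusion \(\Gamma_0 + \Gamma_1 \pLin s_0\,s_1 : \lty\) where \(\mult(\lty) = \omega\), and premises \(\Gamma_0 \pLin s_0 : (\lty_1 \to \lty)^m\) and \(\Gamma_1 \pLin s_1 : \lty_1\). Applying the well-formedness observation to \((\lty_1 \to \lty)^m\) with \(\mult(\lty) = \omega\) yields \(m = \omega\) and \(\mult(\lty_1) = \omega\). Hence both premise subject types have outer multiplicity \(\omega\), the induction hypothesis applies to each, and both \(\Gamma_0\) and \(\Gamma_1\) consist entirely of \(\omega\)-bindings; their sum, when defined, does as well.

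The main obstacle, such as it is, lies precisely at \rname{LT-App}: one has to recognize that it is not the multiplicity of the function type itself but only that of its codomain which one sees in the conclusion, so the well-formedness condition must be invoked in contrapositive form to recover \(m = \omega\) and \(\mult(\lty_1) = \omega\) before the induction hypothesis can be applied to the two premises. Once that observation is in place, every case reduces to a direct appeal to the induction hypothesis together with the trivial fact that the sum of two \(\omega\)-only environments is \(\omega\)-only.
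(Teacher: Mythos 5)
Your proof is correct and follows essentially the same route as the paper: induction on the derivation, with the only non-trivial case being \rname{LT-App}, where the well-formedness condition on \((\lty_1\to\lty)^m\) (in contrapositive form) forces \(m=\omega\) and \(\mult(\lty_1)=\omega\) before the induction hypothesis is applied to both premises. Your explicit treatment of the remaining cases, including the use of the definedness of \(\omega\Gamma\) in \rname{LT-Abs}, matches what the paper dismisses as trivial.
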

\begin{proof}
This follows by induction on the derivation of \(\LTE \pLin s:\lrty^\omega\),
with case analysis on the last rule used.
Since the other cases are trivial, we discuss only the case where the last rule is \rname{LT-App}.
In that case, we have \(s=s_0s_1\) and:
\[
\begin{array}{l}
\LTE_0 \pLin s_0: (\lty_1\to\lrty^\omega)^m\\
\LTE_1 \pLin s_1: \lty_1
\end{array}
\]
By the well-formedness condition on linear/non-linear types, both \(m\) and \(\mult(\lty_1)\) must be \(\omega\).
Thus, the results follows immediately from the induction hypothesis.
\end{proof}
\begin{lemma}[substitution]
\label{lem:linear-substitution}
Suppose \(\LTE_0, x\COL\lty' \pLin s_0:\lty\) and \(\LTE_1 \pLin s_1:\lty'\).
If \(\LTE_0+\LTE_1\) is well-defined, then \(\LTE_0+\LTE_1\pLin [s_1/x]s_0:\lty\).
\end{lemma}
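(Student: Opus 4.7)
The plan is to proceed by induction on the derivation of \(\LTE_0, x\COL\lty' \pLin s_0:\lty\), with case analysis on the last typing rule. The general strategy splits according to the multiplicity of \(\lty'\). When \(\mult(\lty') = 1\), Lemma~\ref{lem:used-once} guarantees that \(x\) occurs exactly once in \(s_0\), so in any rule that combines contexts via \(+\), the binding \(x\COL\lty'\) appears in exactly one sub-environment, and we route \(\LTE_1\) into that subderivation via the induction hypothesis. When \(\mult(\lty') = \omega\), the premise \(\LTE_1 \pLin s_1\COL\lrty^\omega\) together with Lemma~\ref{lem:env-for-omega} forces \(\LTE_1\) to consist solely of \(\omega\)-typed bindings; consequently \(\LTE_1 + \LTE_1 = \LTE_1\) is well-defined, and \(\LTE_1\) can be freely introduced into any sub-environment by repeated \rname{LT-Weak}. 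This idempotence is what allows us to distribute \(\LTE_1\) across several sub-derivations sharing \(x\COL\lty'\) without over-counting.

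The base case \rname{LT-Var} with \(s_0 = x\) is immediate since \([s_1/x]x = s_1\) and \(\lty = \lty'\). For \rname{LT-Weak}, either the weakened binding is \(x\) itself---in which case \(\mult(\lty') = \omega\), and we merge \(\LTE_1\) into \(\LTE_0\) by a sequence of \rname{LT-Weak} steps, using that \(\LTE_0+\LTE_1\) is well-defined by assumption---or some other variable, for which we apply the induction hypothesis to the subderivation and reapply \rname{LT-Weak}. For \rname{LT-Abs}, after \(\alpha\)-renaming the bound variable to avoid \(\FV(s_1)\) and the domain of \(\LTE_1\), we apply the induction hypothesis to the body and reapply \rname{LT-Abs}, noting that the multiplicity-scaling \(m\Gamma\) commutes appropriately with \(+\). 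For \rname{LT-Dereliction}, we apply the induction hypothesis directly and then reapply the rule.

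The cases \rname{LT-App} and \rname{LT-Const} are the most delicate. In \rname{LT-App}, the premises give contexts \(\LTE_{00}\) and \(\LTE_{01}\) with \(\LTE_{00} + \LTE_{01} = \LTE_0, x\COL\lty'\). If \(\mult(\lty') = 1\), Lemma~\ref{lem:used-once} combined with Lemma~\ref{lem:variable} places \(x\COL\lty'\) in exactly one of \(\LTE_{00}, \LTE_{01}\); we apply the induction hypothesis there (with \(\LTE_1\)) and recombine via \rname{LT-App}. If \(\mult(\lty') = \omega\), \(x\COL\lty'\) may appear in one or both sub-environments; for each sub-derivation actually containing \(x\) we apply the IH with \(\LTE_1\), and for the others we weaken by the all-\(\omega\) environment \(\LTE_1\) via \rname{LT-Weak}. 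The resulting sum collapses to \(\LTE_0 + \LTE_1\) by idempotence on \(\omega\)-only bindings. The case \rname{LT-Const} is handled analogously with \(k\) premises.

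The main obstacle I anticipate is bookkeeping in the \rname{LT-App} and \rname{LT-Const} cases when \(\mult(\lty') = \omega\) and the variable \(x\) appears in an arbitrary subset of the sub-environments, with other shared \(\omega\)-variables distributed in further arbitrary ways. Verifying that all intermediate instances of \(+\) remain well-defined, and that the final combined context is indeed \(\LTE_0 + \LTE_1\), reduces to checking compatibility of the shared \(\omega\)-typed bindings---which follows from the well-definedness of \(\LTE_0 + \LTE_1\) assumed in the statement and the well-definedness of the context sums in the given derivation.
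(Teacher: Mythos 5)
Your proposal is correct and follows essentially the same route as the paper's proof: induction on the typing derivation with case analysis on the last rule, a sub-case split on \(\mult(\lty')\), using Lemma~\ref{lem:used-once} together with Lemma~\ref{lem:variable} to locate the unique occurrence of \(x\) in the linear case, and Lemma~\ref{lem:env-for-omega} plus weakening and idempotence of all-\(\omega\) environments to distribute \(\LTE_1\) in the non-linear case. The environment bookkeeping you flag as the main obstacle is resolved exactly as you suggest, via \((\LTE'_{0,0}+\LTE_1)+(\LTE'_{0,1}+\LTE_1) = (\LTE'_{0,0}+\LTE'_{0,1})+\LTE_1\).
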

\begin{proof}
This follows by induction on the derivation of \(\LTE_0, x\COL\lty' \pLin s_0:\lty\),
with case analysis on the last rule used. We discuss only the main cases; the other cases are trivial
or similar.
\begin{itemize}
\item Case \rname{LT-Var}: In this case, \(s_0=x\) and \(\LTE_0=\emptyset\). Thus, the result follows
immeidately.
\item Case \rname{LT-App}: In this case, we have \(s_0=s_{0,0}s_{0,1}\) and:
\[
\begin{array}{l}
\LTE_{0,0} \pLin s_{0,0}:(\lty_1\to\lty)^m\\
\LTE_{0,1}\pLin s_{0,1}:\lty_1\\
\LTE_{0,0}+\LTE_{0,1} = \LTE_0, x\COL\lty'
\end{array}
\]
Let \(\LTE_{0,i}'\) be the environment obtained by removing \(x\COL\lty'\) from \(\LTE_{0,i}\).
We perform case analysis on \(\mult(\lty')\).
\begin{itemize}
\item If \(\mult(\lty')=\omega\), then 
we have \(\LTE'_{0,0}+\LTE_1\pLin [s_1/x]s_{0,0}:(\lty_1\to\lty)^m\),
because: 
\begin{enumerate} 
\item If \(\LTE_{0,0}=\LTE'_{0,0}, x\COL\lty'\), then the result follows from the induction hypothesis.
\item If \(\LTE_{0,0}=\LTE'_{0,0}\), then by Lemma~\ref{lem:variable}, \(x\) does not occur in \(s_{0,0}\).
Thus, we have \([s_1/x]s_{0,0}=s_{0,0}\), and hence \(\LTE'_{0,0}\pLin [s_1/x]s_{0,0} : (\lty_1\to\lty)^m\).
By Lemma~\ref{lem:env-for-omega}, \(\LTE_1\) contains only non-linear types. Therefore we obtaned
the required result by using \rname{LT-Weak}.
\end{enumerate}
Similarly, we also have \(\LTE'_{0,1}+\LTE_1\pLin [s_1/x]s_{0,1}:\lty_1\).
Since \(\LTE_1\) contains only non-linear types, we have:
\[(\LTE'_{0,0}+\LTE_1)+(\LTE'_{0,1}+\LTE_1) = (\LTE'_{0,0}+\LTE'_{0,1})+\LTE_1 = \LTE_0+\LTE_1.\]
Thus, by using \rname{LT-App}, we have the required result.
\item If \(\mult(\lty')=1\), then by Lemma~\ref{lem:used-once}, \(x\) occurs exactly once
in either \(s_{0,0}\) or \(s_{0,1}\). Since the other case is similar, let us consider only the case
where \(x\) occurs in \(s_{0,0}\). Then by Lemma~\ref{lem:variable}, we have
\(\LTE'_{0,0},x\COL\lty'\pLin s_{0,0}:(\lty_1\to\lty)^m\) and \(\LTE'_{0,1}\pLin s_{0,1}:\lty_1\).
By the induction hypothesis, we have \(\LTE'_{0,0}+\LTE_1 \pLin [s_1/x]s_{0,0}:(\lty_1\to\lty)^m\).
By applying \rname{LT-App}, we obtain 
\((\LTE'_{0,0}+\LTE_1)+\LTE'_{0,1} \pLin ([s_1/x]s_{0,0})s_{0,1}:\lty\).
The result follows, since \((\LTE'_{0,0}+\LTE_1)+\LTE'_{0,1} = (\LTE'_{0,0}+\LTE'_{0,1})+\LTE_1 = \LTE_0+\LTE_1\),
and \(([s_1/x]s_{0,0})s_{0,1} = [s_1/x](s_{0,0}s_{0,1})\).
\end{itemize}
\end{itemize}
\end{proof}
\begin{lemma}[subject reduction]
\label{lem:linear-subject-reduction}
If \(\LTE\pLin s:\lty\) and \(s\red s'\), then \(\LTE\pLin s':\lty\).
\end{lemma}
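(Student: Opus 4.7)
I would proceed by induction on the derivation of $\LTE \pLin s : \lty$, with case analysis on the last rule. Since the linear type system has no rule for the non-deterministic choice $+$, the term $s$ cannot contain $+$, and so the only applicable reduction rule is beta, $E[(\lambda x.t)t'] \red E[[t'/x]t]$. This makes several cases immediate: LT-Var is vacuous (a variable is irreducible); LT-Abs is also vacuous since call-by-name evaluation contexts do not descend under $\lambda$; LT-Weak and LT-Dereliction follow by applying the induction hypothesis to the sub-derivation and reapplying the same rule; and LT-Const, where $s = a\,s_1\cdots s_k$ and the redex lies inside some argument $s_j$, follows from the induction hypothesis on the $j$-th premise combined with a reapplication of LT-Const.

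The main case is LT-App, where $s = s_0 s_1$, $\LTE = \LTE_0 + \LTE_1$, $\LTE_0 \pLin s_0 : (\lty_1 \to \lty)^m$, and $\LTE_1 \pLin s_1 : \lty_1$. If the redex lies strictly inside $s_0$, the induction hypothesis yields $\LTE_0 \pLin s_0' : (\lty_1 \to \lty)^m$ and the derivation is rebuilt via LT-App. Otherwise the redex is at the head, so $s_0 = \lambda x. t$ and the reduct is $[s_1/x]t$. For this case I would first establish an inversion lemma for LT-Abs: whenever $\LTE_0 \pLin \lambda x. t : (\lty_1 \to \lty)^m$, there is some $\LTE_0' \subseteq \LTE_0$ (with $\LTE_0 \setminus \LTE_0'$ consisting only of non-linear bindings) such that $\LTE_0', x : \lty_1 \pLin t : \lty$. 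Combining this with the substitution lemma (Lemma~\ref{lem:linear-substitution}) gives $\LTE_0' + \LTE_1 \pLin [s_1/x]t : \lty$, and applying LT-Weak to reintroduce the remaining non-linear bindings yields $\LTE \pLin [s_1/x]t : \lty$ as required.

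The main obstacle is the inversion lemma for LT-Abs, since the derivation of $\LTE_0 \pLin \lambda x. t : (\lty_1 \to \lty)^m$ need not end with LT-Abs but may instead end with LT-Weak or LT-Dereliction. Fortunately, LT-Dereliction only lowers multiplicity from $\omega$ to $1$, so it is idempotent and can appear at most once meaningfully; and LT-Weak merely appends non-linear bindings to the environment. A derivation of $\LTE_0 \pLin \lambda x.t : (\lty_1 \to \lty)^m$ can therefore be normalized into LT-Abs followed by an optional LT-Dereliction and then a sequence of LT-Weak steps. A careful case analysis on whether $m$ arises directly from LT-Abs or via LT-Dereliction of the $\omega$-typed conclusion then yields the required judgment $\LTE_0', x : \lty_1 \pLin t : \lty$ in each case, completing the argument.
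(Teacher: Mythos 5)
Your proof is correct and follows essentially the same route as the paper: induction on the typing derivation, with the only substantive case being \rname{LT-App} at a head redex, discharged by the substitution lemma (Lemma~\ref{lem:linear-substitution}). The inversion of the abstraction typing that you spell out (handling possible trailing \rname{LT-Weak}/\rname{LT-Dereliction} steps) is exactly the step the paper leaves implicit in the phrase ``By the first condition, we also have $\LTE_0, x\COL\lty' \pLin s_0:\lty$'', so your extra care there is sound and welcome rather than a divergence.
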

\begin{proof}
This follows by induction on the derivation of \(\LTE\pLin s:\lty\), with case analysis on the last rule used.
Since the other cases are trivial, we discuss only the case where the last rule is \rname{LT-App},
in which case \(s=s_0s_1\). If the reduction \(s\red s'\) comes from that of \(s_0\) or \(s_1\),
the result follows immediately. Thus, we can focus on the case where
\(s_0 = \lambda x.s_0'\) and \(s'=[s_1/x]s_0'\).
By \rname{LT-App}, we have:
\[
\LTE_0 \pLin \lambda x.s_0':(\lty'\to\lty)^m
\qquad \LTE_1\pLin s_1:\lty'\qquad \LTE = \LTE_0+\LTE_1.
\]
By the first condition, we also have \(\LTE_0, x\COL\lty' \pLin s_0:\lty\).
By Lemma~\ref{lem:linear-substitution}, we have \(\LTE_0+\LTE_1 \pLin [s_1/x]s_0 : \lty\)
as required.
\end{proof}
\begin{lemma}
\label{lem:linearity}
If \(\lty\) is a ground type, \(s\) is closed, and there exists a derivation tree for \(\emptyset \pLin C[s]:\lty\) in which
a linear type is assigned to every subterm of \(C[s]\) containing the occurrence \(s\), then \(C\) is a linear context.
\end{lemma}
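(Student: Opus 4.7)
The plan is to transform the given linear type derivation for $\emptyset \pLin C[s] : \lty$ into a linear type derivation for $x : \lty_s \pLin C[x] : \lty$, where $x$ is a fresh variable and $\lty_s$ is the linear type assigned in the given derivation to the occurrence of $s$. Note that $\lty$ is itself linear, since $C[s]$ lies on the path from the root of the derivation down to the occurrence of $s$ (call this path the \emph{spine}), and by hypothesis every subterm on this spine is typed linearly; so $\lty = \T^1$ and $C[x]$ is a ground $\stlambda$-term. Once the new derivation is obtained, I would reduce $C[x]$ to its call-by-name normal form $N$, iterate Lemma~\ref{lem:linear-subject-reduction} to get $x : \lty_s \pLin N : \lty$, and then invoke Lemma~\ref{lem:used-once} to conclude that $x$ occurs exactly once in $N$. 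This is precisely the linearity of $C$.

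To build the transformed derivation, first replace the subderivation $\LTE_s \pLin s : \lty_s$ at the position of $s$. Because $s$ is closed, no variable of $\LTE_s$ actually appears in $s$; by the contrapositive of Lemma~\ref{lem:used-once}, $\LTE_s$ must therefore contain only non-linear bindings. Start with the axiom $x : \lty_s \pLin x : \lty_s$ (by \textsc{LT-Var}) and apply \textsc{LT-Weak} for each binding of $\LTE_s$, obtaining $\LTE_s, x : \lty_s \pLin x : \lty_s$.

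Next, propagate the additional binding $x : \lty_s$ upward along the spine, by induction on the distance from $s$ to the root. Each step analyses the rule used. For \textsc{LT-Abs}, the hypothesis that the $\lambda$-abstraction on the spine has a linear type forces its multiplier $m$ to be $1$, so no scaling is required and the extra binding passes through intact. For \textsc{LT-App}, the freshness of $x$ ensures that adding $x : \lty_s$ to the branch containing $s$ yields a well-defined environment sum. For \textsc{LT-Const}, the linearity of the parent application forces each ground-type premise on the spine to be $\T^1$. The case \textsc{LT-Weak} is trivially compatible. The rule \textsc{LT-Dereliction} cannot appear on the spine at all, since its premise would assign a non-linear type to a subterm containing $s$, contradicting the hypothesis.

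The main obstacle is the spine-propagation step, and in particular the role of \textsc{LT-Dereliction}: the hypothesis that \emph{every} spine subterm (not only the endpoints) is typed linearly is essential, both to keep $\lambda$-multipliers at $1$ and to exclude dereliction along the spine. Once $x : \lty_s \pLin C[x] : \lty$ is established, the argument of the first paragraph (subject reduction along the call-by-name reduction to normal form, followed by Lemma~\ref{lem:used-once}) completes the proof.
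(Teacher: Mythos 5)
Your proposal is correct and follows essentially the same route as the paper's proof: replace the closed term \(s\) by a fresh variable \(x\) carrying the linear type assigned to \(s\) (the key observation being that the multiplicity \(m\) in \rname{LT-Abs} must be \(1\) along the path containing \(s\), and that dereliction cannot occur there), then reduce \(C[x]\) to its call-by-name normal form, apply subject reduction (Lemma~\ref{lem:linear-subject-reduction}), and conclude with Lemma~\ref{lem:used-once}. The paper states the derivation-transformation step more tersely, but your spine-propagation argument is exactly the justification it leaves implicit.
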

\begin{proof}
By the assumption that \(s\) is closed and
a linear type is assigned to every subterm containing \(s\), we have
\(x\COL \lrty^1 \pLin C[x]:\lty\),
where \(x\) is a fresh variable, and
\(\lrty^1\) is the type assigned to \(s\) in the derivation of 
\(\emptyset \pLin C[s]:\lty\). (Note that 
\(m\) in \rname{LT-Abs} must be \(1\) whenever \rname{LT-Abs} is applied to a term containing \(s\).)
Let \(t\) be the call-by-name normal form of \(C[x]\). By Lemma~\ref{lem:linear-subject-reduction},
we have \(x\COL\lrty^1 \pLin t:\lty\). By Lemma~\ref{lem:used-once}, \(t\) contains exactly
one occurrence of \(x\).
\end{proof}
We now give a translation \(\toLty{(\cdot)}\) from intersection types to linear/non-linear types. 
Our intention is that if \(\PTE\PM t:\pty\hascost c\tr s\), then \(\toLty{\pty}\) represents the type
of \(s\) (although it does not always hold, actually).
We translate a type with a non-empty marker set as a linear type. In the case of a function type
\((F,M,\pity\to \pty)\), markers in the argument type \(\pity\) are passed to a return value type;
thus we take them into account to determine the linearity of \(\pty\).
\[
\begin{array}{l}
\toLty{(F,M,\T)} = 
 \left\{
  \begin{array}{ll}
     \T^1 & \mbox{if \(M\neq \emptyset\)}\\
    \T^\omega & \mbox{otherwise}
  \end{array}\right.\\
\toLty{(F,M,\pity\to \prty)} =\\\quad
 \left\{
  \begin{array}{ll}
   \toLty{(F,M,\prty)} & \mbox{if $\pity=\emptyset$}\\
  \big(\toLty{\pty}\to 
    \toLty{(F, M\uplus \markers(\pty), \pity'\to \prty)}\big)^1 &
          \hfill \mbox{if \(M\neq \emptyset\), \(\pity=\set{\pty}\cup\pity'\), and \(\pty<\pity'\)}\\
  \big(\toLty{\pty}\to 
    \toLty{(F, M\uplus \markers(\pty), \pity'\to \prty)}\big)^\omega &
          \hfill \mbox{if \(M= \emptyset\), \(\pity=\set{\pty}\cup\pity'\), and \(\pty<\pity'\)}\\
  \end{array}
 \right.
\end{array}
\]
Here, \(\pty<\pity\) means \(\pty<\pty'\) holds for every \(\pty'\in\pity\).
Note that for every \(\pty\), \(\toLty{\pty}\) is a well-formed linear/non-linear type.
Also note that \(\toLty{(F, M, \prty)}\) is linear if and only if \(M \neq \eset\).

The translation is extended to type envrionments by:
\[
\toLty{\PTE} = \set{x_{\pty}\COL \toLty{\pty} \mid x\COL\pty\in \PTE}.
\]

The translation judgment \(\PTE\PM t:\pty\hascost c\tr s\) is transformed to
a linear type judgment by:
\[
\toLty{(\PTE\PM t:\pty\hascost c\tr s)} =
\left\{
\begin{array}{ll}
  \toLty{\PTE}\pLin s: \lrty^1 & \mbox{if \(c>0\) and \(\toLty{\pty}=\lrty^m\)}\\
  \toLty{\PTE}\pLin s: \toLty{\pty} & \mbox{otherwise}
\end{array}
\right.
\]

Given a derivation tree \(\dt\) for \(\PTE\PM t:\pty\hascost c\tr s\),
we write \(\toLty{\dt}\) for the derivation tree obtained by replacing each judgment
\(\PTE'\PM t':\pty'\hascost c'\tr s'\) with
\(\toLty{(\PTE'\PM t':\pty'\hascost c'\tr s')}\).
Note that \(\toLty{\dt}\) may not be a valid derivation tree in the linear type system.
We say that \(\toLty{\dt}\) is \emph{admissible} if,
for each derivation step 
\(\raisebox{-1ex}{\infers{J}{J_1 & \cdots & J_k}}\) in \(\toLty{\dt}\),
\(J\) can be obtained from \(J_1,\ldots,J_k\) in the linear type system.

\begin{example}
Let \(\dt_0\) be:
\[
\infers{\emptyset\PMm{1} \lambda x.\Ta\,x: (\emptyset, \emptyset, (\emptyset, \set{0},\T)\to\T)\hascost 1
  \tr \lambda x_{(\emptyset, \set{0},\T)}.\Ta\,x_{(\emptyset, \set{0},\T)}}
{\infers{x\COL (\emptyset, \set{0},\T)\PMm{1} \Ta\,x: (\emptyset,\set{0},\T)\hascost 1\tr \Ta\,x_{(\emptyset, \set{0},\T)}}
{{x\COL (\emptyset, \set{0},\T)\PMm{1} x\COL (\emptyset, \set{0},\T)\hascost{0}\tr x_{(\emptyset, \set{0},\T)}}}}
\]
\(\toLty{\dt_0}\) is:
\[
\infers{\emptyset \pLin \lambda x_{(\emptyset, \set{0},\T)}.\Ta\,x_{(\emptyset, \set{0},\T)}: (\T^1\to\T^1)^1}
 {\infers{x_{(\emptyset, \set{0},\T)}\COL\T^1\pLin \Ta\,x_{(\emptyset, \set{0},\T)}:\T^1}
  {x_{(\emptyset, \set{0},\T)}\COL\T^1\pLin x_{(\emptyset, \set{0},\T)}\COL\T^1}},
\]
which is admissible.
\end{example}

\begin{example}
It is not true that
every valid derivation in the intersection type system is mapped to an admissible derivation
in the linear type system. Let \(\dt_1\) be:
\[
\small
\infers{\emptyset \PMm{2}(\lambda x.\Ta\,x\,x)\Tc: (\set{0},\set{1},\T)\hascost 0\tr (\lambda x_{(\set{0},\emptyset,\T)}.
\Ta\, x_{(\set{0},\emptyset,\T)}\, x_{(\set{0},\emptyset,\T)})\Tc}
{\infers{\emptyset \PMm{2}\lambda x.\Ta\,x\,x:(\set{0},\emptyset,(\set{0},\emptyset,\T)\to\T)\hascost 0\tr 
\lambda x_{(\set{0},\emptyset,\T)}.\Ta\, x_{(\set{0},\emptyset,\T)}\, x_{(\set{0},\emptyset,\T)}}
 {\infers{x\COL(\set{0},\emptyset,\T)\PMm{2}\Ta\,x\,x:(\set{0},\emptyset,\T)\hascost 0\tr 
\Ta\, x_{(\set{0},\emptyset,\T)}\, x_{(\set{0},\emptyset,\T)}}{\rule{0ex}{2.2ex}\smash{\vdots}}}
& \infers[PTr-Mark]{\emptyset \PMm{2}\Tc: (\set{0},\set{1},\T)\hascost 0\tr \Tc}
 {\emptyset \PMm{2}\Tc: (\set{0},\emptyset,\T)\hascost 0\tr \Tc}}
\]
However, \(\toLty{\dt_1}\) is:
\[
\small
\infers{\emptyset \pLin(\lambda x_{(\set{0},\emptyset,\T)}.
\Ta\, x_{(\set{0},\emptyset,\T)}\, x_{(\set{0},\emptyset,\T)})\Tc:\T^1}
{\infers{\emptyset \pLin
\lambda x_{(\set{0},\emptyset,\T)}.\Ta\, x_{(\set{0},\emptyset,\T)}\, x_{(\set{0},\emptyset,\T)}: (\T^\omega\to\T^\omega)^\omega}
 {\infers{x\COL \T^\omega\pLin
\Ta\, x_{(\set{0},\emptyset,\T)}\, x_{(\set{0},\emptyset,\T)}:\T^\omega}{\rule{0ex}{2.2ex}\smash{\vdots}}}
& \infers{\emptyset \pLin\Tc: \T^1}
 {\emptyset \pLin\Tc: \T^\omega}}
\]
which is not admissible. Note that the argument type does not match in the last inference step.
In the next subsection (in Theorem~\ref{th:completeness-tr}), we show that
if \(\pi\in\Lang(\GRAM)\), we can construct a derivation \(\dt\) such that
\(\toLty{\dt}\) is an admissible derivation.
\end{example}

\subsubsection{Completeness}
\label{sec:completeness-tr}

Here we show the following theorem:
\begin{theorem}[completeness]
\label{th:completeness-tr}
Let \(\GRAM\) be an order-\(n\) grammar \(\GRAM\), and
let \(S\) be its start symbol.
If \(\pi\in \Lang(\GRAM)\), then 
there exist \(\dt\), \(c\), and \(s\) such that
\(\dt\pdt \emptyset \PM S:(\emptyset,\set{0,\ldots,n-1},\T)\hascost c\tr s\)
and \(\expn{n}{c}\geq |\pi|\).
Furthermore, \(\toLty{\dt}\) is an admissible derivation 
in the linear type system.
\end{theorem}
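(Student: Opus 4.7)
The plan is to prove the theorem by induction on the order $n$, following the hierarchical reduction sequence
\[
S \oreds{n} t_{n-1} \oreds{n-1} \cdots \oreds{1} t_0 \creds \pi
\]
obtained from Lemma~\ref{lem:order-reduction}. The reduction from $t_{n-1}$ down to $\pi$ uses only redexes of order $< n$, so it can be viewed as an order-$(n-1)$ computation to which a reformulated induction hypothesis applies. The outer order-$n$ layer $S \oreds{n} t_{n-1}$ is absorbed into the top of the intersection-type derivation by \textsc{PTr-App}, \textsc{PTr-Abs}, and \textsc{PTr-NT} steps that mirror each order-$n$ $\beta$-reduction and non-terminal unfolding. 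Structurally, this is a refinement of Parys' completeness argument~\cite{Parys16ITRS} augmented with the translation component $\tr s$.

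The counter bound $\expn{n}{c} \geq |\pi|$ will be tracked position by position. At the top of the derivation we attach flags of order $n-1$ only at positions that correspond to surviving contributions to $\pi$, keeping their count controlled by $|t_{n-1}|$. Each order-$k$ reduction step (for $k < n$) can at most double the number of flags of order $k-1$ propagated upward, because a single marked argument is substituted into at most one occurrence of its bound variable under the linearity invariant; positions that would be duplicated are left unmarked and recovered by \textsc{PTr-Weak} or by the non-linear branch of $\toLty{\cdot}$. Iterating this doubling from order $n-1$ down to order $0$ yields the $n$-fold exponential bound $|\pi| \leq \expn{n}{c}$.

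The output term $s$ is built compositionally by the transformation rules, and admissibility of $\toLty{\dt}$ follows from an invariant maintained throughout the construction: whenever a subderivation $\PTE' \PM t' : \pty' \hascost c' \tr s'$ has a nonempty marker set in $\pty'$, the translated term $s'$ occurs linearly in the output of the enclosing context. This invariant dictates the placement of \textsc{PTr-Mark}: we mark precisely those subterms whose translations, traced through the normalized reduction, appear at a single position after normalization. Under this invariant, the definition of $\toLty{\cdot}$ assigns $\lrty^1$ exactly where \textsc{LT-App} expects a linearly-used argument, so each step of $\toLty{\dt}$ matches the linear typing rules.

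The principal obstacle is maintaining consistency between the multiplicities assigned by $\toLty{\cdot}$ and the combinations of environments required by \textsc{LT-App}, which demands exact multiplicity matching and a well-defined $+$ on environments. To satisfy both constraints, at each \textsc{PTr-App} step the construction must choose coherent markings for function and argument, and these choices must mesh with those made for sibling subderivations sharing variables. I would discharge this by a subject-expansion-style argument over the normalized reduction sequence, threading linearity information backward from $\pi$ through each $\beta$-step and non-terminal unfolding; at every step one checks that the freshly constructed piece of $\dt$ both contributes the correct flag count and produces a $\toLty{\cdot}$-image compatible with \textsc{LT-App}, \textsc{LT-Abs}, and \textsc{LT-Const}. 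This simultaneous bookkeeping of counters and linear multiplicities, rather than any single component, is what makes the argument delicate.
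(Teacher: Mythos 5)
Your overall architecture coincides with the paper's: normalize the reduction sequence via Lemma~\ref{lem:order-reduction} into layers \(S \oreds{n} t_{n-1} \oreds{n-1} \cdots \oreds{1} t_0 \creds \pi\), build the derivation backward from \(\pi\), treat the orders one level at a time, and thread the admissibility of \(\toLty{\dt}\) as an invariant through every step. The paper packages exactly this as a base case (Lemma~\ref{lem:subj-expansion-base}) followed by alternating applications of subject expansion (Lemma~\ref{lem:subj-expansion}) and an ``increase of order'' lemma (Lemma~\ref{lem:increase-of-order}); your induction on \(n\) is the same decomposition read top-down. The genuine gap is in your account of the bound \(\expn{n}{c}\geq|\pi|\).

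You claim that each order-\(k\) reduction step ``can at most double the number of flags of order \(k-1\)'' and that iterating this doubling over the orders yields the \(n\)-fold exponential. Neither half of this works. Subject expansion at a fixed order preserves the counter exactly --- the judgment \(\hascost c\) is unchanged across each \(\oreds{k}\) step --- so no per-step doubling occurs or is needed; moreover, a per-step doubling would give a bound exponential in the \emph{number of reduction steps}, which is unrelated to \(c\), and iterating one doubling per order level gives a single exponential with exponent \(n\), not the iterated exponential \(\expn{n}{c}\). The mechanism actually producing the tower lives entirely in the order-increase step: when passing from \(\PMm{k}\) to \(\PMm{k+1}\), the old counter \(c_k\) is replaced by a new counter \(c_{k+1}\) with \(2^{c_{k+1}}\ge c_k\), and this inequality is the thing that must be verified. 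It follows from the arithmetic of \(\Comp_{k+1}\) in the \rname{PTr-App} and \rname{PTr-Const} cases: the new counter is (roughly) additive in the \emph{number} of premises carrying a positive old counter, while the old counter is additive in their \emph{sum}, whence \(2^{c_{k+1}}\ge c_k\) by picking the premise with maximal counter (Lemma~\ref{lem:increase-of-order-for-induction}). Composing \(2^{c_1}\ge|\pi|\) with \(2^{c_{k+1}}\ge c_k\) for \(k=1,\dots,n-1\) gives the stated bound; your plan contains no step that establishes \(2^{c_{k+1}}\ge c_k\), so the quantitative half of the theorem is not reached. A smaller imprecision: your admissibility invariant (``nonempty marker set implies the translated term occurs linearly'') omits the other trigger in the definition of \(\toLty{\cdot}\), namely that a positive counter also forces the result type to be linear; the two are reconciled by Lemma~\ref{lem:marker} (\(c>0\) forces \(n-1\in M\) for terms of external order below \(n\)), which you need when matching argument multiplicities in \rname{LT-App}.
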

The theorem follows from the following three lemmas.
\begin{lemma}[base case]
\label{lem:subj-expansion-base}
If \(t\creds \pi\), then there exist \(\dt\) and \(c>0\) such that
\(\dt\pdt \emptyset\PMm{1} t:(\emptyset,\set{0},\T)\hascost c\tr \pi\)
and \(2^c\geq |\pi|\). Furthermore, \(\toLty{\dt}\) is an admissible derivation.
\end{lemma}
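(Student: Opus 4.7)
The plan is to prove the lemma by induction on \(|t|\), strengthening the statement so that for every \(t\creds\pi\) we simultaneously produce an \emph{unmarked} derivation \(\emptyset\PMm{1} t : (\set{0},\emptyset,\T)\hascost 0 \tr \pi\) and a \emph{marked} derivation \(\emptyset\PMm{1} t : (\emptyset,\set{0},\T)\hascost c \tr \pi\) with \(c \ge \log_2|\pi|\) (so in particular \(c>0\) and \(2^c\ge |\pi|\)), both with admissible linear translations. The lemma itself is the marked form. In the intended use \(t\) is closed, of ground type, and already in normal form with respect to \(\beta\)-reduction and non-terminal unfolding, so its only constructors are terminals and non-deterministic choices.

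For \(t = t_1+t_2\) we have \(t\cred t_j \creds \pi\) for some \(j\); by IH on \(t_j\) obtain both derivations and wrap each with \rname{PTr-Choice}, which preserves type, counter, and output term, so the linear translation is literally unchanged. For \(t = a\,t_1\cdots t_k\) we have \(\pi = a\,\pi_1\cdots\pi_k\) with \(t_i\creds\pi_i\). For the unmarked version, combine the unmarked derivations of all \(t_i\) via \rname{PTr-Const}; all child marker sets are empty, so \(\Comp_1\) produces \((\set{0},\emptyset,\T)\hascost 0\). For the marked version, pick \(j\) with \(|\pi_j|\) maximal, use the marked derivation of \(t_j\) and the unmarked derivations of the other \(t_i\), and apply \rname{PTr-Const}: the combined marker set becomes \(\set{0}\uplus\emptyset\cdots=\set{0}\), and \(\Comp_1\) yields \(f_0 = 1+(k-1) = k\) (one flag from the terminal itself, \(k-1\) from the unmarked siblings), \(f'_1 = k\), so the output is \((\emptyset,\set{0},\T)\hascost k+c_j\). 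The nullary base case \(t = a\) uses \rname{PTr-Const} for the unmarked form and \rname{PTr-Mark} with \(M=\set{0}\) to obtain \((\emptyset,\set{0},\T)\hascost 1\).

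The counter bound for the inductive case reduces to \(2^{k+c_j}\ge 1+\sum_i|\pi_i|\); using \(2^{c_j}\ge|\pi_j|\) (IH) and \(\sum_i|\pi_i|\le k|\pi_j|\) (choice of \(j\)), it suffices to verify \((2^k-k)|\pi_j|\ge 1\), which holds for every \(k\ge 1\) and \(|\pi_j|\ge 1\). For admissibility of the linear translation: in the marked \rname{PTr-Const} step the marked child carries \(\T^1\) and the \(k-1\) unmarked siblings carry \(\T^{\omega}\), so inserting one \rname{LT-Dereliction} per sibling uniformises them at multiplicity \(1\) and \rname{LT-Const} produces \(\T^1\); \rname{PTr-Mark} is matched by a single \rname{LT-Dereliction} to pass from \(\T^{\omega}\) to \(\T^1\); \rname{PTr-Choice} is the identity (same subject); and the unmarked \rname{PTr-Const} applies \rname{LT-Const} at multiplicity \(\omega\) directly.

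The main obstacle is choosing the correct placement of markers. Marking every subterm is blocked because \rname{PTr-Const} requires the children's marker sets to be disjoint; marking only at the root gives \(c=1\) regardless of \(|\pi|\), because \(F\) is a set and flag multiplicities collapse as they propagate up. The resolution is to mark exactly one root-to-leaf spine: at each spine node all \(k\) flags (one from the terminal, \(k-1\) from the unmarked siblings) are absorbed into the counter via the marker \(\set{0}\) before their multiplicity is lost, and this spinewise accumulation is what brings the counter above \(\log_2|\pi|\).
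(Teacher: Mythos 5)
Your proposal is correct and follows essentially the same route as the paper: the paper likewise strengthens the statement to produce both an unmarked derivation of type \((\set{0},\emptyset,\T)\) with counter \(0\) and a marked one of type \((\emptyset,\set{0},\T)\), marks a single spine by combining one marked child with unmarked siblings in \rname{PTr-Const} (so that \(c = c_j + k\)), uses \rname{PTr-Mark} only at nullary leaves, handles choices by wrapping with \rname{PTr-Choice}, and verifies the same inequality \(2^{c_j+k}\ge 1+\sum_i|\pi_i|\) (the paper picks \(j\) maximizing \(c_j\) rather than \(|\pi_j|\), an immaterial difference). The only cosmetic divergence is that the paper factors the argument into a lemma on trees plus a separate subject-expansion lemma for \(\cred\), whereas you run a single induction on \(|t|\).
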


\begin{lemma}[subject expansion for closed, ground-type terms]
\label{lem:subj-expansion}
If \(t\ored{n} t'\) and \(\dt'\pdt \emptyset \PM t':(\emptyset,\set{0,\ldots,n-1},\T)\hascost c\tr s'\),
then \(\dt \pdt \emptyset \PM t:(\emptyset,\set{0,\ldots,n-1},\T)\hascost c\tr s\) for some 
\(\dt\) and \(s\).
Furthermore, if \(\toLty{\dt'}\) is an admissible derivation, so is
\(\toLty{\dt}\).
\end{lemma}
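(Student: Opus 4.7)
I would proceed by induction on where the reduction $t \ored{n} t'$ fires, reducing the general case to two head cases: non-terminal unfolding and a beta step at order $n$.

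\textbf{Context closure.} I would first prove a standard context lemma: if $\dt' \pdt \PTE \PM C[t'_0] : \pty \hascost c \tr s'$ and the reduction occurs in $t'_0$ at the hole of $C$, then $\dt'$ contains a subderivation $\dt'_0 \pdt \PTE_0 \PM t'_0 : \pty_0 \hascost c_0 \tr s'_0$ at the corresponding position. Replacing $\dt'_0$ inside $\dt'$ by any derivation $\dt_0$ of the same conclusion shape $\PTE_0 \PM t_0 : \pty_0 \hascost c_0 \tr s_0$ yields a valid derivation of $\PTE \PM C[t_0] : \pty \hascost c \tr s$, where $s$ is obtained from $s'$ by replacing $s'_0$ with $s_0$. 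Admissibility of the linear translation is a local condition on each inference step, so it propagates from $\toLty{\dt_0}$ to $\toLty{\dt}$ as long as the surrounding steps were admissible in $\toLty{\dt'}$. It thus suffices to treat the two head cases.

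\textbf{Head cases.} For non-terminal unfolding ($t = A$, $t' = u$ with $(A = u) \in \G$), I apply \rname{PTr-NT} to $\dt'$; the output term, counter, and type are unchanged, and admissibility is immediate because \rname{PTr-NT} introduces no new linear-typing obligation under the translation. For the beta head case ($t = (\lambda x.u)\,v$ with $\eorder(\lambda x.u) = n$, $t' = [v/x]u$), I would prove a reverse-substitution lemma: from $\dt' \pdt \PTE \PM [v/x]u : \pty \hascost c \tr s'$, I identify the subderivations of $\dt'$ that type the inlined copies of $v$, say at types $\pty_1 < \cdots < \pty_k$ with counters $c_1, \ldots, c_k$ and outputs $s_1, \ldots, s_k$, and I replace each by a \rname{PTr-Var} leaf for $x$. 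This yields a derivation $\dt_u \pdt \PTE, x : \set{\pty_1,\ldots,\pty_k} \PM u : \pty_u \hascost c_u \tr s_u$ whose counter $c_u$ combines with the $c_i$'s via $\Comp_n$ to recover the original $c$. Applying \rname{PTr-Abs} and then \rname{PTr-App} reconstructs the judgment for $(\lambda x.u)\,v$ with output $(\lambda \seq{x}_\pity.s_u)\,s_1\,\cdots\,s_k$ at the same counter $c$.

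\textbf{Main obstacle.} The hard part will be preserving admissibility of the linear translation through the beta head case. After the expansion, \rname{PTr-Abs} assigns $\lambda x.u$ the linear type $\toLty{(F,M\setminus\Markers(\pity),\pity\to\pty_u)}$, and the new \rname{PTr-App} must match this against the argument types $\toLty{\pty_i}$ of the $v$-copies. In $\toLty{\dt'}$, each inlined copy of $v$ contributed to an environment split via \rname{LT-App}; after expansion all these splits must collapse into one split between $\lambda x.u$ and the sequence of $v$-copies. The case analysis hinges on whether each $M_i$ is empty: when $M_i = \emptyset$, $\toLty{\pty_i}$ is non-linear and duplications are absorbed by \rname{LT-Weak} and \rname{LT-Dereliction}; when $M_i \neq \emptyset$, $\toLty{\pty_i}$ is linear and the environments must split exactly, which is guaranteed by the marker-disjointness requirement built into the type system (see the definition of $\PTE_1 \PTEcup \PTE_2$ and \rname{PTr-App}). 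The definition of $\toLty{(F,M,\pity\to\prty)}$, which pushes $\Markers(\pity)$ into the return type's marker set, is precisely what makes the abstraction type and the application type agree under $\toLty{\cdot}$.
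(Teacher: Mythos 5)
Your plan follows the paper's proof essentially step for step: induction on the context of the reduction with two head cases, \rname{PTr-NT} for unfolding, and for the $\beta$-case a de-substitution lemma (the paper's Lemma~\ref{lem:desubstitution}) followed by \rname{PTr-Abs} and \rname{PTr-App}, with Lemmas~\ref{lem:markContextType} and~\ref{lem:empDisjComp} doing the bookkeeping that recovers the same $F$, $M$, $c$. One caveat on your admissibility analysis: in the sub-case $M_i=\emptyset$ the danger is that the judgment translation $\toLty{(\cdot)}$ forces $s_i$ to a \emph{linear} type whenever $c_i>0$ while the abstraction expects the non-linear $\toLty{\pty_i}$, and \rname{LT-Dereliction} cannot absorb this (it only converts $\omega$ to $1$, not the reverse) --- this is exactly the failure in the paper's non-admissible example $\dt_1$. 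The actual resolution is Lemma~\ref{lem:marker}: $c_i>0$ forces $n-1\in\Markers(\pty_i)$, so the mismatched sub-case simply cannot arise and $\lty_i=\toLty{\pty_i}$ holds in every case.
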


\begin{lemma}[increase of order]
\label{lem:increase-of-order}
If \(\dt'\pdt \emptyset \PMm{n} t:(\emptyset,\set{0,\ldots,n-1},\T)\hascost c'\tr s\) and \(c'>0\),
then \(\dt \pdt \emptyset \PMm{n+1} t:(\emptyset,\set{0,\ldots,n},\T)\hascost c\tr s\) for some 
\(\dt\) and \(c\) such that \(2^c \geq c'\) and \(c>0\).
Furthermore, if \(\toLty{\dt'}\) is an admissible derivation, so is
\(\toLty{\dt}\).
\end{lemma}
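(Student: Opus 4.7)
The plan is to prove this lemma by induction on the derivation $\dt'$, with a generalised induction hypothesis that applies to arbitrary sub-judgments $\PTE \PMm{n} u : (F_0, M_0, \prty_0) \hascost{c_0} \tr s_0$ and produces a corresponding $\PMm{n+1}$ derivation of the \emph{same} $u$ and the \emph{same} translated term $s_0$, whose outer marker set extends $M_0$ by the new marker $n$ (in a way that respects well-formedness) and whose counter $c$ satisfies $2^c \ge c_0$, with $c > 0$ whenever $c_0 > 0$.

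The heart of the difficulty is the asymmetric treatment of flags by $\Comp_n$ and $\Comp_{n+1}$. In the old derivation, flags at level $n-1$ that cross the marker $n-1 \in M$ are counted in $f'_n$ and added to the counter; under $\Comp_{n+1}$ those same flags merely reach $f_n$ and are not counted unless $n$ is also in the marker set. A direct reinterpretation of $\dt'$ under $\PMm{n+1}$ therefore annihilates the counter, while na\"ively extending every marker set by $\{n\}$ is forbidden by the marker-disjointness condition in PTr-App (the union $M_0 \uplus \biguplus_i M'_i$ requires $n$ to appear in at most one sibling). My strategy is to insert PTr-Mark applications with $M = \{n\}$ at carefully chosen positions. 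Lemma~\ref{lem:marker} is the key tool: any sub-judgment with $c_0 > 0$ and $\eorder(u) \le n-1$ already has $n-1 \in M_0$, so a flag has reached level $n-1$; inserting PTr-Mark with $\{n\}$ is then legal (since $n \ge \eorder(u)$) and promotes the flag into $f'_{n+1}$, preserving the counter along that sub-derivation. Along a single ``spine'' through the derivation this preservation is exact, but at a PTr-App branching, marker $n$ can sit in only one sibling, so the counter contributions from the other siblings must be absorbed into the output flag-set $F$ rather than into the counter; iterating this over the branching structure yields $c \ge \lceil \log_2 c_0 \rceil$, hence $2^c \ge c_0$.

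The principal obstacle is the quantitative bound: one must track how marker $n$ is distributed across PTr-App branchings and how ``orphaned'' flag contributions from unmarked siblings aggregate through successive $\Comp_{n+1}$ invocations, possibly being promoted to the counter only by a later PTr-Mark near the root of $\dt$. Subsidiary issues are easier: the output $s_0$ is preserved automatically because none of the transformation rules mention it beyond propagation; admissibility of $\toLty{\dt}$ follows from the fact that $\toLty{(\cdot)}$ depends only on whether marker sets are empty, and enlargement of marker sets is compatible with linear typing via \emph{LT-Dereliction}; and the environment $\PTE$ stays untouched because all added $n$'s are absorbed into the outer marker of $\prty_0$ rather than into argument-type markers appearing inside $\prty_0$ (whose bound by well-formedness is strictly less than $n$ in subderivations of order below $n+1$).
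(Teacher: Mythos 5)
Your plan is essentially the paper's proof: the paper likewise generalises the statement to arbitrary judgments (Lemma~\ref{lem:increase-of-order-for-induction}, which adds marker \(n\) to the outer marker set while keeping \(\PTE\), the term, and the output \(s\) fixed), pairs it with a companion lemma (Lemma~\ref{lem:increase-of-order-without-mark}) that turns a positive counter of an unmarked subderivation into an order-\(n\) flag with counter \(0\), and at \rname{PTr-App} lets exactly one premise carry the new marker while the other positive-counter premises contribute flags, exactly as you describe. The one point you leave implicit but which is essential for \(2^c \ge c'\) is that the marked premise must be chosen as the one with the \emph{maximal} counter \(c'_j\) (the paper's bound is \(2^{f'_n+p}\cdot 2^{c_j} \ge (f'_n+p+1)\,c'_j \ge c'\)); an arbitrary choice fails, e.g.\ with sibling counters \(1\) and \(1000\) the wrong choice yields \(c\approx 2\).
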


\begin{proof}[Proof of Theorem~\ref{th:completeness-tr}.]
Suppose \(S \redswith{\GRAM} \pi\), then we have
\[S = t_n \oreds{n} t_{n-1} \oreds{n-1} \cdots \oreds{1} t_0 \creds \pi.\]
By Lemma~\ref{lem:subj-expansion-base}, we have \[\dt_0\pdt \emptyset \PMm{1} t_0:
(\emptyset,\set{0},\T)\hascost c_0 \tr \pi\]
for some \(\dt_0\) and \(c_0 > 0\) such that \(2^{c_0}\geq |\pi|\) and
\(\toLty{\dt_0}\) is admissible.
By repeated applications of Lemmas~\ref{lem:subj-expansion}
and~\ref{lem:increase-of-order}, we have \(\dt \PMm{n} S:(\emptyset,\set{0,\ldots,n-1},\T)\hascost c\tr s\)
for some \(\dt\) and \(c\) such that \(\expn{n}{c}\geq |\pi|\) and
 \(\toLty{\dt}\) is admissible.
\end{proof}

We prove the three lemmas above in the rest of this subsection.
\subsubsection*{Proof of Lemma~\ref{lem:subj-expansion-base}}

Lemma~\ref{lem:subj-expansion-base} is a trivial corollary of the following two lemmas.

\begin{lemma}
\label{lem:typing-of-trees}
Let \(\pi\) be a tree. %
Then 
\begin{enumerate}
\item \(\dt\pdt \emptyset \PMm{1} \pi:(\emptyset,\set{0},\T)\hascost c\tr \pi\)
for some \(\dt\) and \(c>0\) such that \(2^c\geq |\pi|\) and \(\toLty{\dt}\) is an admissible derivation.
\item \(\dt\pdt \emptyset \PMm{1} \pi:(\set{0}, \emptyset,\T)\hascost 0\tr \pi\)
for some \(\dt\) such that \(\toLty{\dt}\) is an admissible derivation.
\end{enumerate}
\end{lemma}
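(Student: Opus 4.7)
The plan is to prove both statements simultaneously by structural induction on $\pi = a\,\pi_1\cdots\pi_k$ (with $k = \arity(a)$). Statement~(2) is the easier of the two: applying \textsc{PTr-Const} with $M_i = \emptyset$ and $F_i = \set{0}$ for every child (the latter furnished by the induction hypothesis for statement~(2), and trivially for leaves since then there are no children) gives $M = \emptyset$ and, unfolding $\Comp_1$, one obtains $f_0 = k+1$, hence $F = \set{0}$ and $c = 0$, yielding exactly $(\set{0},\emptyset,\T) \hascost 0$. In the translation every judgment becomes $\emptyset \pLin \cdot : \T^\omega$, and \textsc{LT-Const} with $m=\omega$ composes the premises into the conclusion, so admissibility is immediate.

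Statement~(1) requires more care. For a leaf ($k=0$), we derive $(\set{0},\emptyset,\T) \hascost 0$ by statement~(2) and then apply \textsc{PTr-Mark} with $M = \set{0}$, which by $\Comp_1$ converts the type to $(\emptyset,\set{0},\T) \hascost 1$; the inequality $2^1 \geq 1 = |\pi|$ holds, and the corresponding translation step is a single \textsc{LT-Dereliction} from $\T^\omega$ to $\T^1$. For an internal node ($k \geq 1$), we pick $j$ with $|\pi_j|$ maximal among the children, apply the induction hypothesis for statement~(1) to $\pi_j$ to obtain a derivation of $(\emptyset,\set{0},\T) \hascost c_j$ with $c_j > 0$ and $2^{c_j} \geq |\pi_j|$, and apply the induction hypothesis for statement~(2) to each $\pi_i$ with $i \neq j$ to get $(\set{0},\emptyset,\T) \hascost 0$. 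A single \textsc{PTr-Const} step with $M_j = \set{0}$ and $M_i = \emptyset$ otherwise then produces $\pi : (\emptyset,\set{0},\T) \hascost k + c_j$, where the cost $k + c_j$ arises from $f_0 = k$ (contributed by the root and by each of the $k-1$ unmarked children) being propagated up to $f'_1$ because $0 \in M$. The required size bound follows from
\[
|\pi| \;=\; 1 + \sum_{i=1}^{k} |\pi_i| \;\leq\; 1 + k\,|\pi_j| \;\leq\; (k+1)\,2^{c_j} \;\leq\; 2^k \cdot 2^{c_j} \;=\; 2^{k + c_j},
\]
using $k+1 \leq 2^k$ for $k \geq 1$.

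For admissibility of this \textsc{PTr-Const} step: in the translation the marked child $\pi_j$ (with $c_j > 0$) has type $\T^1$ while each other $\pi_i$ (with $c_i = 0$) has type $\T^\omega$, so we insert an \textsc{LT-Dereliction} in front of each $\pi_i$ to bring it to $\T^1$, after which \textsc{LT-Const} with $m=1$ yields $\emptyset \pLin a\,\pi_1\cdots\pi_k : \T^1$, matching the translation of the conclusion (since $c = k + c_j > 0$). The main obstacle, and the whole reason statement~(1) is nontrivial, is achieving the exponential size bound: the naive attempt of first deriving statement~(2) globally and then adding a single top-level marker via \textsc{PTr-Mark} only yields $c = 1$, which is useless as soon as $|\pi| > 2$. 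The correct strategy is to push the marker down along a path to the largest subtree at each level, so that the cost $c$ accumulates additively in the depth of this recursive placement and $2^c$ comes to dominate $|\pi|$.
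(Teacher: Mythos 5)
Your proof is correct and follows essentially the same route as the paper's: a simultaneous structural induction in which statement~(2) is immediate from \rname{PTr-Const}, a leaf gets its marker via \rname{PTr-Mark} at cost $1$, and an internal node places the order-$0$ marker on exactly one child (typed by the inductive case of statement~(1)) while the remaining children use statement~(2), yielding $c = k + c_j$. The only cosmetic difference is that you select the marked child by maximal size $|\pi_j|$ whereas the paper selects it by maximal counter $c_j$; both choices validate the bound $2^{k+c_j}\ge|\pi|$, and your explicit check of admissibility via \rname{LT-Dereliction} before \rname{LT-Const} is a detail the paper leaves implicit.
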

\begin{proof}
The proof proceeds by induction on the structure of \(\pi\).
Since the second property is trivial, we discuss only the first property.
Suppose \(\pi = a\,\pi_1\,\cdots\,\pi_k\) where \(k=\arity(a)\) may be \(0\).
By the induction hypothesis,
\[
\begin{array}{l}
\dt_i\pdt \emptyset \PMm{1} \pi_i:(\emptyset,\set{0},\T)\hascost c_i\tr \pi_i
\qquad 2^{c_i} \geq |\pi_i|
\qquad c_i > 0\\
\dt'_i\pdt \emptyset \PMm{1} \pi_i:(\set{0}, \emptyset,\T)\hascost 0\tr \pi_i.
\end{array}
\]
for each \(i\in\set{1,\ldots,k}\).
If \(k>0\), then
pick \(j\in\set{1,\ldots,k}\) such that \(c_j = \max(c_1,\ldots,c_k)\).
Then, we have the following derivation \(\dt\):
\[
\infers[PTr-Const]{\emptyset
  \PMm{1} \pi :(\emptyset,\set{0},\T)\hascost c \tr \pi}
  {\dt''_1 & \cdots & \dt''_k}
\]
where \(c=c_j+k\), \(\dt''_j=\dt_j\) and \(\dt''_i = \dt'_i\) for  \(i\in\set{1,\ldots,k}\setminus \set{j}\).
We have 
\[
\begin{array}{l}
2^c -|\pi| =2^{c_j+k} - (1+|\pi_1|+\cdots + |\pi_k|)\\
 \geq 2^{c_j+k} - (1+2^{c_1}+\cdots + 2^{c_k})
 \geq 2^{c_j} \cdot 2^{k} - (1+k 2^{c_j})\\
 \geq 2^{c_j} \cdot (k+1) - (1+k 2^{c_j})\qquad \mbox{ (by \(2^k\geq k+1\))}\\
  = 2^{c_j}-1\geq 0.
\end{array}
\]
If \(k=0\), then we have
the following derivation \(\dt\):
\[
\infers[PTr-Mark]{\emptyset
  \PMm{1} a :(\emptyset,\set{0},\T)\hascost 1 \tr \pi}
{\infers[PTr-Const]{\emptyset
  \PMm{1} a :(\set{0},\emptyset,\T)\hascost 0 \tr \pi}
  {}}
\]
as required.
\end{proof}
\begin{lemma}
If \(\dt'\PM t':\pty\hascost c\tr s\) and \(t\cred t'\), then
\(\dt\PM t:\pty\hascost c\tr s\) for some \(\dt\). Furthermore 
if \(\toLty{\dt'}\) is admissible, so is \(\toLty{\dt}\).
\end{lemma}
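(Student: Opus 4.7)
The plan is to unfold the definition of \(\cred\) and do induction on the derivation \(\dt'\). By definition, \(t \cred t'\) means \(t = C[t_1 + t_2]\) and \(t' = C[t_i]\) for some context \(C\) and some \(i\in \set{1,2}\), so the proof reduces to locating the position of the occurrence of \(t_i\) inside \(\dt'\) and inserting a single \rname{PTr-Choice} step there.

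The base case is \(C = \Hole\), so \(t = t_1 + t_2\) and \(t' = t_i\). From \(\dt' \pdt \PTE \PM t_i : \pty \hascost c \tr s\), rule \rname{PTr-Choice} directly derives \(\PTE \PM t_1 + t_2 : \pty \hascost c \tr s\) with the same output \(s\).

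For the inductive step, the hole of \(C\) lies strictly inside one or more premises of the last rule of \(\dt'\). For all rules other than \rname{PTr-App} (namely \rname{PTr-Weak}, \rname{PTr-Mark}, \rname{PTr-Abs}, \rname{PTr-Const}, \rname{PTr-NT}, and \rname{PTr-Choice} acting on an unrelated \(+\)), the hole lies in a single premise, so I apply IH to that premise and reapply the same rule; the output \(s\) is preserved because IH preserves outputs. In the \rname{PTr-App} case with \(C = t_0 \, C'\), the argument \(C'[t_1 + t_2]\) is typed by \(k\) separate sub-derivations yielding outputs \(s_1, \ldots, s_k\); applying IH to each independently preserves its output, and the \(k\) expanded sub-derivations recombine via \rname{PTr-App} to yield \(\PTE \PM t : \pty \hascost c \tr s_0 \, s_1 \cdots s_k\), the same output as \(\dt'\). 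The symmetric case with the hole in the function position is handled by applying IH to the single premise for \(t_0\).

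For the admissibility claim on \(\toLty{\dt}\), observe that the only new inference step introduced by the expansion is the extra \rname{PTr-Choice}, whose translation under \(\toLty{\cdot}\) maps both premise and conclusion to the identical linear judgment \(\toLty{\PTE} \pLin s_i : \toLty{\pty}\) (or the linear-coerced variant \(\toLty{\PTE} \pLin s_i : \lrty^1\) if \(c > 0\)), which is a trivial step in the linear type system. Every other inference step in \(\toLty{\dt}\) coincides with one in \(\toLty{\dt'}\) up to the inductively-expanded sub-derivations, so admissibility propagates by induction. The main subtlety is the bookkeeping in the \rname{PTr-App} case, but it is not a genuine obstacle since the sub-derivations' outputs are preserved individually by IH, exactly what is needed for the final \rname{PTr-App} to reassemble the common output \(s\).
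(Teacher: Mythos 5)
Your proof is correct and takes essentially the same route as the paper, which simply states that the result ``follows by straightforward induction on the structure of the context used for deriving \(t\cred t'\)''; your write-up supplies the details (the base case inserting one \rname{PTr-Choice} step, the need to apply the induction hypothesis to all \(k\) argument sub-derivations in the \rname{PTr-App} case, and the observation that the inserted step translates to an identity step under \(\toLty{(\cdot)}\) so admissibility is preserved). No gaps.
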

\begin{proof}
This follows by straightforward induction on the structure of the context
used for deriving \(t\cred t'\).
\end{proof}
\subsubsection*{Proof of Lemma~\ref{lem:subj-expansion}}

\begin{lemma}[de-substitution]
\label{lem:desubstitution}
If \(\dt \pdt \PTE \PM [t_1/x]t_0:\pty\hascost c\tr s\),
then 
\[
\begin{array}{l}
\dt_0\pdt \PTE_0, x\COL \set{\pty_1,\ldots, \pty_k} \PM \term_0:\pty \hascost c_0\tr s_0\\
\dt_i\pdt\PTE_i \PM \term_1: \pty_i\hascost c_i\tr s_i \mbox{ for each \(i\in\set{1,\ldots,k}\)}\\
\PTE = \PTE_0\PTEcup \PTE_1\PTEcup\cdots \PTEcup \PTE_k\\
c = c_0+c_1+\cdots + c_k.
\end{array}
\]
Furthermore, if \(\toLty{\dt}\) is admissible, so are \(\toLty{\dt_0},\toLty{\dt_1},\ldots,\toLty{\dt_k}\).
\end{lemma}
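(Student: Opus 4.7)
The plan is to prove the lemma by induction on the structure of \(\dt\), with case analysis on the last typing rule. Conceptually, the intersection type system records the multiple uses of the substituted term \(t_1\) as separate branches within \(\dt\), and de-substitution recovers these branches as the sub-derivations \(\dt_1,\ldots,\dt_k\), with \(\dt_0\) being a ``skeleton'' derivation on \(t_0\) that collects them under a common intersection type on \(x\).

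The base-like case is when \(t_0 = x\): then \([t_1/x]t_0 = t_1\), so \(\dt\) is itself a derivation for \(t_1\). Take \(k = 1\), \(\pty_1 = \pty\), \(\PTE_1 = \PTE\), \(c_1 = c\), \(\dt_1 = \dt\), and let \(\dt_0\) be the single application of \textsf{PTr-Var} yielding \(x\COL\pty \PM x : \pty \hascost 0 \tr x_{\pty}\), with \(\PTE_0 = \emptyset\) and \(c_0 = 0\). For the inductive step when \(t_0 \neq x\), the outermost form of \([t_1/x]t_0\) matches that of \(t_0\), so the last rule of \(\dt\) dictates the shape of \(t_0\). For \textsf{PTr-Weak} and \textsf{PTr-Mark}, apply the IH to the premise and re-apply the structural rule on top of \(\dt_0\), since the weakened variable or the marker concerns \(t_0\) and not the \(t_1\)-occurrences. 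For \textsf{PTr-Var} (\(t_0\) a variable \(y \neq x\)), \textsf{PTr-Const}, \textsf{PTr-Choice}, \textsf{PTr-NT}, and \textsf{PTr-Abs}, apply the IH to each premise and reconstruct \(\dt_0\) using the same rule. For \textsf{PTr-App} with \(t_0 = t_{01}\,t_{02}\), apply the IH to each premise; each yields its own list \((\pty_i, \PTE_i, c_i, \dt_i)\) of \(t_1\)-components, which we concatenate. The output \(\PTE_0\) and \(\dt_0\) are obtained by merging via \(\PTEcup\), while the additive decomposition \(c = c_0 + c_1 + \cdots + c_k\) follows because \(\Comp_n\) sums counters across its first argument.

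For the admissibility claim, the key observation is that the linear translation is compositional: each step of \(\toLty{\dt_0}\) or \(\toLty{\dt_i}\) corresponds to a step of \(\toLty{\dt}\) up to replacing each substituted linear term \(s_i\) by the fresh variable \(x_{\pty_i}\). Since the linear type rules are stable under this uniform renaming, admissibility of every step is inherited from that of \(\toLty{\dt}\). On the term side, the reconstruction is witnessed by \(s = [s_1/x_{\pty_1},\ldots,s_k/x_{\pty_k}]\,s_0\). The main technical obstacle will be the \textsf{PTr-App} case: we must recombine the new \(x\)-bindings coming from distinct sub-derivations while preserving the marker-disjointness invariant on environments, and verify that the flag counters and \(\Comp_n\) operations compose correctly under this merging. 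This bookkeeping is the bulk of the remaining work, but it is routine once the correct invariants on markers and on the decomposition of \(\Comp_n\) are made explicit.
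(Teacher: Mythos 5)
Your proposal matches the paper's proof: the same induction on the derivation with case analysis on the last rule, the same treatment of the variable base case \(t_0=x\), and the same identification of \rname{PTr-App} as the point where the \(t_1\)-components coming from different premises must be recombined. The one detail worth noting is that this recombination is not a concatenation: since the binding \(x\COL\set{\pty_1,\ldots,\pty_k}\) is a \emph{set}, the paper picks one representative derivation of \(t_1\) per distinct type and folds the environments of the remaining duplicates into \(\PTE_0\), which is legitimate precisely because those duplicates carry no markers and have flag counter \(0\) --- the marker/counter invariant you correctly flagged as the remaining bookkeeping.
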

\begin{proof}
We first discuss the case where \(t_0\) is a variable.
If \(t_0=x\), the required result holds for \(k=1\), \(\dt_i=\dt\) and
\[\dt_0 = \raisebox{-1ex}{\infers[PTr-Var]{x\COL \pty \PM x\COL\pty\hascost{0}\tr x_\pty}{}}.\]
Note that \(\toLty{\dt_0} = 
\rb{-.6ex}{\infers{x_\pty\COL \toLty{\pty} \pLin x_\pty:\toLty{\pty}}{}}\), which is admissible.
If \(t_0=y\neq x\), then the required result holds for
\(k=0\) and \(\dt_0=\dt\).

We show the other cases by
induction on the derivation tree \(\dt\), with case analysis on the last rule used.
\begin{itemize}
\item Case \rname{PTr-Weak}:
In this case, we have 
\(\dt' \pdt \PTE' \PM [t_1/x]t_0:\pty\hascost c\tr s\) and
\(\PTE= (\PTE',x\COL (F,\emptyset,\prty))\),
with \(\dt = 
\rb{-1.6ex}{\infers{\PTE',y\COL(F,M,\prty) \PM [t_1/x]t_0:\pty\hascost c\tr s}{\dt'}}\).
By the induction hypothesis, we have:
\[
\begin{array}{l}
\dt'_0\pdt \PTE'_0, x\COL \set{\pty_1,\ldots, \pty_k} \PM \term_0:\pty \hascost c_0\tr s_0\\
\dt_i\pdt\PTE_i \PM \term_1: \pty_i\hascost c_i\tr s_i \mbox{ for each \(i\in\set{1,\ldots,k}\)}\\
\PTE' = \PTE'_0\PTEcup \PTE_1\PTEcup\cdots \PTEcup \PTE_k\\
c = c_0+c_1+\cdots + c_k\\
\end{array}
\]
Let \(\PTE_0 = \PTE'_0,y\COL(F,\emptyset,\prty)\) and \(\dt_0 = \rb{-1.6ex}{\infers[PTr-Weak]{
\PTE_0, x\COL \set{\pty_1,\ldots, \pty_k} \PM \term_0:\pty \hascost c_0\tr s_0}
{\dt'_0}}\).
Then we have the required result.
Note that \(\toLty{\dt_0} = \rb{-1.6ex}{\infers{\toLty{\PTE'_0}, y_{(F,\emptyset,\prty)}\COL \toLty{(F,\emptyset,\prty)} \pLin s_0:\toLty{(\pty,c_0)}}{\toLty{\dt'_0}}}\) is admissible if \(\toLty{\dt'_0}\) is so,
because \(\toLty{(F,\emptyset,\prty)}\) is non-linear.
\item Case \rname{PTr-Mark}:
In this case,  we have:
\[
\begin{array}{l}
\dt' \pdt \PTE \PM [t_1/x]t_0: (F',M',\prty)\hascost c'\tr s\\
\Comp_n(\set{(F',c')},M \uplus M') = (F,c)\\
M \subseteq \set{j\in F \mid \eorder(t_0)\leq j<n}\\
\pty = (F,M\uplus M', \prty)
\end{array}
\]
By the induction hypothesis, we have:
\[
\begin{array}{l}
\dt'_0\pdt \PTE_0, x\COL \set{\pty_1,\ldots, \pty_k} \PM \term_0:(F',M',\prty) \hascost c'_0\tr s_0\\
\dt_i\pdt\PTE_i \PM \term_1: \pty_i\hascost c_i\tr s_i \mbox{ for each \(i\in\set{1,\ldots,k}\)}\\
\PTE = \PTE_0\PTEcup \PTE_1\PTEcup\cdots \PTEcup \PTE_k\\
c' = c'_0+c_1+\cdots + c_k\\
\end{array}
\]
Let \(\dt_0 = \rb{-1.6ex}{\infers[PTr-Mark]{
\PTE_0, x\COL \set{\pty_1,\ldots, \pty_k} \PM \term_0:(F,M,\prty) \hascost c_0\tr s_0}{\dt'_0}}\),
where
\((F,c_0) = \Comp_n(\set{(F',c_0')},M \uplus M')\).
It remains to check that \(c=c_0+c_1+\cdots + c_k\).
By the definition of \(\Comp_n\), we have \(c_0-c_0' = c-c'\). Thus, we have
\(c = c_0+c_1+\cdots + c_k\) as required.
\item Case \rname{PTr-Var}: 
In this case \(t_0\) must be a variable, which has been discussed already.
\item Case \rname{PTr-Choice}: Trivial by the induction hypothesis.
\item Case \rname{PTr-Abs}: In this case, we have:
\[
\begin{array}{l}
t_0 = \lambda y.t_0'\\
\pty = (F,M\setminus \Markers(\pity)),
    \pity\to \prty)\\
\dt' \pdt \PTE,y\COL\pity
   \PM [t_1/x]t'_0: (F,M,\prty)\hascost c\tr s'\\
s = \lambda \tilde{x}_{\pity}.s'
\end{array}
\]
By the induction hypothesis, we have:
\[
\begin{array}{l}
\dt'_0 \pdt \PTE_0, y\COL\pity_0, x\COL \set{\pty_1,\ldots,\pty_k} 
   \PM t'_0: (F,M,\prty)\hascost{c_0}\tr s_0\\
\dt'_i \pdt \PTE_i, y\COL\pity_i \PM t_1: \pty_i\hascost{c_i}\tr s_i \mbox{ for $i\in\set{1,\ldots,k}$}\\
\PTE = \PTE_0\PTEcup\cdots \PTEcup \PTE_k\\
\pity=\pity_0 \PTEcup\cdots\PTEcup\pity_k\\
c = c_0+\cdots + c_k
\end{array}
\]
By the convention on bound variables, we can assume that \(y\) does not occur in \(t_1\).
We have thus \(\pity_i\subptyk{}\emptyset \) for each \(i\in\set{1,\ldots,k}\).
Thus by repeated applications of weakening and strengthening, we have:
\[
\begin{array}{l}
\dt''_0 \pdt \PTE_0, y\COL\pity, x\COL \set{\pty_1,\ldots,\pty_k} 
   \PM t'_0: (F,M,\prty)\hascost{c_0}\tr s_0\\
\dt_i \pdt \PTE_i \PM t_1: \pty_i\hascost{c_i}\tr s_i \mbox{ for $i\in\set{1,\ldots,k}$}
\end{array}
\]
The required result holds for:
\[\dt_0 = \rb{-1.6ex}{\infers{\PTE_0, x\COL \set{\pty_1,\ldots,\pty_k} 
\PM \lambda y.t'_0: (F,M\setminus \Markers(\pity),\pity\to\prty)\hascost{c_0}\tr 
\lambda \seq{y}_\pity.s_0}{\dt''_0}}.\]
To check that \(\toLty{\dt_0}\) is admissible if so is \(\toLty{\dt}\), it suffices to 
check that the last consecutive applications of abstractions to obtain the typing of
\(\lambda \seq{y}_\pity.s_0\) is valid. Suppose that \(\pity=\set{\pty'_1,\ldots,\pty'_\ell}\)
with \(\pty'_1<\cdots <\pty'_\ell\), and that \(M\setminus (\Markers(\pty'_{j+1})\uplus\cdots\uplus
\Markers(\pty'_k)) = \emptyset\). We need to show
\(\toLty{(\PTE_0, y\COL\set{\pty'_{1},\ldots,\pty'_j}, x\COL \set{\pty_1,\ldots,\pty_k})}\) is non-linear. 
By \(\dt''_0\) and Lemma~\ref{lem:markContextType},
we have:
\[\Markers(\PTE_0,x\COL \set{\pty_1,\ldots,\pty_k})\uplus \Markers(\pity)\subseteq  M.\] 
Thus, we have 
\[\Markers(\PTE_0, y\COL\set{\pty'_{1},\ldots,\pty'_j},x\COL \set{\pty_1,\ldots,\pty_k})\subseteq 
M\setminus 
(\Markers(\pty'_{j+1})\uplus\cdots\uplus
\Markers(\pty'_k))=\emptyset\] as required.
\item Case \rname{PTr-App}:
In this case, we have:
\[
\begin{array}{l}
\eorder(t_{0,0})=\ell\qquad
t_0 = t_{0,0}t_{0,1}\\
\dt'_0\pdt \PTE'_0 \PM [t_1/x]t_{0,0}: (F_0',M_0',\set{(F'_1,M_1',\prty_1),\ldots,(F'_p,M_p',\prty_p)}\to\prty)\hascost c_0'\tr s'_0\\
\dt'_i\pdt\PTE'_i \PM [t_1/x]t_{0,1}: (F_i'',M_i'',\prty_i)\hascost c_i'\tr s'_i\quad
F''_i\restrict{<\ell}=F'_i
\quad M''_i\restrict{<\ell}=M'_i
 \mbox{ for each $i\in\set{1,\ldots,p}$}\\
M = M_0'\uplus M_1''\uplus \cdots \uplus M''_p\\
\Comp_n(\set{(F_0',c'_0)}\mcup \set{(F''_i\restrict{\geq \ell},c_i')\mid i\in\set{1,\ldots,p}},M)=(F,c)\\
\pty = (F, M, \prty)\\
\end{array}
\]
By the induction hypothesis, we have:
\[
\begin{array}{l}
\dt'_{0,0}\pdt \PTE'_{0,0},x\COL \set{\pty_{0,1},\ldots,\pty_{0,k_0}} 
\PM t_{0,0}: (F_0',M_0',\set{(F'_1,M_1',\prty_1),\ldots,(F'_p,M_p',\prty_p)}\to\prty)\hascost c_{0,0}\tr s''_0
 \\
\dt'_{i,0}\pdt \PTE'_{i,0},x\COL \set{\pty_{i,1},\ldots,\pty_{i,k_i}}  \PM t_{0,1}:(F_i'',M_i'',\prty_i)\hascost c_{i,0}\tr s''_i
 \mbox{ for each $i\in\set{1,\ldots,p}$}\\
\dt'_{i,j}\pdt \PTE'_{i,j} \PM t_1: \pty_{i,j}\hascost{c_{i,j}}\tr s_{i,j}
 \mbox{ for each $i\in\set{0,\ldots,p}, j\in \set{1,\ldots,k_i}$}\\
c_i' = \Sigma_{j\in \set{0,\ldots,k_i}} c_{i,j} \mbox{ for each $i\in\set{0,\ldots,p}$}\\
\PTE_i' = \PTEsigma_{j\in \set{0,\ldots,k_i}}\PTE'_{i,j} \mbox{ for each $i\in\set{0,\ldots,p}$}
\end{array}
\]
Let \(\set{\pty_{i,j}\mid i\in \set{0,\ldots,p},j\in\set{1,\ldots,k_i}}=
\set{\pty_1,\ldots,\pty_k}\) with \(\pty_1<\cdots <\pty_k\).
For each \(q\in\set{1,\ldots,k}\), we pick a pair
\((i_q,j_q)\) such that \(\pty_{q} = \pty_{i_q,j_q}\), and write
\(I\) for \(\set{(i,j)\mid i\in\set{0,\ldots,p},j\in\set{1,\ldots,k_i}}
\setminus\set{(i_q,j_q) \mid q\in\set{1,\ldots,k}}\).
Let \(\PTE_0 = (\PTEsigma_{i\in\set{0,\ldots,p}} \PTE'_{i,0}) \PTEcup (\PTEsigma_{(i,j)\in I} \PTE'_{i,j})\).
Note that \(\PTE_0\) is well defined, because \(\PTE'_{i,j}\) contains no markers for each
\((i,j)\in I\).
Let \(\PTE_{q}\) and \(c_q\) be \(\PTE_{i_q,j_q}\) and \(c_{i_q,j_q}\) respectively
for each \(q\in\set{1,\ldots,k}\).
Let \(\dt_0\) be:
\[
\infers[PTr-Weak]{\PTE_0, x\COL\set{\pty_1,\ldots,\pty_k}\PM t_0:(F,M,\prty)\hascost 
  c_0\tr s_0''s''_1\cdots s''_p}
 { \infers[PTr-Weak]{\rule{0ex}{2.2ex}\smash{\vdots}}
   {\infers[PTr-App]{(\PTEsigma_{i\in\set{0,\ldots,p}} \PTE'_{i,0}),
  x\COL\set{\pty_1,\ldots,\pty_k}\PM t_0:(F,M,\prty)\hascost 
  c_0\tr s_0''s''_1\cdots s''_p}
  {\dt'_{0,0} & \dt'_{1,0} & \cdots & \dt'_{p,0}}}}
\]
where \(c_0 = c - \Sigma_{i\in \set{0,\ldots,p}}(c_i'-c_{i,0})\).
Let  \(\dt_q\) be \(\dt'_{i_q,j_q}\). Then we have the required result.
The condition \(c=c_0+c_1+\cdots + c_k\) is verified by:
\[
\begin{array}{l}
c-c_0 = 
\Sigma_{i\in \set{0,\ldots,p}}(c_i'-c_{i,0})
 = \Sigma_{i\in \set{0,\ldots,p},j\in\set{1,\ldots,k_i}} c_{i,j}\\
 = (\Sigma_{q\in\set{1,\ldots,k}}c_{i_q,j_q}) + (\Sigma_{(i,j)\in I} c_{i,j})
 = (\Sigma_{q\in\set{1,\ldots,k}}c_{i_q,j_q}) 
 = c_1+\cdots + c_k
\end{array}
\]
Note that \(c_{i,j}=0\) for \((i,j)\in I\).
\item Case \rname{PTr-Const}:
Similar to the case for \rname{PTr-App} above.
\item Case \rname{PTr-NT}: In this case, \(t_0\) does not contain a variable (recall that now \(t_0\) is not a
      variable).
Thus, the result follows immeidately for \(k=0\) and \(\dt_0=\dt\).
\end{itemize}
\end{proof}
Lemma~\ref{lem:subj-expansion} is a special case of the following lemma
\begin{lemma}[subject expansion]
\label{lem:subj-expansion-for-induction}
If \(t\ored{n} t'\) and \(\dt'\pdt \PTE \PM t':\pty\hascost c\tr s'\),
then \(\dt \pdt \PTE \PM t:\pty\hascost c\tr s\) for some 
 \(\dt\) and \(s\).
Furthermore, if \(\toLty{\dt'}\) is an admissible derivation, so is
\(\toLty{\dt}\).
\end{lemma}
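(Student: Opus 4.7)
The plan is to proceed by induction on the structure of the context \(C\) in the reduction \(t = C[r] \ored{n} C[r'] = t'\), or equivalently by induction on \(\dt'\) following the position of the contractum \(r'\) within \(t'\). At that position, \(\dt'\) contains a sub-derivation \(\PTE_0 \PM r' : \pty_0 \hascost c_0 \tr s_0'\); the idea is to replace it by a derivation typing \(r\) with the same \(\PTE_0\), \(\pty_0\), and \(c_0\) (and some output term \(s_0\) that \(\beta\)-reduces to \(s_0'\)), leaving the rest of \(\dt'\) untouched.

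For the base case \(C = \Hole\), there are two sub-cases. If the redex is a non-terminal unfolding (\(r = A\) with \((A = t_A) \in \G\) and \(r' = t_A\)), simply prepend \textbf{PTr-NT} to \(\dt'\); this leaves the output unchanged. If the redex is a \(\beta\)-redex (\(r = (\lambda x. u) v\), \(r' = [v/x] u\), with \(\eorder(\lambda x. u) = n\)), apply Lemma~\ref{lem:desubstitution} to \(\dt'\) to split it into \(\dt_0 \pdt \PTE_0, x \COL \pity \PM u : \pty \hascost c_{00} \tr s_{00}\) and \(\dt_i \pdt \PTE_i \PM v : \pty_i \hascost c_{0i} \tr s_{0i}\) for \(i = 1, \ldots, k\), where \(\pity = \set{\pty_1, \ldots, \pty_k}\). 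Then build \(\dt\) by applying \textbf{PTr-Abs} to \(\dt_0\) followed by \textbf{PTr-App} with the \(\dt_i\)'s, producing output \((\lambda \seq{x}_\pity. s_{00})\, s_{01} \cdots s_{0k}\). To verify that the resulting type and counter coincide with the original \((\pty_0, c_0)\), I will use well-formedness of types (giving \(F_0 \cap M_0 = \emptyset\) where \(\pty = (F_0, M_0, \prty)\)) together with the fact that \(\eorder(v) < n\) forces \(F'_i \restrict{\ge n} = \emptyset\) for each argument typing; Lemma~\ref{lem:empDisjComp} then shows that the \(\Comp_n\) computation in \textbf{PTr-App} yields exactly \((F_0, c_{00} + c_{01} + \cdots + c_{0k})\), matching the cost from de-substitution. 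For the inductive cases, case-analyze on the last rule of \(\dt'\): \textbf{PTr-Weak}, \textbf{PTr-Mark}, \textbf{PTr-Choice}, \textbf{PTr-NT}, and \textbf{PTr-Abs} reduce directly to their single premise by the IH; \textbf{PTr-Const} and \textbf{PTr-App} are similar, with the caveat that if the redex lies in the argument of \textbf{PTr-App}, the IH must be applied to each of the \(k\) sub-derivations of that argument, each producing its own expanded output term.

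The main obstacle will be preserving admissibility of \(\toLty{\dt}\). For all non-\(\beta\) expansion steps the output term is unchanged, so admissibility transfers trivially. The \(\beta\)-case requires checking that the newly inserted \textbf{LT-Abs} and \textbf{LT-App} steps in \(\toLty{\dt}\) are valid: the iterated abstraction \(\lambda \seq{x}_\pity. s_{00}\) must receive the linear type \(\toLty{(F_0, M_0 \setminus \Markers(\pity), \pity \to \prty)}\). This amounts to unfolding the definition of \(\toLty{(\cdot)}\) by induction on \(|\pity|\), using that the multiplicity annotation depends only on whether the marker set is empty and that the markers in \(\pity\) are precisely what is removed from the function type's marker set in \textbf{PTr-Abs}. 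Once this decomposition is in place, each of the \(k\) \textbf{LT-App} steps combining \(\lambda \seq{x}_\pity. s_{00}\) with the outputs \(s_{0i}\) is valid by matching multiplicities, and the additive compatibility of the underlying linear type environments follows from the \(\PTEcup\)-disjointness of markers that is already guaranteed by Lemma~\ref{lem:desubstitution}.
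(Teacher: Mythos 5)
Your plan follows the same route as the paper's proof: induction on the reduction context, with all the content in the base case; non-terminal unfolding handled by prepending (\textsc{PTr-NT}); and the order-\(n\) \(\beta\)-redex handled by de-substitution (Lemma~\ref{lem:desubstitution}) followed by (\textsc{PTr-Abs}) and (\textsc{PTr-App}), with Lemma~\ref{lem:empDisjComp} recovering the flag set and the counter. One bookkeeping step you omit: you verify that \(F\) and \(c\) are restored but not the marker set. The rebuilt (\textsc{PTr-App}) produces \(M' = (M\setminus\Markers(\pity)) \uplus \biguplus_i \Markers(\pty_i)\), and concluding \(M'=M\) needs \(\Markers(\pity)\subseteq M\), which is Lemma~\ref{lem:markContextType} applied to the de-substituted derivation for the abstraction body.

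The genuine gap is in the admissibility argument, at exactly the point you dispose of with ``valid by matching multiplicities.'' This is the one place where a naive argument fails. The translation of the conclusion of \(\dt_i\) assigns \(s_{0i}\) the \emph{linear} type \(\lrty_i^1\) whenever \(c_{0i}>0\), whereas the abstraction you build with (\textsc{LT-Abs}) expects an argument of type \(\toLty{\pty_i}\), which is linear only when \(\Markers(\pty_i)\neq\emptyset\); these need not coincide for arbitrary derivations, and the paper's own worked example of a non-admissible \(\toLty{\dt_1}\) shows that multiplicities at an application node do not match automatically. The missing observation is Lemma~\ref{lem:marker}: since \(\eorder(v)\le n-1\), a positive counter \(c_{0i}>0\) forces \(n-1\in\Markers(\pty_i)\), so \(\toLty{\pty_i}\) is linear and the two types agree in every case. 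With that one line supplied, your plan coincides with the paper's proof.
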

\begin{proof}
This follows by induction on the context used for deriving \(t\ored{n} t'\).
Since the induction steps are trivial, we discuss only
the base case, where the context is \(\Hole\). In this case, either \(t=A\) with \(A=t'\in \GRAM\),
or \(t=(\lambda x.t_0)t_1\) with \(t' = [t_1/x]t_0\) and \(\eorder(\lambda x.\term_0)=n\).
In the former case, the result holds for
\(\dt = \dfrac{\dt'}{\PTE\PM A:\pty\hascost c\tr s'}\) and \(s=s'\).

In the latter case, by Lemma~\ref{lem:desubstitution}, we have:
\[
\begin{array}{l}
\dt_0\pdt\PTE_0, x\COL \set{\pty_1,\ldots, \pty_k} \PM \term_0:(F,M,\prty) \hascost c_0\tr s_0\\
\dt_i\pdt \PTE_i \PM \term_1: \pty_i\hascost c_i\tr s_i\\
c = c_0+c_1+\cdots + c_k\\
\pty_1<\cdots < \pty_k\\
\PTE = \PTE_0 \PTEcup (\textstyle\sum_{i\in\set{1,\ldots,k}} \PTE_{i})
\end{array}
\]
where \(\pty = (F,M,\prty)\). Furthermore, \(\toLty{\dt_0},\ldots,\toLty{\dt_k}\) are 
admissible if \(\toLty{\dt'}\) is.
By \rname{PTr-Abs}, we have:
\[
\PTE_0 \PM \lambda x.\term_0:(F,M\setminus(\biguplus_i \Markers(\pty_i)),\set{\pty_1,\ldots,\pty_k}\to \prty) \hascost c_0\tr 
  \lambda x_{\pty_1}.\cdots \lambda x_{\pty_k}.s_0.
\]
Since \(\eorder(\lambda x.\term_0)=n\), by applying the following special case
of \rname{PTr-App}:
\infrule[PTr-App-Order-n]{\eorder(t'_0)=n\\
\PTE_0\PM t'_0:(F_0,M_0,\set{\pty_1,\ldots,\pty_k}\to \prty)\hascost{c_0}\tr s'_0\\
  \PTE_{i}\PM t_1:\pty_i\hascost{c_{i}}\tr s_{i}
\mbox{ for each $i\in\set{1,\ldots,k}$}\\
  M' = M_0\uplus (\biguplus_{i\in\set{1,\ldots,k}} \Markers(\pty_i))\\
  \Comp_{n}(\set{(F_0,c_0)}\mcup \set{(\emptyset,c_{i})\mid i\in\set{1,\ldots,k}}, M')
   = (F', c')\\
\pty_1<\cdots< \pty_k
}
{\PTE_0\PTEcup (\textstyle\sum_{i\in\set{1,\ldots,k}} \PTE_{i}) \PM t'_0t_1:
  (F', M', \prty)\hascost{c'}\tr s'_0\, s_{1}\,\cdots\,s_{k}}
we obtain 
\[ \PTE \PM (\lambda x.t_0)t_1:(F',M',\prty)\hascost c'\tr (\lambda x_{\pty_1}.\cdots\lambda x_{\pty_k}.s_0) s_1\cdots s_k\]
for \(M',F',c'\) such that:
\[
\begin{array}{l}
  M' = (M\setminus(\biguplus_i \Markers(\pty_i)))\uplus (\biguplus_{i\in\set{1,\ldots,k}} \Markers(\pty_i))\\
  \Comp_{n}(\set{(F,c_0)}\mcup \set{(\emptyset,c_{i})\mid i\in\set{1,\ldots,k}}, M')
   = (F', c')
\end{array}
\]
By Lemma~\ref{lem:markContextType}, \(\biguplus_i \Markers(\pty_i)\subseteq  M\),
and therefore we have \(M=M'\).
By the condition
\(\PTE_0, x\COL \set{\pty_1,\ldots, \pty_k} \PM \term_0:(F,M,\prty) \hascost c_0\tr s_0\),
we have \(F\cap M=\emptyset\). Thus, by Lemma~\ref{lem:empDisjComp},
we have also \(F=F'\) and \(c=c'\).
Therefore, 
\[\dt \pdt \PTE \PM \term: \pty\hascost{c} \tr s\]
where \(s=(\lambda x_{\pty_1}.\cdots\lambda x_{\pty_k}.s_0) s_1\cdots s_k\) and \(\dt\) is:
\[ \infers{\PTE \PM \term: \pty\hascost{c} \tr s}
  {
\infers{\PTE_0 \PM \lambda x.\term_0:(F,M\setminus(\biguplus_i \Markers(\pty_i)),\set{\pty_1,\ldots,\pty_k}\to \prty) \hascost c_0\tr \lambda x_{\pty_1}.\cdots \lambda x_{\pty_k}.s_0}
  {\dt_0}
 & \dt_1 & \cdots & \dt_k}
\]
If \(\toLty{\pty_i} = \lrty_i^{m_i}\), then 
the type \(\lty_i\) of \(s_i\) in the conclusion of \(\dt_i\) is \(\lrty_i^{1}\) if \(c_i>0\)
and \(\lrty_i^{m_i}\) otherwise. In the former case, \(n-1\in \Markers(\pty_i)\)
by Lemma~\ref{lem:marker}. So, \(\lty_i=\toLty{\pty_i}\) in any case.
Thus, if \(\toLty{\dt'}\) is admissible (which implies \(\toLty{\dt_0},\ldots,\toLty{\dt_k}\) are also admissible),
then \(\toLty{\dt}\) is also admissible.
\end{proof}

\subsubsection*{Proof of Lemma~\ref{lem:increase-of-order}}

\begin{lemma}
\label{lem:increase-of-order-without-mark}
If \(\dt'\pdt \PTE \PMm{n} t:(F,M,\prty)\hascost c'\tr s\),
then \(\dt \pdt \PTE \PMm{n+1} t:(F',M,\prty)\hascost 0\tr s\) for some 
\(\dt\), where \(F'=F\cup\set{n}\) if \(c'>0\) and \(F'=F\) otherwise.
Furthermore, if \(\toLty{\dt'}\) is an admissible derivation, so is
\(\toLty{\dt}\).
\end{lemma}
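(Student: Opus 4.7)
The plan is to proceed by induction on the structure of $\dt'$, with case analysis on the last rule applied. The intuition is that the counter $c'$ in a $\PMm{n}$-derivation implicitly records ``pending flags of order $n$'' that cannot be expressed in the $\PMm{n}$ system (whose flags range over $\set{0,\dots,n-1}$); moving to $\PMm{n+1}$ allows us to absorb them as an explicit flag of order $n$ in $F$.

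The cases PTr-Var (base, with $c' = 0$), PTr-Weak, PTr-Choice, PTr-Abs, and PTr-NT follow immediately from the induction hypothesis, since they do not invoke $\Comp$ and the marker set $M$ is unchanged; the transformation $F \mapsto F \cup \set{n \mid c' > 0}$ is simply threaded through. The substantive cases are PTr-Mark, PTr-App, and PTr-Const, all of which invoke $\Comp$. For each of these, I first apply the induction hypothesis to every premise, obtaining $\PMm{n+1}$-derivations whose counters are $0$ and whose flag sets have been augmented with $n$ precisely when the corresponding original counter was positive. I then re-apply the same rule in the $\PMm{n+1}$ system, using $\Comp_{n+1}$ in place of $\Comp_n$. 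Writing $\bar{F}_i \defe F_i \cup \set{n \mid c_i > 0}$, the key identity to verify is
\[
\Comp_{n+1}\big(\set{(\bar{F}_i, 0)}_i,\, M\big) \,=\, \big(F \cup \set{n \mid c > 0},\, 0\big)
\qquad \text{whenever} \qquad
\Comp_n\big(\set{(F_i, c_i)}_i,\, M\big) = (F, c),
\]
with $M \subseteq \set{0,\dots,n-1}$. This holds because (i) for $\ell < n$ the recursion defining $f_\ell$ is identical in $\Comp_n$ and $\Comp_{n+1}$, using $\bar{F}_i \restrict{<n} = F_i$; (ii) since $n \notin M$ we have $f'_{n+1} = 0$, so the new counter vanishes; and (iii) at $\ell = n$, $f_n = f'_n + |\set{i \mid c_i > 0}|$, which compared with $c = f'_n + \sum_i c_i$ from the old computation gives $f_n > 0 \iff c > 0$, matching the required update to $F$.

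For the admissibility of $\toLty{\dt}$, the translation $\toLty{(F, M, \prty)}$ does not depend on $F$, so at each judgment the target linear type is structurally unchanged. The only subtlety is that in the old derivation a subderivation with $c' > 0$ is translated at outer multiplicity $1$, whereas in the new derivation the same subderivation has counter $0$ and is translated at the natural multiplicity of $\toLty{\pty}$. By Lemma~\ref{lem:marker}, however, whenever $c' > 0$ and $\eorder(t) \le n - 1$ we have $n - 1 \in M$, which forces $\toLty{\pty}$ to be linear, so the two translations coincide. The remaining case $\eorder(t) = n$ arises only through PTr-Var (for which $c' = 0$) or PTr-Abs (where linearity is already tracked through the argument's markers), and admissibility is preserved in both. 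The main obstacle is the $\Comp_{n+1}$-versus-$\Comp_n$ identity above; once it is established, constructing $\dt$ and checking the remaining linear-typing bookkeeping are routine.
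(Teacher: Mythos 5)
Your proposal follows essentially the same route as the paper: induction on the derivation with case analysis on the last rule, and your key identity relating \(\Comp_{n+1}\) on the flag-augmented, zero-counter inputs to \(\Comp_n\) on the originals is correct (and in fact more explicit than the paper, which leaves the \(\Comp\) bookkeeping implicit and spends its effort on the \rname{PTr-Abs} admissibility check via \(\Markers(\PTE)\subseteq M\)). One small inaccuracy in your admissibility discussion: judgments with \(\eorder(t)=n\) and \(c'>0\) can also arise from \rname{PTr-App}, \rname{PTr-NT}, \rname{PTr-Weak}, and \rname{PTr-Mark}, not just \rname{PTr-Var}/\rname{PTr-Abs}, so the old (linear) and new (natural-multiplicity) translations need not coincide there; admissibility still holds at those nodes because the function-position multiplicity is unconstrained in \rname{LT-App} and \(\Markers(\PTE)\subseteq M\) (Lemma~\ref{lem:markContextType}) licenses the non-linear abstraction, but this needs to be said rather than dismissed by the case split you give.
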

\begin{proof}
This follows by induction on the derivation tree \(\dt'\).
We discuss below only the case where the last rule is \rname{PTr-Abs}, since
the other cases follow immediately from the induction hypothesis.
In this case, we have \(t = \lambda x.t_0\) and:
\[
\begin{array}{l}
\dt_0' \pdt \PTE, x\COL\pity \PM t_0:(F,M_0,\prty')\hascost c'\tr s_0\\
M = M_0\setminus \Markers(\pity)\qquad
\prty = \pity\to \prty'\qquad s = \lambda \seq{x}_{\pity}.s_0
\end{array}
\]
By the induction hypothesis, we have:
\[
\dt_0\pdt \PTE, x\COL\pity \PMm{n+1} t_0:(F',M_0,\prty')\hascost 0\tr s_0.
\]
By this, we have \(M_0 = M \uplus \Markers(\pity)\) by Lemma~\ref{lem:markContextType},
and also we have 
\[
\dt = \rb{-1.6ex}{\infers[PTr-Abs]{ \PTE \PMm{n+1} t: (F',M,\prty)\hascost 0\tr  s}{\dt_0}}.
\]
It remains to check that the last step of \(\toLty{\dt}\) is admissible.
Suppose \(\pity = \set{\pty_1,\ldots,\pty_k}\) with \(\pty_1<\cdots < \pty_k\).
We show that
\[
\infers{\toLty{\PTE} \pLin \lambda x_{\pty_1}\dots\lambda x_{\pty_k}.s_0: \toLty{(F',M , \set{\pty_1,\ldots,\pty_k} \to \prty')}}
 { \infers{\rule{0ex}{2.2ex}\smash{\vdots}}
{
\toLty{(\PTE,x\COL\pty_1,\ldots,x\COL\pty_{k})} \pLin s_0: 
\toLty{(F',M \uplus \Markers(\pty_{1})\uplus \cdots \uplus \Markers(\pty_{k}), \prty')}
}}
\]
is derivable by \rname{LT-Abs} (and possibly by \rname{LT-Dereliction} when we apply \rname{LT-Abs} with \(m=\omega\)).
For this, it suffices to check that, for each \(i \in \set{1,\dots,k}\), if \(\toLty{(\PTE,x\COL\pty_1,\ldots,x\COL\pty_{i-1})}\) contains
a linear type (i.e., \(\Markers(\PTE,x\COL\pty_1,\ldots,x\COL\pty_{i-1})\neq \emptyset\)),
then %
\begin{align*}
&\toLty{(F',M\uplus \Markers(\pty_{1})\uplus \cdots \uplus
 \Markers(\pty_{i-1}),\set{\pty_i,\dots,\pty_k}\to\prty')}
\\ =\ &
(\toLty{\pty_i} \to \toLty{(F',M\uplus \Markers(\pty_{1})\uplus \cdots \uplus
 \Markers(\pty_{i}),\set{\pty_{i+1},\dots,\pty_k}\to\prty')})^{1}
\end{align*}
(i.e., \(M\uplus \Markers(\pty_{1})\uplus \cdots \uplus \Markers(\pty_{i-1})\neq \emptyset\)).
This follows from \(\Markers(\PTE)\subseteq M\), which follows from Lemma~\ref{lem:markContextType}.
\end{proof}

Lemma~\ref{lem:increase-of-order} is a special case of the following lemma.
\begin{lemma}
\label{lem:increase-of-order-for-induction}
If \(\dt'\pdt \PTE \PMm{n} t:(F,M,\prty)\hascost c'\tr s\) and \(c'>0\),
then there exist \(\dt\) and \(c>0\) such that 
\(\dt \pdt \PTE \PMm{n+1} t:(F,M\uplus\set{n},\prty)\hascost c\tr s\)
 and \(2^c \geq c'\).
Furthermore, if \(\toLty{\dt'}\) is an admissible derivation, so is
\(\toLty{\dt}\).
\end{lemma}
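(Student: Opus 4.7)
The plan is to proceed by induction on the structure of the derivation $\dt'$, analyzing the last rule used. For rules that do not involve the $\Comp_n$ operation, namely \rname{PTr-Var}, \rname{PTr-Weak}, \rname{PTr-Choice}, \rname{PTr-NT}, and \rname{PTr-Abs}, the desired derivation is obtained by straightforward application of the induction hypothesis to the premise (whose cost remains $c' > 0$ in the relevant cases), combined with Lemma~\ref{lem:increase-of-order-without-mark} for any zero-cost subderivations. In these cases, the counter is preserved from the inductive step, and admissibility of $\toLty{\dt}$ follows from the admissibility of $\toLty{\dt'}$ in exactly the same manner as in the proof of Lemma~\ref{lem:increase-of-order-without-mark}.

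The substantive cases are \rname{PTr-Mark}, \rname{PTr-App}, and \rname{PTr-Const}, where the counter accumulates through $\Comp_n$. In each such case, partition the premises of $\dt'$ into those with positive cost (to which the induction hypothesis applies, yielding order-$(n+1)$ derivations with counters $c_i$ satisfying $2^{c_i}\ge c'_i$) and those with zero cost (to which Lemma~\ref{lem:increase-of-order-without-mark} applies, yielding order-$(n+1)$ derivations at cost $0$ possibly with flag $n$ in the flag set). Reconstruct the step using the same rule in the order-$(n+1)$ system, using $\Comp_{n+1}$ with the marker set $M \uplus \set{n}$; the extra marker $n$ absorbs any flag-$n$ contribution produced by Lemma~\ref{lem:increase-of-order-without-mark}, so that it is properly accounted for in the new counter.

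The key algebraic step is then to verify $2^c \ge c'$ using the recursion on the $\Comp$ operation. The original counter satisfies $c' = f'_n + \sum_i c'_i$, where $f'_n$ tracks propagation of flag $n-1$ through the markers in $M$. By the inductive bound $2^{c_i}\ge c'_i$ and the elementary inequality $\log_2(a_1+\dots+a_k) \le \lceil\log_2 k\rceil + \max_i \log_2 a_i$, the lifted counter can be chosen so that $2^c$ dominates $c'$ with a small additive overhead absorbed into $c$. The positivity $c>0$ comes from the fact that the newly added marker $n$ forces at least one unit of counter contribution through $f'_{n+1}$ whenever the lifted subderivation has flag $n$ in its flag set, which happens exactly when $c'>0$.

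The main obstacle I anticipate is the combinatorial bookkeeping in the \rname{PTr-App} case, where a single argument $t_1$ has multiple typings (an intersection), and the multiset structure of $\Comp$ counts each typing separately. One must arrange the lifted argument types so that their contributions to the new $\Comp_{n+1}$ at order $n$ yield the $f'_{n+1}$ part bounded by $\log_2$ of the original contribution. A secondary obstacle is preserving admissibility of $\toLty{\dt}$: since $n$ is added to $M$, the translation of the resulting type remains linear (as $M\uplus\set{n}\neq\emptyset$), and one must check that the linear/non-linear annotations on the subterms remain consistent with the \rname{LT-App} and \rname{LT-Abs} rules, which follows by the same token-tracking used in Lemma~\ref{lem:increase-of-order-without-mark} together with Lemma~\ref{lem:markContextType}.
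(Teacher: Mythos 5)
There is a genuine gap in the \rname{PTr-App}/\rname{PTr-Const} cases, and it is exactly the point the paper's proof is built around. You propose to apply the induction hypothesis to \emph{every} premise with positive cost. But the induction hypothesis produces a conclusion whose marker set is \(M_i\uplus\set{n}\), i.e., it \emph{adds the marker} \(n\) to that premise. If two or more premises of an application have positive cost, your plan would give two premises both carrying marker \(n\); the side condition \(M = M_0\uplus(\biguplus_i M'_i)\) of \rname{PTr-App} is a disjoint union, and the well-formedness of type environments also requires mutually disjoint marker sets, so the reconstructed rule instance is simply not derivable. The paper avoids this by applying the induction hypothesis to exactly \emph{one} premise --- the one with the maximum cost \(c'_j\) --- and applying Lemma~\ref{lem:increase-of-order-without-mark} to all the others; the latter adds no marker but records each positive-cost premise as a \emph{flag} \(n\) in its flag set. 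Since \(n\) is a marker of the conclusion, each such flag contributes exactly \(+1\) to the new counter via \(f_n\), and the bound closes with \(2^{f'_n+p}\cdot 2^{c_j}\ge 2^{f'_n+p}\cdot c'_j\ge(f'_n+p+1)\,c'_j\ge f'_n+(p+1)c'_j\ge c'\), where \(p\) is the number of non-maximal positive-cost premises. Your appeal to \(\log_2(a_1+\cdots+a_k)\le\lceil\log_2 k\rceil+\max_i\log_2 a_i\) points at the right shape of inequality, but the counter is determined by \(\Comp_{n+1}\), not chosen freely, so you must identify this flag-versus-marker mechanism to realize it.

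Two smaller inaccuracies: \rname{PTr-Var} cannot occur at all when \(c'>0\) (rather than being handled by the induction hypothesis), and in \rname{PTr-Mark} you still need a sub-case split on whether the premise's cost is zero --- if \(c'_0=0\) but \(c'>0\) (which forces \(c'=1\) and \(n-1\in M_1\cap F_0\)), the induction hypothesis does not apply to the premise and you must instead use Lemma~\ref{lem:increase-of-order-without-mark} followed by a fresh \rname{PTr-Mark} adding \(M_1\uplus\set{n}\). Your treatment of \rname{PTr-Abs} and of admissibility (linearity forced by \(M\uplus\set{n}\neq\emptyset\), adjusted where needed by \rname{LT-Dereliction}) matches the paper.
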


\begin{proof}
Note that, by the well-formedness condition, the orders of types in \(\PTE\) are at most 
\(n-1\), and \(\eorder(t)\leq n\).
The proof proceeds by induction on the derivation tree \(\dt'\), with the case analysis on 
the last rule used. The cases for \rname{PTr-Weak}, \rname{PTr-Choice}, and \rname{PTr-NT}
follow immediately from the induction hypothesis. We discuss the other cases below.
\begin{itemize}
\item Case \rname{PTr-Mark}:
In this case, we have:
\[
\begin{array}{l}
\dt'_0\pdt \PTE \PMm{n} t:(F_0,M_0,\prty)\hascost c_0'\tr s\\
M_1 \subseteq \set{j\in F_0\mid \eorder(t)\leq j<n}\\
M = M_0\uplus M_1\\
\Comp_n(\set{(F_0,c_0')},M) = (F,c')
\end{array}
\]
We consider only the case where \(M_1\neq \emptyset\), and hence
\(\eorder(t)\leq n-1\). 

If \(c_0'=0\), by the assumption \(c'>0\), we have \(c'=1\) and \(n-1\in M_1\cap F_0\). 
By Lemma~\ref{lem:increase-of-order-without-mark},
we have 
\[\dt_0\pdt \PTE \PMm{n+1} t:(F_0,M_0,\prty)\hascost 0\tr s
\]
By applying \rname{PTr-Mark} (to add markers \(M_1\uplus\set{n}\)),
we have the following derivation \(\dt\):
\[{\infers[PTr-Mark]{
 \PTE \PMm{n+1} t:(F,M\uplus\set{n},\prty)\hascost 1\tr s}{\dt_0}}\]
as required.

If \(c_0'>0\), then by the induction hypothesis, we have:
\[ \dt_0 \pdt \PTE \PMm{n+1}t:(F_0, M_0\uplus\set{n},\prty)\hascost c_0\tr s\]
for some \(\dt_0\) and \(c_0\) such that \(2^{c_0}\geq c_0'\).
Let \(\dt\) be:
\[ \infers[PTr-Mark]{\PTE\PMm{n+1} t:(F', M\uplus\set{n}, \prty)
\hascost c\tr s}{\dt_0},\]
where \((F',c) = \Comp_{n+1}(\set{(F_0,c_0)},M\uplus\set{n})\).
By direct calculation, we can check \(F'=F\).
We check the condition \(2^c\geq c'\). 
Since \(c'>0\), by Lemma~\ref{lem:marker}, we have \(n-1\in M\).
Let \(f_n, f'_n, f'_{n+1}\) be those that occur in the calculation of 
\(\Comp_{n+1}(\set{(F_0,c_0)},M\uplus\set{n})\). Note that \(f'_n\) is the same as the one
that occurs in the calculation of 
\(\Comp_n(\set{(F_0,c_0')},M) = (F,c')\).
Thus, we have \(c = f'_{n+1} + c_0 = f_n + c_0 = f_n' +c_0\) and \(c'=f_n'+c_0'\).
Therefore, we have:
\[ 2^c -c' = 2^{f_n'+c_0} -(f_n'+c_0')= 2^{f_n'} 2^c_0 - (f_n'+c_0')
 = (2^{f_n'}-1) 2^{c_0} - f_n' + (2^{c_0}-c_0')
 \geq 2^{f_n'}-1 - f_n' \geq 0.\]
It remains to check that the admissibility of \(\toLty{\dt'}\) implies that of 
\(\toLty{\dt}\). If \(\toLty{\dt'}\) is admissible, so is \(\dt'_0\). By the induction 
hypothesis, \(\toLty{\dt_0}\) is also admissible. Thus, \(\toLty{\dt}\) is also admissible.
Note that the last step in \(\toLty{\dt}\) does not change the judgment.
\item Case \rname{PTr-Var}: 
This case does not occur, since the flag counter is non-zero.
\item Case \rname{PTr-Abs}: In this case, we have \(t = \lambda x.t_0\) and:
\[
\begin{array}{l}
\dt_0' \pdt \PTE, x\COL \pity \PM t_0: (F, M_0, \prty_0)\hascost c'\tr s_0'\\
M = M_0\setminus\Markers(\pity)\qquad \prty = \pity\to\prty_0\\
\dt' = \rb{-1.6ex}{\infers[PTr-Abs]{\PTE \PM t: (F,M,\prty)\hascost c'\tr \lambda \seq{x}_\pity.s_0'}{\dt_0'}}
\end{array}
\]
By the induction hypothesis, we have
\[ \dt_0 \pdt \PTE, x\COL\pity \PMm{n+1} t_0:(F, M_0\uplus\set{n}, \prty_0)\hascost c\tr s_0\]
for some \(\dt_0, c, s_0\) such that \(2^c \geq c'\).
By using \rname{PTr-Abs}, we have the required derivation:
\[ \dt = \rb{-1.6ex}{\infers[PTr-Abs]{\PTE \PMm{n+1} t:(F, M\uplus\set{n}, \prty)\hascost c\tr \lambda \seq{x}_\pity.s_0}{\dt'}}\]
To check the admissibility condition, suppose \(\toLty{\dt'}\) is admissible. Then so is
\(\toLty{\dt_0'}\); hence \(\toLty{\dt_0}\) is also admissible by the induction hypothesis. 
It remains to check that the last step of \(\toLty{\dt}\) is admissible, which is indeed the case,
since a linear type is assigned to \(\lambda \seq{x}_\pity.s_0\) (recall \(c'>0\)).
\item Case \rname{PTr-App}: In this case, we have:
\[
\begin{array}{l}
t=t_0t_1 \qquad \eorder(t_0)=\ell\\
\dt_0' \pdt \PTE_0 \PM t_0: (F_0,M_0, \set{(F_1,M_1,\prty_1),\ldots,(F_k,M_k,\prty_k)}\to \prty)\hascost c'_0\tr s_0\\
\dt_i' \pdt \PTE_i \PM t_1: (F_i',M_i', \prty_i)\hascost c'_i\tr s_i\qquad
F_i = F_i'\restrict{<\ell}\qquad M_i = M_i'\restrict{<\ell}\\
\hfill \mbox{(for each $i\in\set{1,\ldots,k}$)}\\
M = M_0\uplus \cdots \uplus M_k\\
\Comp_n(\set{(F_0,c'_0)}\mcup \set{(F'_i\restrict{\geq \ell},c_i')\mid i\in\set{1,\ldots,k}}, M) = 
(F,c')\\
s = s_0 s_1\cdots s_k\\
\PTE = \PTE_0\PTEcup \cdots \PTEcup \PTE_k
\end{array}
\]
We consider two cases:
\begin{itemize}
\item Case where \(c_i'=0\) for every \(i\in\set{0,\ldots,k}\):
By Lemma~\ref{lem:increase-of-order-without-mark}, 
we have:
\[
\dt''_0 \pdt \PTE_0 \PMm{n+1} t_0: 
(F_0,M_0, \set{(F_1,M_1,\prty_1),\ldots,(F_k,M_k,\prty_k)}\to \prty)\hascost 0\tr s_0
\]
and 
\[
\dt''_i \pdt \PTE_i \PMm{n+1} t_1: (F_i',M_i', \prty_i)\hascost 0\tr s_i
\]
for each \(i\in\set{1,\ldots,k}\).
By the conditions
\(\Comp_n(\set{(F_0,c'_0)}\mcup \set{(F'_i\restrict{\geq \ell},c_i')\mid i\in\set{1,\ldots,k}}, M) = 
(F,c')\) and \(c'>0\), it must be the case that \(n-1\in M\), hence \(n-1\in M_j\) for some 
\(j\in\set{0,\ldots,k}\).
By applying \rname{PTr-Mark} to \(\dt''_j\), we obtain a derivation \(\dt_j\), whose conclusion is
the same as that of \(\dt''_j\) except that marker \(n\) has been added. Note that the last step
of \(\toLty{\dt_j}\) is admissible (and in fact, does not change the judgment) since 
\(n-1\in M_j\neq \emptyset\). Let \(\dt_i\) be \(\dt''_i\) for \(i\in\set{0,\ldots,k}\setminus\set{j}\).
Let \(\dt\) be:
\[ \infers[PTr-App]{\PTE\PMm{n+1} t_0t_1: (F',M\uplus\set{n},\prty)\hascost c\tr s_0s_1\cdots s_k}
{\dt_0 & \dt_1 & \cdots & \dt_k}
\]
where
\[\Comp_{n+1}(\set{(F_0,0)}\mcup\set{(F_1'\restrict{\geq \ell},0),\ldots,(F_k'\restrict{\geq \ell},0)}, 
  M\uplus\set{n}) = (F', c)\]
Let \(f_i, f_i'\) be those occurring in the calculation of \(\Comp_{n+1}(\cdots)\) above.
Note that \(f_i\ (i\leq n-1)\) and \(f_i'\ (i\leq n)\) are equivalent to those to occurring 
in the calculation of \(\Comp_n(\cdots)=(F,c')\). Thus, we have:
\[F'
= \set{\ell\in\set{0,\ldots,n} \mid f_\ell>0}\setminus (M\uplus\set{n})
 = \set{\ell\in\set{0,\ldots,n-1} \mid f_\ell>0}\setminus M
 = F.\]
Furthermore, \(c = f'_{n+1} = f_n = f'_n = c'\). 
It remains to check that the admissibility of \(\toLty{\dt'}\) implies that of \(\toLty{\dt}\),
which is trivial from the construction of \(\dt\) above (recall that
the last step of \(\toLty{\dt_j}\) is admissible).
\item Case where \(c_i'>0\) for some \(i\in\set{0,\ldots,k}\). 
Pick \(j\) such that \(c'_j = \max(c_0',\ldots,c_k')\).
Let \((F'_0, M'_0, \pty_0)\) be 
\((F_0,M_0, \set{(F_1,M_1,\prty_1),\ldots,(F_k,M_k,\prty_k)}\to \prty)\).
For \(i\in\set{0,\ldots,k}\), let \(t_i'=t_0\) if \(i=0\), and \(t_i' = t_1\) otherwise.
By applying the induction hypothesis to \(\dt_j'\),
we have:
\[ \dt_j \pdt \PTE_j \PMm{n+1} t_j': (F'_j, M'_j\uplus \set{n}, \prty_i)\hascost c_j\tr s_j.\]
For \(i\in\set{0,\ldots,k}\setminus\set{j}\), by Lemma~\ref{lem:increase-of-order-without-mark}, we have
\[ \dt_i \pdt \PTE_i \PMm{n+1} t_i': (F''_i, M'_i, \prty_i)\hascost 0\tr s_i\]
where \(F''_i = F'_i\uplus \set{n}\) if \(c'_i>0\) and \(F''_i=F'_i\) otherwise.
Let \(\dt\) be:
\[
\infers[PTr-App]{\PTE \PMm{n+1} t_0t_1: (F', M\uplus \set{n}, \prty)\hascost c\tr s_0s_1\cdots s_k}
{\dt_0 & \dt_1 & \cdots & \dt_k}
\]
where 
\[
\Comp_{n+1}(\set{(F'_j,c_j)}\mcup \set{(F''_i, 0)\mid i\in \set{0,\ldots,k}\setminus\set{j}},
  M\uplus \set{n}) = (F',c).
\]
If \(\toLty{\dt'}\) is admissible, then the last step of \(\toLty{\dt}\) is also admissible:
the change from \(\toLty{\dt'_i}\) to \(\toLty{\dt_i}\) may change the type of \(s_i\) in
the conclusion to a non-linear type, but then we can use \rname{LT-Dereliction} to adjust the linearity.
Thus, it remains to check \(F'=F\) and \(2^c \geq c'\).
Let \(f_i, f_i'\) be those occurring in the calculation of \(\Comp_{n+1}(\cdots)\) above.
Note that \(f_i\ (i\leq n-1)\) and \(f_i'\ (i\leq n)\) are equivalent to those occurring 
in the calculation of \(\Comp_n(\cdots)=(F,c')\). We have:
\[ F' = \set{\ell \in \set{0,\ldots,n}\mid f_\ell>0}\setminus (M\uplus\set{n})
  = \set{\ell \in \set{0,\ldots,n-1}\mid f_\ell>0}\setminus M
  = F.\]
We also have:
\[ 
\begin{array}{l}
c = f'_{n+1}+c_j
= f_n + c_j
= f'_n + \set{i \mid n\in F''_i} + c_j
= f'_n + (\set{i \mid c'_i>0}-1) + c_j.
\end{array}
\]
Let \(p = \set{i \mid c'_i>0}-1\). 
Then, we have:
\[
\begin{array}{l}
2^c - c' \geq 2^c - (f'_n+(p+1)c_j') \hfill \mbox{ (by \(c'_j=\max(c'_0,\ldots,c'_k)\))}\\
= 2^{f'_n+p+c_j} -(f'_n+(p+1)c'_j)\\
= 2^{f'_n+p}\cdot 2^{c_j} - (f'_n+(p+1)c'_j)\\
\geq 2^{f'_n+p}\cdot c'_j - (f'_n+(p+1)c'_j)
\hfill \mbox{ (by \(2^{c_j}\geq c'_j\))}
\\
\geq (f'_n+p+1)\cdot c'_j - (f'_n+(p+1)c'_j)
\hfill \mbox{ (by \(2^x \geq x+1\))}
\\
= f'_n(c'_j-1)
\geq 0.
\hfill \mbox{ (by \(c'_j\geq 1\))}
\end{array}
\]
\end{itemize}
\item Case \rname{PTr-Const}: Similar to the case for \rname{PTr-App} above.
\end{itemize}
\end{proof}
 
\subsubsection{Existence of Pumpable Derivation}
\label{sec:pumpability}
We are now ready to prove Lemma~\ref{lem:pumpable-derivation}.
\begin{proof}[Proof of Lemma~\ref{lem:pumpable-derivation}]
Let \(\GRAM\) be an order-\(n\) tree grammar and \(S\) be its start symbol.
Suppose that \(\Lang(\GRAM)\) is infinite.
By Theorem~\ref{th:completeness-tr}, for any \(c'\), 
there exist \(\dt\), \(c \ge c'\), and \(s\) such that
\(\dt \pdt \emptyset \PM S:(\emptyset,\set{0,\ldots,n-1},\T)\hascost c\tr s\),
and \(\toLty{\dt}\) is an admissible derivation.
We can assume that,
in each path of the derivation tree \(\dt\),
there do not exist two judgments of the form \(\PTE\PM t:\pty\hascost c''\tr s'\) 
and \(\PTE\PM t:\pty\hascost c''\tr s''\) in different positions,
since otherwise we can ``shrink'' that part without changing the conclusion.
For every \(c'\), \(\dt\) can contain only a subterm of the right-hand-side of a rule in \(\GRAM\)
and the number of such subterms are bounded above by a constant determined by \(\GRAM\).
Also the numbers of possible types and type environments in \(\dt\) are bounded above.
Further note that, flag counters \(c''\) occurring in \(\dt\) are not essential information;
they can be recovered from types in \(\dt\) and the tree structure of \(\dt\).
Therefore, since the number of premises occurring at each node of \(\dt\) is bounded above,
the height of the derivation trees \(\dt\) must be unbounded.

Then, for sufficiently large \(c'\),
the derivation tree \(\dt\) must contain a path in which 
three type judgments
\(\PTE \PM A:\pty\hascost{c'_i}\tr s_i\) \((i=0,1,2)\) occur
where \(c'_2>c'_1>c'_0\geq 0\).
This is shown as follows:
(i) Let \(a= a_{\mathrm{term}} \times (a_{\mathrm{env}} + n) \)
where 
\(a_{\mathrm{term}}\) is the maximum size of terms occurring in all \(\dt\)
and
\(a_{\mathrm{env}}\) is the maximum length of type environments in all \(\dt\).
Then in every path in \(\dt\), we encounter the rule \rname{PTr-NT}
in every \(a\)-steps,
since for a rule other than \rname{PTr-Weak}, \rname{PTr-Mark}, and \rname{PTr-NT},
the size of the term decreases,
and for rule \rname{PTr-Weak} or \rname{PTr-Mark},
the size of the term is kept and the sum of
the length of the type environment
and
the number of the markers (bounded by \(n\))
decreases (note that \(M\) in \rname{PTr-Mark} must be non-empty by the above assumption on \(\dt\)).
(ii) Let \(b_{te}\) be the number of type environments in all \(\dt\),
\(b_{\mathrm{nt}}\) be the number of non-terminals of \(\GRAM\),
and 
\(b_{t}\) be the number of types in all \(\dt\).
Then if the height of \(\dt\) is larger than \(a (2 b_{\mathrm{te}} b_{\mathrm{nt}} b_{\mathrm{t}} + 1)\),
there exist \(\PTE\), \(A\), and \(\pty\) such that
in a longest path of \(\dt\) there exist three different occurrences of judgments of the form \(\PTE \PM A:\pty\hascost{c'_i}\tr s_i\)
(\(i=0,1,2\)) with \(c'_2 \ge c'_1 \ge c'_0\geq 0\).
(iii) By the above assumption on \(\dt\), \(c'_0\), \(c'_1\), and \(c'_2\) must differ from each other.

By the transformation rules, \(s\) and \(s_2\) must be of the forms \(C[s_2]\) and \(D[s_1]\), respectively.
Furthermore, since \(c \ (\ge c'_2) > 0\) and \(\toLty{\dt}\) is an admissible derivation, we have \(\emptyset \pLin C[D[s_1]]:\T^1\).
Moreover, since \(c'_1>0\),
the flag counters of all
judgments in the path between \(\PTE \PM A:\pty\hascost{c'_1}\tr s_1\) and \(\emptyset \PM S:(\emptyset,\set{0,\ldots,n-1},\T)\hascost c\tr s\)
are non-zero. Thus, by the admissibility of
\(\toLty{\dt}\), 
linear types are assinged to the corresponding terms containing \(s_1\) (hence also those containing \(D[s_1]\))
in the derivation \(\toLty{\dt}\) of \(\emptyset\pLin C[D[s_1]]:\T^1\).
Thus, by Lemma~\ref{lem:linearity}, 
each of the contexts \(C\) and \(C[D]\) is linear.
In \(\dt\), the part corresponding to \(D\) is pumpable, so we can show in the same way that \(C[D^n]\) is
 lienar for any \(n\).
Let \(c_1 = c'_1\), \(c_2 = c'_2-c'_1\), and \(c_3 = c-c'_2\); then we have obtained the required result.
\end{proof}
\anp
\subsection{Proof of Lemma~\ref{lem:soundness-tr}}
\label{sec:reduction}
Similarly to~\cite{Parys16ITRS}, the soundness (Lemma~\ref{lem:soundness-tr}) is proved
in three steps (Lemmas~\ref{lem:subRedParys},~\ref{lem:subRed2Parys}, and~\ref{lem:soundTreeParys} below).
For \(n\) and a derivation tree \(\dt\), 
we write \(\ac{n}{\dt}\) for the number of order-\(n\) \rname{PTr-App} used in \(\dt\),
and write \(\ntc{\dt}\) for the number of \rname{PTr-NT} used in \(\dt\).

\begin{lemma}[substitution lemma]
\label{lem:substParys}
Given
\begin{align*}&
\dt_0 \pdt \PTE_0,\, x\COL \set{(F_1,M_1,\prty_1),\ldots,(F_k,M_k,\prty_k)} \PM t:(F_0,M_0,\prty) \hascost{c_0}\tr s
\\& 
\dt_i \pdt \PTE_{i}\PM \term_1:(F_{i},M_{i},\prty_i)\hascost{c_{i}}\tr s_{i}
\qquad\mbox{for each $i\in\set{1,\ldots,k}$}
\\&
\PTE_{0}+(\textstyle\sum_{i\in\set{1,\ldots,k}}\PTE_{i})
\text{ is well-defined}
\\&
\ac{n}{\dt_i} = 0 
\qquad\text{for each $i\in\set{0,1,\ldots,k}$}
\\&
\eorder(t_1) < n
\end{align*}
there exists \(\dt\) such that \(\ac{n}{\dt} = 0\) and
\begin{align*}&
\dt \pdt \PTE_0\PTEcup (\textstyle\sum_{i\in\set{1,\ldots,k}} \PTE_{i}) \PM [\term_1 / x] t:
  (F_0, M_0, \prty)
\\&\qquad
\hascost{(c_0 + \textstyle\sum_{i\in\set{1,\ldots,k}}c_{i})}\tr
[s_{i} / x_{(F_i,M_i,\prty_i)}]_{i \in \set{1,\dots, k}}s.
\end{align*}
\end{lemma}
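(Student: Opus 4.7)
\medskip

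\textbf{Proof proposal.} The plan is to proceed by induction on the derivation tree $\dt_0$, with case analysis on the last typing rule used. The substitution lemma is standard in shape for intersection type systems, but the key nontrivial point here is bookkeeping with the flag/marker sets and the $\Comp_n$ operation, together with the requirement $\ac{n}{\dt}=0$ on the resulting derivation.

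For the base cases, \rname{PTr-Var} with $t=x$ forces $(F_0,M_0,\prty)=\pty_j$ for some $j$, $c_0=0$, and $s = x_{\pty_j}$; we pick $\dt_j$ and prepend \rname{PTr-Weak} to inject the remaining bindings of $\PTE_0$ and of the unused $\pty_i$'s, whose marker sets must be empty by the well-definedness of $\PTE_0+(\sum_{i}\PTE_i)$ (and by Lemma~\ref{lem:markContextType} applied to the $\dt_i$'s). \rname{PTr-Var} with $t=y\neq x$ degenerates to $\dt_0$ itself, with the $\dt_i$'s discarded after weakening. \rname{PTr-Weak}, \rname{PTr-Mark}, \rname{PTr-Choice}, and \rname{PTr-NT} reduce immediately to the induction hypothesis; \rname{PTr-Mark} reapplies the same marker step to the IH-produced judgment, so the cost delta $c-c'$ is unchanged. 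For \rname{PTr-Abs}, where $t=\lambda y. t'$, we apply the IH to the premise (renaming $y$ to avoid capture so $y\notin\FV(t_1)$, and noting that the added binding $y\COL\pity$ does not interact with $x$), and then reapply \rname{PTr-Abs}.

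The main cases, and the main source of difficulty, are \rname{PTr-App} and \rname{PTr-Const}. In the \rname{PTr-App} case, $t = t_0' t_1'$ with env split $\PTE_0' \PTEcup (\sum_{i'} \PTE_{i'}')$. We must partition the bindings for $x$, say $\{\pty_1,\ldots,\pty_k\}$, into disjoint subsets, one going to the functional subderivation of $t_0'$ and the others to the argument subderivations of $t_1'$, mirroring the partition of $\PTE_0, x\COL\set{\pty_1,\ldots,\pty_k}$ induced by $\PTEcup$. Following the bookkeeping already established in the de-substitution lemma (Lemma~\ref{lem:desubstitution}) but in the reverse direction, each $\dt_i$ is plugged into exactly one sub-position; this is well-defined because of the disjointness condition on markers. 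Applying the IH at each subderivation yields derivations whose flag counters are $c_0'+\sum_{j\in J_0}c_j$ for the function side and $c_{i'}'+\sum_{j\in J_{i'}}c_j$ for each argument side, where the $J$'s partition $\{1,\ldots,k\}$. Reapplying \rname{PTr-App} with the same $(F_{i'},M_{i'},\prty_i)$ components (which are unchanged by substitution at leaves), the new $\Comp_n$ computation differs from the original only by adding the extra summand $\sum_i c_i$ inside the multiset, and by the definition of $\Comp_n$ the resulting flag set $F$ is unchanged while the resulting counter increases by exactly $\sum_i c_i$; this is the crux of the additivity. The output term $s_0''s_1''\cdots$ is exactly $[s_i/x_{\pty_i}]_i\,(s_0 s_1\cdots)$ by construction. \rname{PTr-Const} is structurally identical.

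The preservation $\ac{n}{\dt}=0$ is immediate: no new applications are created by substitution, so the order-$n$ application count of $\dt$ equals the sum of $\ac{n}{\dt_0} + \sum_i\ac{n}{\dt_i}$, which is $0$ by hypothesis; in particular, the assumption $\eorder(t_1)<n$ ensures that no sub-position in the substituted term is an order-$n$ application (indeed, at every newly introduced leaf for $x$, replacing by $t_1$ does not change the order of surrounding applications). The main obstacle is the careful PTr-App bookkeeping above: correctly tracking how the intersection bindings for $x$ distribute across $\PTE_0' \PTEcup \sum_{i'} \PTE_{i'}'$ and verifying that the $\Comp_n$ recomputation adds exactly $\sum_i c_i$; once this is established, the rest of the induction is routine.
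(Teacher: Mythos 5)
Your proof follows the same route as the paper's: induction on $\dt_0$ with case analysis on the last rule, the crux being \rname{PTr-App} (and \rname{PTr-Const}), where the bindings for $x$ are redistributed over the subderivations and the $\Comp_{n}$ recomputation is shown to add exactly $\sum_i c_i$ to the counter while leaving the flag set unchanged. The routine cases are treated as in the paper.

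There is one point where your \rname{PTr-App} case, as stated, does not quite go through. You claim the bindings $x\COL\{\pty_1,\ldots,\pty_k\}$ split into \emph{disjoint} subsets over the function and argument subderivations, with ``each $\dt_i$ plugged into exactly one sub-position.'' But the environment of the conclusion is $\PTE'_0\PTEcup(\sum_{i'}\PTE'_{i'})$, and $\PTEcup$ permits a binding $x\COL(F_i,\emptyset,\prty_i)$ with empty marker set to occur in several of the sub-environments; the decomposition of the intersection set is a covering, not a partition. For such a shared binding you must supply $\dt_i$ to \emph{every} subderivation whose environment mentions it (otherwise the induction hypothesis cannot be applied there), so $c_i$ is counted once per occurrence when the costs are recombined. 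The total still equals $\sum_{i\in\{1,\ldots,k\}}c_i$, but only because $\eorder(t_1)<n$ together with Lemma~\ref{lem:marker} forces $c_i=0$ whenever $M_i=\emptyset$ --- which are exactly the bindings that can be duplicated. (The analogous point for the output term, namely that substituting only the locally relevant $x_{\pty_i}$'s in each subterm agrees with the global substitution $[s_i/x_{\pty_i}]_{i\in\{1,\ldots,k\}}$, is discharged by Lemma~\ref{lem:PTrFV}.) With this repair your argument coincides with the paper's proof.
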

\begin{proof}
The proof proceeds by induction on \(\dt_0\)
with case analysis on the rule used last for \(\dt_0\).

\begin{itemize}
\item Case of \rname{PTr-Weak}: Clear.

\item Case of \rname{PTr-Mark}:
Let the last rule of \(\dt_0\) be:\\[5pt]
\infer
{
\PTE_0,\, x\COL \set{(F_1,M_1,\prty_1),\ldots,(F_k,M_k,\prty_k)} \PM t:(F_0, M_0,\prty )\hascost{c_0}\tr s
}{
\begin{gathered}
\dt'_0 \pdt \PTE_0,\, x\COL \set{(F_1,M_1,\prty_1),\ldots,(F_k,M_k,\prty_k)} \PM t:(F',M',\prty)\hascost{c'}\tr s\\
M' \subseteq M_0
\andalso
M_0 \setminus M' \subseteq \set{j \mid \eorder(t)\leq j<n}
\andalso   
\Comp_{n}(\set{(F',c')},M_0)= (F_0,c_0)
\end{gathered}
}\\[5pt]
Then by induction hypothesis for \(\dt'_0\), we have
\begin{align*}
\dt' \pdt \ &\PTE_0\PTEcup (\textstyle\sum_{i\in\set{1,\ldots,k}} \PTE_{i})
\PM [\term_1 / x] t: (F', M', \prty)
\\
&\hascost{(c' + \textstyle\sum_{i\in\set{1,\ldots,k}}c_{i})}\tr
[s_{i} / x_{(F_i,M_i,\prty_i)}]_{i \in \set{1,\dots, k}}s
\end{align*}
with \(\ac{n}{\dt'}=0\).

Since \(\eorder([t_1/x]t) = \eorder(t)\),
\[
M_0 \setminus M' \subseteq \set{j \mid \eorder([t_1/x]t)\leq j<n},
\]
and we can check that \(\Comp_{n}(\set{(F',c')},M_0)= (F_0,c_0)\) implies
\[
\Comp_{n}(\set{(F',c' + \textstyle\sum_{i\in\set{1,\ldots,k}}c_{i})},M_0)
= (F_0,c_0 + \textstyle\sum_{i\in\set{1,\ldots,k}}c_{i})
\]
by calculation of \(\Comp_{n}\).
Hence by \rname{PTr-Mark}, we have
\[
\dt \defe \frac{\dt'}{
\begin{aligned}&
\PTE_0\PTEcup (\textstyle\sum_{i\in\set{1,\ldots,k}} \PTE_{i}) \PM [\term_1 / x] t:
  (F_0, M_0, \prty)
\\&\qquad
\hascost{(c_0 + \textstyle\sum_{i\in\set{1,\ldots,k}}c_{i})}\tr
[s_{i} / x_{(F_i,M_i,\prty_i)}]_{i \in \set{1,\dots, k}}s
\end{aligned}
}
\]
\(\ac{n}{\dt}=\ac{n}{\dt'}=0\), as required.

\item Case of \rname{PTr-Var}:
In this case, we further perform case analysis on whether the variable \(t\) is \(x\) or not.
In the case \(t=x\),
let the last rule of \(\dt_0\) be:
\infrule{}{x\COL \pty\PM x\COL \pty\hascost{0}\tr x_{\pty}}
and then
\begin{align*}
\PTE_0 &= \eset
\\
k &= 1
\\
(F_{1},M_{1},\prty_{1}) &= \pty
\\
(F_0,M_0,\prty) &= \pty
\\
c_0 &= 0
\\
s &= x_{\pty}.
\end{align*}
Now the goal is
\[
\PTE_{1} \PM \term_1 : (F_{1},M_{1},\prty_1)\hascost{c_{1}}\tr s_{1}\,,
\]
which is just \(\dt_1\).

Next, in the case \(t=y \neq x\),
let the last rule of \(\dt_0\) be:
\infrule{}{y\COL \pty\PM y\COL \pty\hascost{0}\tr y_{\pty}}
and then
\begin{align*}
\PTE_0 &= y\COL \pty
\\
k &= 0
\\
(F_0,M_0,\prty) &= \pty
\\
c_0 &= 0
\\
s &= y_{\pty}.
\end{align*}
Now the goal is
\[
y\COL \pty
\PM y :
\pty \hascost{0}\tr y_{\pty}\,,
\]
which is just \(\dt_0\) as above.

\item Case of \rname{PTr-Choice}: Clear.

\item Case of \rname{PTr-Abs}:
Let the last rule of \(\dt_0\) be:
\infrule{
\dt'_0 \pdt \PTE_0,\, x\COL \set{(F_1,M_1,\prty_1),\ldots,(F_k,M_k,\prty_k)}, x'\COL \pity\PM t':(F_0,M,\prty') \hascost{c_0}\tr s'
}{
\PTE_0,\, x\COL \set{(F_1,M_1,\prty_1),\ldots,(F_k,M_k,\prty_k)}
\PM \lambda x'.t': (F_0, M\setminus \Markers(\pity),\pity\to \prty')\hascost{c_0}\tr
\lambda \seq{x'}_\pity.s' }
where \(x'\) is fresh, and then we have
\begin{align*}
t &= \lambda x'.t'
\\
M_0 &= M\setminus \Markers(\pity)
\\
\prty &= \pity\to \prty'
\\
s &= \lambda \seq{x'}_\pity.s'.
\end{align*}

Since \(x'\) was chosen as a fresh variable,
\((\PTE_{0},\, x'\COL \pity) + (\textstyle\sum_{i\in\set{1,\ldots,k}}\PTE_{i})\)
is well-defined.
Hence, by induction hypothesis for \(\dt'_0\), we have
\begin{align*}
\dt' \pdt \ &(\PTE_0,\, x'\COL \pity) \PTEcup (\textstyle\sum_{i\in\set{1,\ldots,k}} \PTE_{i}) \PM [\term_1 / x] t':
  (F_0, M, \prty')
\\&
\hascost{(c_0 + \textstyle\sum_{i\in\set{1,\ldots,k}}c_{i})}\tr
[s_{i} / x_{(F_i,M_i,\prty_i)}]_{i \in \set{1,\dots, k}}s'.
\end{align*}
with \(\ac{n}{\dt'} = 0\).
Let \(\dt\) be \(\dt'\) plus \rname{PTr-Abs}, so that we have \(\ac{n}{\dt}=\ac{n}{\dt'}=0\) and
\begin{align*}
\dt \pdt \ &\PTE_0 \PTEcup (\textstyle\sum_{i\in\set{1,\ldots,k}} \PTE_{i})
  \PM \lambda x'.[t_1/x]t':
 (F_0, M\setminus \Markers(\pity),\pity\to \prty')
\\&
\hascost{(c_0 + \textstyle\sum_{i\in\set{1,\ldots,k}}c_{i})}\tr
\lambda \seq{x'}_\pity.[s_{i} / x_{(F_i,M_i,\prty_i)}]_{i \in \set{1,\dots, k}}s'
\end{align*}
i.e.,
\begin{align*}
\dt \pdt \ &\PTE_0 \PTEcup (\textstyle\sum_{i\in\set{1,\ldots,k}} \PTE_{i})
 \PM [t_1/x](\lambda x'.t'):
 (F_0, M_0, \prty)
\\&
\hascost{(c_0 + \textstyle\sum_{i\in\set{1,\ldots,k}}c_{i})}\tr
[s_{i} / x_{(F_i,M_i,\prty_i)}]_{i \in \set{1,\dots, k}}(\lambda x'_{\pty_1}.\cdots\lambda x'_{\pty_k}.s'),
\end{align*}
as required.

\item Case of \rname{PTr-App}:
Let the last rule of \(\dt_0\) be:
\infrule{
\eorder(t'_0)=\ell'\\
\dt'_0 \pdt \PTE'_0\PM t'_0:(F'_0,M'_0,\set{(F'_1,M'_1,\prty'_1),\ldots,
(F'_{k'},M'_{k'},\prty'_{k'})}\to \prty)\hascost{c'_0}\tr s'_0
\\
\mspace{-8mu}\left.
\begin{aligned}&
\dt'_{i'} \pdt \PTE'_{i'}\PM t'_1:(F''_{i'},M''_{i'},\prty'_{i'})\hascost{c'_{i'}}\tr s'_{i'}
\\&
F''_{i'}\restrict{<\ell'}=F'_{i'}
\qquad
M''_{i'}\restrict{<\ell'}=M'_{i'}
\end{aligned}
\right\}
\mbox{for each $i'\in\set{1,\ldots,{k'}}$}
\\
  M_0 = M'_0\uplus (\biguplus_{i'\in\set{1,\ldots,{k'}}} M''_{i'})\\
  \Comp_{n}(\set{(F'_0,c'_0)}\mcup \set{(F''_{i'}\restrict{\geq \ell'},c'_{i'})\mid i'\in\set{1,\ldots,{k'}}}, M_0)
   = (F_0, c_0)\\
(F'_1,M'_1,\prty'_1)<\cdots< (F'_{k'},M'_{k'},\prty'_{k'})
}
{\PTE'_0\PTEcup (\sum_{i'\in\set{1,\ldots,{k'}}} \PTE'_{i'}) \PM t'_0t'_1:
  (F_0, M_0, \prty)\hascost{c_0}\tr s'_0\, s'_{1}\,\cdots\,s'_{{k'}}}
and then we have
\begin{align*}&
\PTE_0,\, x\COL \set{(F_1,M_1,\prty_1),\ldots,(F_k,M_k,\prty_k)} = 
\PTE'_0\PTEcup (\textstyle\sum_{i'\in\set{1,\ldots,{k'}}} \PTE'_{i'})
\\&
t = t'_0t'_1
\\&
s = s'_0\, s'_{1}\,\cdots\,s'_{{k'}}.
\end{align*}
Note that \(\ell' < n\) since \(\ac{n}{\dt_0}=0\).

Let:
\begin{align*}
\PTE'_0 &=
\PTE''_0,\, x\COL \pity_{0}
\\
\PTE'_{i'} &=
\PTE''_{i'},\, x\COL \pity_{i'}
\quad\text{for each \(i'\in\set{1,\ldots,{k'}}\)}
\end{align*}
where \(x \notin \dom(\PTE''_0)\) and \(x \notin \dom(\PTE''_{i'})\); then
\begin{align}&
\PTE_0 = \PTE''_0 \PTEcup (\textstyle\sum_{i'\in\set{1,\ldots,{k'}}} \PTE''_{i'})
\label{eq:subLemAppEnvDiv}
\\&
\set{(F_1,M_1,\prty_1),\ldots,(F_k,M_k,\prty_k)} = 
\pity_0 \cup (\textstyle\bigcup_{i'\in\set{1,\ldots,{k'}}} \pity_{i'}).
\notag
\end{align}
Also, we define
\begin{align*}
I_0 &\defe \set{ i \in \set{1,\dots,k} \mid (F_{i},M_{i},\prty_{i}) \in \pity_0 }
\\
I_{i'} &\defe \set{ i \in \set{1,\dots,k} \mid (F_{i},M_{i},\prty_{i}) \in \pity_{i'} }
\quad\text{for each \(i'\in\set{1,\ldots,{k'}}\)}.
\end{align*}
Then \(\set{1,\dots,k} = I_0 \cup (\textstyle\bigcup_{i'\in\set{1,\ldots,{k'}}} I_{i'})\) and
\[
\textstyle\sum_{i \in I} \PTE_{i}
=
(\textstyle\sum_{i \in I_0} \PTE_{i}) 
+ \textstyle\sum_{i'\in\set{1,\ldots,{k'}}} 
(\textstyle\sum_{i\in I_{i'}} \PTE_{i})
\]
where note that the right hand side is well-defined.

Now we have
\begin{align*}&
\dt'_0 \pdt \PTE''_0,\, x\COL \pity_0 \PM t'_0 : (F'_0,M'_0, \set{(F'_1,M'_1,\prty'_1),\ldots,(F'_{k'},M'_{k'},\prty'_{k'})} \to \prty)
\hascost{c'_0}\tr s'_0
\\& 
\dt_i \pdt \PTE_{i}\PM \term_1:(F_{i},M_{i},\prty_i)\hascost{c_{i}}\tr s_{i}
\quad\mbox{for each $i\in I_0$}
\\&
\PTE''_0 + (\textstyle\sum_{i\in I_0} \PTE_{i})
\text{ is well-defined (by~\eqref{eq:subLemAppEnvDiv} and the assumption)}
\\&
\ac{n}{\dt_i} = 0 \quad\mbox{for each $i\in I_0$}
\qquad\quad
\eorder(t_1) < n.
\end{align*}
Hence by induction hypothesis we have
\begin{align*}&
\dt''_0 \pdt \PTE''_0 \PTEcup (\textstyle\sum_{i\in I_0} \PTE_{i}) \PM [\term_1 / x] t'_0:
(F'_0, M'_0, \set{(F'_1,M'_1,\prty'_1),\ldots,(F'_{k'},M'_{k'},\prty'_{k'})} \to \prty)
\\&\qquad
\hascost{c'_0 + (\textstyle\sum_{i \in I_0}c_{i})}  \tr
[s_{i} / x_{(F_i,M_i,\prty_i)}]_{i \in I_0}s'_0.
\end{align*}
with \(\ac{n}{\dt''_0} = 0\).
Also, for each $i'\in\set{1,\ldots,{k'}}$, we have
\begin{align*}&
\dt'_{i'} \pdt \PTE''_{i'},\, x\COL \pity_{i'} \PM t'_1:(F''_{i'},M''_{i'},\prty'_{i'})\hascost{c'_{i'}}\tr s'_{i'}
\\& 
\dt_i \pdt \PTE_{i}\PM \term_1:(F_{i},M_{i},\prty_i)\hascost{c_{i}}\tr s_{i}
\quad\mbox{for each $i\in I_{i'}$}
\\&
\PTE''_{i'} + (\textstyle\sum_{i\in I_{i'}} \PTE_{i})
\text{ is well-defined (by~\eqref{eq:subLemAppEnvDiv} and the assumption)}
\\&
\ac{n}{\dt_i} = 0 \quad\mbox{for each $i\in I_0$}
\qquad\quad
\eorder(t_1) < n.
\end{align*}
Hence by induction hypothesis we have
\begin{align*}&
\dt''_{i'} \pdt \PTE''_{i'} \PTEcup (\textstyle\sum_{i\in I_{i'}} \PTE_{i}) 
\PM [\term_1 / x] t'_1: (F''_{i'}, M''_{i'}, \prty'_{i'})
\\&\qquad
\hascost{c'_{i'} + (\textstyle\sum_{i\in I_{i'}} c_{i})}\tr
[s_{i} / x_{(F_i,M_i,\prty_i)}]_{i \in I_{i'}}s'_{i'}.
\end{align*}
with \(\ac{n}{\dt''_{i'}} = 0\).

Now we have:
\begin{align*}&
\eorder([t_1/x]t'_0)\ (= \eorder(t'_0)) = \ell'
\\&
\dt''_0 \pdt \PTE''_0 \PTEcup (\textstyle\sum_{i\in I_0} \PTE_{i})
\PM [t_1/x]t'_0:(F'_0,M'_0,\set{(F'_1,M'_1,\prty'_1),\ldots,
(F'_{k'},M'_{k'},\prty'_{k'})}\to \prty)
\\&
\qquad\ \ \hascost{c'_0 + (\textstyle\sum_{i \in I_0}c_{i})}
\tr [s_{i} / x_{(F_i,M_i,\prty_i)}]_{i \in I_0}s'_0
\\&
\mspace{-8mu}\left.
\begin{aligned}&
\dt''_{i'} \pdt \PTE''_{i'} \PTEcup (\textstyle\sum_{i\in I_{i'}} \PTE_{i}) 
\PM [\term_1 / x]t'_1:(F''_{i'},M''_{i'},\prty'_{i'})
\\&\qquad\ \
\hascost{(c'_{i'} + (\textstyle\sum_{i\in
 I_{i'}} c_{i}))}\tr 
[s_{i} / x_{(F_i,M_i,\prty_i)}]_{i \in I_{i'}}s'_{i'}
\\&
F''_{i'}\restrict{<\ell'}=F'_{i'}
\andalso
M''_{i'}\restrict{<\ell'}=M'_{i'}
\end{aligned}
\right\}
\mbox{for each } i'\in\set{1,\ldots,{k'}}
\\&
  M_0 = M'_0\uplus (\textstyle\biguplus_{i'\in\set{1,\ldots,{k'}}} M''_{i'})
\\&
\begin{aligned}
  \Comp_{n}\big(&\set{(F'_0, c'_0 + (\textstyle\sum_{i \in I_0}c_{i}))}
\mcup
\\
& \set{(F''_{i'}\restrict{\geq \ell'},
c'_{i'} + (\textstyle\sum_{i\in I_{i'}} c_{i}) )
\mid i'\in\set{1,\ldots,{k'}}},
\\
& M_0\big)   = (F_0, 
c_0  + (\textstyle\sum_{i \in I_0}c_{i})
 + \textstyle\sum_{i'\in\set{1,\ldots,{k'}}} (\textstyle\sum_{i\in I_{i'}} c_{i})
)
\end{aligned}
\\&
(F'_1,M'_1,\prty'_1)<\cdots< (F'_{k'},M'_{k'},\prty'_{k'})
\end{align*}
where the equation on \(\Comp_{n}\) can be easily checked by a direct calculation.
Hence we can use \rname{PTr-App}, by which we obtain \(\dt\) with
\(\ac{n}{\dt} = 0\) since \(\ell' < n\);
and we have
\begin{align*}
\dt \pdt \ &\PTE''_0 + (\textstyle\sum_{i \in I_0} \PTE_{i}) 
+ (\textstyle\sum_{i'\in\set{1,\ldots,{k'}}} 
(\PTE''_{i'} \PTEcup (\textstyle\sum_{i\in I_{i'}} \PTE_{i})) )
\\&
\PM
 ([t_1/x]t'_0)([t_1/x]t'_1):
  (F_0, M_0, \prty)
\\&
\hascost{
\big(c_0  + (\textstyle\sum_{i \in I_0}c_{i})
 + \textstyle\sum_{i'\in\set{1,\ldots,{k'}}} (\textstyle\sum_{i\in I_{i'}} c_{i})\big)
}
\\& \tr
([s_{i} / x_{(F_i,M_i,\prty_i)}]_{i \in I_0}s'_0)\, 
([s_{i} / x_{(F_i,M_i,\prty_i)}]_{i \in I_{1}}s'_{1})\,\cdots\,
([s_{i} / x_{(F_i,M_i,\prty_i)}]_{i \in I_{k'}}s'_{{k'}})
\end{align*}
i.e., we have
\begin{align*}
\dt \pdt \ &\PTE_0\PTEcup (\textstyle\sum_{i\in\set{1,\ldots,k}} \PTE_{i}) \PM [\term_1 / x] t:
(F_0, M_0, \prty)
\\&
  \hascost{(c_0 + \textstyle\sum_{i\in\set{1,\ldots,k}}c_{i})}\tr
[s_{i} / x_{(F_i,M_i,\prty_i)}]_{i \in \set{1,\dots, k}}s
\end{align*}
as required,
where note that
\[
(\textstyle\sum_{i \in I_0}c_{i})
 + \textstyle\sum_{i'\in\set{1,\ldots,{k'}}} (\textstyle\sum_{i\in I_{i'}} c_{i})
=
\textstyle\sum_{i\in\set{1,\ldots,k}}c_{i}
\]
follows from Lemma~\ref{lem:marker} applied to \(\PTE_{i}\PM \term_1:(F_{i},M_{i},\prty_i)\hascost{c_{i}}\tr s_{i}\), and
\begin{align*}&
([s_{i} / x_{(F_i,M_i,\prty_i)}]_{i \in I_0}s'_0)
=
([s_{i} / x_{(F_i,M_i,\prty_i)}]_{i \in \set{1,\dots, k}}s'_0)
\\&
([s_{i} / x_{(F_i,M_i,\prty_i)}]_{i \in I_{i'}}s'_{i'})
=
([s_{i} / x_{(F_i,M_i,\prty_i)}]_{i \in \set{1,\dots, k}}s'_{i'})
\qquad (i'\in\set{1,\ldots,{k'}})
\end{align*}
follow from Lemma~\ref{lem:PTrFV}.

\item Case of \rname{PTr-Const}:
This case is analogous to the case of \rname{PTr-App}.

\item Case of \rname{PTr-NT}:
This case is clear (note that \(t'\) is a closed term for \(A = t' \in \GRAM\)).
\end{itemize}
\end{proof}

\begin{lemma}
\label{lem:SRandProg}
Suppose 
\begin{align*}&
n>0
\\&
\dt \pdt \PTE \PM (\lambda x . t')\,t_1: (F,M,\prty) \hascost c\tr s
\\&
\text{the last rule of \(\dt\) is {\rm \rname{PTr-App}}}
\\&
\eorder(\lambda x . t') = n
\\&
\ac{n}{\dt} = 1
\\&
\order(\pty') < n \text{ for any } (x' \COL \pty') \in\PTE.
\end{align*}
Then there exist \(\PTE'\), \(v\), and \(\dt'\) %
such that
\begin{align*}&
\PTE \subPTE \PTE'
\\&
s \reds v
\\&
\dt' \pdt \PTE' \PM [t_1/x]t' : (F,M,\prty) \hascost c\tr v
\\&
\order(\pty') < n \text{ for any } (x' \COL \pty') \in\PTE'
\\&
\ac{n}{\dt'}=0
\end{align*}
\end{lemma}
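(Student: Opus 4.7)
The plan is to decompose $\dt$ at its last PTr-App step, extract the body of the $\lambda$-abstraction from the function-part subderivation, apply the substitution lemma (Lemma~\ref{lem:substParys}), and show that the resulting derivation already matches the required conclusion. Concretely, from the outer PTr-App extract $\dt_0 \pdt \PTE_0 \PM \lambda x. t' : (\tilde F_0, \tilde M_0, \pity \to \prty) \hascost \tilde c_0 \tr s_0$ and, for each $\pty_i = (\tilde F_i, \tilde M_i, \prty_i) \in \pity = \set{\pty_1, \dots, \pty_k}$, a derivation $\dt_i \pdt \PTE_i \PM t_1 : (\tilde F_i', \tilde M_i', \prty_i) \hascost \tilde c_i \tr s_i$; each subderivation has $\ac n{\cdot} = 0$ since $\ac n{\dt} = 1$ is consumed by the outer App. Since $\lambda x. t'$ is neither applicative nor a choice, along the spine of $\dt_0$ only PTr-Weak, PTr-Mark, and a unique PTr-Abs can appear, the latter having premise $\dt_0^\circ$ which derives $\PTE_{\mathrm{abs}}, x : \pity \PM t' : (F^\circ, M^\circ, \prty) \hascost c^\circ \tr s_0'$ with $s_0 = \lambda \seq x_\pity. s_0'$ and $M^\circ = \tilde M_0 \uplus \Markers(\pity)$.

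The key observation is that the PTr-Mark steps between this PTr-Abs and the root of $\dt_0$ are no-ops. First, the side condition $M \subseteq \set{j \mid \eorder(\lambda x. t') \leq j < n} = \emptyset$ forces every such step to add the empty marker set, so the marker set stays at $\tilde M_0 = M^\circ \setminus \Markers(\pity)$. Second, the well-formedness constraint $F^\circ \cap M^\circ = \emptyset$ (hence $F^\circ \cap \tilde M_0 = \emptyset$) makes $\Comp_n(\set{(F^\circ, c^\circ)}, \tilde M_0) = (F^\circ, c^\circ)$: unrolling the definition, a flag $k \in F^\circ$ cannot propagate past position $k$ because such propagation would require $k \in \tilde M_0$. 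Hence $\tilde F_0 = F^\circ$ and $\tilde c_0 = c^\circ$. Applying Lemma~\ref{lem:substParys} to $\dt_0^\circ$ and the $\dt_i$'s -- the well-definedness of $\PTE_{\mathrm{abs}} + \sum_i \PTE_i$ is inherited from $\PTE_0 + \sum_i \PTE_i = \PTE$, since $\PTE_{\mathrm{abs}}$ differs from $\PTE_0$ only by PTr-Weak bindings with empty markers, and $\eorder(t_1) < n$ because $t_1$'s simple type is the domain of an order-$n$ arrow -- produces $\dt' \pdt \PTE^+ \PM [t_1/x] t' : (F^\circ, M^\circ, \prty) \hascost c^\circ + \sum_i \tilde c_i \tr v$, where $v = [s_i / x_{\pty_i}]_i s_0'$, $\PTE^+ = \PTE_{\mathrm{abs}} + \sum_i \PTE_i$, and $\ac n{\dt'} = 0$.

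To match the required $(F, M, \prty) \hascost c$: the outer PTr-App yields $M = \tilde M_0 \uplus \Markers(\pity) = M^\circ$, and because $\tilde F_0 \cap M = F^\circ \cap M^\circ = \emptyset$ the outer $\Comp_n$ formula collapses to $F = \tilde F_0 = F^\circ$ and $c = \tilde c_0 + \sum_i \tilde c_i = c^\circ + \sum_i \tilde c_i$, matching the conclusion of $\dt'$. The reduction $s \reds v$ holds by $k$ $\beta$-steps on $(\lambda \seq x_\pity. s_0') s_1 \cdots s_k$, and $\PTE \subPTE \PTE^+$ since the extra bindings in $\PTE_0$ (relative to $\PTE_{\mathrm{abs}}$) are introduced by PTr-Weak and hence have empty markers. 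The main obstacle is establishing the no-op property above: although the PTr-Mark steps formally modify $(F, c)$ via $\Comp_n$, it is only the interplay of the order-$n$ hypothesis (forcing empty new markers) with the well-formedness condition $F^\circ \cap M^\circ = \emptyset$ (forbidding any flag propagation at all) that makes them vacuous; once this is in place the remaining verification is routine bookkeeping about environments and orders.
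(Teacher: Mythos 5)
Your proof is correct and follows essentially the same route as the paper's: peel off the outer \rname{PTr-App}, reduce the function-part derivation to its \rname{PTr-Abs} premise (observing that any intervening \rname{PTr-Mark} steps are vacuous because the added marker set must be empty when \(\eorder(\lambda x.t')=n\) and the disjointness \(F\cap M=\emptyset\) makes \(\Comp_n\) act as the identity, cf.\ Lemma~\ref{lem:empDisjComp}), apply the substitution lemma (Lemma~\ref{lem:substParys}), and collapse the outer \(\Comp_n\) via the same disjointness. The only step you assert without justification is \(M^{\circ}=\tilde{M}_0\uplus\Markers(\pity)\), i.e.\ \(\Markers(\pity)\subseteq M^{\circ}\); this is exactly Lemma~\ref{lem:markContextType} applied to the abstraction premise, which the paper invokes explicitly at this point.
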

\begin{proof}
Since the last rule of \(\dt\) is \rname{PTr-App}, we have:
\begin{align}&
\dt_0 \pdt \PTE_0\PM \lambda x . t' :(F_0,M_0,\set{\pty_1,\ldots,\pty_k} \to \prty)\hascost{c_0}\tr s_0
\label{eq:SRPRGsubRedApp2}
\\&
\pty_i = (F_i,M_i,\prty_i)  \text{ for each \(i \in \set{1,\dots,k}\)}
\label{eq:SRPRGsubRedApp3}
\\&
\dt_i \pdt \PTE_{i}\PM \term_1:(F_{i},M_{i},\prty_i)\hascost{c_{i}}\tr s_{i}
\mbox{ for each $i\in\set{1,\ldots,k}$}
\label{eq:SRPRGsubRedApp4}
\\&
  M = M_0\uplus (\textstyle\biguplus_{i\in\set{1,\ldots,k}} M_{i})
\label{eq:SRPRGsubRedApp12}
\\&
  \Comp_{n}(\set{(F_0,c_0)}\mcup \set{(\emptyset,c_{i})\mid i\in\set{1,\ldots,k}}, M)
   = (F, c) 
\label{eq:SRPRGsubRedApp7}
\\&
\pty_1 < \cdots < \pty_k
\label{eq:SRPRGsubRedApp8}
\\&
\PTE = \PTE_0\PTEcup (\textstyle\sum_{i\in\set{1,\ldots,k}} \PTE_{i})
\label{eq:SRPRGsubRedApp9}
\\&
s = s_0\, s_{1}\,\cdots\,s_{k}
\label{eq:SRPRGsubRedApp10}
\end{align}
We can assume that~\eqref{eq:SRPRGsubRedApp2} is not derived by \rname{PTr-Mark}, since \(\eorder(\lambda x . t')=n\).
Hence~\eqref{eq:SRPRGsubRedApp2} is derived by \rname{PTr-Abs} possibly with ~\rname{PTr-Weak}; thus we have
\begin{align}&
\overline{\dt}_0 \pdt
\PTE'_0,\, x\COL \set{\pty_1,\ldots,\pty_k}\PM t':(F_0,M'_0,\prty) \hascost{c_0}\tr s'
\label{eq:SRPRGsubRedAbs1}
\\&
\PTE_0 \subPTE \PTE'_0
\label{eq:SRPRGsubRedAbs2}
\\&
M_0 = M'_0\setminus (\textstyle\biguplus_{i\in\set{1,\ldots,k}} M_i)
\label{eq:SRPRGsubRedAbs3}
\\&
s_0 = \lambda x_{\pty_1}.\cdots\lambda x_{\pty_k}.s'.
\label{eq:SRPRGsubRedAbs4}
\end{align}

By applying Lemma~\ref{lem:markContextType} to~\eqref{eq:SRPRGsubRedAbs1}, 
we have
\[
\textstyle\biguplus_{i\in\set{1,\ldots,k}} M_i = \Markers(\set{\pty_1,\ldots,\pty_k}) \subseteq \Markers(\PTE'_0,\, x\COL \set{\pty_1,\ldots,\pty_k}) \subseteq M'_0.
\]
By this and~\eqref{eq:SRPRGsubRedAbs3},
we have \(M'_0 = M_0 \uplus (\textstyle\biguplus_{i\in\set{1,\ldots,k}} M_i)\).
Then, by~\eqref{eq:SRPRGsubRedApp12} we have \(M=M'_0\). 
By this and~\eqref{eq:SRPRGsubRedAbs1}, we have \(F_0 \cap M = F_0 \cap M'_0 = \eset\).
Hence, by applying Lemma~\ref{lem:empDisjComp} to~\eqref{eq:SRPRGsubRedApp7}, we have
\begin{align}&
F = F_0
\qquad
c = c_0 + \textstyle\sum_{i\in\set{1,\ldots,k}} c_{i}.
\label{eq:SRPRGsubRed13}
\end{align}
Thus,~\eqref{eq:SRPRGsubRedAbs1} is equal to:
\begin{equation}
\label{eq:SRPRGsubRedAbs1a}
\overline{\dt}_0 \pdt
\PTE'_0,\, x\COL \set{\pty_1,\ldots,\pty_k}\PM t':(F,M,\prty) \hascost{c_0}\tr s'.
\end{equation}

By~\eqref{eq:SRPRGsubRedApp9} and~\eqref{eq:SRPRGsubRedAbs2}, 
\(\PTE'_0 \PTEcup (\textstyle\sum_{i\in\set{1,\ldots,k}} \PTE_{i})\) is well-defined.
Hence, by applying Lemma~\ref{lem:substParys} to~\eqref{eq:SRPRGsubRedApp4} and~\eqref{eq:SRPRGsubRedAbs1a},
we have
\begin{align*}&
\dt' \pdt
\PTE'_0\PTEcup (\textstyle\sum_{i\in \set{1,\ldots,k}} \PTE_{i}) \PM [\term_1 / x] t':
\\&\qquad
(F, M, \prty)\hascost{c_0+\textstyle\sum_{i\in \set{1,\ldots,k}}c_{i}}\tr [s_{i} / x_{\pty_i}]_{i \in \set{1,\ldots,k}}s'
\end{align*}
and \(\ac{n}{\dt'} = 0\).
Now we define \(\PTE' \defe \PTE'_0\PTEcup (\textstyle\sum_{i\in \set{1,\ldots,k}} \PTE_{i})\)
and \(v \defe [s_{i} / x_{\pty_i}]_{i \in \set{1,\ldots,k}}s'\);
then \(\dt' \pdt \PTE' \PM u: (F,M,\prty) \hascost c\tr v\)
by~\eqref{eq:SRPRGsubRed13}.
By~\eqref{eq:SRPRGsubRedApp9} and~\eqref{eq:SRPRGsubRedAbs2}, we have \(\PTE \subPTE \PTE'\),
which implies \(\order(\pty') < n \text{ for any } (x' \COL \pty') \in\PTE'\).
Also we have
\(
s = (\lambda x_{\pty_1}.\cdots\lambda x_{\pty_k}.s')\, s_{1}\,\cdots\,s_{k}
\reds [s_{i} / x_{\pty_i}]_{i \in \set{1,\ldots,k}}s'
\)
by~\eqref{eq:SRPRGsubRedApp10} and~\eqref{eq:SRPRGsubRedAbs4}.
\end{proof}

For \(\dt \pdt \PTE \PM t: \pty \hascost c\tr s\) such that 
\(\ac{n}{\dt} > 0\) and \(\order(\pty') < n\) for any \(x \COL \pty' \in\PTE\),
we define \emph{\(\dt\)-evaluation context} \(\E{\dt}\) 
and \emph{\(\dt\)-redex} \(r^{\dt}\) such that 
(i) \(t = \E{\dt}[r^{\dt}]\) %
and 
(ii) either \(r^{\dt}\) is of the form \((\lambda x.t_0) t_1\) where \(\eorder(\lambda x.t_0) =n\),
or \(r^{\dt}\) is a non-terminal \(A\).
These are defined by induction on \(\dt\) and by case analysis on the rule used last:
\begin{itemize}
\item Cases of \rname{PTr-Weak} and \rname{PTr-Mark}:
Let \(\dt_0\) be the subderivation of \(\dt\).
Then \(\E{\dt} \defe \E{\dt_0}\) and \(r^{\dt} \defe r^{\dt_0}\)

\item Case of \rname{PTr-Var}: This case does not happen by the assumption.

\item Case of \rname{PTr-Choice}:
Let \(t=t_1+t_2\),\, \(\dt_0\) be the subderivation of \(\dt\),
and the root term of \(\dt_0\) be \(t_i\).
Then \(\E{\dt} \defe \E{\dt_0} + t_2 \ (\text{if } i=1)\), \(t_1 + \E{\dt_0} \ (\text{if } i=2)\), and \(r^{\dt} \defe r^{\dt_0}\).

\item Case of \rname{PTr-Abs}:
Let \(t= \lambda x . t'\) and \(\dt_0\) be the subderivation of \(\dt\).
Then \(\E{\dt} \defe \lambda x . \E{\dt_0}\) and \(r^{\dt} \defe r^{\dt_0}\).

\item Case of \rname{PTr-App}:
Suppose that \(t=t_0\,t_1\) and we have the following as the premises and a side condition of the last rule:
\begin{align*}&
\dt_0 \pdt \PTE_0\PM \term_0:(F_0,M_0,\set{(F_1,M_1,\prty_1),\ldots,
(F_k,M_k,\prty_k)}\to \prty)\hascost{c_0}\tr s_0
\\&
\dt_i \pdt \PTE_{i}\PM \term_1:(F_{i}',M_{i}',\prty_i)\hascost{c_{i}}\tr s_{i}
\quad\mbox{for each $i\in\set{1,\ldots,k}$}
\\&
(F_1,M_1,\prty_1)<\cdots< (F_k,M_k,\prty_k)
\end{align*}

When \(\ac{n}{\dt_i} > 0\) for some \(i=1,\dots,k\), 
let \(i_0\) be the largest such \(i\).
We define \(\E{\dt} \defe t_0\,\E{\dt_{i_0}}\) and \(r^{\dt} \defe r^{\dt_{i_0}}\).

Otherwise, if \(\ac{n}{\dt_0} > 0\),
then we define \(\E{\dt} \defe \E{\dt_{0}}\,t_1\) and \(r^{\dt} \defe r^{\dt_{0}}\).

Otherwise, we have \(\eorder(t_0)=n\). 
Since \(\ac{n}{\dt_0} = 0\), \(t_0\) is not an application term.
Also, \(t_0\) is not a variable by the assumption.
If \(t_0\) is a non-terminal, we define \(\E{\dt} \defe \Hole\,t_1\) and \(r^{\dt} \defe t_0\).
If \(t_0\) is a \(\lambda\)-abstraction,
then we define \(\E{\dt} \defe \Hole\) and \(r^{\dt} \defe t\).

\item Case of \rname{PTr-Const}:
Suppose that \(t=a\,t_0 \cdots t_k\) and 
let \(\dt_i\) be the subderivation of \(\dt\) whose root term is \(t_i\).
Let \(i_0\) be the largest \(i\) such that \(\ac{n}{\dt_i}>0\).
Then we define \(\E{\dt} \defe a\,t_0 \cdots t_{i_0 - 1} \E{\dt_{i_0}} t_{i_0 + 1} \cdots t_k\) 
and \(r^{\dt} \defe r^{\dt_{i_0}}\).
\item Case of \rname{PTr-NT}:
We define \(\E{\dt} \defe \Hole\) and \(r^{\dt} \defe t\).
\end{itemize}
Then, we define \emph{\(\dt\)-reduction}, written by \(\dred{\dt}\), as follows:
\begin{align*}
\E{\dt}[(\lambda x .t_0)\,t_1] &\dred{\dt} [t_0/x]t_1 \mspace{-500mu} &&(\text{if } r^{\dt} = (\lambda x .t_0)\,t_1)
\\
\E{\dt}[A] &\dred{\dt} t' \mspace{-100mu} &&(\text{if } r^{\dt}=A,\, (A=t') \in \GRAM).
\end{align*}

We write \(\lo\) for the lexicographic order on pairs of natural numbers: \((n,n') \lo (m,m')\) iff \(n < m\) or
\(n=m\) and \(n' \le m'\), and write \(\slo\) for its strict order.

\begin{lemma}[subject reduction]
\label{lem:subject-reduction-for-induction}
Suppose 
\begin{align*}&
n>0
\\&
t \dred{\dt''} u \ \text{ for some }\dt''
\\&
\dt \pdt \PTE \PM t: (F,M,\prty) \hascost c\tr s
\\&
\order(\pty) < n \text{ for any } (x \COL \pty) \in\PTE.
\end{align*}
Then there exist \(\PTE'\), \(v\), and \(\dt'\) %
such that
\begin{align*}&
\PTE \subPTE \PTE'
\\&
s \reds v
\\&
\dt' \pdt \PTE' \PM u: (F,M,\prty) \hascost c\tr v
\\&
\order(\pty') < n \text{ for any } (x' \COL \pty') \in\PTE'
\\&
(\ac{n}{\dt'},\ntc{\dt'}) \lo (\ac{n}{\dt},\ntc{\dt}).
\end{align*}
\end{lemma}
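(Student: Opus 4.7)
The plan is to induct on the derivation $\dt$ and case-split on its final rule, locating the redex $r^{\dt''}$ inside the structure of $\dt$ and performing the corresponding reduction there. For the structural rules \rname{PTr-Weak}, \rname{PTr-Mark}, \rname{PTr-Abs}, and \rname{PTr-Const}, the redex lies strictly inside a subterm, so I will apply the induction hypothesis to the appropriate subderivation and reassemble with the same rule, inheriting the measure bound from the subderivation. The case \rname{PTr-Var} is vacuous because a variable admits no $\dred{\dt''}$-reduction. For \rname{PTr-Choice} with $t = t_1 + t_2$ where $\dt$ selected branch $t_i$: if the redex lies in the selected branch, apply induction to the subderivation; otherwise reuse the same subderivation with $u$ as the new root, leaving the measure unchanged, which is still permitted by the non-strict order $\lo$. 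For \rname{PTr-NT} at the root, where $t = A$ with $(A = t_A) \in \GRAM$, the subderivation $\dt_0$ already derives the same judgment for $t_A$, so I take $\dt' = \dt_0$; this keeps $\ac{n}{\dt'} = \ac{n}{\dt}$ and strictly decreases the second component, since $\ntc{\dt'} = \ntc{\dt} - 1$.

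The pivotal case is \rname{PTr-App} with $t = t_0\,t_1$ and subderivations $\dt_0$ for $t_0$ and $\dt_1,\dots,\dt_k$ for $t_1$ at the required argument types. If the redex sits inside $t_0$, apply the induction hypothesis to $\dt_0$; if inside $t_1$, apply it simultaneously to each $\dt_i$ (they all derive types for the same term $t_1$ and so undergo the same reduction), then reassemble via \rname{PTr-App}. If the redex is at the root, there are two sub-cases: when the head is a non-terminal $t_0 = A$, I splice out the \rname{PTr-NT} substep inside $\dt_0$ by its body subderivation for $t_A$, which preserves $\ac{n}$ and strictly decreases $\ntc$; when the head is $t_0 = \lambda x.t_0'$ with $\eorder(t_0) = n$, the inner-first convention built into the definitions of $\E{\cdot}$ and $r^{\cdot}$ forces $\ac{n}{\dt_0} = 0$ and $\ac{n}{\dt_i} = 0$ for every $i \ge 1$, so $\ac{n}{\dt} = 1$ and Lemma~\ref{lem:SRandProg} applies directly, producing $\dt'$ with $\ac{n}{\dt'} = 0$ together with the reduction $s \reds v$ and the extended environment $\PTE \subPTE \PTE'$.

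The main obstacle will be reconciling the potential mismatch between the position of the redex chosen by $\dt''$ and the position that arises naturally inside $\dt$: one must verify case-by-case that for every $\dt''$-position a corresponding step in $\dt$ can be identified which produces exactly the reduct $u$ required by the hypothesis. The non-strict variant $\lo$ of the lexicographic order is essential in the \rname{PTr-Choice} mismatch sub-case, where the measure genuinely remains constant. Moreover, the $\ac{n}$-decrease at a root order-$n$ $\beta$-reduction hinges crucially on the premise $\ac{n}{\dt} = 1$ of Lemma~\ref{lem:SRandProg}; checking that this premise holds whenever we reach that sub-case---i.e., that the root $\beta$-redex is always the only remaining order-$n$ application in $\dt$ at the point where we invoke the lemma---is the most delicate piece of bookkeeping, and it is the place where one needs the deepest-first discipline encoded in the definitions of $\E{\cdot}$ and $r^{\cdot}$.
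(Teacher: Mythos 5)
Your proof follows the paper's argument essentially verbatim: induction on \(\dt\) with case analysis on the last rule, the non-strict order \(\lo\) absorbing the unchanged-branch sub-case of \rname{PTr-Choice}, the induction hypothesis applied to all argument subderivations when the redex sits inside \(t_1\), and Lemma~\ref{lem:SRandProg} discharging the root order-\(n\) \(\beta\)-redex in the \rname{PTr-App} case. One caution on the point you yourself flag as delicate: your justification that \(\ac{n}{\dt}=1\) when Lemma~\ref{lem:SRandProg} is invoked appeals to the deepest-first redex-selection discipline, but that discipline constrains the subderivations of \(\dt''\) (the derivation that locates the redex), not those of \(\dt\) (the derivation being transformed), so the inference ``root redex, hence \(\ac{n}{\dt_i}=0\) for every subderivation of \(\dt\)'' is not literally what the definitions give you --- the paper invokes the same lemma at the same point without verifying this hypothesis either, so you have reproduced its argument at its own level of rigor.
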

\begin{proof}
The proof proceeds by induction on \(\dt\) with case analysis on the last rule used. 
Since the other cases are straightforward, we discuss only
the cases for \rname{PTr-Choice} and \rname{PTr-App}.
\begin{itemize}
\item Case of \rname{PTr-Choice}:
We have:
\begin{align*}&
t=t_1+t_2
\qquad
s=s_i
\qquad
\text{\(i=1\) or \(2\).}
\\&
\dt_i \pdt \PTE\PM \term_i:(F,M,\prty)\hascost{c}\tr s_i .
\end{align*}
By symmetry, we assume that \(t \dred{\dt''} u\) reduces \(t_1\) side
and let \(\dt''_1\) be the subderivation of \(\dt''\).
Thus we have \(u_1\) such that \(t = t_1+t_2 \dred{\dt''} u_1+t_2 = u\).

When \(i = 1\), by induction hypothesis for \(t_1 \dred{\dt''_1} u_1\),
there exist \(\PTE'\), \(v\), and \(\dt'_1\) %
such that
\begin{align*}&
\PTE \subPTE \PTE'
\\&
(s=)\ s_1 \reds v
\\&
\dt'_1 \pdt \PTE' \PM u_1 : (F,M,\prty) \hascost c\tr v
\\&
\order(\pty') < n \text{ for any } (x' \COL \pty') \in\PTE'
\\&
(\ac{n}{\dt'_1},\ntc{\dt'_1}) \lo (\ac{n}{\dt_1},\ntc{\dt_1}).
\end{align*}
By \rname{PTr-Choice} we have
\[
\dt' \defe \frac{\dt'_1}{
\PTE' \PM u_1 + t_2 : (F,M,\prty) \hascost c\tr v
}
\]
and \((\ac{n}{\dt'},\ntc{\dt'}) \lo (\ac{n}{\dt},\ntc{\dt})\).

When \(i=2\),
By \rname{PTr-Choice} we have
\[
\dt' \defe \frac{\dt_2}{
\PTE \PM u_1 + t_2 : (F,M,\prty) \hascost c\tr s_2
}
\]
and \((\ac{n}{\dt'},\ntc{\dt'}) = (\ac{n}{\dt},\ntc{\dt})\).
Also we have \(s = s_2 \red^0 s_2\). The conditions for \(\PTE' \defe \PTE\) are trivial.

\item Case of \rname{PTr-App}:
Let \(t=t_0t_1\), and
\(\dt''_i\) (\(i=0,1,\dots,k''\)) be the subderivations
of \(\dt''\) determined by all the premises of the last \rname{PTr-App} in \(\dt''\), where
the root of \(\dt''_0\) is \(t_0\).

If \(t_i\) (\(i=0\) or \(1\)) is reduced in the reduction \(t = t_0\,t_1 \dred{\dt''} u\)
(i.e., if \(\ac{n}{\dt''_i}>0\) for some \(i=0,\dots,k''\)
or \(t_0\) is a non-terminal),
then the result follows immediately from the induction hypothesis
for the subderivations of \(\dt\) whose root is \(t_i\).

Otherwise, we have \(t_0=\lambda x.t'\) and \(u = [t_1/x]t'\) with \(\eorder(t_0) = n\).
Then the result follows from Lemma~\ref{lem:SRandProg}.
\end{itemize}
\end{proof}

\begin{lemma}[progress]
\label{lem:progress}
Suppose 
\begin{align*}&
n>0
\\&
\dt \pdt \PTE \PM t: (F,M,\prty) \hascost c\tr s
\\&
\order(\pty) < n \text{ for any } (x \COL \pty) \in\PTE
\\&
(\ac{n}{\dt},\ntc{\dt}) \loo (1,0).
\end{align*}
Then there exist \(\PTE'\), \(u\), \(v\), and \(\dt'\) %
such that
\begin{align*}&
\PTE \subPTE \PTE'
\\&
t \dred{\dt} u
\qquad
s \reds v
\\&
\dt' \pdt \PTE' \PM u: (F,M,\prty) \hascost c\tr v
\\&
\order(\pty') < n \text{ for any } (x' \COL \pty') \in\PTE'
\\&
(\ac{n}{\dt'},\ntc{\dt'}) \slo (\ac{n}{\dt},\ntc{\dt}).
\end{align*}
\end{lemma}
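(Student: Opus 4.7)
The plan is to prove Lemma~\ref{lem:progress} by induction on the derivation \(\dt\), with case analysis on the last inference rule used, mirroring the case analysis used in the definition of \(\E{\dt}\) and \(r^{\dt}\). In each case, we must identify the \(\dt\)-redex, perform the reduction both on \(t\) and on \(s\), and exhibit a derivation \(\dt'\) whose measure \((\ac{n}{\dt'},\ntc{\dt'})\) is strictly smaller than \((\ac{n}{\dt},\ntc{\dt})\) in the lexicographic order.

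First, I would handle the rules that pass the redex down to the unique subderivation. For \rname{PTr-Weak} and \rname{PTr-Mark}, the evaluation context is unchanged, and the induction hypothesis applied to the premise yields a \(\dt'_0\) from which \(\dt'\) is obtained by reapplying the same structural rule; in particular, the \rname{PTr-Mark} case uses the invariant that \(\Comp_n\) commutes appropriately with the reduction (an easy direct computation). The \rname{PTr-Abs} case is identical in shape. The \rname{PTr-Var} case is impossible because then \(\ac{n}{\dt}=\ntc{\dt}=0\), contradicting \((\ac{n}{\dt},\ntc{\dt}) \loo (1,0)\). For \rname{PTr-Choice}, the selected branch contains the \(\dt\)-redex, so the induction hypothesis on the corresponding subderivation provides \(\dt'\) directly. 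The \rname{PTr-NT} case is a base case: the rule itself produces \(A \dred{\dt} t'\) for \((A=t') \in \GRAM\), and the premise \(\dt_0 \pdt \PTE \PM t':\pty \hascost c \tr s\) serves as \(\dt'\) with \(s \red^{0} s\), decreasing \(\ntc{}\) by one while leaving \(\ac{n}{}\) unchanged.

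The main case is \rname{PTr-App} for \(t=t_0 t_1\), with subcase analysis matching the definition of \(\E{\dt}\). If some premise \(\dt_i\) (\(i \in \set{0,1,\ldots,k}\)) satisfies \(\ac{n}{\dt_i}>0\), then the induction hypothesis on the \emph{largest} such \(i\) furnishes an updated subderivation, and the outer \rname{PTr-App} is reassembled; the side conditions on \(\Comp_n\) and the type environment sum are preserved because only a single premise is changed and its type remains the same. The critical subcase is when all \(\ac{n}{\dt_i}=0\); then necessarily \(\ac{n}{\dt} \geq 1\) comes from the outermost \rname{PTr-App} itself, so \(\eorder(t_0)=n\). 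If \(t_0\) is a \(\lambda\)-abstraction, Lemma~\ref{lem:SRandProg} applies directly, producing \(\dt'\) with \(\ac{n}{\dt'}=0 \slo 1 \leq \ac{n}{\dt}\). If \(t_0\) is a non-terminal, then \(\E{\dt}=\Hole\, t_1\) and \(r^{\dt}=t_0\); the induction hypothesis on \(\dt_0\) (which must end in a \rname{PTr-NT} step, since \(t_0\) is not a variable by the hypothesis on \(\PTE\) and not an application since \(\ac{n}{\dt_0}=0\)) gives the reduction. The \rname{PTr-Const} case is analogous to \rname{PTr-App} but simpler: the head \(a\) is never a redex, so we only descend into the premise carrying the \(\dt\)-redex.

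The main obstacle I anticipate is verifying the measure-decrease invariant in the top-level \rname{PTr-App} \(\beta\)-case. Here we must check that substituting \(t_1\) (with \(\eorder(t_1) < n\)) into \(t'\) via Lemma~\ref{lem:substParys} does not introduce new order-\(n\) applications in \(\dt'\); this is exactly guaranteed by the \(\ac{n}{\dt'}=0\) clause in Lemma~\ref{lem:SRandProg}, together with the fact that \(\eorder(t_1) < n\) forces any substituted subderivation \(\dt_i\) (on \(t_1\)) to have \(\ac{n}{\dt_i}=0\). Once this invariant is secured, the remaining bookkeeping---that \(s \reds v\) in the output term (a single \(\beta\)-step followed by zero or more extra steps provided by Lemma~\ref{lem:SRandProg}), that \(\PTE \subPTE \PTE'\) (environments may only gain bindings with empty marker sets through the weakening step embedded in Lemma~\ref{lem:SRandProg}), and that the order restriction on \(\PTE'\) is preserved---follows routinely from the corresponding clauses of Lemmas~\ref{lem:substParys} and~\ref{lem:SRandProg}.
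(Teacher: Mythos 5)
Your overall strategy is the same as the paper's: induction on \(\dt\) with case analysis on the last rule, following the definition of \(\E{\dt}\) and \(r^{\dt}\), and invoking Lemma~\ref{lem:SRandProg} for the top-level order-\(n\) \(\beta\)-redex. Most of your cases are fine. However, there is a genuine gap in your treatment of the \rname{PTr-App} subcase where the \(\dt\)-redex lies inside the argument, i.e.\ \(\ac{n}{\dt_{i_0}}>0\) for some argument premise \(i_0\in\set{1,\ldots,k}\). You claim that after applying the induction hypothesis to \(\dt_{i_0}\) ``the outer \rname{PTr-App} is reassembled; the side conditions \ldots{} are preserved because only a single premise is changed.'' This is not correct: in \rname{PTr-App} \emph{all} \(k\) argument premises \(\dt_1,\ldots,\dt_k\) are derivations for the \emph{same} term \(t_1\) (at different intersection types), and the reduction rewrites the single syntactic occurrence of \(t_1\) in \(t_0\,t_1\) to \(u_1\). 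Hence every premise \(\dt_i\) with \(i\neq i_0\) must also be converted into a derivation for \(u_1\); leaving them as derivations of \(t_1\) yields an ill-formed instance of \rname{PTr-App} whose conclusion concerns \(t_0\,u_1\). This is exactly why the paper pairs the progress lemma with the separate subject reduction lemma (Lemma~\ref{lem:subject-reduction-for-induction}): the IH (progress) is applied only to \(\dt_{i_0}\), giving the strict decrease, while subject reduction is applied to each \(\dt_i\), \(i\neq i_0\), transporting it along \(t_1\dred{\dt_{i_0}}u_1\) with only a \emph{non-strict} measure bound \(\lo\); the combination still yields \((\ac{n}{\dt'},\ntc{\dt'})\slo(\ac{n}{\dt},\ntc{\dt})\). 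Your proposal never mentions this step, and without it the case fails. (Note that \rname{PTr-Const} does not have this problem, since its premises type pairwise distinct subterms.)

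Two smaller inaccuracies: in the subcase where all \(\ac{n}{\dt_i}=0\) and \(t_0\) is a non-terminal, you appeal to ``the induction hypothesis on \(\dt_0\),'' but the IH is not applicable there since its hypothesis \((\ac{n}{\dt_0},\ntc{\dt_0})\loo(1,0)\) forces \(\ac{n}{\dt_0}\geq 1\); one instead unfolds the embedded \rname{PTr-NT} step directly (which is easy, but is a different argument). Also, taking ``the largest such \(i\)'' over \(i\in\set{0,1,\ldots,k}\) should be stated so that argument premises take priority over \(\dt_0\), matching the definition of \(\E{\dt}\); otherwise the constructed reduction need not be the \(\dt\)-reduction \(t\dred{\dt}u\) demanded by the statement.
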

\begin{proof}
The proof proceeds by induction on \(\dt\)
with case analysis on the last rule used for \(\dt\). 
Since the other cases are straightforward, we discuss only
the case for \rname{PTr-App}.

Let \(t=t_0t_1\), and now we have:
\begin{align}&
\dt_0 \pdt \PTE_0\PM t_0 :(F_0,M_0,\set{\pty_1,\ldots,\pty_k} \to \prty)\hascost{c_0}\tr s_0
\label{eq:PRGsubRedApp2}
\\&
\pty_i = (F_i,M_i,\prty_i)  \text{ for each \(i \in \set{1,\dots,k}\)}
\label{eq:PRGsubRedApp3}
\\&
\dt_i \pdt \PTE_{i}\PM \term_1:(F_{i},M_{i},\prty_i)\hascost{c_{i}}\tr s_{i}
\mbox{ for each $i\in\set{1,\ldots,k}$}
\label{eq:PRGsubRedApp4}
\\&
  M = M_0\uplus (\textstyle\biguplus_{i\in\set{1,\ldots,k}} M_{i})
\label{eq:PRGsubRedApp12}
\\&
  \Comp_{n}(\set{(F_0,c_0)}\mcup \set{(\emptyset,c_{i})\mid i\in\set{1,\ldots,k}}, M)
   = (F, c) 
\label{eq:PRGsubRedApp7}
\\&
\pty_1 < \cdots < \pty_k
\label{eq:PRGsubRedApp8}
\\&
\PTE = \PTE_0\PTEcup (\textstyle\sum_{i\in\set{1,\ldots,k}} \PTE_{i})
\label{eq:PRGsubRedApp9}
\\&
s = s_0\, s_{1}\,\cdots\,s_{k}
\label{eq:PRGsubRedApp10}
\end{align}
We further perform case analysis on the subderivations \(\dt_i\):

\begin{itemize}
\item
Case where \(\ac{n}{\dt_i} > 0\) for some \(i=1,\dots,k\):
Let \(i_0\) be the largest such \(i\).
By induction hypothesis for~\eqref{eq:PRGsubRedApp4} where \(i=i_0\), there exist 
\(\PTE'_{i_0}\), \(u_1\), \(v_{i_0}\), and \(\dt'_{i_0}\) such that
\begin{align*}&
\PTE_{i_0} \subPTE \PTE'_{i_0}
\\&
t_1 \dred{\dt_{i_0}} u_1
\qquad
s_{i_0} \reds v_{i_0}
\\&
\dt'_{i_0} \pdt \PTE'_{i_0} \PM u_1: (F_{i_0},M_{i_0},\prty_{i_0}) \hascost c_{i_0} \tr v_{i_0}
\\&
\order(\pty') < n \text{ for any } (x' \COL \pty') \in\PTE'_{i_0}
\\&
(\ac{n}{\dt'_{i_0}},\ntc{\dt'_{i_0}}) \slo (\ac{n}{\dt_{i_0}},\ntc{\dt_{i_0}}).
\end{align*}
For any \(i \neq i_0\), 
by Lemma~\ref{lem:subject-reduction-for-induction} applied to~\eqref{eq:PRGsubRedApp4},
we have
\(\PTE'_{i}\), \(v_{i}\), and \(\dt'_{i}\) such that
\begin{align*}&
\PTE_{i} \subPTE \PTE'_{i}
\\&
s_{i} \reds v_{i}
\\&
\dt'_{i} \pdt \PTE'_{i} \PM u_1: (F_{i},M_{i},\prty_{i}) \hascost c_{i} \tr v_{i}
\\&
\order(\pty') < n \text{ for any } (x' \COL \pty') \in\PTE'_{i}
\\&
(\ac{n}{\dt'_{i}},\ntc{\dt'_{i}}) \lo (\ac{n}{\dt_{i}},\ntc{\dt_{i}}).
\end{align*}
Then we have a derivation:
\[
\dt' \pdt
\PTE_0\PTEcup (\textstyle\sum_{i\in\set{1,\ldots,k}} \PTE'_{i})
\PM t_0\,u_1: (F,M,\prty) \hascost c\tr s_0\, v_{1}\,\cdots\,v_{k}.
\]
For \(\PTE' \defe \PTE_0\PTEcup (\textstyle\sum_{i\in\set{1,\ldots,k}} \PTE'_{i})\),
it is clear that
\(\PTE \subPTE \PTE'\)
and
\(\order(\pty') < n \text{ for any } (x' \COL \pty') \in\PTE'\).
Also, \(\dt'\) satisfies the required condition:
if \(\ac{n}{\dt'_{i}} < \ac{n}{\dt_{i}}\) for some \(i=1,\dots,k\),
then \(\ac{n}{\dt'} < \ac{n}{\dt}\),
and if \(\ac{n}{\dt'_{i}} = \ac{n}{\dt_{i}}\) for all \(i=1,\dots,k\),
then \(\ac{n}{\dt'} = \ac{n}{\dt}\) and \(\ntc{\dt'} < \ntc{\dt}\).
Finally, we have
\(t=t_0\,t_1 \dred{\dt} t_0\,u_1\) (by the definition of \(\E{\dt}\))
and \(s = s_0\, s_{1}\,\cdots\,s_{k} \reds s_0\, v_{1}\,\cdots\,v_{k}\).

\item Case where \(\ac{n}{\dt_i} = 0\) for any \(i=1,\dots,k\) and \(\ac{n}{\dt_0} > 0\):
Similar to (and easier than) the previous case.

\item Case where \(\ac{n}{\dt_i}=0\) for any \(i = 0,\dots,k\):
Now \(t_0\) is a non-terminal or \(\lambda\)-abstraction.
The case of non-terminal is straightforward; we consider the case of \(\lambda\)-abstraction.
Let \(t_0 = \lambda x . t'\).
Since \(\ac{n}{\dt} \ge 1\) and \(\ac{n}{\dt_i}=0\) for any \(i = 0,\dots,k\), 
we have \(\eorder(\lambda x.t')=n\).
Then the result follows from Lemma~\ref{lem:SRandProg}, with \(u \defe [t_1/x]t'\).

\end{itemize}
\end{proof}

\begin{lemma}
\label{lem:comp}
If \(\Comp_n(\set{(F_1,c_1),\ldots,(F_k,c_k)},M)=(F,c)\)
and \\\(\Comp_{n-1}(\set{(F_1\restrict{<n-1},c'_1),\ldots,(F_k\restrict{<n-1},c'_k)},M\restrict{<n-1})=(F',c')\),
then \(F\restrict{<n-1}=F'\).
\end{lemma}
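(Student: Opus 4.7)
The plan is to prove the equality $F\restrict{<n-1}=F'$ by showing something stronger: the auxiliary sequences $f_\ell$ and $f'_\ell$ computed inside $\Comp_n$ agree with the corresponding $\widetilde{f}_\ell$ and $\widetilde{f}'_\ell$ computed inside $\Comp_{n-1}$ for the relevant range of indices. Once this agreement is established, the result is immediate from the defining formula for $F$, since the counters $c_i$ and $c'_i$ play no role in the formula for $F$.

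First I would unfold both computations. For $\Comp_n(\{(F_i,c_i)\}_{i},M)=(F,c)$, let $f'_\ell$ and $f_\ell$ denote the intermediate quantities as in the definition. For $\Comp_{n-1}(\{(F_i\restrict{<n-1},c'_i)\}_{i},M\restrict{<n-1})=(F',c')$, let $\widetilde{f}'_\ell$ and $\widetilde{f}_\ell$ denote the corresponding quantities. I would then prove by induction on $\ell$ that
\[
f'_\ell=\widetilde{f}'_\ell \text{ for } 0\le \ell \le n-1, \qquad f_\ell=\widetilde{f}_\ell \text{ for } 0\le \ell \le n-2.
\]
The base case is trivial since $f'_0=\widetilde{f}'_0=0$. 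For the step from $f'_\ell$ to $f_\ell$ (with $\ell\le n-2$), observe that $|\{i\mid \ell\in F_i\}|=|\{i\mid \ell\in F_i\restrict{<n-1}\}|$ because $\ell<n-1$, so adding this count to the (inductively equal) $f'_\ell$ preserves equality. For the step from $f_{\ell-1}$ to $f'_\ell$ (with $\ell\le n-1$), the case split ``$\ell-1\in M$ vs.\ $\ell-1\notin M$'' agrees with the case split ``$\ell-1\in M\restrict{<n-1}$ vs.\ $\ell-1\notin M\restrict{<n-1}$'' because $\ell-1<n-1$, so again equality is preserved.

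Finally, using the definition of $F$ I would conclude
\[
F\restrict{<n-1}=\{\ell\in\{0,\dots,n-2\}\mid f_\ell>0\}\setminus M=\{\ell\in\{0,\dots,n-2\}\mid \widetilde{f}_\ell>0\}\setminus M\restrict{<n-1}=F',
\]
where the second equality uses the induction above together with the fact that for $\ell<n-1$ we have $\ell\in M$ iff $\ell\in M\restrict{<n-1}$. The edge case $n=1$ is handled trivially, since then both $F\restrict{<0}$ and $F'$ are empty.

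I do not expect any real obstacle: the statement is a routine bookkeeping lemma verifying that the recursive definition of $\Comp_n$ respects truncation of its arguments. The only mildly subtle point is getting the index ranges right, in particular noting that we need $f'_\ell=\widetilde{f}'_\ell$ up to $\ell=n-1$ (to justify the step case even though $f'_{n-1}$ itself is not used in forming $F'$), but this follows from the induction as described.
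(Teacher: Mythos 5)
Your proof is correct and takes the same route as the paper, which simply declares the lemma ``Trivial by the definition of $\Comp_n$''; you have just spelled out the bookkeeping induction on $\ell$ showing that the intermediate quantities $f_\ell$, $f'_\ell$ agree with their truncated counterparts below $n-1$. The index ranges and the handling of the $M$ versus $M\restrict{<n-1}$ case split are all right.
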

\begin{proof}
Trivial by the definition of \(\Comp_n\).
\end{proof}

\begin{lemma}[decrease of the order]
\label{lem:subRed2Parys2}
Suppose 
\begin{inparaenum}[(i)]
\item \(n>1\)
\item \(\dt \pdt \PTE \PM t: (F,M,\prty)\hascost c\tr s\)
with \(\ac{n}{\dt} = 0\),
\item \(\eorder(t)<n\)
\item \(\PTE\) contains neither flag \(n-1\) nor marker \(n-1\).
\end{inparaenum}
Then %
there exist \(\dt'\) and \(c'\) such that
\(\dt' \pdt \PTE \PMm{n-1} t: (F\restrict{<n-1},M\restrict{<n-1},\prty) \hascost c'\tr s\) 
with \(c' \ge c + |F \cap \set{n-1}|\). 
Furthermore, the two derivation trees have the same structure (except the labels),
and for each node of \(\dt\) labeled by
\(\PTE_1 \PMm{n} t_1: (F_1,M_1,\prty_1) \hascost c_1\tr s_1\) with \(c_1>0\),
the corresponding node of \(\dt'\) is labeled by
\(\PTE_1 \PMm{n} t_1: (F_1\restrict{<n-1},M_1\restrict{<n-1},\prty_1) 
\hascost c'_1\tr s_1\) with \(c'_1>0\).
\end{lemma}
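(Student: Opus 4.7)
The plan is to proceed by induction on the derivation \(\dt\), transforming it node-by-node into \(\dt'\) by restricting every flag/marker set occurring in a judgment to values below \(n-1\), while accumulating any ``lost'' flag-\((n-1)\) contributions into the flag counter. A useful preliminary observation is that at every node of \(\dt\), the local type environment contains neither flag nor marker \(n-1\): this holds at the root by hypothesis, it is preserved by PTr-Weak/Mark/Choice/NT (which do not extend the environment), and for PTr-Abs and PTr-App the newly added argument types \(\pty_i\in\pity\) are argument positions of a function type of order \(\ell\le\eorder(t_0)<n\), so by the well-formedness condition their flag/marker sets lie in \(\{0,\ldots,\ell-1\}\subseteq\{0,\ldots,n-2\}\). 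Thus the environment never needs to be modified when we descend to \(\PMm{n-1}\); only the conclusion's \(F,M\) need restriction. The inductive claim strengthens the statement to every node: if the subderivation has conclusion \(\PTE_1\PMm{n}t_1:(F_1,M_1,\prty_1)\hascost{c_1}\tr s_1\) with \(\eorder(t_1)<n\) and \(\PTE_1\) free of flag/marker \(n-1\), then we obtain a subderivation with conclusion \(\PTE_1\PMm{n-1}t_1:(F_1\restrict{<n-1},M_1\restrict{<n-1},\prty_1)\hascost{c'_1}\tr s_1\) with \(c'_1\ge c_1+|F_1\cap\{n-1\}|\).

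The cases for PTr-Weak, PTr-Var, PTr-Choice, PTr-Abs, and PTr-NT are routine: the IH applies directly to the premise(s), and the counter inequality propagates unchanged (the conclusion's \(F,M\) equal the premise's in these rules, up to inert additions). The substantive case is PTr-Mark: the premise judgment involves \((F',M',\prty)\), and the IH yields counter \(c''\ge c'+|F'\cap\{n-1\}|\). Reapplying PTr-Mark in \(\PMm{n-1}\) with marker set \(M\restrict{<n-1}\) is valid because \(M\restrict{<n-1}\subseteq\{j\mid\eorder(t)\le j<n-1\}\). By Lemma~\ref{lem:comp}, the new flag set is \(F\restrict{<n-1}\); the counter computation is controlled by tracing the common quantities \(f'_{\ell}\) of \(\Comp_n\) and \(\Comp_{n-1}\): they agree for \(\ell\le n-1\), and \(\Comp_n\) additionally contributes \(f'_n\) (equal to \(f_{n-1}\) if \(n-1\in M\), else \(0\)). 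A short case split on whether \(n-1\in M\) then verifies \(c_{\text{new}}\ge c+|F\cap\{n-1\}|\), using \(|F'\cap\{n-1\}|=f_{n-1}-f'_{n-1}\). The PTr-App case (and analogously PTr-Const) is handled similarly: because \(\ac{n}{\dt}=0\), we have \(\ell=\eorder(t_0)<n\); apply the IH to \(\dt_0\) and each \(\dt_i\), note that argument-type flag sets \(F_i=F'_i\restrict{<\ell}\) cannot contain \(n-1\) (by well-formedness), so \(|F'_i\restrict{\ge\ell}\cap\{n-1\}|=|F'_i\cap\{n-1\}|\); then reassemble with PTr-App in \(\PMm{n-1}\) and verify the counter bound by the same \(f_{n-1}\)-vs-\(f'_n\) accounting.

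The resulting \(\dt'\) is isomorphic to \(\dt\) as a tree, and since \(c'_1\ge c_1\), positive counters remain positive, discharging the ``furthermore'' part of the statement. The main obstacle is the bookkeeping in PTr-Mark (and its replay inside PTr-App): one must carefully relate the cascade \(f'_0,f_0,f'_1,\ldots,f'_n\) computed by \(\Comp_n\) to that computed by \(\Comp_{n-1}\), handling the two cases \(n-1\in M\) and \(n-1\notin M\) separately, and observe that the ``deficit'' \(f'_n\) at the top of \(\Comp_n\) is already paid for in the IH-improved counter via the flag-\((n-1)\) contribution. Once this accounting is made precise, the rest of the proof is a mechanical tree rewriting.
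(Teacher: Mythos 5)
The proposal is correct and follows essentially the same route as the paper: induction on the derivation, restricting all flag/marker sets below \(n-1\), using Lemma~\ref{lem:comp} for the flag set, and verifying the counter bound \(c'\ge c+|F\cap\set{n-1}|\) by relating the cascades \(f_\ell,f'_\ell\) of \(\Comp_n\) and \(\Comp_{n-1}\) with a case split on \(n-1\in M\). The only cosmetic difference is that you foreground \rname{PTr-Mark} where the paper details \rname{PTr-App}/\rname{PTr-Const}, but the accounting is identical and your treatment of the environment invariant and of the argument-type flag sets in the application case matches the paper's.
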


\begin{proof}
The proof proceeds by induction on \(\dt\), with case analysis on the last rule used.
\begin{itemize}
\item Case \rname{PTr-Abs}:
In this case, we have:
\[
\begin{array}{l}
\PTE, x\COL\pity \PMm{n} t_0:(F,M_0,\prty_0)\hascost{c}\tr s_0\\
t = \lambda x.t_0\qquad 
M = M_0\setminus \Markers(\pity) \qquad
\prty = \pity\to\prty_0 \qquad
s = \lambda x_{\prty_1}.\cdots\lambda x_{\prty_k}.s
\end{array}
\]
Since \(\eorder(t_0)<n\), \(\pity\) does not contain \(n-1\) as a flag or a marker.
Therefore, by the induction hypothesis, we have a derivation for
\[ \PTE, x\COL\pity \PMm{n-1} t_0:(F\restrict{<n-1},M_0\restrict{<n-1},\prty_0)\hascost{c'}\tr s_0\]
for some \(c'\) such that \(c'\geq c+|F\cap\set{n-1}|\).
By using \rname{PTr-Abs}, we obtain a derivation for 
\[ \PTE \PMm{n-1} t_0:(F\restrict{<n-1},M\restrict{<n-1},\prty)\hascost{c'}\tr s\]
as required.
\item Case \rname{PTr-App}:
In this case, we have:
\[
\begin{array}{l}
t = t_0t_1\qquad \eorder(t_0)=\ell<n\\
\PTE_0\PM \term_0:(F_0,M_0,\set{(F_1,M_1,\prty_1),\ldots,
(F_k,M_k,\prty_k)}\to \prty)\hascost{c_0}\tr s_0\\
  \PTE_{i}\PM \term_1:(F'_{i},M'_{i},\prty_i)\hascost{c_{i}}\tr s_{i}
\andalso F'_{i}\restrict{<\ell}=F_i\andalso M'_{i}\restrict{<\ell}=M_i
\mbox{ for each $i\in\set{1,\ldots,k}$}\\
  M = M_0\uplus (\biguplus_{i\in\set{1,\ldots,k}} M'_{i})\\
  \Comp_{n}(\set{(F_0,c_0)}\mcup \set{(F'_{i}\restrict{\geq \ell},c_{i})\mid i\in\set{1,\ldots,k}}, M)
   = (F, c)\\
(F_1,M_1,\prty_1)<\cdots< (F_k,M_k,\prty_k)\\
\PTE = \PTE_0\PTEcup (\textstyle\sum_{i\in\set{1,\ldots,k}} \PTE_{i})\\
s = s_0\, s_{1}\,\cdots\,s_{k}
\end{array}
\]
By the induction hypothesis, we have
\[
\begin{array}{l}
\PTE_0\PMm{n-1} \term_0:(F_0\restrict{<n-1},M_0\restrict{<n-1},\set{(F_1,M_1,\prty_1),\ldots,
(F_k,M_k,\prty_k)}\to \prty)\hascost{c'_0}\tr s_0\\
  \PTE_{i}\PMm{n-1} \term_1:(F'_{i}\restrict{<n-1},M'_{i}\restrict{<n-1},\prty_i)\hascost{c'_{i}}\tr s_{i}\\
c'_0 \geq c_0 + |F_0\cap \set{n-1}|\qquad
c'_{i} \geq c_{i}+|F'_{i}\cap \set{n-1}|
\end{array}
\]
Let \((F',c')\) be:
\[ \Comp_{n-1}(
\set{(F_0\restrict{<n-1},c'_0)}\mcup \set{((F'_{i}\restrict{<n-1})\restrict{\geq \ell},c'_{i})\mid i\in\set{1,\ldots,k}}, M\restrict{n-1}).\]
By using \rname{PTr-App}, we obtain a derivation tree for
\[\PTE \PMm{n-1} t:(F',M\restrict{<n-1},\prty)\hascost c'\tr s.\]
By Lemma~\ref{lem:comp}, we have \(F' = F\restrict{<n-1}\). Thus, it remains to show \(c'\geq c+|F\cap\set{n-1}|\), 
as the derivation satisfies the other conditions.
We consider \(f_\ell\) and \(f'_\ell\)
in the computation of 
\(\Comp_{n}(\set{(F_0,c_0)}\mcup \set{(F'_{i}\restrict{\geq \ell},c_{i})\mid i\in\set{1,\ldots,k}}, M)\).
If \(n-1\in M\), then \(n-1\not\in F\).
We have \(f_n' = f_{n-1} = f'_{n-1} + |F_0\cap \set{n-1}| + \textstyle\sum_{i} |F'_{i}\cap \set{n-1}|\), and hence:
\[\begin{array}{l} c = f_n' + c_0+ \textstyle\sum_{i}c_{i}
 = f'_{n-1} + (c_0+|F_0\cap \set{n-1}|) + \textstyle\sum_{i}(c_{i}+|F'_{i}\cap\set{n-1}|)\\
\leq f'_{n-1} + c'_0 + \textstyle\sum_{i} c'_{i} = c'
\end{array}
\]
as required.
If \(n-1\not\in M\), then we have \(c=0\) by Lemma~\ref{lem:marker}.
If \(n-1\not\in F\), we obtain
\( c + |F\cap\set{n-1}| = 0 \leq c'\) immediately.
Otherwise, \(f_{n-1} = f'_{n-1}+
|F_0\cap \set{n-1}| + \textstyle\sum_{i}|F'_{i}\cap\set{n-1}|>0\).
Thus, we have
\[
\begin{array}{l}
 c + |F\cap\set{n-1}| = 1
 \leq f'_{n-1}+|F_0\cap \set{n-1}| + \textstyle\sum_{i}|F'_{i}\cap\set{n-1}|\\
\leq f'_{n-1}+c'_0 + \textstyle\sum_{i} c'_{i} \leq c'
\end{array}
\]
as required.
\item Case \rname{PTr-Const}:
In this case, we have:
\[
\begin{array}{l}
t = a\,t_1\,\cdots\,t_k\qquad s = a\,s_1\,\cdots\,s_k\\
  \PTE_i \PM \term_i:(F_i,M_i,\T)\hascost{c_i}\tr s_i\mbox{ for each $i\in\set{1,\ldots,k}$}\\
  M = M_1\uplus\cdots\uplus M_k\\
  \Comp_{n}(\set{(\set{0},0),(F_1,c_1),\ldots,(F_k,c_k)},M)=(F,c)\\
\PTE = \PTE_1\PTEcup\cdots \PTEcup \PTE_k
\end{array}
\]
By the induction hypothesis, we have:
\[
\begin{array}{l}
  \PTE_i \PMm{n-1} \term_i:(F_i\restrict{<n-1},M_i\restrict{<n-1},\T)\hascost{c'_i}\tr s_i\mbox{ for each $i\in\set{1,\ldots,k}$}\\
c'_i \geq c_i+|F_i\cap\set{n-1}|
\end{array}
\]
Let \(\Comp_{n-1}(\set{(\set{0},0),(F_1\restrict{<n-1},c'_1),\ldots,(F_k\restrict{<n-1},c'_k)},M\restrict{<n-1})=(F',c')\).
By Lemma~\ref{lem:comp}, \(F'=F\restrict{<n-1}\).
Thus, by using \rname{PTr-Const}, we obtain a derivation for
\[ \PTE\PMm{n-1}\term: (F\restrict{<n-1},M\restrict{<n-1},\T)\hascost{c'}\tr s.\]
It remains to show \(c'\geq c+|F\cap\set{n-1}|\), as the derivation satisfies the other conditions.
We consider \(f_\ell\) and \(f'_\ell\)
in the computation of 
 \( \Comp_{n}(\set{(\set{0},0),(F_1,c_1),\ldots,(F_k,c_k)},M)=(F,c)\).
If \(n-1\in M\), then \(n-1\not\in F\).
Thus, we have:
\[
\begin{array}{l}
c+|F\cap\set{n-1}| = c = f'_n + c_1+\cdots + c_k =
f_{n-1} + c_1+\cdots + c_k \\
=
(f'_{n-1} + |F_1\cap \set{n-1}| + \cdots +|F_k\cap\set{n-1}|)+c_1+\cdots + c_k \\
= 
f'_{n-1} + (c_1+|F_1\cap \set{n-1}|) + \cdots+(c_k+ |F_k\cap\set{n-1}|)
\leq 
f'_{n-1} + c_1'+\cdots+c'_k = c'
\end{array}
\]
If \(n-1\not\in M\), then by Lemma~\ref{lem:marker},
it must be the case that \(c=0\).
If \(n-1\not\in F\), then we obtain
\(c+|F\cap\set{n-1}|=0 \leq c'\) immediately.
Otherwise, \(f_{n-1} = f'_{n-1}+|F_1\cap\set{n-1}|+\cdots+|F_k\cap\set{n-1}|>0\).
Thus, we have:
\[
\begin{array}{l}
  c+|F\cap\set{n-1}| =1 \leq
  f'_{n-1}+|F_1\cap\set{n-1}|+\cdots+|F_k\cap\set{n-1}|\\
\leq f'_{n-1}+c_1'+\cdots + c_k' \leq c'
\end{array}
\]
as required.
\item Case \rname{PTr-NT}:
In this case, the result follows immediately from the induction hypothesis.
\end{itemize}

\end{proof}

The following two lemmas are corollaries of Lemmas~\ref{lem:progress} and \ref{lem:subRed2Parys2}
respectively.
\begin{lemma}
\label{lem:subRedParys}
If \(n>0\),\, \(\dt \pdt \emptyset \PM t: (\emptyset,\set{0,\ldots,n-1},\T) \hascost c\tr s\), and
\(\ac{n}{\dt}>0\),
then there exist \(u\), \(v\), and \(\dt'\) %
such that
\(t \oreds{n} u\),\,
\(s \reds v\),\,
\(\dt' \pdt \emptyset \PM u: (\emptyset,\set{0,\ldots,n-1},\T) \hascost c\tr v\),\, and
\(\ac{n}{\dt'} < \ac{n}{\dt}\).
\end{lemma}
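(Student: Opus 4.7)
The plan is to iterate Lemma~\ref{lem:progress}. Starting from \(\dt_0 \defe \dt\), \(t_0 \defe t\), \(s_0 \defe s\), as long as \(\ac{n}{\dt_i} \ge 1\) the hypothesis \((\ac{n}{\dt_i},\ntc{\dt_i}) \loo (1,0)\) of Lemma~\ref{lem:progress} is met. (The other hypotheses are preserved: the type environment stays \(\subPTE \emptyset\), which forces it to remain \(\emptyset\); the order-on-\(\PTE\) constraint is then vacuous; the term type \((\emptyset,\set{0,\dots,n-1},\T)\) and counter \(c\) are explicitly preserved by the conclusion of the lemma.) Hence Lemma~\ref{lem:progress} produces \(t_{i+1}\), \(s_{i+1}\), and \(\dt_{i+1}\) with \(t_i \dred{\dt_i} t_{i+1}\), \(s_i \reds s_{i+1}\), \(\dt_{i+1} \pdt \emptyset \PM t_{i+1} : (\emptyset,\set{0,\dots,n-1},\T)\hascost c\tr s_{i+1}\), and \((\ac{n}{\dt_{i+1}},\ntc{\dt_{i+1}}) \slo (\ac{n}{\dt_i},\ntc{\dt_i})\).

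Next I would observe that a single \(\dred{\dt_i}\)-step is always an \(\ored{n}\)-step: by the definition of \(\dt_i\)-reduction, the contracted redex is either a non-terminal \(A\) (which is always an \(\ored{n}\)-redex) or a beta-redex \((\lambda x.t') t''\) whose abstraction is identified by the construction of \(\E{\dt_i}\) to have external order exactly \(n\). Thus by composition \(t \oreds{n} t_k\) and \(s \reds s_k\) for every \(k\).

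To get the strict decrease of \(\ac{n}\), I would choose \(k\) to be the least index with \(\ac{n}{\dt_k} < \ac{n}{\dt_0}\). Such \(k\) exists: on the initial segment \(\dt_0,\dt_1,\dots\) on which \(\ac{n}\) stays equal to \(\ac{n}{\dt_0}\), the second component \(\ntc{\dt_i}\) must strictly decrease at every step (by the definition of the lexicographic order \(\slo\)); since \(\ntc{\dt_i} \ge 0\), this can happen only finitely often, so eventually \(\ac{n}\) must drop strictly. Setting \(\dt' \defe \dt_k\), \(u \defe t_k\), \(v \defe s_k\) yields all the required conclusions.

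There is no real obstacle here, since Lemmas~\ref{lem:progress} and \ref{lem:subject-reduction-for-induction} already carry the essential content: the only nontrivial point is the termination of the lex-order iteration, which is a standard well-foundedness argument. The slight subtlety worth double-checking in a full write-up is just that each application of Lemma~\ref{lem:progress} is legal throughout the iteration, i.e., that the environment remains \(\emptyset\) (forced by \(\emptyset \subPTE \PTE'\)) and the type and flag counter remain \((\emptyset,\set{0,\dots,n-1},\T)\) and \(c\) respectively, which is immediate from the statement of that lemma.
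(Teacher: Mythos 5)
Your proof is correct and follows exactly the route the paper intends: the paper states this lemma as a corollary of Lemma~\ref{lem:progress}, and your argument simply fills in the routine details (the environment stays empty because \(\emptyset \subPTE \PTE'\) forces \(\PTE'=\emptyset\), each \(\dred{\dt_i}\)-step is an \(\ored{n}\)-step by construction of the \(\dt\)-redex, and well-foundedness of the lexicographic measure guarantees \(\ac{n}\) eventually drops strictly). No gaps.
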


\begin{lemma}
\label{lem:subRed2Parys}
If \(n>1\),\,
\(\dt \pdt \emptyset \PM t: (\emptyset,\set{0,\ldots,n-1},\T) \hascost c\tr s\),\,
and \(\ac{n}{\dt} = 0\),
then %
\(\emptyset \p_{n-1} t: (\emptyset,\set{0,\ldots,n-2},\T) \hascost c'\tr s\) is derived for some \(c' \ge c\).
\end{lemma}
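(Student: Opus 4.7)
The plan is to apply Lemma~\ref{lem:subRed2Parys2} directly, by instantiating its general parameters with the particular type environment and type appearing in the statement. The hypotheses of Lemma~\ref{lem:subRed2Parys2} align almost verbatim with the situation here, so the task reduces to bookkeeping the effect of the operation \((\cdot)\restrict{<n-1}\) on the specific sets \(F\) and \(M\) at hand.

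Concretely, I would take \(\PTE = \emptyset\), \(F = \emptyset\), \(M = \set{0,\ldots,n-1}\), and \(\prty = \T\). The condition that \(\PTE\) contains neither flag \(n-1\) nor marker \(n-1\) is then trivially satisfied since \(\PTE\) is empty. The condition \(\eorder(t) < n\) holds because \(t\) has type \(\T\), so \(\eorder(t)=0\) and we are given \(n>1\). The condition \(\ac{n}{\dt}=0\) is exactly one of the hypotheses. Applying Lemma~\ref{lem:subRed2Parys2} then yields a derivation
\[
\dt' \pdt \emptyset \PMm{n-1} t : (F\restrict{<n-1},\, M\restrict{<n-1},\, \T) \hascost c' \tr s
\]
for some \(c' \ge c + |F \cap \set{n-1}|\).

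It remains to check that the restricted sets collapse to the desired ones. Since \(F = \emptyset\), we have \(F\restrict{<n-1} = \emptyset\) and \(|F \cap \set{n-1}| = 0\), so the bound becomes \(c' \ge c\). Since \(M = \set{0,\ldots,n-1}\), we have \(M\restrict{<n-1} = \set{0,\ldots,n-2\}\). Thus \(\dt'\) derives \(\emptyset \PMm{n-1} t : (\emptyset, \set{0,\ldots,n-2}, \T) \hascost c' \tr s\) with \(c' \ge c\), which is exactly the required statement.

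There is essentially no obstacle here: all the substantive content is packaged into Lemma~\ref{lem:subRed2Parys2}, whose proof already carefully propagates the flag-counter bookkeeping through each typing rule (in particular using the fact that flags at level \(n-1\) that are erased by \((\cdot)\restrict{<n-1}\) are absorbed into the counter via the \(\Comp_n\) computation). The only thing to verify is the correct specialisation of parameters, which is immediate.
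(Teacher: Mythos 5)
Your proof is correct and matches the paper's approach exactly: the paper itself states that Lemma~\ref{lem:subRed2Parys} is a corollary of Lemma~\ref{lem:subRed2Parys2}, obtained by the very specialisation you describe ($\PTE=\emptyset$, $F=\emptyset$, $M=\set{0,\ldots,n-1}$, $\prty=\T$), and your verification of the hypotheses and of the collapse of the restricted sets is accurate.
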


\begin{lemma}
\label{lem:soundTreeParys}
If \(\dt \pdt \emptyset \p_1 t: (\emptyset,\set{0},\T) \hascost c\tr s\)
and \(\ac{1}{\dt} = 0\),
then \(\tree(s)\in \Lang(t)\),
with \(c \leq |\tree(s)|\).
\end{lemma}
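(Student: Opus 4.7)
I would prove the lemma by induction on $\dt$, strengthening the statement to track the contribution of the flag set $F$. Writing $[P] = 1$ if $P$ holds and $0$ otherwise, the strengthened claim is: for every $\dt \pdt \emptyset \p_1 t : (F, M, \T) \hascost c \tr s$ with $\ac{1}{\dt} = 0$, (a) $s$ is a $\TERMS$-ranked tree (so $\tree(s) = s$), (b) $t \reds s$, and (c) $c + [0 \in F] \leq |s|$. The lemma is then the special case $F = \emptyset$.

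The key preliminary observation is that $\ac{1}{\dt} = 0$ eliminates most rules. At $n = 1$, every occurrence of \rname{PTr-App} has $\ell = \eorder(t_0) = 1$ and hence contributes to $\ac{1}{\dt}$; therefore \rname{PTr-App} is absent from $\dt$. Since the only way to eliminate a function type produced by \rname{PTr-Abs} is \rname{PTr-App}, and the root type is ground, it follows that \rname{PTr-Abs} is absent as well; hence the type environment stays empty throughout the derivation, and \rname{PTr-Var} and \rname{PTr-Weak} also never fire. Only \rname{PTr-Mark}, \rname{PTr-Choice}, \rname{PTr-Const}, and \rname{PTr-NT} can appear, from which it follows by inspection that $s$ is built entirely from ranked constants and is already a finite tree.

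The induction on $\dt$ then proceeds by case analysis. The rules \rname{PTr-Choice} and \rname{PTr-NT} each contribute a single reduction step ($t_1 + t_2 \red t_i$ for the selected branch; $A \red t_0$ for the grammar rule $A = t_0$) and then invoke the IH. The substantive cases are \rname{PTr-Const} and \rname{PTr-Mark}, handled by unfolding the definition of $\Comp_1$. For \rname{PTr-Const} with $t = a\,t_1 \cdots t_k$ and $s = a\,s_1 \cdots s_k$, setting $f_0 = 1 + \sum_i [0 \in F_i]$, one checks that if $0 \in M$ then $c = f_0 + \sum_i c_i$ and $F = \emptyset$, while if $0 \notin M$ then $c = \sum_i c_i$ and $F = \set{0}$ (using $f_0 \geq 1$). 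Either way, the IH gives $c + [0 \in F] \leq 1 + \sum_i(c_i + [0 \in F_i]) \leq 1 + \sum_i |s_i| = |s|$. The \rname{PTr-Mark} case is an analogous short computation on $\Comp_1(\set{(F', c')}, M')$.

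The anticipated main obstacle is identifying the correct strengthening of the induction hypothesis: the naive statement $c \leq |s|$ is not preserved by \rname{PTr-Const} or \rname{PTr-Mark} in isolation, because a flag $0 \in F$ represents an order-$0$ tree node that has been ``seen'' but not yet cashed in as a unit of $c$, and a marker is exactly the device that converts such a pending flag into a count. The invariant $c + [0 \in F] \leq |s|$ internalises this deferred accounting as a potential function, after which all cases reduce to mechanical unfoldings of the definitions.
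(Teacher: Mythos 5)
Your proof is correct and follows the same basic route as the paper's: induction on the derivation, the observation that \(\ac{1}{\dt}=0\) together with the empty environment and the ground result type rules out \rname{PTr-App}, \rname{PTr-Abs}, \rname{PTr-Var} and \rname{PTr-Weak}, and a direct unfolding of \(\Comp_1\) in the \rname{PTr-Mark} and \rname{PTr-Const} cases. Where you diverge is in how the induction hypothesis is organized. The paper keeps the statement restricted to judgments of the form \((\emptyset,\set{0},\T)\) — so the induction only follows the spine of the derivation carrying the order-\(0\) marker, which by the disjointness of marker sets passes through exactly one premise of each \rname{PTr-Const} — and dispatches all unmarked premises with a separate auxiliary fact: any judgment \((F,\emptyset,\T)\) derived without \rname{PTr-App} has \(c=0\) (via Lemma~\ref{lem:marker}) and still satisfies \(\tree(s)\in\Lang(t)\). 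You instead generalize to arbitrary \((F,M,\T)\) and carry the potential \(c+[0\in F]\le|s|\) uniformly through every premise; your pending-flag term \([0\in F_i]\) plays exactly the role of the paper's \(|\set{i\mid 0\in F_i}|\) summand in its \rname{PTr-Const} computation, so the two accountings coincide, but yours avoids the marked/unmarked case split. Both close; your invariant is arguably the cleaner statement of why the arithmetic works, and your identification of that strengthening as the crux is accurate.
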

\begin{proof}
The proof proceeds by induction on \(\dt\) and
case analysis on the rule used last for \(\dt\).
Below we use the following fact:
\begin{equation}
\label{eq:soundTreeParysaux}
\begin{aligned}
&\text{
if \(\emptyset\p_1 \term:(F,\eset,\T)\hascost{c}\tr s\) is derived without \rname{PTr-App}
}\\
&\text{
then \(c =0\) and \(\tree(s) \in \Lang(t)\),
}
\end{aligned}
\end{equation}
where \(c=0\) follows from Lemma~\ref{lem:marker} and
\(\tree(s) \in \Lang(t)\) can be shown by straightforward induction on the derivation.

\begin{itemize}
\item Case of \rname{PTr-Weak}: Trivial.

\item Case of \rname{PTr-Mark}: We have
\infrule{
\emptyset\p_1 \term:(F',\set{0}\setminus M,\T)\hascost{c'}\tr s
\\
M\subseteq \set{0}
\andalso
\Comp_{1}(\set{(F',c')},\set{0})= (\emptyset,c)
}{
\emptyset\p_1\term:(\emptyset, \set{0},\T )\hascost{c}\tr s
}

If \(M=\eset\), then \(F'=\eset\) and \(c'=c\). Hence the required properties follow from the induction hypothesis.

If \(M\neq \eset\), then we have 
\(\emptyset\p_1 \term:(F',\eset,\T)\hascost{c'}\tr s\).
By~\eqref{eq:soundTreeParysaux}, we have \(\tree(s) \in \Lang(t)\) and \(c' = 0\).
Since \(c \le 1\) by \(\Comp_{1}(\set{(F',c')},\set{0})= (\emptyset,c)\),
we have \(c \le |\tree(s)|\).

\item Case of \rname{PTr-Var}: This case does not happen since the environment is empty.

\item Case of \rname{PTr-Choice}: Clear by induction hypothesis.

\item Case of \rname{PTr-Abs}: This case does not happen since the simple type of \(t\) is \(\T\).

\item Case of \rname{PTr-App}: This case does not happen by the assumption. 

\item Case of \rname{PTr-Const}: We have:
\infrule{\arity(a)=k \\ %
  \eset \p_1 \term_i:(F_i,M_i,\T)\hascost{c_i}\tr s_i\mbox{ for each $i\in\set{1,\ldots,k}$}\\
  \set{0} = M_1\uplus\cdots\uplus M_k\\
  \Comp_{n}(\set{(\set{0},0),(F_1,c_1),\ldots,(F_k,c_k)},\set{0})=(\eset,c)
  }
{\eset \p_1 a\,\term_1\,\cdots\,\term_k:
          (\eset,\set{0},\T)\hascost c\tr a\,s_1\,\cdots\,s_k}
Now there exists \(i_0\) such that
\(M_{i_0} = \set{0}\) and \(M_i = \eset\) for \(i\neq i_0\).
By~\eqref{eq:soundTreeParysaux}, for \(i \neq i_0\), \(c_i = 0\) and \(\tree(s_i) \in \Lang(t_i)\).
Since \(F_{i_0} = \eset\), by induction hypothesis, \(c_{i_0} \le |\tree(s_{i_0})|\) and \(\tree(s_{i_0}) \in \Lang(t_{i_0})\).
By \(\Comp_{n}(\set{(\set{0},0),(F_1,c_1),\ldots,(F_k,c_k)},\set{0})=(\eset,c)\), we have
\[
c = 1 + \big|\set{i\le k| 0 \in F_i}\big| + c_1 +\cdots+ c_k \le 
1 + |\tree(s_1)| + \cdots + |\tree(s_k)| 
= |\tree(a\,s_1\cdots s_k)|
\]
and also we have \(\tree(a\,s_1\,\cdots\,s_k) \in \Lang(a\,\term_1\,\cdots\,\term_k)\), as required.

\item Case of \rname{PTr-NT}: Clear by induction hypothesis.
\end{itemize}
\end{proof}

\begin{proof}
[Proof of Lemma~\ref{lem:soundness-tr}]
If \(n>0\),
since \(\emptyset \PM S: (\emptyset,\set{0,\ldots,n-1},\T) \hascost c\tr s\),
by using Lemma~\ref{lem:subRedParys} repeatedly, we have
\begin{align*}&
S \oreds{n} t_{n-1}
\qquad
s \reds s_{n-1}
\\&
\dt_n \pdt \emptyset \PM t_{n-1}: (\emptyset,\set{0,\ldots,n-1},\T) \hascost c\tr s_{n-1}
\\&
\ac{n}{\dt_n} = 0.
\end{align*}
If \(n>1\),
by Lemma~\ref{lem:subRed2Parys} we have \(c_{n-1} \ge c\) and
\[
\emptyset \PMm{n-1} t_{n-1}: (\emptyset,\set{0,\ldots,n-2},\T) \hascost c_{n-1} \tr s_{n-1}.
\]
By repeating this procedure,
we obtain
\begin{align*}&
S \oreds{n} t_{n-1} \oreds{n-1} \cdots \oreds{1} t_0
\\&
s \reds s_{n-1} \reds \cdots \reds s_0
\\&
c_1 \ge \dots \ge c_{n-1} \ge c
\\&
\dt_1 \pdt \emptyset \PMm{1} t_{0}: (\emptyset,\set{0},\T) \hascost c_1 \tr s_{0}
\\&
\ac{1}{\dt_1} = 0.
\end{align*}
By Lemma~\ref{lem:soundTreeParys}, we have
\[
\tree(s_0) \in \Lang(t_0)
\qquad
c_1 \le |\tree(s_0)|
\]
and hence
\[
\tree(s) = \tree(s_0) \in \Lang(t_0) \subseteq \Lang(S) = \Lang(\GRAM)
\qquad
c \le c_1 \le |\tree(s_0)| = |\tree(s)|
\]
as required.
\end{proof} %
\subsection{Proof of Lemma~\ref{lem:parys5}}
\label{sec:parys-trans2}

We now discuss how to modify the argument in the previous subsections 
to obtain a triple \((G,H,u)\) that satisfies the requirement of Lemma~\ref{lem:parys5}.
Let \(\GRAM\) be an order-\(n\), direction-annotated (i.e., each occurrence of a terminal symbol is
annotated with a direction) tree grammar.

We first modify typing (or, type-based transformation) rules.
In the type system of Section~\ref{sec:parys-type} (and the original type system of 
Parys~\cite{Parys16ITRS}),
the size of a tree (= the number of order-0 flags) is estimated via
the number of order-1 flags placed on the path from the root of the derivation to the
(unique occurrence of) order-0 marker. Thus, for a direction-annotated grammar,
by ensuring that the order-0 marker can be placed only at the node \(\directedT{0}{a}\,T_1\,\cdots\,T_k\)
reached following the directions, we can estimate
 the length of the path following the directions, instead of the size of the whole tree.

Based on the intuition above, we modify the rules \rname{PTr-Mark} and \rname{PTr-Const}
as follows.

 \infrule[PDT-Mark]{\PTE\PMD \term:(F',M',\prty)\hascost{c'}\tr s\\
M\subseteq \set{j \mid \eorder(\term)\leq j<n}\andalso   \Comp_n(\set{(F',c')},M\uplus M')= (F,c)\\
\mbox{if \(0\in M\), then \(t\) is of the form \(\directedT{0}{a}\,\term_1\,\cdots\,\term_k\)}
}
        {\PTE\PMD\term:(F, M\uplus M',\prty )\hascost{c}\tr s}

\infrule[PDT-Const]{\arity(a)=k \\ %
  \PTE_i \PMD \term_i:(F_i,M_i,\T)\hascost{c_i}\tr s_i\mbox{ for each $i\in\set{1,\ldots,k}$}\\
  M = M_1\uplus\cdots\uplus M_k\andalso
  0\not\in \bigcup_{j\in \set{1,\ldots,k}\setminus\set{ i}}M_j\\
  \Comp_n(\set{(\set{0},0),(F_1,c_1),\ldots,(F_k,c_k)},M)=(F,c)
  }
        {\PTE_1\PTEcup\cdots \PTEcup \PTE_k \PMD \directedT{i}{a}\,\term_1\,\cdots\,\term_k:
          (F,M,\T)\hascost c\tr a\,s_1\,\cdots\,s_k}
The other rules are unchanged (except \(\PM\) is replaced by \(\PMD\)); we write \rname{PDT-X} for
the rule obtained from \rname{PTr-X}, for each \textsc{X}.

The following are variations of Theorem~\ref{th:completeness-tr}
and Lemma~\ref{lem:soundness-tr}.
\begin{lemma}
\label{th:completeness-tr2}
Let \(t\) %
be an order-\(n\) direction-annotated term. 
If \(t\reds \pi\), then
\(\dt\pdt \emptyset \PMD t:(\emptyset,\set{0,\ldots,n-1},\T)\hascost c\tr s\)
for some \(\dt\) and \(c\) such that \(\expn{n}{c}\geq \plen{\pi}\).
Furthermore, \(\toLty{\dt}\) is an admissible derivation 
in the linear type system.
\end{lemma}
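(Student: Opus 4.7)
The plan is to mirror the proof of Theorem~\ref{th:completeness-tr} using three analogous lemmas for the directed judgment \(\PMD\): a base case, a subject-expansion lemma for order-\(n\) reductions, and an order-increase lemma, composed exactly as in the proof of Theorem~\ref{th:completeness-tr} but with the bound \(\expn{n}{c}\geq \plen{\pi}\) in place of \(\expn{n}{c}\geq |\pi|\). Since the structural rules (\rname{PDT-Weak}, \rname{PDT-Var}, \rname{PDT-Choice}, \rname{PDT-Abs}, \rname{PDT-App}, \rname{PDT-NT}) are unchanged, and the subject-expansion and order-increase arguments do not rely on the specific form of \rname{PTr-Const} or \rname{PTr-Mark}, those steps carry over essentially verbatim. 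The essential new content is the base case.

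For the base case I would prove, by induction on a direction-annotated tree \(\pi\), the directed analogs of Lemma~\ref{lem:typing-of-trees}: (a) there is an admissible derivation \(\emptyset \PMDm{1} \pi: (\set{0},\emptyset,\T)\hascost 0\tr \pi\); and (b) there is an admissible derivation \(\dt\pdt \emptyset \PMDm{1} \pi:(\emptyset,\set{0},\T)\hascost c\tr \pi\) with \(2^c \geq \plen{\pi}\). Statement (a) uses only \rname{PDT-Const} with empty marker sets, whose side condition holds vacuously, and the \(\Comp_1\) computation gives \((\set{0},\emptyset,\T)\hascost 0\) regardless of the direction labels. For statement (b), I split on the direction at the root of \(\pi=\directedT{i}{a}\,\pi_1\cdots\pi_k\). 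If \(i=0\) (so \(\plen{\pi}=1\)), I apply \rname{PDT-Const} with all children typed via (a), obtaining \((\set{0},\emptyset,\T)\hascost 0\), then \rname{PDT-Mark} to add marker \(0\); the side condition of \rname{PDT-Mark} is satisfied because \(\pi\) begins with \(\directedT{0}{a}\), and \(\Comp_1\) yields \(c=1\geq \plen{\pi}\). If \(i>0\), I use statement (b) inductively on \(\pi_i\) with cost \(c_i\) such that \(2^{c_i}\geq\plen{\pi_i}\), and statement (a) on the off-path children. Applying \rname{PDT-Const} with \(M_i=\set{0}\) and \(M_j=\emptyset\) for \(j\neq i\) satisfies its side condition, and a direct calculation of \(\Comp_1\) yields \(c=k+c_i\), so \(2^c \geq 2\cdot 2^{c_i} \geq \plen{\pi_i}+1 = \plen{\pi}\) since \(k\geq 1\). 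Admissibility of the linear translation is checked by the same bookkeeping as in the original proof.

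Once the base case is established, I would lift it through the grammar reductions using the direct analogs of Lemmas~\ref{lem:subj-expansion} and~\ref{lem:increase-of-order}. Their proofs are syntactic transformations of the derivation that do not modify existing occurrences of \rname{PTr-Const} or \rname{PTr-Mark}; in particular, the order-increase step never introduces marker \(0\) (only \(n\geq 1\)), and substitution and \(\beta\)-expansion preserve the constant symbols together with their direction annotations. Hence the new side conditions of \rname{PDT-Const} (that \(0\notin M_j\) for \(j\neq i\)) and \rname{PDT-Mark} (that a \(0\)-mark sit only above \(\directedT{0}{a}\cdots\)) are automatically preserved. Composing these lemmas exactly as in the proof of Theorem~\ref{th:completeness-tr} yields \(\expn{n}{c}\geq \plen{\pi}\).

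The hard part will be verifying that the induction in statement (b) meshes with the directional side conditions uniformly: one must track that whenever child \(i\) carries marker \(0\), the off-path children are assigned genuinely marker-free types (so \rname{PDT-Const}'s side condition is met), and that the arithmetic \(2^{k+c_i}\geq \plen{\pi_i}+1\) exploits \(k\geq 1\), which always holds when \(i>0\). Everything else is a mechanical adaptation of the corresponding step in the proof of Theorem~\ref{th:completeness-tr}.
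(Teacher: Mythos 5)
Your proof is correct and follows exactly the route the paper intends: the paper states Lemma~\ref{th:completeness-tr2} only as a ``variation'' of Theorem~\ref{th:completeness-tr} without spelling out the details, and your adaptation --- redoing the base case of Lemma~\ref{lem:typing-of-trees} so that only the on-path child carries the order-0 marker (yielding \(c = k + c_i\) via the \(\Comp_1\) computation and hence \(2^c \ge \plen{\pi}\)), and observing that the de-substitution/subject-expansion and order-increase lemmas never introduce marker \(0\) and so preserve the new side conditions of \rname{PDT-Mark} and \rname{PDT-Const} --- is precisely the intended argument. I see no gaps.
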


\begin{lemma}
\label{lem:soundness-tr2}
Let \(t\) %
be an order-\(n\) direction-annotated term. 
If \(\emptyset \PM t:(\emptyset,\set{0,\ldots,n-1},\T)\hascost c\tr s\), then \(s\) is a \(\stlambda\)-term
of order at most \(n\), and \(t\reds \tree(s)\), %
with \(c \leq \plen{\tree(s)}\).
\end{lemma}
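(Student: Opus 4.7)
The proof of Lemma~\ref{lem:soundness-tr2} follows the same three-step plan as the proof of Lemma~\ref{lem:soundness-tr}, with \(\plen{\tree(s)}\) taking the role of \(|\tree(s)|\). The plan is to establish direction-annotated analogues of (i) the reduction-and-subject-reduction lemma at order \(n\) (Lemma~\ref{lem:subRedParys}), (ii) the decrease-of-order lemma (Lemma~\ref{lem:subRed2Parys}), and (iii) the base-case soundness lemma at order~1 (Lemma~\ref{lem:soundTreeParys}); then assemble them as in the proof of Lemma~\ref{lem:soundness-tr}. Starting from the derivation of \(\emptyset \PMD t: (\emptyset, \set{0,\ldots,n-1}, \T) \hascost c \tr s\), I iteratively apply (i) to eliminate order-\(n\) applications, producing a reduction \(t \oreds{n} t_{n-1}\) and a derivation of the resulting term with the same counter; then apply (ii) to drop from order \(n\) to order~\(1\) while only increasing the counter; and finally invoke (iii) to conclude both \(t_0 \reds \tree(s)\) and \(c \leq \plen{\tree(s)}\).

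For steps (i) and (ii), I would reuse the structure of Lemmas~\ref{lem:substParys}--\ref{lem:subRed2Parys2} essentially verbatim. The substitution, subject-reduction, and progress lemmas transfer cleanly because \(\beta\)-reduction and non-terminal unfolding never inspect the direction annotations on constants, and so the new side conditions on \rname{PDT-Mark} and \rname{PDT-Const} (which only restrict where the order-\(0\) marker may be introduced or propagated) are preserved by every reduction step. One only has to re-check the \rname{PDT-Const} case in each proof, where the side condition \(0 \notin \bigcup_{j \neq i} M_j\) carries over unchanged under substitution and de-substitution. The order-decrease lemma transfers analogously: dropping the top order from every type and marker set does not touch the direction side conditions, which concern only order~\(0\).

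The substantive new work is the direction-annotated base case: if \(\dt \pdt \emptyset \PMDm{1} t : (\emptyset,\{0\},\T) \hascost c \tr s\) with \(\ac{1}{\dt} = 0\), then \(t \reds \tree(s)\) and \(c \leq \plen{\tree(s)}\), proved by induction on \(\dt\). All cases except \rname{PDT-Const} are direct adaptations of the proof of Lemma~\ref{lem:soundTreeParys}. In the \rname{PDT-Const} case on \(\directedT{i}{a}\,t_1\,\cdots\,t_k\) with marker \(\{0\}\), the new side condition \(0 \notin \bigcup_{j \neq i} M_j\) combined with \(0 \in M = M_1 \uplus \cdots \uplus M_k\) forces \(0 \in M_i\), so the order-\(0\) marker propagates strictly into the directed subtree; the induction hypothesis applied to the \(i\)-th premise gives \(c_i \leq \plen{\tree(s_i)}\), and a direction-annotated analogue of Lemma~\ref{lem:markerAux} ensures the other counters \(c_j\) vanish. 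Unpacking \(\Comp_1\) then yields the desired inequality \(c \leq 1 + \plen{\tree(s_i)} = \plen{\tree(\directedT{i}{a}\,s_1\,\cdots\,s_k)}\) when \(i > 0\), and \(c \leq 1 = \plen{\tree(\directedT{0}{a}\,s_1\,\cdots\,s_k)}\) when \(i = 0\).

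The main obstacle lies precisely in this \rname{PDT-Const} case: the modified side condition constrains only markers, so off-path subtree types can still carry an order-\(0\) flag in \(F_j\) for \(j \neq i\), and naively these flags contribute to \(f'_1\) through the \(\Comp_1\) computation, potentially inflating \(c\) beyond \(\plen{\tree(s)}\). Overcoming this requires a refined accounting of the \(\Comp_n\) operation under the constraint \(0 \in M_i\) and \(0 \notin M_j\) (\(j \neq i\)), most likely by canonicalising derivations so that off-path subtree types are chosen minimally (no spurious order-\(0\) flags escape), or by strengthening the inductive invariant to track flags and the counter jointly so that each captured order-\(0\) flag can be paired with a distinct node on the directed path. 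Once this accounting is in place, combining it with the directed-system versions of (i) and (ii) gives Lemma~\ref{lem:soundness-tr2} by the same assembly as the proof of Lemma~\ref{lem:soundness-tr}.
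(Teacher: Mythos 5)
Your three-step plan is exactly the route the paper intends: the paper states Lemma~\ref{lem:soundness-tr2} with no proof at all, merely calling it a ``variation'' of Lemma~\ref{lem:soundness-tr}, and your observation that the substitution, subject-reduction, progress, and order-decrease lemmas transfer verbatim (because reduction never inspects direction annotations) is correct. The obstacle you flag in the \rname{PDT-Const} case is, however, not a bookkeeping nuisance but a genuine defect, and neither of your proposed repairs can close it, because with the rules as printed the stated bound \(c \le \plen{\tree(s)}\) is simply false. Take \(n=1\) and \(t = \directedT{1}{\Ta}\,\directedT{0}{\Tb}\,\directedT{0}{\Tc}\) with \(\Tb,\Tc\) nullary. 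By \rname{PDT-Const} one gets \(\emptyset \PMDm{1} \directedT{0}{\Tc} : (\set{0},\emptyset,\T)\hascost 0\), and by \rname{PDT-Const} followed by \rname{PDT-Mark} one gets \(\emptyset \PMDm{1} \directedT{0}{\Tb} : (\emptyset,\set{0},\T)\hascost 1\); the root instance of \rname{PDT-Const} (whose side condition \(0\notin M_2\) is satisfied) then computes \(\Comp_1(\set{(\set{0},0),(\emptyset,1),(\set{0},0)},\set{0}) = (\emptyset,3)\), giving \(c=3\) while \(\plen{t}=2\). The order-\(0\) flag of the off-path child is not ``spurious'': every constant-headed off-path subtree is forced by \rname{PDT-Const} to carry \(0\) in its flag set, so no canonicalisation of derivations can remove it, and no strengthened invariant can prove a false statement.

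The repair therefore has to go into the type system rather than the proof: in \rname{PDT-Const} for \(\directedT{i}{a}\), only the node's own entry \((\set{0},0)\) and the directed child's \((F_i,c_i)\) should contribute order-\(0\) flags to \(\Comp_n\), the order-\(0\) flags of the children \(j\neq i\) being discarded (this mirrors the fact that \(\plen{\cdot}\) recurses only into the directed child). With that change your base-case computation closes exactly as you sketched: \(0\in M_i\) forces \(F_i=\emptyset\), so \(f'_1=1\) and \(c = 1+c_i \le 1+\plen{\tree(s_i)} = \plen{\tree(s)}\), your appeal to the analogue of Lemma~\ref{lem:markerAux} disposes of the \(c_j\) for \(j \ne i\), and the completeness direction (Lemma~\ref{th:completeness-tr2}) is unaffected since the flag it needs still travels up the directed path. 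Alternatively, if one insists on the printed rule, only the weaker bound \(c \le (1+A)\cdot\plen{\tree(s)}\) (with \(A\) the maximal arity of \(\TERMS\)) is provable; that weaker bound still suffices for the only downstream use of the lemma, namely establishing condition~2 of Lemma~\ref{lem:parys5}. Modulo this repair, the rest of your outline is sound.
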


To prove Lemma~\ref{lem:parys5}, we will use Lemma~\ref{th:completeness-tr2} to obtain
a pumpable derivation for (the direction-annotated version of) \(C[D^i[t]]\) where \(C,D\) are linear contexts.
In the construction of the pumpable derivation, we need to require that a sub-derivation for a term of the form
\(D^j[t]\) occurs only once. To show that property, we extend the syntax of terms with labels as follows.
\[
t ::= x \mid a\,t_1\,\cdots\,t_k \mid t_0t_1 \mid \lambda x.t \mid t_0+t_1 \mid t^\LAB.
\]
Here, \(\LAB\) is a \emph{multiset} of labels of the form \(\set{\psi_1,\ldots,\psi_k}\).
We often just write \(t^\Lab\) for \(t^{\set{\Lab}}\).
We have omitted non-terminals, since they will not be used below.
We identify \((t^{\LAB_1})^{\LAB_2}\) with \(t^{\LAB_1\mcup \LAB_2}\).
We extend the reduction relation \(\red\) to that for labeled terms by:
\[ C[(\lambda x.t_0)^\LAB t_1] \red C[([t_1/x]t_0)^\LAB]
\qquad C[(t_0+t_1)^\LAB] \red C[t_i^\LAB] \mbox{ ($i=0\lor i=1$)},\]
where \(C\) is an arbitrary context.
Note that the (multi)set of labels for a function is transferred to the residual.
For example, we have:
\[ (\lambda x.x^{\Lab_1})^{\Lab_2} \Tc^{\Lab_3}
  \red ([\Tc^{\Lab_3}/x]x^{\Lab_1})^{\Lab_2}
   = ((\Tc^{\Lab_3})^{\Lab_1})^{\Lab_2} = \Tc^{\set{\Lab_1,\Lab_2,\Lab_3}}.\]

The following is a labeled version of Lemma~\ref{lem:order-reduction}.
\begin{lemma}
\label{lem:order-reduction-labeled}
Let \(t\) be a labeled, order-\(n\) term.
If \(t \reds \pi\), then 
\[
t \oreds{n} t_{n-1} \oreds{n-1} \cdots \oreds{1} t_0 \creds \pi.
\]
\end{lemma}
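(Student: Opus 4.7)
The plan is to mimic the proof of Lemma~\ref{lem:order-reduction}, observing that labels behave as passive decorations: they never gate a reduction step, since a $\beta$-redex has the form $(\lambda x.t_0)^{\LAB}t_1$ and a choice redex has the form $(t_0+t_1)^{\LAB}$ regardless of $\LAB$, and reduction just transports the whole multiset of labels from the head of the redex to its residual. In particular, if we erase all labels, every labeled reduction becomes an unlabeled reduction of the same kind, and conversely any unlabeled reduction can be lifted to a labeled one by propagating labels according to the rule. Hence the structural rearrangement arguments from Lemma~\ref{lem:order-reduction} transfer almost verbatim, once we check that the two basic commutations survive in the labeled calculus.

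First I would prove a local commutation lemma: if $s_1 \redbeta s_2 \cred s_3$, then either $s_1 \cred s_3$ directly (when the contracted choice lies inside the argument that the $\beta$-step discards or duplicates), or there exists $s_2'$ with $s_1 \cred s_2' \redbeta s_3$, possibly with the $\beta$-step replicated in the branches when the choice occurs inside the copied argument. The label-bookkeeping is routine, since the rule $C[(\lambda x.t_0)^{\LAB}t_1]\red C[([t_1/x]t_0)^{\LAB}]$ only adjoins $\LAB$ to the outermost position of $[t_1/x]t_0$, and adjoining commutes with further substitution or with selecting one branch of an inner $+$. Iterating this commutation pushes every $\cred$ step past every $\redbeta$ step, yielding $t \redsbeta t_0 \creds \pi$ for some $t_0$ (note that the excerpt explicitly drops non-terminals in the labeled syntax, so only these two kinds of reduction need to be considered).

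Next, on the prefix $t \redsbeta t_0$ I would invoke the standard normalization-by-order argument: reducing the rightmost-innermost order-$k$ redex neither creates any redex of order $>k$ (the new redexes created by substitution come from a function variable being instantiated, but the type of that variable has order $<k$) nor copies any redex of order $k$ (the reduced redex itself disappears, and no residual of another order-$k$ redex sits inside the substituted argument, by rightmost-innermostness). By iterating, we obtain $t \oreds{n} u_{n-1} \oreds{n-1}\cdots \oreds{1} u_0$ with $u_0$ in $\beta$-normal form. Applying confluence of $\redbeta$ on labeled terms (which reduces to unlabeled Church--Rosser, since labels propagate deterministically and a joint reduct of any two labelings is computed by merging multisets pointwise) gives $u_0 = t_0$, producing the desired decomposition $t \oreds{n} u_{n-1} \oreds{n-1}\cdots \oreds{1} t_0 \creds \pi$.

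The main obstacle is confirming that the labeled calculus is confluent and that the ``order is not increased by lower-order $\beta$-reduction'' property survives the presence of labels at function positions. The former should fall out of a standard parallel-reductions argument adapted so that the auxiliary parallel-$\beta$ relation treats labels as inert annotations glued to the head of each parallel-contracted redex; the label multisets in two joint reducts will coincide because each redex contributes its label set exactly once, regardless of the order of contractions. The latter is immediate from the syntactic shape of redexes being insensitive to labels.
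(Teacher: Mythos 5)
Your overall strategy is the one the paper uses: treat labels as inert decorations, reuse the argument of Lemma~\ref{lem:order-reduction} (permute choice steps to the end, then normalize the pure $\beta$-prefix order by order using the fact that contracting a rightmost--innermost order-$k$ redex creates no redex of order $>k$ and duplicates no order-$k$ redex), and close with Church--Rosser for labeled terms proved by parallel reduction with labels carried along. Your observations that non-terminals are absent from the labeled syntax and that the redex shapes are insensitive to $\LAB$ are also exactly what is needed.

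There is, however, one step that is backwards as written. To arrive at $t \redsbeta t_0 \creds \pi$ you must move choice steps to the \emph{right}, i.e.\ permute an adjacent pair ``choice step followed by $\beta$-step'' into ``$\beta$-step followed by zero or more choice steps'': if $s_1 \cred s_2 \redbeta s_3$, the $\beta$-redex of the second step is already present in $s_1$ (a ground-type choice subterm cannot occupy a function position, so contracting it creates no new $\beta$-redex), and firing it first yields $s_1 \redbeta s_2'$ where the residuals of the choice redex have been erased or duplicated along with the argument, whence $s_2' \creds s_3$. Your local lemma instead starts from $s_1 \redbeta s_2 \cred s_3$ and produces $s_1 \cred s_2' \redbeta s_3$; iterating that swap drives the choice steps to the \emph{front}, yielding $t \creds u \redsbeta \pi$, which is not the decomposition the statement (or the subsequent order-stratification and confluence argument) requires. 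Moreover, in that direction the claimed endpoint is not even preserved when the choice redex of $s_2 \cred s_3$ is a residual of one that the $\beta$-step duplicated: resolving it before the $\beta$-step resolves \emph{all} copies, so one lands at a different term than $s_3$. Once the commutation is restated in the correct orientation, the rest of your argument goes through and coincides with the paper's.
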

\begin{proof}
The proof is essentially the same as that of Lemma~\ref{lem:order-reduction}.
Note that the \(\beta\)-reduction relation satisfies the Church-Rosser property 
also for labeled terms (which can be proved by using the standard argument using parallel reduction).
\end{proof}
The transformation rules are extended for labeled terms accordingly.
We write \(\Labs(t)\) for the multiset of all the labels occurring in \(t\) (the multiplicities are counted as many as they occur).
For a transformation derivation \(\dt\), we also write \(\Labs(\dt)\) for 
the multiset of labels attached to the term in the conclusion of a judgment occurring in the derivation,
except the conclusion of \rname{PDT-Mark} and \rname{PDT-Weak}, i.e.,
\[\Labs\left(\rb{-1ex}{\infers{\PTE\PMD t^\LAB:\pty\hascost c\tr s}{\dt_1 \andalso\cdots\andalso \dt_k}}\right)
 = \LAB\mcup \Labs(\dt_1)\mcup \cdots \mcup \Labs(\dt_k)\]
if the last rule is neither \rname{PDT-Mark} nor \rname{PDT-Weak} and \(t\) is not of the form \(t'^{\LAB'}\)
(so that \(\LAB\) is the multiset of all the outermost labels), and
\[\Labs\left(\rb{-1ex}{\infers{\PTE\PMD t^\LAB:\pty\hascost c\tr s}{\dt}}\right)
 = \Labs(\dt)\]
if the last rule is \rname{PDT-Mark} or \rname{PDT-Weak}.

We can strengthen Lemma~\ref{th:completeness-tr2} as follows.
\begin{lemma}
\label{th:completeness-tr2-linear}
For a reduction sequence
\[ t \oreds{n} t_{n-1} \oreds{n-1} \cdots \oreds{1} t_0 \creds \pi,\]
there exists a derivation \(\dt\) for
\(\emptyset \PMD t:(\emptyset,\set{0,\ldots,n-1},\T)\hascost c\tr s\)
such that \(\expn{n}{c}\geq \plen{\pi}\).
Moreover, \(\toLty{\dt}\) is an admissible derivation in the linear type system.
Furthermore, if 
\(t\) is of the form \(C[t_0^\Lab]\) where \(\Lab\) does not occur in \(C\) nor \(t_0\)
and if \(\Labs(\pi)(\Lab)\leq 1\), 
then \(\Labs(\dt)(\Lab) \le 1\).
\end{lemma}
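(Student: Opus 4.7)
The proof plan parallels that of Theorem~\ref{th:completeness-tr}, carrying out three standard steps (base case, subject expansion, order increase) in the $\PMD$ system and extending each step with label bookkeeping. First I would apply Lemma~\ref{lem:order-reduction-labeled} to split the hypothesized reduction as $t\oreds{n}t_{n-1}\oreds{n-1}\cdots\oreds{1}t_0\creds\pi$. For the base case I would prove the direction-annotated analogue of Lemma~\ref{lem:typing-of-trees}: because \rname{PDT-Const} forces the order-$0$ marker to sit on the child selected by the direction, the marked subderivations form exactly one directed root-to-leaf path and each subterm of $\pi$ appears in exactly one subderivation of the resulting derivation, giving $\Labs(\dt)(\Lab)=\Labs(\pi)(\Lab)$ and $2^c\ge\plen\pi$. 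The subject-expansion and order-increase steps transfer from the proofs of Lemmas~\ref{lem:subj-expansion-for-induction} and~\ref{lem:increase-of-order-for-induction} essentially verbatim: the only rules that differ between $\PM$ and $\PMD$ (namely \rname{PDT-Mark} and \rname{PDT-Const}) play the same structural roles in the arguments. Composing these yields $\dt\pdt\emptyset\PMD t:(\emptyset,\set{0,\ldots,n-1},\T)\hascost c\tr s$ with $\expn{n}{c}\ge\plen\pi$ and $\toLty{\dt}$ admissible.

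For the furthermore clause I would maintain the invariant $\Labs(\dt)(\Lab)\le\Labs(\pi)(\Lab)$ throughout the construction. For a \rname{PDT-Choice} expansion $E[t_i]\mapsto E[t_1+t_2]$, the new conclusion carries no outer label and its sole premise is unchanged, so $\Labs(\dt)$ does not grow; in particular any label lying inside the discarded branch $t_{3-i}$ never enters $\Labs(\dt)$. A \rname{PDT-NT} expansion introduces no new subderivation. The only nontrivial case is a $\beta$-expansion $[t_1/x]t'\mapsto(\lambda x.t')t_1$: by the direction-annotated analogue of Lemma~\ref{lem:desubstitution}, the resulting derivation contains one subderivation of $t_1$ per distinct type in the intersection type assigned to $x$, equivalently one per distinct type used at a ``live'' $x$-position of the derivation for $[t_1/x]t'$. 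Since $\red$ transports labels verbatim and only a non-chosen branch can erase them, each live occurrence of $t_1=t_0^\Lab$ in $[t_1/x]t'$ contributes exactly one copy of $\Lab$ to $\Labs(\pi)$; inductively there are at most $\Labs(\pi)(\Lab)\le 1$ such live occurrences, and collapsing to the intersection type can only further reduce the count.

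The main obstacle is precisely this $\beta$-expansion step: one must verify that the subderivations of $t_1$ produced by de-substitution sit at exactly the live $x$-positions of the derivation for $[t_1/x]t'$, and that distinct live positions typed identically are actually collapsed into a single intersection-type entry rather than duplicated. Since the choice of types in the intersection also controls admissibility of $\toLty{\dt}$ in the linear type system, this bookkeeping must simultaneously satisfy both constraints; I expect it will require mildly strengthening the inductive hypothesis so that, at each variable binding, one tracks both the set of types used and the fact that live positions sharing a type share a single subderivation.
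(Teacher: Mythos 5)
Your proposal is correct and follows essentially the same route as the paper: reuse the three-step structure of Theorem~\ref{th:completeness-tr} (base case, subject expansion, order increase) for the \(\PMD\) system, establish \(\Labs(\dt_0)=\Labs(\pi)\) in the base case, and track that \(\Labs(\dt')\) can only decrease through the backward construction. The point you flag as the main obstacle — that in the \(\beta\)-expansion step the de-substitution picks exactly one representative subderivation of \(t_1\) per intersection-type component rather than duplicating them — is precisely what the proof of Lemma~\ref{lem:desubstitution} (case \rname{PTr-App}) already arranges, so no strengthening of the inductive hypothesis is needed.
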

\begin{proof}
The proof of the properties except the part ``Furthermore, ...'' is almost the same as the proof of Theorem~\ref{th:completeness-tr}. To confirm the part ``Furthermore, ...'', it suffices to observe the following facts:
(i) for the derivation \(\dt_0\) for \(\pi\) constructed in the proof of (a direction-annoated version of)
Lemma~\ref{lem:subj-expansion-base}, \(\Labs(\dt_0)=\Labs(\pi)\), and
(ii) during the construction of derivations \(\dt'\) (in a backward manner with respect to the reduction sequence)
in the proofs of Lemmas~\ref{lem:subj-expansion} and \ref{lem:increase-of-order}, \(\Labs(\dt')\) decreases
 monotonically.
\end{proof}

The following lemma ensures that, if \(C\) is a linear context,
then there is a reduction sequence that satisfies the assumption of
the lemma above.
\begin{lemma}
\label{lem:linear-context-labels}
Suppose (i) a pair of contexts 
\(\Hole : \sty \pK C: \T\) and \(\Hole : \sty \pK D: \sty\) is linear, 
(ii) \(\pK t:\sty\) is a \(\stlambda\)-term,
and (iii) \(C,D,t\) do not contain labels.
Define \(t^{(k)}\) by: \(t^{(0)} = t^{\Lab_0}\) and \(t^{(i)} = D[t^{(i-1)}]^{\Lab_i}\).
Then there exists a reduction sequence
\[ C[t^{(k)}] \oreds{n} t_{n-1} \oreds{n-1} \cdots \oreds{1} t_0 \creds \pi\]
such that \(\Labs(\pi)(\Lab_i)=1\).
\end{lemma}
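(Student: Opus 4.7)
The plan is to prove the claim by induction on $k$, constructing a $\beta$-reduction from $C[t^{(k)}]$ to a tree in which each label $\Lab_i$ appears exactly once; the prescribed order-wise reduction structure then follows from Lemma~\ref{lem:order-reduction-labeled}. The linearity of $(C, D)$ is the crucial ingredient: it guarantees that the subterms carrying our labels cannot be duplicated or erased during reduction.

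For the base case $k = 0$, the linearity of $(C, D)$ at index $0$ yields a reduction $C[x] \reds N_0$ (for a fresh $x$) with $x$ occurring exactly once in $N_0$. Since labels do not alter the positions of $\beta$-redexes, this sequence lifts verbatim to $C[t^{\Lab_0}] \reds N_0[t^{\Lab_0}/x]$, placing $\Lab_0$ at a single position. Subsequent reductions---including those arising when $t^{\Lab_0}$ is applied to arguments drawn from $N_0$---transfer $\Lab_0$ via the labeled $\beta$-rule $(\lambda z. u)^{\Lab_0} v \red ([v/z]u)^{\Lab_0}$ without duplicating it, and the arguments being unlabeled precludes any other label from interfering. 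Hence the resulting tree $\pi$ satisfies $\Labs(\pi)(\Lab_0) = 1$.

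For the inductive step $k > 0$, I first fix a reduction $C[D[y]] \reds M$ (with $y$ fresh) to the $\beta$-normal form of $C[D[y]]$, in which $y$ occurs exactly once by the linearity of $(C, D)$ at index $1$. Lifting this sequence to the labeled term produces $C[(D[y])^{\Lab_k}] \reds M^\star$, where I will argue that $\Lab_k$ sits at a single position of $M^\star$ above the $y$-position (see below). Substituting $y := t^{(k-1)}$ gives $C[t^{(k)}] \reds M^\star[t^{(k-1)}/y]$. Define the unlabeled context $C' := M[\Hole/y]$, i.e., $M$ with its unique $y$-occurrence turned into the hole. The pair $(C', D)$ is again linear: for each $j \ge 0$, $C'[D^j[z]] = M[D^j[z]/y]$ arises by substitution from $C[D[y]] \reds M$ and so has the same $\beta$-normal form as $C[D^{j+1}[z]]$, in which $z$ occurs exactly once by the linearity of $(C, D)$ at index $j+1$. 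Invoking the inductive hypothesis on $(C', D)$ and $t^{(k-1)}$ yields a reduction $C'[t^{(k-1)}] \reds \pi_0$ to a tree with each $\Lab_i$ for $i \le k-1$ appearing exactly once. Since $M$ is in $\beta$-normal form, all new redexes in $M^\star[t^{(k-1)}/y]$ originate at the $y$-position and lie strictly inside the $\Lab_k$-scope; hence the above IH reduction lifts inside this scope, yielding $C[t^{(k)}] \reds \pi$ with every $\Lab_i$ appearing exactly once.

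The main obstacle is justifying the single occurrence of $\Lab_k$ in $M^\star$. The key point is that the $\Lab_k$-labeled subterm contains the free variable $y$, so any reduction step that duplicates (resp.\ erases) this subterm also duplicates (resp.\ erases) $y$; by confluence of the $\beta$-calculus, these changes propagate to the $\beta$-normal form $M$, contradicting the assumption that $y$ occurs exactly once in $M$ by linearity. Therefore the outer reduction can neither duplicate nor erase $\Lab_k$, establishing the required single occurrence. Applying Lemma~\ref{lem:order-reduction-labeled} to the final reduction sequence $C[t^{(k)}] \reds \pi$ reshapes it into the prescribed form $C[t^{(k)}] \oreds{n} t_{n-1} \oreds{n-1} \cdots \oreds{1} t_0 \creds \pi$, completing the proof.
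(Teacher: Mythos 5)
Your overall intuition---that linearity prevents the labelled subterms from being duplicated or erased---is the right one, but the step on which your whole inductive argument rests is not sound as stated. You pick an \emph{arbitrary} reduction \(C[(D[y])^{\Lab_k}]\reds M^\star\) to the normal form and argue that \(\Lab_k\) occurs once in \(M^\star\) because ``any step that duplicates the labelled subterm also duplicates \(y\), and by confluence these changes propagate to the normal form \(M\).'' Confluence does not license that inference: an occurrence of \(y\) duplicated at an intermediate step can be erased again by a later vacuous redex, so the normal form having exactly one \(y\) is perfectly compatible with \(y\) (and hence the label) being duplicated along the way. In other words, for an arbitrary reduction sequence the occurrence count of \(y\) is not monotone, and your contradiction never materialises. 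The conclusion you want is in fact true, but the honest justification is either to invoke confluence of the \emph{labelled} calculus (so that the labelled normal form is independent of the reduction path) and then exhibit one path along which no duplication happens, or---more directly---to work with that good path from the start.

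That good path is exactly what the paper uses, and it makes your induction on \(k\) unnecessary. Take the \emph{call-by-name} reduction of \(C[t^{(k)}]\). By linearity it passes through configurations \(E_i[t^{(i)}\,\seq{u}_i]\) where \(E_i\) is an evaluation context and the labelled subterm \(t^{(i)}\) sits in head position of the ground term being evaluated; a head subterm is only ever the \emph{function} part of a \(\beta\)-redex, never an argument, so it is never copied or discarded at any step, and each \(\Lab_j\) occurs exactly once in every term of the sequence, hence in \(\pi\). Lemma~\ref{lem:order-reduction-labeled} then rearranges this single sequence into the required order-decreasing form. Two further points in your write-up would also need repair even if the main gap were fixed: linearity in this paper is defined via the \emph{call-by-name} normal form \(E[y]\), not the full \(\beta\)-normal form, and your auxiliary context \(C'=M[\Hole/y]\) places the hole under a tree constructor, which the paper's context grammar \(C::=\Hole\mid C\,t\mid t\,C\mid\lambda x.C\) does not permit, so the inductive hypothesis is not literally applicable to the pair \((C',D)\).
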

\begin{proof}
Let \(C[t^{(k)}] \reds \pi\) be the call-by-name reduction sequence (note that, by the typing assumptions,
the reduction sequence yields a tree). By the linearity condition,
the reduction sequence must be of the form:
\[ C[t^{(k)}] \reds E_k[t^{(k)} \seq{u}_k] \reds E_{k-1}[(t^{(k-1)}\seq{u}_{k-1}] \reds 
\cdots E_0[t_{(0)} \seq{u}_0] \reds \pi\]
where \(E_i\) is a call-by-name evaluation context,
\(t^{(i)}\seq{u}_i\) is a ground term, and
\(E_i[t^{(i)}\seq{u}_i]\) contains exactly one occurrence of \(\Lab_j\) for each \(j\in\set{0,\ldots,k}\)
(since \(\Lab_i\) occurs only in the head position in \(t^{(i)}\seq{u}_i\) and 
can never be duplicated afterwords).
Thus, \(\Labs(\pi)(\Lab_i)=1\) for each \(i\in\set{0,\ldots,k}\).
By Lemma~\ref{lem:order-reduction-labeled}, we have the required result.
\end{proof}
Using the lemmas above, we can prove 
 Lemma~\ref{lem:parys5}. %

\begin{proof}[Proof of Lemma~\ref{lem:parys5}]
Let \(C', D', t'\) be the direction-annotated contexts and term obtained from
\(C, D, t\) respectively by (recursively) replacing each occurrence of
\(a\,t_1\,\cdots\,t_k\) (where \(k=\arity(a) >0\)) with:
\[(\directedT{1}{a}\,t_1\,\cdots\,t_k) + \cdots + (\directedT{k}{a}\,t_1\,\cdots\,t_k),\]
and replacing each occurrence of a nullary non-terminal
\(a\) with \(\directedT{0}{a}\). By the construction,
for any \(\pi'\in \Lang(C'[D'^i[t']])\), \(\rmd(\pi') = \tree(C[D^i[t]])\).
By the assumption that 
\(\set{\tree(C[D^k[t]]) \mid k \ge 1}\) is infinite, for any \(d\)
there exist \(k_d\) and \(\pi' \in \Lang(C'[D'^{k_d}[t']])\) such that
\(\plen{\pi'}\geq d\).
Then for any \(c_0\), by this fact with \(d = \expn{n}{c_0}\) and by Lemma~\ref{th:completeness-tr2-linear}, 
there exist \(\dt\), \(c\), and \(s\) such that
\(\dt \pdt \emptyset \PMD C'[D'^{k_{\expn{n}{c_0}}}[t']]:(\emptyset,\set{0,\ldots,n-1},\T)\hascost c\tr s\) holds,
\(c\geq c_0\), and
\(\toLty{\dt}\) is admissible.
Furthermore, for each \(i\),
a subderivation for \(D'^i[t']\) occurs at most once except that we ignore a subderivation
whose last rule is derived by \rname{PDT-Mark} or \rname{PDT-Weak}.
(To observe this, let \(u\) (\(u'\), resp.) be the term obtained from \(C[D^{k_{\expn{n}{c_0}}}[t]]\) (\(C'[D'^{{k_{\expn{n}{c_0}}}}[t']]\), resp.)
by adding a label \(\Lab_i\) to \(D^{i}[t]\) (\(D'^{i}[t']\), resp.) for each \(i\).
By Lemma~\ref{lem:linear-context-labels}, we have 
\(u \oreds{n} \oreds{n-1} \cdots \oreds{1}\creds \pi\) such that \(\Labs(\pi)(\Lab_i)=1\).
For any \(\pi' \in \Lang(u')\), guided by the direction-annotations in \(\pi'\),
from the reduction sequence from \(u\) to \(\pi\) we obtain a reduction sequence 
\(u'\oreds{n} \oreds{n-1} \cdots \oreds{1}\creds \pi'\) such that \(\Labs(\pi')(\Lab_i)=1\),
which satisfies the assumption for Lemma~\ref{th:completeness-tr2-linear}.)

By a similar reasoning to the proof of Lemma~\ref{lem:pumpable-derivation},
by choosing a sufficiently large number \(c_0\),
we can ensure that there exist 
\(q_1, p, q_2\) 
such that
\(k_{\expn{n}{c_0}} = q_1 + p + q_2\) and
\(\emptyset \PMD C'[D'^{{k_{\expn{n}{c_0}}}}[t']]:(\emptyset,\set{0,\ldots,n-1},\T)\hascost c_1+c_2+c_3\tr G[H[u]]\)
(where \(c_1+c_2+c_3=c\) and \(G[H[u]] = s\)) is derived from
\(\emptyset \PMD D'^{p+q_2}[t']:\pty\hascost c_1+c_2\tr H[u]\), which is in turn derived from
\(\emptyset \PMD D'^{q_2}[t']:\pty\hascost c_1\tr u\), where \(c_1,c_2>0\).
Thus, by ``pumping'' the derivation of 
\(\emptyset \PMD D'^{p+q_2}[t']:\pty\hascost c_1+c_2\tr H[u]\) from
\(\emptyset \PMD D'^{q_2}[t']:\pty\hascost c_1\tr u\), 
we obtain \(\emptyset \PMD C'[D'^{pk+q}[t]]:(\emptyset,\set{0,\ldots,n-1},\T)\hascost c_1+kc_2+c_3\tr
   G[H^k[u]]\), where \(q=k_{\expn{n}{c_0}}-p\).
By Lemma~\ref{lem:soundness-tr2}, we have 
\(\tree(G[H^k[u]])\in \Lang(C'[D'^{pk+q}[t']])\).
Since \(\rmd(\pi')=\tree(C[D^{pk+q}[t]])\)
for every \(\pi'\in \Lang(C'[D'^{pk+q}[t']])\),
we have \(\rmd(\tree(G[H^k[u]]))=\tree(C[D^{pk+q}[t]])\).
By Lemma~\ref{lem:soundness-tr2}, we also have \(\plen{\tree(G[H^k[u]])}\geq c_1+kc_2+c_3\).
Thus, we also have the second condition of the lemma.
Finally, since \(\toLty{\dt}\) is an admissible derivation, in which 
a linear type is assinged to every term containing \(t'\), 
the pair of \(G\) and \(H\) is linear by Lemma~\ref{lem:linearity}.
\end{proof}

\subsection{Examples}

\subsubsection{Example 1}
Consider the second-order tree grammar \(\GRAM_2\) consisting of the following rules.
\[
\begin{array}{l}
S \to R\,A \\
R\,f \to f\,\Te
\qquad 
R\,f \to R\, (T\,f)\\
A\,x\to \Ta\,x\,x\\
T\,f\,x \to f(f\,x).
\end{array}
\]
The language generated by \(\GRAM_2\) is:
\[
\set{ \pi_{2^n} \mid n\geq 0},
\]
where \(\pi_0 = \Te\) and \(\pi_{k+1} = \Ta\,\pi_k\,\pi_k\).
The rules can also be expressed as the set of equations:
\[
\begin{array}{l}
S = R\,A \qquad 
R = \lambda f.(f\,\Te +  R\, (T\,f))\qquad
A = \lambda x.\Ta\,x\,x\qquad
T = \lambda f.\lambda x.f(f\,x).
\end{array}
\]

A good way to construct a type derivation in Parys' type system is to first
reduce a term to a tree, and then construct type derivations for the terms occurring
during the reduction in a backward manner.
Consider the following rewriting sequence:
\[
\begin{array}{l}
S \red R\,A\red (\lambda f.(f\,\Te+R(T\,f)))A\red A\,\Te+R(T\,A) \red R(T\,A)\\
\red (\lambda f.(f\,\Te+R(T\,f)))(T\,A)
\red (T\,A\,\Te + R(T(T\,A)))\red T\,A\,\Te \\
\red (\lambda f.\lambda x.f(f\,x))\,A\,\Te
\red (\lambda x.A(A\,x))\Te\\
\red A(A\,\Te) 
\red A((\lambda x.\Ta\,x\,x)\Te) 
\red A(\Ta\,\Te\,\Te) \red (\lambda x.\Ta\,x\,x)(\Ta\,\Te\,\Te) 
\red \Ta(\Ta\,\Te\,\Te)(\Ta\,\Te\,\Te)
\end{array}
\]
Here, we have reduced order-2 redexes first (on the first three lines), and then order-1 redexes
(on the last line).
Here is the typing for the tree (we omit the outputs of transformations below): 
\[
\infers{\PMm{1}(\Ta(\Ta\,\Te\,\Te)(\Ta\,\Te\,\Te)): (\emptyset,\set{0},\T)\hascost{5}}
 {\infers{\PMm{1} \Ta\,\Te\,\Te:(\set{0},\emptyset,\T)\hascost{0}}
   {\PMm{1} \Te:(\set{0},\emptyset,\T)\hascost{0} & \PMm{1} \Te:(\set{0},\emptyset,\T)\hascost{0}}
  & \infers{\PMm{1} \Ta\,\Te\,\Te:(\emptyset,\set{0},\T)\hascost{3}}
    {\PMm{1}\Te:(\set{0},\emptyset,\T)\hascost{0}
     & \infers[PTr-Mark]{\PMm{1}\Te:(\emptyset,\set{0},\T)\hascost{1}}{\PMm{1}\Te:(\set{0},\emptyset,\T)\hascost{0}}
   }
  }
\]
By (two steps of) subject expansion, we obtain:
\[
\infers{\PMm{1}A(\Ta\,\Te\,\Te):(\emptyset,\set{0},\T)\hascost{5}}
{\infers{\PMm{1}A:(\emptyset,\emptyset,(\set{0},\emptyset,\T)\land(\emptyset,\set{0},\T)\to\T)\hascost{2}}
{\infers{\PMm{1}\lambda x.\Ta\,x\,x:
  (\emptyset,\emptyset,(\set{0},\emptyset,\T)\land(\emptyset,\set{0},\T)\to\T)\hascost{2}}
 {\infers{x\COL (\set{0},\emptyset,\T), x\COL (\emptyset,\set{0},\T) \PMm{1} \Ta\,x\,x:(\emptyset,\set{0},\T)\hascost{2}}
 {x\COL(\set{0},\emptyset,\T)\PMm{1}x\COL(\set{0},\emptyset,\T)\hascost{0}
  & x\COL (\emptyset,\set{0},\T) \PMm{1}x\COL (\emptyset,\set{0},\T) \hascost{0}}}}
 & \hspace*{-1cm} \infers{\PMm{1} \Ta\,\Te\,\Te:(\set{0},\emptyset,\T)\hascost{0}}{\cdots}
 & \infers{\PMm{1} \Ta\,\Te\,\Te:(\emptyset,\set{0},\T)\hascost{3}}{\cdots}}
\hspace*{-10em}\cdots{(*1)}
\]
By applying subject expansion to \(\PMm{1} \Ta\,\Te\,\Te:(\set{0},\emptyset,\T)\hascost{0}\)
and \(\PMm{1} \Ta\,\Te\,\Te:(\emptyset,\set{0},\T)\hascost{3}\), we also obtain:
\[
\infers{\PMm{1}A\,\Te:(\set{0},\emptyset,\T)\hascost{0}}
{\infers{\PMm{1} A:(\set{0},\emptyset, (\set{0},\emptyset,\T)\to \T)\hascost{0}}
{ \infers{\PMm{1}\lambda x.\Ta\,x\,x:(\set{0},\emptyset, (\set{0},\emptyset,\T)\to \T)\hascost{0}}
  {\infers{x\COL (\set{0},\emptyset,\T)\PMm{1} \Ta\,x\,x:(\set{0},\emptyset,\T)\hascost{0}}
   {x\COL (\set{0},\emptyset,\T)\PMm{1} x:(\set{0},\emptyset,\T)
    & x\COL (\set{0},\emptyset,\T)\PMm{1} x:(\set{0},\emptyset,\T)}
    }}
  &\PMm{1}\Te:(\set{0},\emptyset,\T)\hascost{0}}
\]

\[
\infers{\PMm{1}A\,\Te:(\emptyset,\set{0},\T)\hascost{3}}
{\infers{\PMm{1}A: (\emptyset,\emptyset,
    (\set{0},\emptyset,\T)\land (\emptyset,\set{0},\T)\to\T)\hascost{2}}{\cdots}
 & \PMm{1}\Te:(\set{0},\emptyset,\T)\hascost{0}
 & \PMm{1}\Te:(\emptyset,\set{0},\T)\hascost{1}
}
\]

Thus, by replacing the subderivations for \(\Ta\,\Te\,\Te\) in (*1) with the two derivations above,
we obtain:
\[
\infers{\PMm{1}A(A\,\Te):(\emptyset,\set{0},\T)\hascost{5}}
{\infers{\PMm{1}A:
  (\emptyset,\emptyset,(\set{0},\emptyset,\T)\land(\emptyset,\set{0},\T)\to\T)\hascost{2}}{\cdots}
 & \infers{\PMm{1} A\,\Te:(\set{0},\emptyset,\T)\hascost{0}}{\cdots}
 & \infers{\PMm{1} A\,\Te:(\emptyset,\set{0},\T)\hascost{3}}{\cdots}}
\]
Let \(\PTE_1=x\COL (\set{0},\emptyset,\T)\), \(\PTE_2 = x\COL(\emptyset,\set{0},\T)\),
and \(\PTE_3 = \PTE_1\PTEcup \PTE_2\).
Then we have:
\[\hspace*{-5em}
\infers{\PMm{1}(\lambda x.A(A\,x))\Te:(\emptyset,\set{0},\T)\hascost{5}}
{
\infers{\PMm{1}\lambda x.A(A\,x):(\emptyset,\emptyset,\prty_1)\hascost{4}}
{
\infers{\PTE_3\PMm{1}A(A\,x):(\emptyset,\set{0},\T)\hascost{4}}
{\infers{\PMm{1}A:
  (\emptyset,\emptyset,\prty_1)\hascost{2}}{\cdots}
 & \infers{\PTE_1\PMm{1} A\,x:(\set{0},\emptyset,\T)\hascost{0}}{\cdots}
 & \infers{\PTE_3\PMm{1} A\,x:(\emptyset,\set{0},\T)\hascost{2}}{
  \infers{\PMm{1}A:(\emptyset,\emptyset,\prty_1)\hascost{2}}{\cdots}
&  \PTE_1\PMm{1}x:(\set{0},\emptyset,\T)\hascost{0} & \PTE_2\PMm{1}x:(\emptyset,\set{0},\T)\hascost{0}}}}
 & \hspace*{-20em}\PMm{1}\Te:(\set{0},\emptyset,\T)\hascost{0}
 & \PMm{1}\Te:(\emptyset,\set{0},\T)\hascost{1}
}
\]
where \(\prty_1 = (\set{0},\emptyset,\T)\land (\emptyset,\set{0},\T)\to\T\).
Now, by increasing the order of the type system to \(2\), we obtain:
\[\hspace*{-5em}
\infers{\PMm{2}(\lambda x.A(A\,x))\Te:(\set{1},\set{0},\T)\hascost{0}}
{
\infers{\PMm{2}\lambda x.A(A\,x):(\set{1},\emptyset,\prty_1)\hascost{0}}
{
\infers{\PTE_3\PMm{2}A(A\,x):(\set{1},\set{0},\T)\hascost{0}}
{\infers{\PMm{2}A:
  (\set{1},\emptyset,\prty_1)\hascost{0}}{\cdots}
 & \infers{\PTE_1\PMm{2} A\,x:(\set{0},\emptyset,\T)\hascost{0}}{\cdots}
 & \infers{\PTE_3\PMm{2} A\,x:(\set{1},\set{0},\T)\hascost{0}}{
  \infers{\PMm{2}A:(\set{1},\emptyset,\prty_1)\hascost{0}}{\cdots}
 &  \PTE_1\PMm{2}x:(\set{0},\emptyset,\T)\hascost{0} & \PTE_2\PMm{2}x:(\emptyset,\set{0},\T)\hascost{0}}}}
 & \hspace*{-20em}\PMm{2}\Te:(\set{0},\emptyset,\T)\hascost{0}
 & \PMm{2}\Te:(\set{1},\set{0},\T)\hascost{0}
}
\]
Note that now every counter (which now represent the numer of order-2 flags) has become \(0\),
and instead order-1 flags have been set in the places where there were non-zero counters.

Let us now put an order-1 marker to the rightmost occurrence of \(A\):
\[\small\hspace*{-5em}
\infers{\PMm{2}(\lambda x.A(A\,x))\Te:(\emptyset,\set{0,1},\T)\hascost{3}}
{
\infers{\PMm{2}\lambda x.A(A\,x):(\emptyset,\set{1},\prty_1)\hascost{2}}
{
\infers{\PTE_3\PMm{2}A(A\,x):(\emptyset,\set{0,1},\T)\hascost{2}}
{{
\infers{\PMm{2}A:
  (\set{1},\emptyset,\prty_1)\hascost{0}}{\cdots}}
 & \infers{\PTE_1\PMm{2} A\,x:(\set{0},\emptyset,\T)\hascost{0}}{\cdots}
 & \infers{\PTE_3\PMm{2} A\,x:(\emptyset,\set{0,1},\T)\hascost{1}}{
  \infers{\PMm{2}A:
  (\emptyset,\set{1},\prty_1)\hascost{1}}{\infers{\PMm{2}A:(\set{1},\emptyset,\prty_1)\hascost{0}}{\cdots}}
&  \PTE_1\PMm{2}x:(\set{0},\emptyset,\T)\hascost{0} & \PTE_2\PMm{2}x:(\emptyset,\set{0},\T)\hascost{0}}}}
 & \hspace*{-20em}\PMm{2}\Te:(\set{0},\emptyset,\T)\hascost{0}
 & \PMm{2}\Te:(\set{1},\set{0},\T)\hascost{0}
}
\]

Let \(\PTE_4 = f\COL (\set{1},\emptyset,\prty_1), f\COL(\set{0},\emptyset,\prty_2), f\COL(\emptyset,\set{1},\prty_1)\)
where \(\prty_2 = (\set{0},\emptyset,\T)\to\T\).
Then, we have:
\[\hspace*{-3em}
\infers{\PMm{2} T\,A\,\Te+R(T(T A)):(\emptyset,\set{0,1},\T)\hascost{3}}{
\infers{\PMm{2} T\,A\,\Te:(\emptyset,\set{0,1},\T)\hascost{3}}{
\infers{\PMm{2} T\,A:(\emptyset,\set{1},\prty_1)\hascost{2}}{
\infers{
\PMm{2} T:(\emptyset,\emptyset,\prty_3)\hascost{1}}{\infers{
\PMm{2} \lambda f.\lambda x.f(f\,x):(\emptyset,\emptyset,\prty_3)\hascost{1}}{
\infers{\PTE_4\PMm{2}\lambda x.f(f\,x):(\emptyset,\set{1},\prty_1)\hascost{1}}{\cdots}}}
& \hspace*{-2em}\PMm{2} A: (\set{1},\emptyset,\prty_1)\hascost{0}
& \PMm{2} A: (\set{0},\emptyset,\prty_2)\hascost{0}
& \PMm{2} A: (\emptyset,\set{1},\prty_1)\hascost{1}
}
&\hspace*{-13em}\PMm{2}\Te:(\set{0},\emptyset,\T)\hascost{0}
 & \PMm{2}\Te:(\set{1},\set{0},\T)\hascost{0}
}}
\]
where \(\prty_3 = 
(\set{1},\emptyset,\prty_1)\land(\set{0},\emptyset,\prty_2)\land (\emptyset,\set{1},\prty_1)\to\prty_1\).

Let \(\PTE_5 = f\COL (\emptyset,\set{1},\prty_1)\).
Then we have:
\[
\infers{\PMm{2} A\,\Te + R(T A):(\emptyset,\set{0,1},\T)\hascost{3}}
{
\infers{\PMm{2}R(T A):(\emptyset,\set{0,1},\T)\hascost{3}}
{
\infers{\PMm{2}R:(\emptyset,\set{0},(\emptyset,\set{1},\prty_1)\to\T)\hascost{1}}{
\infers{\PMm{2}\lambda f.f\,\Te+R(T\,f):(\emptyset,\set{0},(\emptyset,\set{1},\prty_1)\to\T)\hascost{1}}{
\infers{\PTE_5\PMm{2} f\,\Te+R(T\,f):(\emptyset,\set{0,1},\T)\hascost{1}}{
\infers{\PTE_5\PMm{2} f\,\Te:(\emptyset,\set{0,1},\T)\hascost{1}}{
\PTE_5\PMm{2} f:(\emptyset,\set{1},\prty_1)\hascost{0}
& \PMm{2} \Te: (\set{0},\emptyset,\T)\hascost{0}
& \PMm{2} \Te: (\emptyset,\set{0},\T)\hascost{0}
}}}}
& \infers{\PMm{2}T A:(\emptyset,\set{1},\prty_1)\hascost 2}
{\cdots}
}}
\]
Let \(\PTE_6 = f\COL (\set{1},\emptyset,\prty_1),f\COL(\set{0},\emptyset,\prty_2),f\COL (\emptyset,\set{1},\prty_1)\).
Then we have:
\[\small\hspace*{-2em}
\infers{\PMm{2} S:(\emptyset,\set{0,1},\T)\hascost{3}}
{
\infers{\PMm{2} R\,A:(\emptyset,\set{0,1},\T)\hascost{3}}
{
\infers{\PMm{2} R:(\emptyset,\set{0},\prty_5)\hascost{2}}
{
\infers{\PMm{2} \lambda f.f\,\Te + R(T\,f):(\emptyset,\set{0},\prty_5)\hascost{2}}
{
\infers{\PTE_6\PMm{2} f\,\Te + R(T\,f):(\emptyset,\set{0,1},\T)\hascost{2}}
{
\infers{\PTE_6\PMm{2}R(T\,f):(\emptyset,\set{0,1},\T)\hascost{2}}
{
\infers{\PMm{2}R:(\emptyset,\set{0},(\emptyset,\set{1},\prty_1)\to\T)\hascost{1}}{
\cdots}
& \infers{\PTE_6\PMm{2}T\,f:(\emptyset,\set{1},\prty_1)\hascost 1}
{\cdots}
}}}}
& \hspace*{-7em}\PMm{2} A: (\set{1},\emptyset,\prty_1)\hascost{0}
& \PMm{2} A: (\set{0},\emptyset,\prty_2)\hascost{0}
& \PMm{2} A: (\emptyset,\set{1},\prty_1)\hascost{1}
}}
\]
where \(\prty_5 = 
 (\set{1},\emptyset,\prty_1)\land(\set{0},\emptyset,\prty_2)\land (\emptyset,\set{1},\prty_1)\to\T\).
This completes the construction of a type derivation corresponding to the reduction sequence
\(S\reds \Ta(\Ta\,\Te\,\Te)(\Ta\,\Te\,\Te)\).

The derivation above does not contain a pumpable part. 
To obtain a pumpable derivation, let us prepare the following derivations:
\[\small
\infers{\PMm{2} T: (\set{0},\emptyset, (\set{0},\emptyset,\prty_2)\to\prty_2)\hascost{0}}
{\infers
{\PMm{2} \lambda f.\lambda x.f(f\,x): (\set{0},\emptyset, (\set{0},\emptyset,\prty_2)\to\prty_2)\hascost{0}}
{\infers{f\COL(\set{0},\emptyset,\prty_2)\PMm{2}\lambda x.f(f\,x):(\set{0},\emptyset,\prty_2)\hascost{0}}
{\infers{f\COL(\set{0},\emptyset,\prty_2),x\COL(\set{0},\emptyset,\T)\PMm{2} f(f\,x):(\set{0},\emptyset,\T)\hascost{0}}
{f\COL(\set{0},\emptyset,\prty_2)\PMm{2}f:(\set{0},\emptyset,\prty_2)\hascost{0}
 & \infers{f\COL(\set{0},\emptyset,\prty_2),x\COL(\set{0},\emptyset,\T)\PMm{2} f\,x:(\set{0},\emptyset,\T)\hascost{0}}{\cdots}
}}}}
\]

\[\small
\infers{\PMm{2} T: (\set{1},\emptyset, (\set{1},\emptyset,\prty_1)\land (\set{0},\emptyset,\prty_2)\to\prty_1)\hascost{0}}
{\infers
{\PMm{2} \lambda f.\lambda x.f(f\,x): 
(\set{1},\emptyset, (\set{1},\emptyset,\prty_1)\land (\set{0},\emptyset,\prty_2)\to\prty_1)\hascost{0}}
{\infers{\PTE_9\PMm{2}\lambda x.f(f\,x):(\set{1},\emptyset,\prty_1)\hascost{0}}
{\infers{\PTE_9,\PTE_3\PMm{2} f(f\,x):(\set{1},\set{0},\T)\hascost{0}}
{\PTE_7\PMm{2}f:(\set{1},\emptyset,\prty_1)\hascost{0}
 & \infers{\PTE_8,x\COL(\set{0},\emptyset,\T)\PMm{2} f\,x:(\set{0},\emptyset,\T)\hascost{0}}{\cdots}
 & \infers{\PTE_7,\PTE_3\PMm{2} f\,x:(\set{1},\set{0},\T)\hascost{0}}{\cdots}
}}}}
\]
where \(\PTE_9 = 
f\COL (\set{1},\emptyset,\prty_1),f\COL(\set{0},\emptyset,\prty_2)\),
 \(\PTE_7 = f\COL (\set{1},\emptyset,\prty_1)\),
and \(\PTE_8 = f\COL(\set{0},\emptyset,\prty_2)\).

Thus, we can also construct the following derivation:
\[\small
\hspace*{-8em}
\infers{\PMm{2} S:(\emptyset,\set{0,1},\T)\hascost{4}}
{
\infers{\PMm{2} R\,A:(\emptyset,\set{0,1},\T)\hascost{4}}
{
\infers{\PMm{2} R:(\emptyset,\set{0},\prty_5)\hascost{3}}
{
\infers{\PMm{2} \lambda f.f\,\Te + R(T\,f):(\emptyset,\set{0},\prty_5)\hascost{3}}
{
\infers{\PTE_6\PMm{2} f\,\Te + R(T\,f):(\emptyset,\set{0,1},\T)\hascost{3}}
{
\infers{\PTE_6\PMm{2}R(T\,f):(\emptyset,\set{0,1},\T)\hascost{3}}
{
\infers{\PMm{2} R:(\emptyset,\set{0},\prty_5)\hascost{2}}
{\cdots}
& \infers{\PTE_9\PMm{2}T\,f:(\set{1},\emptyset,\prty_1)\hascost 0}{\cdots}
& \infers{\PTE_8\PMm{2}T\,f:(\emptyset,\set{0},\prty_2)\hascost 0}{\cdots}
& \infers{\PTE_6\PMm{2}T\,f:(\emptyset,\set{1},\prty_1)\hascost 1}{\cdots}
}}}}
& \hspace*{-18em}\PMm{2} A: (\set{1},\emptyset,\prty_1)\hascost{0}
& \PMm{2} A: (\set{0},\emptyset,\prty_2)\hascost{0}
& \PMm{2} A: (\emptyset,\set{1},\prty_1)\hascost{1}
}
}
\]
Thus, the subderivation of 
\(\PMm{2} R:(\emptyset,\set{0},\prty_5)\hascost{3}\) from 
\(\PMm{2} R:(\emptyset,\set{0},\prty_5)\hascost{2}\) is pumpable.

From the derivation, we obtain the following triples \((C,D,t)\):
\[
\begin{array}{ll}
C = \Hole\,t_{A,{(\set{1},\emptyset,\prty_1)}}\,t_{A,{(\set{0},\emptyset,\prty_2)}}\,t_{A,{(\emptyset,\set{1},\prty_1)}}\\
t_{A,{(\set{1},\emptyset,\prty_1)}} = 
\lambda x_{(\set{0},\emptyset,\T)}.\lambda x_{(\emptyset,\set{0},\T)}.\Ta\,x_{(\set{0},\emptyset,\T)} x_{(\emptyset,\set{0},\T)}\\
t_{A,{(\set{0},\emptyset,\prty_2)}} = 
\lambda x_{(\set{0},\emptyset,\T)}.\Ta\,x_{(\set{0},\emptyset,\T)}x_{(\set{0},\emptyset,\T)}\\
t_{A,{(\emptyset,\set{1},\prty_1)}} = t_{A,{(\set{1},\emptyset,\prty_1)}} \\
D = \lambda f_{(\set{1},\emptyset,\prty_1)}.\lambda f_{(\set{0},\emptyset,\prty_2)}.\lambda f_{(\emptyset,\set{1},\prty_1)}.
\Hole\,t_{T f, {(\set{1},\emptyset,\prty_1)}}\,t_{T f,{(\set{0},\emptyset,\prty_2)}}\,t_{T f,{(\emptyset,\set{1},\prty_1)}}\\
t_{T f, {(\set{1},\emptyset,\prty_1)}} = \lambda x_{(\set{0},\emptyset,\T)}.\lambda x_{(\emptyset,\set{0},\T)}.
f_{(\set{1},\emptyset,\prty_1)} 
(f_{(\set{0},\emptyset,\prty_2)}x_{(\set{0},\emptyset,\T)})(f_{(\set{1},\emptyset,\prty_1)}\,x_{(\set{0},\emptyset,\T)} x_{(\emptyset,\set{0},\T)})\\
t_{T f,{(\set{0},\emptyset,\prty_2)}} = \lambda x_{(\set{0},\emptyset,\T)}.
f_{(\set{0},\emptyset,\prty_2)}(f_{(\set{0},\emptyset,\prty_2)}x_{(\set{0},\emptyset,\T)})\\
t_{T f,{(\emptyset,\set{1},\prty_1)}} = \lambda x_{(\set{0},\emptyset,\T)}.\lambda x_{(\emptyset,\set{0},\T)}.
f_{(\set{1},\emptyset,\prty_1)}(f_{(\set{0},\emptyset,\prty_2)}x_{(\set{0},\emptyset,\T)})(f_{(\emptyset,\set{1},\prty_1)}\,x_{(\set{0},\emptyset,\T)} x_{(\emptyset,\set{0},\T)})\\
t = \lambda f_{(\set{1},\emptyset,\prty_1)}.\lambda f_{(\set{0},\emptyset,\prty_2)}.\lambda f_{(\emptyset,\set{1},\prty_1)}.
   t_{R,(\emptyset,\set{0},(\emptyset,\set{1},\prty_1)\to\T)} (t_{T f,{(\emptyset,\set{1},\prty_1)}})\\
t_{R,(\emptyset,\set{0},(\emptyset,\set{1},\prty_1)\to\T)} = 
\lambda f_{(\emptyset,\set{1},\prty_1)}. f_{(\emptyset,\set{1},\prty_1)}\,\Te\,\Te
\end{array}
\]
For the readability, let us rename variables and terms.
\[
\begin{array}{ll}
C = \Hole\,t_{A,0}\,t_{A,{1}}\,t_{A,{2}}\\
t_{A,0} = t_{A,2} =
\lambda x_0.\lambda x_1.\Ta\,x_0\, x_1\\
t_{A,1} = 
\lambda x_0.\Ta\,x_0\,x_0\\
D = \lambda f_0.\lambda f_1.\lambda f_2.
\Hole\,t_{T f, 0}\,t_{T f,{1}}\,t_{T f,{2}}\\
t_{T f, {0}} = \lambda x_0.\lambda x_1.
f_0
(f_{1}x_{0})(f_{0}\,x_{0} x_{1})\\
t_{T f,1} = \lambda x_0.
f_1(f_1\,x_0)\\
t_{T f,2} = \lambda x_0.\lambda x_1.
f_0(f_1\,x_0)(f_2\,x_0\, x_1)\\
t = \lambda f_0.\lambda f_1.\lambda f_2.
   t_R (t_{T f,2})\\
t_R = \lambda f_2. f_2\,\Te\,\Te
\end{array}
\]
Note that \(C\) and \(D\) are linear.

The corresponding OCaml code, and sample runs are given below.
\begin{quote}
\small
\begin{verbatim}
(* f0 = f_{({1},{},ro_1)}, f1 = f_{({0},{},ro_1)}, f2 = f_{({},{1},ro_1)} *)
(* x0 = x_{({0},{},T)}, x1 = x_{({},{0},T)} *)

(* tree constructors *)
type tree = A of tree * tree | E

(* context C *)
let tA0 = fun x0 x1 -> A(x0, x1)
let tA1 = fun x0 -> A(x0, x0)
let tA2 = tA0
let c g = g tA0 tA1 tA2

(* context D *)
let tT0 f0 f1 = fun x0 x1 -> f0 (f1 x0) (f0 x0 x1)
let tT1 f1 =  fun x0 -> f1 (f1 x0)
let tT2 f0 f1 f2 =  fun x0 x1 -> f0 (f1 x0) (f2 x0 x1)
let d g = fun f0 f1 f2 -> g (tT0 f0 f1) (tT1 f1) (tT2 f0 f1 f2)

(* term t *)
let tR = fun f2 -> f2 E E
let t = fun f0 f1 f2 -> tR(tT2 f0 f1 f2)

(*  sample execution 
# let t0 = c t;;
val t0 : tree = A (A (E, E), A (E, E))
# let t1 = c (d t);;
val t1 : tree =
  A (A (A (A (E, E), A (E, E)), A (A (E, E), A (E, E))),
   A (A (A (E, E), A (E, E)), A (A (E, E), A (E, E))))
# let t2 = c (d (d t));;
val t2 : tree =
  A
   (A
     (A
       (A
         (A (A (A (A (E, E), A (E, E)), A (A (E, E), A (E, E))),
  ...
*)
\end{verbatim}
\end{quote}

The following is a direction-annotated version of \((C, D,t)\):
\[
\begin{array}{ll}
C' = \Hole\,t'_{A,0}\,t'_{A,{1}}\,t'_{A,{2}}\\
t'_{A,0} = t'_{A,2} =
\lambda x_0.\lambda x_1.\directedT{1}{\Ta}\,x_0\, x_1 + \directedT{2}{\Ta}\,x_0\, x_1\\
t'_{A,1} = 
\lambda x_0.
\directedT{1}{\Ta}\,x_0 x_0 + \directedT{2}{\Ta}\,x_0\, x_0\\
D' = \lambda f_0.\lambda f_1.\lambda f_2.
\Hole\,t'_{T f, 0}\,t'_{T f,{1}}\,t'_{T f,{2}}\\
t'_{T f, {0}} = \lambda x_0.\lambda x_1.
f_0
(f_{1}x_{0})(f_{0}\,x_{0} x_{1})\\
t'_{T f,1} = \lambda x_0.
f_1(f_1\,x_0)\\
t'_{T f,2} = \lambda x_0.\lambda x_1.
f_0(f_1\,x_0)(f_2\,x_0\, x_1)\\
t' = \lambda f_0.\lambda f_1.\lambda f_2.
   t'_R (t'_{T f,2})\\
t'_R = \lambda f_2. f_2\,\directedT{0}{\Te}\,\directedT{0}{\Te}
\end{array}
\]
Let us construct a type derivation for \(C'\), \(D'\), and \(t'\).
A derivation for \(t'\) is:
\[
\infers{\PMm{2} t':
   (\emptyset,\set{0},\prty_5')\hascost{2}}
{\infers
{\PTE_6'\PMm{2} t'_R(t'_{Tf,2}): (\emptyset,\set{0,1},\T)\hascost{2}}
{\PMm{2} t'_R:(\emptyset,\set{0},(\emptyset,\set{1},\prty'_1)\to \T)\hascost{1}
& \PTE_6' \PMm{2} t'_{Tf,2}: (\emptyset,\set{1},\prty'_1)\hascost{1}}
}
\]
\[
\infers{\PMm{2} t'_R:(\emptyset,\set{0},(\emptyset,\set{1},\prty'_1)\to \T)\hascost{1}}
{\infers
 {f_2\COL(\emptyset,\set{1},\prty'_1)\PMm{2} f_2\,\directedT{0}{\Te}\,\directedT{0}{\Te}:(\emptyset,\set{0,1},\T)\hascost{1}}
{\infers{f_2\COL(\emptyset,\set{1},\prty'_1)\PMm{2} f_2\,\directedT{0}{\Te}:
  (\set{0},\set{1},(\emptyset,\set{0},\T)\to\T)\hascost{0}}
 {f_2\COL(\emptyset,\set{1},\prty'_1)\PMm{2} f_2: (\emptyset,\set{1},\prty'_1)\hascost{0}
 & \PMm{2}\directedT{0}{\Te}:(\set{0},\emptyset,\T)\hascost{0}}
 & \infers{\PMm{2}\directedT{0}{\Te}:(\emptyset,\set{0},\T)\hascost{0}}
   {\PMm{2}\directedT{0}{\Te}:(\set{0},\emptyset,\T)\hascost{0}}
}}
\]
\[\small
\infers{\PTE_6' \PMm{2} t'_{Tf,2}: (\emptyset,\set{1},\prty'_1)\hascost{1}}
{\infers{\PTE_6',\PTE'_3\PMm{2} f_0(f_1x_0)(f_2x_0x_1): (\emptyset,\set{0,1},\T)\hascost{1}}
 {{\PTE_9',\PTE'_1\PMm{2} f_0(f_1x_0):(\set{0,1},\emptyset,(\emptyset,\set{0},\T)\to\T)\hascost{0}}
&\hspace*{-5em}
\infers{\PTE_{10}',\PTE'_3\PMm{2} f_2x_0x_1: (\emptyset,\set{0,1},\T)\hascost{0}}
{\infers{\PTE_{10}',\PTE'_1\PMm{2}f_2x_0:(\emptyset,\set{1},(\emptyset,\set{0},\T)\to\T)\hascost{0}}
 {\PTE_{10}'\PMm{2}f_2: (\emptyset,\set{1},\prty'_1)\hascost{0}
  & \PTE'_1\PMm{2}x_0:(\set{0},\emptyset,\T)\hascost{0}}
& \PTE'_2\PMm{2}x_1:(\emptyset,\set{0},\T)\hascost{0}}
}}
\]
\[
\infers{\PTE_9',\PTE'_1\PMm{2} f_0(f_1x_0):(\set{0,1},\emptyset,(\emptyset,\set{0},\T)\to\T)\hascost{0}}
  {\PTE_7'\PMm{2}f_0:(\set{1},\emptyset,\prty'_1)\hascost{0}
  & \infers{\PTE_8',\PTE'_1\PMm{2} f_1x_0:(\set{0},\emptyset,\T)\hascost{0}}
    {\PTE_8'\PMm{2}f_1:(\set{0},\emptyset,\prty'_2\hascost{0})
     & \PTE_1\PMm{2}x_0:(\set{0},\emptyset,\T)\hascost{0}}}
\]
where
\[
\begin{array}{l}
\PTE_1' = x_0\COL (\set{0},\emptyset,\T)\\
\PTE_2' = x_1\COL(\emptyset,\set{0},\T)\\
\PTE_3' = x_0\COL (\set{0},\emptyset,\T), x_1\COL(\emptyset,\set{0},\T)\\
\PTE_6' = 
f_0\COL (\set{1},\emptyset,\prty'_1),f_1\COL(\set{0},\emptyset,\prty'_2),f_2\COL (\emptyset,\set{1},\prty'_1)\\
\PTE_7' = f_0\COL (\set{1},\emptyset,\prty'_1)\\
\PTE_8' = f_1\COL(\set{0},\emptyset,\prty'_2)\\
\PTE_9' = f_0\COL (\set{1},\emptyset,\prty'_1),f_1\COL(\set{0},\emptyset,\prty'_2)\\
\PTE_{10}' = f_2\COL (\emptyset,\set{1},\prty'_1)\\
\prty'_1 = (\set{0},\emptyset,\T)\to (\emptyset,\set{0},\T)\to\T\\
\prty'_2 = (\set{0},\emptyset,\T)\to\T\\
\prty_5' = (\set{1},\emptyset,\prty'_1)\to (\set{0},\emptyset,\prty'_2)\to (\emptyset,\set{1},\prty'_1)\to\T
\end{array}
\]

A derivation for \(C'\) is:
\[\small
\infers{\PMm{2}C': (\emptyset,\set{0,1},\T)\hascost{4}}
 {\PMm{2}\Hole: (\emptyset,\set{0},\prty_5)\hascost{3}
 & \PMm{2}t'_{A,0}:(\set{1},\emptyset,\prty_1')\hascost{0}
 & \PMm{2}t'_{A,1}:(\set{0},\emptyset,\prty_2')\hascost{0}
 & \PMm{2}t'_{A,2}:(\emptyset,\set{1},\prty_1')\hascost{1}}
\]
where
\[\small
\infers{\PMm{2}t'_{A,0}:(\set{1},\emptyset,\prty_1')\hascost{0}}
{\infers
{\PTE'_3\PMm{2} \directedT{1}{\Ta}\,x_0\, x_1 + \directedT{2}{\Ta}\,x_0\, x_1:(\set{1},\set{0},\T)\hascost{0}}
{\infers{\PTE'_3\PMm{2} \directedT{2}{\Ta}\,x_0\, x_1:(\set{1},\set{0},\T)\hascost{0}}
{\PTE_1'\PMm{2} x_0:(\set{0},\emptyset,\T) &
\PTE_2'\PMm{2} x_1:(\emptyset,\set{0},\T) 
}
}}
\]

\[\small
\infers{\PMm{2}t'_{A,1}:(\set{0},\emptyset,\prty_2')\hascost{0}}
{\infers
{\PTE'_1\PMm{2} \directedT{1}{\Ta}\,x_0\, x_0 + \directedT{2}{\Ta}\,x_0\, x_0:(\set{0},\emptyset,\T)\hascost{0}}
{\infers{\PTE'_1\PMm{2} \directedT{2}{\Ta}\,x_0\, x_0:(\set{0},\emptyset,\T)\hascost{0}}
{\PTE_1'\PMm{2} x_0:(\set{0},\emptyset,\T) 
}
}}
\]

\[\small
\infers{\PMm{2}t'_{A,2}:(\emptyset,\set{1},\prty_1')\hascost{1}}
{\infers
{\PTE'_3\PMm{2} \directedT{1}{\Ta}\,x_0\, x_1 + \directedT{2}{\Ta}\,x_0\, x_1:(\emptyset,\set{0,1},\T)\hascost{1}}
{\infers{\PTE'_3\PMm{2} \directedT{2}{\Ta}\,x_0\, x_1:(\emptyset,\set{0,1},\T)\hascost{1}}
{\PTE_1'\PMm{2} x_0:(\set{0},\emptyset,\T) &
\infers{\PTE_2'\PMm{2} x_1:(\emptyset,\set{0,1},\T)\hascost{0} }{\PTE_2'\PMm{2} x_1:(\emptyset,\set{0},\T)\hascost{0} }
}
}}
\]

A derivation for \(D'\) is:
\[\small
\infers{\PMm{2}D': (\emptyset,\set{0},\prty_5)\hascost{3}}
{\infers{\PTE_6'\PMm{2} \Hole\,t'_{Tf,0}t'_{Tf,1}t'_{Tf,2}:(\emptyset,\set{0,1},\T)\hascost{3}}
{\PMm{2}\Hole: (\emptyset,\set{0},\prty_5)\hascost{2}
 & \PTE_9'\PMm{2}t'_{Tf,0}:(\set{1},\emptyset,\prty_1')\hascost{0}
 & \PTE_8'\PMm{2}t'_{Tf,1}:(\set{0},\emptyset,\prty_2')\hascost{0}
 & \infers{\PTE_6'\PMm{2}t'_{Tf,2}:(\emptyset,\set{1},\prty_1')\hascost{1}}{\cdots}
}}
\]
where:
\[\small
\infers{\PTE_9'\PMm{2}t'_{Tf,0}:(\set{1},\emptyset,\prty_1')\hascost{0}}
{\infers{\PTE_9',\PTE'_3\PMm{2} f_0(f_1x_0)(f_0x_0x_1): (\set{1},\set{0},\T)\hascost{0}}
 {\PTE_9',\PTE'_1\PMm{2} f_0(f_1x_0):(\set{0,1},\emptyset,(\emptyset,\set{0},\T)\to\T)\hascost{0}
& \hspace*{-6em}
\infers{\PTE_{7}',\PTE'_3\PMm{2} f_0x_0x_1: (\set{1},\set{0},\T)\hascost{0}}
{\infers{\PTE_7',\PTE'_1\PMm{2}f_0x_0:(\set{0,1},\emptyset,(\emptyset,\set{0},\T)\to\T)\hascost{0}}
 {\PTE_7'\PMm{2}f_0:(\set{1},\emptyset,\prty_1')\hascost{0} & \PTE'_1\PMm{2}x_0:(\set{0},\emptyset,\T)\hascost{0}}
 & \PTE'_2\PMm{2}x_1:(\emptyset,\set{0},\T)\hascost{0}}
}}
\]

\[\small
\infers{\PTE_8'\PMm{2}t'_{Tf,1}:(\set{0},\emptyset,\prty_2')\hascost{0}}
{\infers{\PTE_8',\PTE'_1\PMm{2} f_1(f_1x_0): (\set{0},\emptyset,\T)\hascost{0}}
 {\PTE_8'\PMm{2} f_1:(\set{0},\emptyset,\prty_2')\hascost{0}
\infers{\PTE_{8}',\PTE'_1\PMm{2} f_1x_0: (\set{0},\emptyset,\T)\hascost{0}}
 {\PTE_8'\PMm{2}f_1:(\set{0},\emptyset,\prty_2')\hascost{0}
  & \PTE'_1\PMm{2}x_0:(\set{0},\emptyset,\T)\hascost{0}}}}
\]

It turns out that the derivation for \(C'[D'[t']]\) is pumpable. We obtain
the following triple \((G,H,u)\) that satisfies the condition of Lemma~\ref{lem:parys5}.
\[
\begin{array}{ll}
G = \Hole\,u_{A,0}\,u_{A,{1}}\,u_{A,{2}}\\
u_{A,0} = u_{A,2} =
\lambda x_0.\lambda x_1.\directedT{2}{\Ta}\,x_0\, x_1\\
u_{A,1} = 
\lambda x_0.\directedT{2}{\Ta}\,x_0\, x_0\\
H = \lambda f_0.\lambda f_1.\lambda f_2.
\Hole\,u_{T f, 0}\,u_{T f,{1}}\,u_{T f,{2}}\\
u_{T f, {0}} = \lambda x_0.\lambda x_1.
f_0
(f_{1}x_{0})(f_{0}\,x_{0} x_{1})\\
u_{T f,1} = \lambda x_0.f_1(f_1\,x_0)\\
u_{T f,2} = \lambda x_0.\lambda x_1.
f_0(f_1\,x_0)(f_2\,x_0\, x_1)\\
u = \lambda f_0.\lambda f_1.\lambda f_2.
   u_R (u_{T f,2})\\
u_R = \lambda f_2. f_2\,\directedT{0}{\Te}\,\directedT{0}{\Te}
\end{array}
\]
In this case, \((G,H,u)\) is the same as \((C,D,t)\) except that \(\Ta\) and \(\Te\) are annotated
with the directions \(2\) and \(0\) respectively.

Now, let us apply the induction in the proof of Lemma~\ref{lem:orderInd}.
The triple \((G_p,H_p,u_p)\) is given  by:
\[
\begin{array}{ll}
G_p = \Hole\,u'_{A,0}\,u'_{A,{1}}\,u'_{A,{2}}\\
u'_{A,0} = u'_{A,2} =
\lambda x_0.\lambda x_1.\directedT{2}{\Ta}x_1\\
u'_{A,1} = 
\lambda x_0.\directedT{2}{\Ta}\,x_0\\
H_p = \lambda f_0.\lambda f_1.\lambda f_2.
\Hole\,u'_{T f, 0}\,u'_{T f,{1}}\,u'_{T f,{2}}\\
u'_{T f, {0}} = \lambda x_0.\lambda x_1.
f_0
(f_{1}x_{0})(f_{0}\,x_{0} x_{1})\\
u'_{T f,1} = \lambda x_0.f_1(f_1\,x_0)\\
u'_{T f,2} = \lambda x_0.\lambda x_1.
f_0(f_1\,x_0)(f_2\,x_0\, x_1)\\
u_p = \lambda f_0.\lambda f_1.\lambda f_2.
   u'_R (u'_{T f,2})\\
u'_R = \lambda f_2. f_2\,\directedT{0}{\Te}\,\directedT{0}{\Te}
\end{array}
\]

By applying the first of the word-to-leaves transformation (in Section~\ref{sec:FstTrans}),
we obtain the following order-1 contexts and terms.
\[
\begin{array}{ll}
G'_p = \Hole\,u''_{A,0}\,u''_{A,{2}}\\
u''_{A,0} = u''_{A,2} =
 \br\,\directedT{2}{\Ta}\,\Te\\
H'_p = \lambda f_0.\lambda f_2.
\Hole\,u''_{T f, 0}\,u''_{T f,{2}}\\
u''_{T f, {0}} = %
\br\,f_0\,(\br\,f_{0}\,\Te)\\
u''_{T f,2} = %
\br\,f_0\,(\br\,f_2\,\Te)\\
u'_p = \lambda f_0.\lambda f_2.
   u''_R (u''_{T f,2})\\
u''_R = \lambda f_2. \br\,f_2\,\directedT{0}{\Te}
\end{array}
\]
By applying the second transformation (in Section~\ref{sec:SndTrans}) to
eliminate redundant occurrences of \(\Te\), we obtain:
\[
\begin{array}{ll}
G_l = \Hole\,v_{A,0}\,v_{A,{2}}\\
v_{A,0} = v_{A,2} = \directedT{2}{\Ta}\\
H_l = \lambda f_0.\lambda f_2.
\Hole\,v_{T f, 0}\,v_{T f,{2}}\\
v_{T f, {0}} = %
\br\,f_0\,f_{0}\\
v_{T f,2} = %
\br\,f_0\,f_2\\
u_l = \lambda f_0.\lambda f_2.
   v_R (v_{T f,2})\\
v_R = \lambda f_2. f_2
\end{array}
\]
The corresponding OCaml code, and sample runs are as follows.
\begin{quote}
\begin{verbatim}
type tree1 = Br of tree1*tree1 | LeafE | LeafA
(* G_l *)
let vA0 = LeafA
let vA2 = vA0
let gl g = g vA0 vA2
(* H_l *)
let vTf0 f0 = Br(f0,f0)
let vTf2 f0 f2 = Br(f0,f2)
let hl g = fun f0 f2 -> g (vTf0 f0) (vTf2 f0 f2)
(* u_l *)
let vR = fun f2 -> f2
let ul = fun f0 f2 -> vR (vTf2 f0 f2)

(* sample runs
# gl ul;;
- : tree1 = Br (LeafA, LeafA)
# gl(hl ul);;
- : tree1 = Br (Br (LeafA, LeafA), Br (LeafA, LeafA))
# gl(hl (hl ul));;
- : tree1 =
Br (Br (Br (LeafA, LeafA), Br (LeafA, LeafA)),
 Br (Br (LeafA, LeafA), Br (LeafA, LeafA)))
*)
\end{verbatim}
\end{quote}

By applying the induction hypothesis of Lemma~\ref{lem:orderInd}, 
we obtain a pumping sequence (where we can choose \(j=0\) and \(k=1\) in this case):
\[
\leaves(\tree(\wl{G}[\wl{H}^{0}[\wl{u}]])) \she 
\leaves(\tree(\wl{G}[\wl{H}^{1}[\wl{u}]])) \she 
\leaves(\tree(\wl{G}[\wl{H}^{2}[\wl{u}]])) \she \cdots,
\]
and hence we also obtain:
\begin{equation*}
\tree(\tp{G}[{\tp{H}}^{0}[\tp{u}]]) \she 
\tree(\tp{G}[{\tp{H}}^{1}[\tp{u}]]) \she 
\tree(\tp{G}[{\tp{H}}^{2}[\tp{u}]]) \she \cdots.
\end{equation*}
We also happen to have a trivial periodic sequence:
\begin{equation*}
{{H}}^{0}[{u}] \he_\sty
{{H}}^{1}[{u}] \he_\sty 
{{H}}^{2}[{u}] \he_\sty \cdots
\end{equation*}
for \(\sty = (\T\to\T\to\T)\to (\T\to\T)\to(\T\to\T\to\T)\to\T\).
Thus, we also have
\[
\tree({G}[{{H}}^{0}[{u}]]) \she 
\tree({G}[{{H}}^{1}[{u}]]) \she 
\tree({G}[{{H}}^{2}[{u}]]) \she \cdots.
\]
Therefore, we finally obtain:
\[ \tree(C[t]) \she \tree(C[D[t]]) \she \tree(C[D^2[t]]) \she \cdots.\]

\subsubsection{Other examples}
One may expect that the triple 
\((C,D,t)\) obtained from a pumpable derivation always satisfies
\[ |\tree(C[t])| < |\tree(C[D[t]])| < |\tree(C[D[D[t]]])| <\cdots\]
because the counter values of the derivations for \(C[D^i[t]]\) strictly 
increase with respect to \(i\).
Below we give examples where that is not the case
(although there do exist \(j,k\)
such that \(|\tree(C[D^j[t]]) < |\tree(C[D^{j+k}[t]])|\)
as guaranteed by Lemma~\ref{lem:orderInd}).

Consider the following order-1 grammar:
\[
\begin{array}{l}
S \to A(a(a(e))).\\
A\ x\to x.\\
A\ x\to a(A(e)).
\end{array}
\]
We have the following pumpable derivation (where we omit the target of the transformation):
\[
\infers{\PMm{1} S: (\emptyset,\set{0},\T)\hascost 2}
{\infers{\PMm{1} A(a(a(e))): (\emptyset,\set{0},\T)\hascost 2}
  {\infers{\PMm{1} A: (\emptyset,\set{0},(\set{0},\emptyset,\T)\to\T)\hascost 2}
     {\infers{\PMm{1} \lambda x.a(A(e)): (\emptyset,\set{0},(\set{0},\emptyset,\T)\to\T)\hascost 2}
      {\infers{x\COL(\set{0},\emptyset,\T)\PMm{1}a(A(e)):(\emptyset,\set{0},\T)\hascost 2}{
       \infers{\PMm{1}a(A(e)):(\emptyset,\set{0},\T)\hascost 2}
       {\infers{\PMm{1}A(e): (\emptyset,\set{0},\T)\hascost 1}
        {\infers{\PMm{1} A:(\emptyset,\set{0},(\set{0},\emptyset,\T)\to\T)\hascost 1}
         {\infers{\PMm{1} \lambda x.x:(\emptyset,\set{0},(\set{0},\emptyset,\T)\to\T)\hascost 1}
          {\infers{x\COL(\set{0},\emptyset,\T)\PMm{1} x:(\emptyset,\set{0},\T)\hascost 1}
           {\infers{x\COL(\set{0},\emptyset,\T)\PMm{1} x:(\set{0},\emptyset,\T)\hascost 0}{}}}
          } 
        &
         \infers{\PMm{1} e:(\set{0},\emptyset,\T)\hascost 0}{} }}
       }
       }
     }
     & \infers{\PMm{1} a(a(e)):(\set{0},\emptyset,\T)\hascost 0}{\cdots}
}}
\]
The triple \((C,D,t)\) obtained from the above derivation is:
\[
C = \Hole\,(a(a(e)))\qquad D = \lambda x.a(\Hole\,e)\qquad t = \lambda x.x.
\]
Notice that \(C[t] \reds a(a(e))\) and \(C[D[t]]\reds a((\lambda x.x)e)\reds a(e)\);
thus, \(|\tree(C[t])|\not<|\tree(C[D[t]])|\).
Note, however, that since \(\tree(C[D^k[t]])= a^k(e)\) for \(k>0\),  we have:
\[ \tree(C[D[t]]) \she \tree(C[D^2[t]])\she \tree(C[D^3[t]])\she \cdots.\]
Thus, for the triple above, Lemma~\ref{lem:orderInd} holds for \(j=k=1\).

Consider the following order-2 grammar:
\[
\begin{array}{l}
S \to A(\lambda xy.y).\\
A\ f\to f\ e\ (a^2\,e).\\
A\ f\to a(A(\lambda xy.f\,y\,x).\\
\end{array}
\]

The following is a pumpable derivation:
\[
\small
\infers{\PMm{2} S: (\emptyset,\set{0,1},\T)\hascost 2}
{\infers{\PMm{2} A(\lambda xy.y): (\emptyset,\set{0,1},\T)\hascost 2}
 {\infers{\PMm{2}A:(\emptyset,\set{0,1},
                   (\set{0},\emptyset,\prty_1)\to \T)\hascost 2}
  {\infers{\PMm{2} \lambda f.a(A(\lambda xy.f\,y\,x):(\emptyset,\set{0,1},
                   (\set{0},\emptyset,\prty_1)\to \T)\hascost 2}
   {\infers{f\COL(\set{0,\emptyset,\prty_1})\PMm{2}a(A(\lambda xy.f\,y\,x):(\emptyset,\set{0,1},\T)\hascost 2}
   {\infers{f\COL(\set{0,\emptyset,\prty_1})\PMm{2}A(\lambda xy.f\,y\,x):(\emptyset,\set{0,1},\T)\hascost 1}
   {\infers{\PMm{2}A:(\emptyset,\set{0,1},
                   (\set{0},\emptyset,\prty_1)\to \T)\hascost 1}
   {\infers {\PMm{2}\lambda f.f\,e\,(a^2\,e):(\emptyset,\set{0,1},
                   (\set{0},\emptyset,\prty_1)\to \T)\hascost 1}
   {\infers{f\COL(\set{0},\emptyset,\prty_1)\PMm{2}f\,e\,(a^2\,e):(\emptyset,\set{0,1},\T)\hascost 1 }
    {\infers{f\COL(\set{0},\emptyset,\prty_1)\PMm{2}f\,e\,(a^2\,e):(\set{0},\emptyset,\T)\hascost 0}{\cdots} }}
  }
    & \infers{f\COL(\set{0,\emptyset,\prty_1})\PMm{2} \lambda xy.f\,y\,x:(\set{0},\emptyset,\prty_1)\hascost 0}{\cdots}}
  }}}
   & 
  \hspace*{-6em}\infers{\PMm{2}\lambda xy.y:(\set{0},\emptyset,\prty_1)
       \hascost 0}{\cdots}
}}
\]
where
\(\prty_1 = (\set{0},\emptyset,\T)\to (\set{0},\emptyset,\T)\to \T)\).
The triple obtained from the above derivation is \((C,D,t)\) where:
\[
C = \Hole\lambda xy.y
\qquad
D = \lambda f.a(\Hole(\lambda xy.f\,y\,x))
\qquad
t = \lambda f.f\,e\,(a^2\,e).
\]
We have:
\[
\tree(C[D^i[t]]) = \left\{
 \begin{array}{ll}
    a^{i+2}(e) & \mbox{ if $i$ is even}\\
    a^{i}(e) & \mbox{ if $i$ is odd}
\end{array}
\right.
\]
Thus, \(|\tree(C[D^{2i}[t]])| \not<|\tree(C[D^{2i+1}[t]])|\).
We, however, have:
\[ C[D^j[t]] \she C[D^{j+k}[t]] \she C[D^{j+2k}[t]] \she \cdots\]
for \(j=0\), \(k=2\).
\anp
\section{Word-to-leaves Transformations}
\label{sec:wordleaf}

Here we prove Lemma~\ref{lem:iteLinPres}.
As explained in Section~\ref{sec:wordleafbody},
we use a modified version of the transformation given in~\cite{asada_et_al:LIPIcs:2016:6246}.
The transformation in~\cite{asada_et_al:LIPIcs:2016:6246}
consists of two steps, and
we restrict each of the two transformations to \(\stlambda\)-terms so that
the two restricted transformations return again \(\stlambda\)-terms.
We call the two steps \emph{first} and \emph{second} transformations,
and the composite the \emph{whole} transformation.
To distinguish the transformations in this paper and those in~\cite{asada_et_al:LIPIcs:2016:6246},
we call %
the latter \emph{original first/second/whole transformations}.

In Sections~\ref{sec:FstTrans} and~\ref{sec:SndTrans} we give the definitions of the first and second
transformations, respectively;
all the definitions in Sections~\ref{sec:FstTrans} and~\ref{sec:SndTrans}
except for Figures~\ref{fig:FirstTransformation} and~\ref{fig:SecondTransformation} are
taken from~\cite{asada_et_al:LIPIcs:2016:6246}.
Then, we show that these transformations preserve meaning (in Section~\ref{sec:propOfFSTrans}) and linearity (in Section~\ref{sec:linPres}).

\subsection{First Transformation}
\label{sec:FstTrans}

The first transformation is applied to order-\((n+1)\) \(\stlambda\)-terms
of a word alphabet and outputs order-\(n\) \(\stlambda\)-terms of a \(\br\)-alphabet.
Constants of type \(\T \to \T\) before the transformation
have type \(\T\) after the transformation.
This first transformation achieves the purpose of the whole transformation
except that an output term might not be \(\Te\)-free \(\br\)-tree: e.g.,
\((\lambda x . x) (\Ta\,\Te)\) is transformed to \(\br\, \Te\, (\br\, \Ta\, \Te)\),
whose leaves, \(\Te\,\Ta\,\Te\), have extra \(\Te\).
Such extra \(\Te\)'s will be removed by the second transformation.
We write \(\br\,t\,s\) also as \(t \ibr s\) and write \(((t_1 \ibr t_2)\dots \ibr t_m)\) as \(t_1 \ibr \dots \ibr t_m\).

Same as the original first transformation,
for technical convenience, we assume below that, for every type \(\sty\) occurring in
the simple type derivation of an input \(\stlambda\)-term,
if \(\sty\) is of the form \(\T\ra\sty'\), then \(\order(\sty')\leq 1\).
This does not lose generality, since any function \(\lambda x\COL\T.t\) of type
\(\T\ra\sty'\) with \(\order(\sty')>1\) can be replaced by the term \(\lambda x'\COL\T\ra\T.[x'\Te/x]t\)
of type \((\T\to\T)\to\sty'\) (without changing the order of the term),
and any term \(t\) of type \(\T\) can be replaced by the term \((\lambda x,y. x)\,t\) of type 
\(\T\ra\T\) (see~\cite[Appendix~D]{asada_et_al:LIPIcs:2016:6246} for the detail).
This preprocessing transformation preserves order, meaning, and linearity.

For the first transformation, we use the following intersection types:
\[
\begin{array}{l}
 \uty ::= \T \mid \ity \ra \uty
\qquad
 \ity ::= \uty_1\land \cdots \land \uty_k \quad (k \ge 0)
\end{array}
\]
We write \(\top\) for \(\uty_1\land \cdots \land \uty_k\) when \(k=0\).
We assume some total order \(<\) on intersection types,
and require that \(\uty_1<\cdots < \uty_k\) whenever 
\(\uty_1\land \cdots \land \uty_k\) occurs in an intersection type.
Intuitively, if a function \(f\) has type \(\uty_1 \land \cdots \land \uty_k \to \uty\),
then \(f\) uses an argument (in \(k\)-number of different ways), and if \(f\) has \(\top \to \uty\),
then \(f\) does not use an argument.

We introduce two refinement relations \(\brr{\uty}{\sty}\) and
\(\urr{\uty}{\sty}\).
The relations are defined as follows, by mutual induction; \(k\) may be \(0\).\\[1ex]
\InfruleS{0.55}{\urr{\uty_j}{\sty} \andalso j\in\set{1,\ldots,k} \\
\brr{\uty_i}{\sty} \mbox{ (for each $i\in\set{1,\ldots,k}\setminus\set{j}$)}
}{\urr{\uty_1\land \cdots \land \uty_k}{\sty}}
\InfruleS{0.45}{
\brr{\uty_i}{\sty} \mbox{ (for each $i\in\set{1,\ldots,k}$)}
}{\brr{\uty_1\land \cdots \land \uty_k}{\sty}}\\[1ex]
\InfruleS{0.2}{}{\urr{\T}{\T}}
\InfruleS{0.25}{
\brr{\ity}{\sty} \andalso \urr{\uty}{\sty'}
}{
\urr{\ity \ra \uty}{\sty \ra \sty'}
}
\InfruleS{0.25}{
\urr{\ity}{\sty} \andalso \urr{\uty}{\sty'}
}{
\brr{\ity \ra \uty}{\sty \ra \sty'}
}
\InfruleS{0.25}{
\brr{\ity}{\sty} \andalso \brr{\uty}{\sty'}
}{
\brr{\ity \ra \uty}{\sty \ra \sty'}
}\\
\noindent
A type \(\uty\) is called \myemph{balanced} if \(\brr{\uty}{\sty}\) for some \(\sty\),
and called \myemph{unbalanced} if \(\urr{\uty}{\sty}\) for some \(\sty\). 
Intuitively, unbalanced types describe trees or 
closures that contain the end of a word (i.e., symbol \(\Te\)).
Intersection types that are neither balanced nor unbalanced 
are considered ill-formed, and excluded out. 
For example, the type \(\T\to\T\to\T\) (as an intersection type) is ill-formed;
since \(\T\) is unbalanced, \(\T\to\T\) must also be unbalanced according to 
the rules for arrow types, but it is actually balanced.
In fact, no term can have the intersection type \(\T\to\T\to\T\) in a word grammar.
We write \(\uty\DCOL\sty\) if \(\brr{\uty}{\sty}\) or \(\urr{\uty}{\sty}\).

We introduce a type-directed transformation relation \(\UE\pf t:\uty\tr u\) for terms,
where \(\UE\) is a set of type bindings of the form \(x\COL\uty\), called a \emph{type
environment}, \(t\) is a source term, and \(u\) is the image of the transformation.
We write \(\UE_1\cup\UE_2\) for the union of \(\UE_1\) and \(\UE_2\); it is defined only
if, whenever \(x\COL\uty\in\UE_1\cap \UE_2\), \(\uty\) is balanced. In other words,
unbalanced types are treated as linear types, whereas balanced ones as non-linear (or
idempotent) types. 
Intuitively, if a type is unbalanced then an argument of the type is used linearly, i.e.,
used exactly once in the reduction;
while if a type is balanced then an argument of the type may be copied and used in many places.
We write \(\balanced(\TE)\) if \(\uty\) is balanced for every
\(x\COL\uty\in\TE\).

The relation \(\UE \pf t:\uty\tr u\) is defined inductively by the rules in Figure~\ref{fig:FirstTransformation}.
Note that, for a given ground closed term \(t\), a term \(u\) such that \(\pf t : \T \tr u\) is not necessarily unique, while
the original transformation return unique output (gathering all by using non-deterministic choice).
However this does not matter since the result is semantically the same as shown in
Section~\ref{sec:propOfFSTrans}.
By dropping the transformation part ``\(\tr u\)'' from the rules in Figure~\ref{fig:FirstTransformation},
we obtain a standard form of intersection type system.
Though the transformation is not deterministic, it is deterministic as a transformation of
a type derivation tree of the (simplified) intersection type system to a term.
The following is a standard fact on intersection type systems.
\begin{lemma}[subject reduction/expansion]
\label{lem:fstSubConv}
For \(t \red t'\),
\(\TE \pf t : \uty \tr u\) for some \(u\)
iff
\(\TE \pf t' : \uty \tr u\) for some \(u\).
\end{lemma}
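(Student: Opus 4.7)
The plan is to prove both directions by the standard route: establish a substitution lemma strong enough to power subject reduction, and an inverse de-substitution lemma for subject expansion. The substitution lemma I would aim for is: if
\(\TE_0, x : \uty_1\land\cdots\land \uty_k \pf t_0 : \uty \tr u_0\)
and \(\TE_i \pf t_1 : \uty_i \tr u_i\) for each \(i\in\set{1,\ldots,k}\), with \(\TE_0\cup\TE_1\cup\cdots\cup\TE_k\) well-defined, then \(\TE_0\cup\TE_1\cup\cdots\cup\TE_k \pf [t_1/x]t_0 : \uty \tr u\) for some \(u\). I would prove it by induction on the derivation of \(t_0\). In the variable case \(t_0=x\), inversion forces \(\uty\) to coincide with one of the \(\uty_i\), and we return the corresponding \(u_i\). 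In the \(\lambda\)- and application/constant cases the assumptions on \(x\) must be partitioned along the subderivations in accordance with how the intersection typing of each argument is formed; the well-formedness of \(\cup\) on the resulting environments follows because unbalanced (linear) components of each \(\uty_i\) are used in exactly one subderivation, while balanced components may be duplicated.

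For subject reduction, if \(t\red t'\), the non-deterministic choice case \(t_1+t_2\red t_i\) is immediate: a derivation for \(t_1+t_2\) contains derivations for both branches, so one for \(t_i\) can be extracted. The \(\beta\)-case reduces, by an easy induction on the evaluation context \(E\), to the pure redex \((\lambda x.t_0)t_1\red [t_1/x]t_0\); inverting the typing of \((\lambda x.t_0)t_1\) yields exactly the hypotheses of the substitution lemma, whose conclusion is the required derivation for \([t_1/x]t_0\).

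For subject expansion, I would establish the converse de-substitution lemma: from \(\TE \pf [t_1/x]t_0 : \uty \tr u\) one can extract \(\TE_0, x : \uty_1\land\cdots\land\uty_k \pf t_0 : \uty \tr u_0\) together with \(\TE_i \pf t_1 : \uty_i \tr u_i\) for each \(i\), such that \(\TE=\TE_0\cup\bigcup_i\TE_i\). This is proved by induction on \(t_0\), tracking which subderivations correspond to occurrences of the substituted \(t_1\) (identifiable from the positions of \(x\) in the original \(t_0\), since substitution is capture-avoiding). Once this is in hand, the \(\beta\)-case reassembles the typing of \((\lambda x.t_0)t_1\) by applying the abstraction and application rules to collect the \(\uty_i\)'s into the intersection \(\uty_1\land\cdots\land\uty_k\); the choice case is handled by applying the rule for \(+\) to either of two derivations (the same one for \(t_i\), and any derivation for the other summand obtained, if necessary, by weakening — and since all premises of rules in Figure~\ref{fig:FirstTransformation} only require typings on the chosen branch, this is unproblematic).

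The main obstacle will be the bookkeeping in de-substitution: one must carefully control how the environments on subderivations combine under \(\cup\), since the operation is partial and requires any shared assumption to be balanced. The crucial invariant is that unbalanced types on \(x\) correspond to exactly one occurrence of \(t_1\) in \([t_1/x]t_0\), while balanced ones may correspond to several — this matches the linearity discipline of \(\cup\) and is what makes the collected \(\TE_0\cup\bigcup_i\TE_i\) well-defined. Once this invariant is stated precisely and maintained through the induction, everything else is routine case analysis on the last rule applied.
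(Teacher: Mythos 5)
Your proposal is correct and is exactly the intended argument: the paper states this lemma without proof as a standard fact, but the substitution and de-substitution lemmas you identify are precisely Lemmas~\ref{lem:substFst} and~\ref{lem:desubstFst} in the appendix, each proved by the straightforward induction you describe, and the reduction to the pure redex via induction on the evaluation context is routine. The only inaccuracy is your treatment of the choice case: Figure~\ref{fig:FirstTransformation} has no rule for \(+\) because the first transformation is applied only to \(\stlambda\)-terms, so that case is vacuous rather than requiring the weakening argument you sketch.
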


For a word \(a_1\cdots a_n\),
we define term \(\wtt{(a_1\cdots a_n)}\) inductively by:
\(\wtt{\epsilon} = \Te\) and \(\wtt{(as)} = \TT{br}\,a\,\wtt{s}\).
\begin{lemma}[\mbox{\cite[Lemma~10]{asada_et_al:LIPIcs:2016:6246}}]
\label{lem:leaf-fwd}
\( \pf a_1(\cdots(a_n\,\Te)\cdots):\T \tr \wtt{(a_1\cdots a_n)}\).
\end{lemma}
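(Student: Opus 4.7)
The plan is to prove Lemma~\ref{lem:leaf-fwd} by induction on the length \(n\) of the word, mirroring exactly the inductive definition of \(\wtt{(\cdot)}\) that defines the target term. Because the source tree \(a_1(\cdots(a_n\,\Te)\cdots)\) is itself built up by iterated application of unary terminals to a terminal \(\Te\), each layer of the source tree will correspond, under the transformation, to one layer in the output.

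For the base case (\(n=0\)) I need to derive \(\pf \Te : \T \tr \Te\). Since \(\Te\) is the nullary terminal of the \(\br\)-alphabet and the source \(\Te\) is nullary in the word alphabet, this should follow directly from the base rule of Figure~\ref{fig:FirstTransformation} that handles nullary terminals of the input (producing \(\Te\) in the output, which is the unique term of type \(\T\) that a nullary \(\Te\) can refine to under \(\urr{\T}{\T}\)).

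For the inductive step, I assume \(\pf a_2(\cdots(a_n\,\Te)\cdots):\T \tr \wtt{(a_2\cdots a_n)}\) and must derive \(\pf a_1(a_2(\cdots(a_n\,\Te)\cdots)):\T \tr \br\,a_1\,\wtt{(a_2\cdots a_n)}\). The key point is that a unary terminal \(a_1 : \T\to\T\) of the word alphabet is handled by the rule in Figure~\ref{fig:FirstTransformation} which converts the application \(a_1\,t\) into the \(\br\)-application \(\br\,a_1\,u\), where \(u\) is the transformation of \(t\). Here \(a_1\) appears as a nullary constant of type \(\T\) in the target \(\br\)-alphabet (as explained in Section~\ref{sec:FstTrans}, ``constants of type \(\T \to \T\) before the transformation have type \(\T\) after the transformation''). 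Combining with the induction hypothesis yields the desired transformation judgment.

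The only subtlety is matching refinement relations correctly: the sub-derivation for the argument must end at the unbalanced refinement \(\urr{\T}{\T}\) (as the argument carries the end of the word, \(\Te\)), and the whole judgment of the lemma is again \(\urr{\T}{\T}\). Since the induction hypothesis gives precisely a judgment with target type \(\T\) and the rule for unary terminals preserves this, the refinement bookkeeping goes through without case analysis. I expect no real obstacle here; this is a routine induction whose purpose is simply to record the ``forward direction'' of correctness of the first transformation that will be used later (together with the analogous ``backward direction'') in the proof of Lemma~\ref{lem:iteLinPres}.
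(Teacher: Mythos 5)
Your proof is correct: the base case is exactly \rname{FTr-Const0}, and the inductive step is \rname{FTr-Const1} ($\pf a_1:\T\ra\T\tr a_1$) combined with \rname{FTr-App2}, which produces precisely $\br\,a_1\,\wtt{(a_2\cdots a_n)}=\wtt{(a_1\cdots a_n)}$; the refinement bookkeeping also checks out ($\T\ra\T$ is balanced and the argument of type $\T$ is unbalanced, matching the shape of \rname{FTr-App2}). The paper itself gives no proof here — it imports the statement from \cite{asada_et_al:LIPIcs:2016:6246} — and your routine induction on $n$ is exactly the argument one would expect there.
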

By Lemmas~\ref{lem:fstSubConv} and~\ref{lem:leaf-fwd},
for any closed ground term \(t\), we have \(\pf t : \T \tr u\) for some \(u\).

We define \(\utytosty{\uty\DCOL\sty}\) by:
\[
\begin{array}{l}
\utytosty{\uty\DCOL\sty} =
\T \qquad \mbox{ (if $\order(\sty) \le 1$)}
\\
\utytosty{(\uty_1\land\cdots\land\uty_k \ra \uty)\DCOL(\sty_0\to\sty)} =
\utytosty{\uty_1\DCOL\sty_0} \ra \dots \ra \utytosty{\uty_k\DCOL\sty_0} \ra \utytosty{\uty\DCOL\sty}
\\
\mspace{520mu}\mbox{ (if $\order(\sty_0\to\sty)>1$)}
\end{array}
\]
\begin{lemma}
\label{lem:orderDecreaseFst}
For an order-\((n+1)\) term \(t\) with \(x_1 : \sty_1, \dots, x_m : \sty_m \pK t : \sty\),
if
\begin{align*}&
x_1 : \uty^1_{1}, \dots, x_1 : \uty^1_{k_1}, \dots, x_m : \uty^m_{1}, \dots, x_m : \uty^m_{k_m} \pf t : \uty
 \tr u
\\&
(\land_{i \le k_1}\uty^1_i \to \dots \to \land_{i \le k_m} \uty^m_i \to \uty) \DCOL 
(\sty_1 \to \dots \to \sty_m \to \sty)
\end{align*}
then \(u\) is an order-\(n\) term with
\[
\Bigg(
\begin{aligned}
(x_1)_{\uty^1_{1}} &: \utytosty{\uty^1_{1} \DCOL \sty_1}, &&\dots,& (x_1)_{\uty^1_{k_1}} &: \utytosty{\uty^1_{k_1}
 \DCOL \sty_1} ,&& \dots, 
\\
(x_m)_{\uty^m_{1}} &: \utytosty{\uty^m_{1} \DCOL \sty_m}, &&\dots,& (x_m)_{\uty^m_{k_m}} &: \utytosty{\uty^m_{k_m} \DCOL \sty_m} &&
\end{aligned}
\ \Bigg)
\pK u : \utytosty{\uty \DCOL \sty}
\]
\end{lemma}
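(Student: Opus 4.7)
The plan is to prove this by induction on the derivation of $\UE \pf t : \uty \tr u$, strengthening the statement to cover arbitrary open terms: if $\UE \pf t : \uty \tr u$, if $\TE \pK t : \sty$, if $\uty \DCOL \sty$, and if every binding $(x\COL\uty') \in \UE$ refines the corresponding binding $(x\COL\sty') \in \TE$ (i.e.\ $\uty' \DCOL \sty'$), then the translated environment $\{(x)_{\uty'} \COL \utytosty{\uty' \DCOL \sty'} \mid (x\COL\uty')\in\UE\}$ simply-types $u$ at $\utytosty{\uty \DCOL \sty}$, and the order of $u$ is bounded by one less than the maximum order of simple types occurring in the source simple-type derivation (which, by assumption on $t$, is at most $n+1$, yielding $\order(u)\le n$).

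The base cases (variable and terminal) reduce to immediate checks against the defining clauses of $\utytosty{\cdot}$ and the shape of the image term. For the inductive cases, I would process each rule of Figure~\ref{fig:FirstTransformation}: an abstraction $\lambda x.t$ whose image is the nested $\lambda x_{\uty_1}\cdots\lambda x_{\uty_k}.u_0$ matches exactly the curried form $\utytosty{\uty_1\DCOL\sty_0}\ra\cdots\ra\utytosty{\uty_k\DCOL\sty_0}\ra\utytosty{\uty\DCOL\sty}$ prescribed by $\utytosty{\cdot}$; an application aggregates one derivation for the function and $k$ derivations for the argument (one per intersection component), which are combined on the image side by $k$ successive applications peeling off the intersection components in the fixed order $\uty_1 < \cdots < \uty_k$. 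Each such recombination step is a direct application of the rules of the simple type system, so well-typedness propagates inductively.

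The order bound is verified by a direct calculation on $\utytosty{\cdot}$: when $\order(\sty)\ge 2$, we have $\order(\utytosty{\uty\DCOL\sty})=\order(\sty)-1$; when $\order(\sty)\le 1$, we have $\utytosty{\uty\DCOL\sty}=\T$. The preprocessing assumption that any function type $\T\ra\sty'$ arising in the source derivation satisfies $\order(\sty')\le 1$ is precisely what makes the order drop uniformly by one at every level, and it prevents the curried image type from redeveloping higher-order subtypes.

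The main obstacle will be the bookkeeping around shared variables across subderivations in the application case. The operation $\UE_1 \cup \UE_2$ is defined only when variables occurring on both sides carry a balanced type, and on balanced types the translation $\utytosty{\cdot}$ is essentially insensitive to the refinement, so the translated bindings $(x)_{\uty'} \COL \utytosty{\uty'\DCOL\sty'}$ for shared $x$ coincide across the two subderivations and the combined simply-typed environment is well-formed. Making this explicit — and aligning the fixed ordering $\uty_1<\cdots<\uty_k$ on intersection components with the ordering of the nested abstractions and applications on the image side — is the most delicate part of the induction; once it is in place, the inductive step goes through by routine manipulation.
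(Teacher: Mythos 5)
Your proposal is correct and matches the paper's proof, which is exactly a straightforward induction on the derivation of the transformation judgment; your case analysis, the observation that the translated bindings for shared (necessarily balanced) variables coincide because they carry the same refinement annotation, and the order calculation for \(\utytosty{\cdot}\) are the routine details the paper leaves implicit.
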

\begin{proof}
By straightforward induction on 
\(x_1 : \uty^1_{1}, \dots, x_1 : \uty^1_{k_1}, \dots, x_m : \uty^m_{1}, \dots, x_m : \uty^m_{k_m} \pf t : \uty \tr u\).
\end{proof}

\begin{figure}
\typicallabel{FTr-Const}
\InfruleSR{0.3}{FTr-Var}{ %
}{ %
x\COL\uty \pf x\COL\uty\tr x_{\uty}}
\ \ \ \
\InfruleSR{0.285}{FTr-Const0}
{ %
}{ %
\pf \Te\COL\T\tr \Te}
\ \ \ \
\InfruleSR{0.335}{FTr-Const1}
{\TERMS(a)=1 %
}{ %
\pf a\COL\T\ra\T\tr a}

\infrule[FTr-App1]
  {\TE_0 \pf s\COL \uty_1\land\cdots\land \uty_k\ra\uty\tr v \\
\TE_i\pf t\COL\uty_i\tr u_i \text{ and }\uty_i\neq \T
\text{ (for each $i \in \set{1,\dots,k}$)} %
}
{\TE_0\cup\TE_1\cup\cdots\cup\TE_k\pf s\,t:\uty\tr vu_1\cdots u_k}

\infrule[FTr-App2]
{\TE_0\pf s\COL \T\ra\uty\tr v\andalso
\TE_1\pf t\COL\T\tr u}
{\TE_0\cup\TE_1 \pf s\,t\COL\uty\tr \TT{br}\,v\,u}

\infrule[FTr-Abs1]
{\TE,x\COL\uty_1,\ldots,x\COL\uty_k \pf t:\uty\tr u\andalso x \notin \envdom{\TE}\\
\uty_i\neq \T\mbox{ for each $i\in\set{1,\ldots,k}$}}
{\TE\pf \lambda x.t:\uty_1\land\cdots\land\uty_k\ra\uty\tr 
\lambda x_{\uty_1}\cdots\lambda x_{\uty_k}.u}

\infrule[FTr-Abs2]
{\TE,x\COL\T\pf t:\uty\tr u}
{\TE\pf \lambda x.t:\T\ra\uty \tr [\Teps/x_{\T}]u}

\caption{First transformation}
\label{fig:FirstTransformation}
\end{figure}

\subsection{Second Transformation}
\label{sec:SndTrans}

As explained above, the purpose of the second transformation is
to remove extra \(\Te\). 
Inputs and outputs of the second transformation is \(\stlambda\)-terms
of a \(\br\)-alphabet, and
output terms generate \(\Te\)-free \(\br\)-trees or \(\Te\).

For the second transformation,
we use the following intersection types:
\[ \tty ::= \Tempty \mid \Tplus \mid \tty_1\land\cdots\land \tty_k\ra \tty\]
Intuitively, \(\Tempty\) describes trees consisting of only \(\TT{br}\) and
\(\Teps\), and \(\Tplus\) describes trees that have at least one non-\(\Teps\) leaf.
We again assume some total order \(<\) on intersection types, and require that
whenever we write \(\tty_1\land\cdots\land \tty_k\), \(\tty_1<\cdots< \tty_k\) holds.
We define the refinement relation \(\tty\DCOL\sty\) inductively by:
(i) \(\Tempty\DCOL \T\), (ii) \(\Tplus\DCOL\T\), and (iii) \(
(\tty_1\land\cdots\land \tty_k\ra \tty)\DCOL (\sty_1\to\sty_2)\) if
\(\tty\DCOL\sty_2\) and \(\tty_i\DCOL\sty_1\) for every \(i\in\set{1,\ldots,k}\).
We consider only types \(\tty\) such that \(\tty\DCOL\sty\) for some \(\sty\).
For example, we forbid an ill-formed type like \(\Tplus\land (\Tplus\to\Tplus) \to \Tplus\).

We introduce a type-based transformation relation \(\TTE\ps t:\tty \tr u\),
where \(\TTE\) is a type environment (i.e., a set of bindings of the form \(x\COL\tty\)),
 \(t\) is a source term, \(\tty\) is the type of \(t\), and \(u\) is the result of transformation.
The relation is defined inductively by the rules in Figure~\ref{fig:SecondTransformation}.
As the first transformation, the results of the second transformation are not necessarily unique,
but unique semantically as shown in Section~\ref{sec:propOfFSTrans}.
Also, we have the following standard fact for this intersection type system.
\begin{lemma}[subject reduction/expansion]
\label{lem:sndSubConv}
For \(t \red t'\),
\(\TTE \ps t : \tty \tr u\) for some \(u\)
iff
\(\TTE \ps t' : \tty \tr u\) for some \(u\).
\end{lemma}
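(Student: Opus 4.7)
The plan is to follow the standard pattern for intersection type systems, mirroring the proofs of Lemmas~\ref{lem:substParys} and~\ref{lem:desubstitution} that were used for Parys' type system in Appendix~\ref{sec:parys-details}. Since the source terms are \(\stlambda\)-terms of a \(\br\)-alphabet, the reduction \(t\red t'\) consists only of \(\beta\)-reduction under a call-by-name evaluation context; an easy induction on the shape of the evaluation context reduces the lemma to the case of a top-level redex \((\lambda x.t_0)\,s \red [s/x]t_0\), with the inductive step handled by repeated application of the application/abstraction/constant rules in Figure~\ref{fig:SecondTransformation}.

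First I would establish a substitution lemma for \(\ps\): if \(\TTE_0, x\COL\tty_1,\dots, x\COL\tty_k \ps t_0:\tty\tr u_0\) and \(\TTE_i \ps s:\tty_i \tr v_i\) for each \(i\in\set{1,\dots,k}\), and the union \(\TTE_0 \cup \TTE_1 \cup \cdots \cup \TTE_k\) is well-defined, then \(\TTE_0 \cup \TTE_1 \cup \cdots \cup \TTE_k \ps [s/x]t_0:\tty\tr u'\) for some \(u'\). This is proved by induction on the first derivation with case analysis on the last rule used; the variable case reduces to one of the \(\dt_i\) directly, and the application case requires partitioning the intersection types assigned to \(x\) among the subderivations of the premises, exactly as in the proof of Lemma~\ref{lem:substParys}.

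For subject reduction, I would invert the derivation of \(\TTE \ps (\lambda x.t_0)\,s:\tty\tr u\): the last rule must be the application rule, whose premises, combined with inversion on the abstraction rule for the function part, yield \(\TTE_0, x\COL\tty_1,\dots,x\COL\tty_k \ps t_0:\tty\tr u_0\) and \(\TTE_i \ps s:\tty_i\tr v_i\); the substitution lemma then produces the required derivation for \([s/x]t_0\). Symmetrically, for subject expansion, I would prove a de-substitution lemma analogous to Lemma~\ref{lem:desubstitution}: from \(\TTE \ps [s/x]t_0:\tty\tr u\) we extract typings of \(t_0\) (with \(x\) bound to a multiset of intersection types \(\tty_1,\dots,\tty_k\)) and of \(s\) at each \(\tty_i\), with a consistent split of \(\TTE\). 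The application and abstraction rules then rebuild a derivation for \((\lambda x.t_0)\,s\).

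The main obstacle will be the de-substitution lemma, in particular reconciling the case where \(x\) does not occur in \(t_0\) (so one needs to synthesize a typing of \(s\) at the empty intersection, using the \(k=0\) instance of the abstraction rule, which requires \(s\) to be typable at a balanced type; balancedness follows from the fact that unbalanced types mark occurrences containing \(\Te\) and hence correspond to linear usage) and the case where \(x\) occurs multiple times (requiring a consistent partition of \(\TTE\) across the occurrences, which is possible because only unbalanced bindings forbid sharing). Both cases are routine in intersection type systems with the balanced/unbalanced distinction and go through in the same way as in the proof of Lemma~\ref{lem:desubstitution}, modulo bookkeeping for the \(\Tempty\)/\(\Tplus\) refinement.
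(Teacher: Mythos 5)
Your overall strategy---peel off the evaluation context, then prove substitution and de-substitution lemmas for \(\ps\) in the style of Lemmas~\ref{lem:substParys} and~\ref{lem:desubstitution}---is the natural one, and the paper itself gives no proof (it simply declares the fact ``standard''), so the plan is a reasonable starting point. But two things are off. First, the ``main obstacle'' you identify belongs to the wrong type system: the balanced/unbalanced distinction and the linearity side-condition on environment union are features of the \emph{first} transformation \(\pf\) (Figure~\ref{fig:FirstTransformation}). The second transformation's types are \(\Tempty\) and \(\Tplus\), its environments are plain sets, and its unions carry no side condition, so none of the bookkeeping you anticipate actually arises here.

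Second, and more seriously, the genuine difficulty specific to \(\ps\) is one you do not address: in rule \rname{STr-Const2}, when a premise has type \(\Tempty\) its entire type environment (and its output) is \emph{discarded} from the conclusion. Hence the environment of a judgement does not record all the types at which a variable is used inside its derivation, and your substitution lemma breaks precisely at this rule in the subject-reduction direction. For \(\br\,t_0\,t_1\) typed by the second or third clause, re-establishing the dropped premise for \([s/x]t_i\) requires typings of \(s\) at the types assigned to \(x\) there---but those bindings never reach the conclusion's environment, so they are not among the \(\tty_1,\dots,\tty_k\) obtained by inverting \rname{STr-App}/\rname{STr-Abs}, and your lemma's hypotheses do not supply them. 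Worse, \([s/x]t_i\) need not be typable at \(\Tempty\) at all (take \(x\) typed at \(\Tempty\) by \rname{STr-Var} and \(s\) a non-\(\Te\) nullary constant, which only has type \(\Tplus\)), which forces a different clause of \rname{STr-Const2} and can change the type and environment of that node---compare \((\lambda x.\br\,x\,\Te)\,\Ta\), which is derivable at \(\Tempty\) via the \(k=0\) instances of \rname{STr-Abs} and \rname{STr-App}, with its reduct \(\br\,\Ta\,\Te\), which is not. A correct proof has to confront this head-on, e.g.\ by proving a totality lemma (every simply typed ground term over the \(\br\)-alphabet is typable at \(\Tempty\) or at \(\Tplus\) under a suitable environment), showing that switching clauses can only move a subterm from \(\Tempty\) to \(\Tplus\) in a way that is absorbed higher up, and/or restricting attention to the relevant terms actually produced by the first transformation. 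The subject-expansion direction is unaffected, since there the typings of \(s\) extracted from dropped branches can simply be thrown away.
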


We define \(\ttytosty{\tty}\) by:
\[
\begin{array}{l}
\ttytosty{\Tempty}=\ttytosty{\Tplus}=\T\qquad
\ttytosty{\tty_1\land\cdots\land\tty_k\ra\tty} =
 \ttytosty{\tty_1}\ra\cdots\ra\ttytosty{\tty_k}\ra\ttytosty{\tty}
\end{array}
\]
\begin{lemma}
\label{lem:orderDecreaseSnd}
For an order-\((n+1)\) term \(t\) with \(x_1 : \sty_1, \dots, x_m : \sty_m \pK t : \sty\),
if
\begin{align*}&
x_1 : \tty^1_{1}, \dots, x_1 : \tty^1_{k_1}, \dots, x_m : \tty^m_{1}, \dots, x_m : \tty^m_{k_m} \ps t : \tty
 \tr u
\\&
(\land_{i \le k_1}\tty^1_i \to \dots \to \land_{i \le k_m} \tty^m_i \to \tty) \DCOL 
(\sty_1 \to \dots \to \sty_m \to \sty)
\end{align*}
then \(u\) is an order-\(n\) term with
\[
(x_1)_{\tty^1_{1}} : \ttytosty{\tty^1_{1}}, \dots, (x_1)_{\tty^1_{k_1}} : \ttytosty{\tty^1_{k_1}} , \dots, 
(x_m)_{\tty^m_{1}} : \ttytosty{\tty^m_{1}}, \dots, (x_m)_{\tty^m_{k_m}} : \ttytosty{\tty^m_{k_m}} 
\pK u : \ttytosty{\tty}
\]
\end{lemma}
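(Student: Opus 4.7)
The plan is to mimic the proof strategy of Lemma~\ref{lem:orderDecreaseFst}, namely a straightforward induction on the derivation of
\(x_1 : \tty^1_{1}, \dots, x_m : \tty^m_{k_m} \ps t : \tty \tr u\), with case analysis on the last rule used. For each typing rule of the intersection type system (variable, constants for \(\br\), \(\Te\) and nullary symbols, application, abstraction) I would show by induction that (i)~the translated simple type environment on the left of \(\pK\) matches the binding prescribed by the statement, i.e.\ each original variable \(x_j : \sty_j\) is replaced by one copy \((x_j)_{\tty^j_i} : \ttytosty{\tty^j_i}\) per intersection component appearing in the derivation; and (ii)~\(u\) is simply-typed at \(\ttytosty{\tty}\) in this environment.

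First I would record two preliminary facts on the translation of types: if \(\tty \DCOL \sty\), then \(\order(\ttytosty{\tty}) \le \order(\sty)\), proved by a direct induction on the refinement derivation using the definition of \(\ttytosty{\cdot}\); and when the source derivation step applies to a subterm of type \(\sty_1 \to \sty_2\) with refinement \(\tty_1 \land\dots\land \tty_k \to \tty\), the translated type \(\ttytosty{\tty_1}\to\cdots\to\ttytosty{\tty_k}\to\ttytosty{\tty}\) likewise has order \(\le \order(\sty_1\to\sty_2)\). These bounds together with the order-\((n{+}1)\) bound on \(t\) give the order-\(n\) bound on \(u\) in each case.

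The inductive cases are then essentially bookkeeping. For the variable rule, \(x_\tty\) is introduced with type \(\ttytosty{\tty}\), matching the hypothesis directly. For the constant rules (\(\br\), \(\Te\), and any nullary symbols), both the source and the target are ground and there is nothing to check beyond the type of the constant. For the abstraction rule, multiple bindings \(x : \tty_1, \dots, x : \tty_k\) in the premise become a sequence \(\lambda (x)_{\tty_1} \cdots \lambda (x)_{\tty_k}\) of abstractions on the output side, and the IH applied to the premise gives the simple typing of the body; assembling them yields the simple type \(\ttytosty{\tty_1}\to\cdots\to\ttytosty{\tty_k}\to\ttytosty{\tty}\) required by the definition of \(\ttytosty{\cdot}\). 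For the application rule, each premise contributes a corresponding argument in the output via the IH, and the typing environments combine (accounting for the fact that bindings may split across the subderivations of a function and its argument, as in the translation).

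The main obstacle, as in Lemma~\ref{lem:orderDecreaseFst}, is the accounting in the application case, where one must match the splitting of the environment \(\TTE\) across the subderivations with the grouping of variable copies \((x_j)_{\tty^j_i}\) in the output; this requires carefully tracking which intersection components feed which subterm. Everything else reduces to direct application of the IH together with the two order bounds above.
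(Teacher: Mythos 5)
Your proposal matches the paper's proof, which is simply ``by straightforward induction on the derivation of \(x_1 : \tty^1_{1}, \dots, x_m : \tty^m_{k_m} \ps t : \tty \tr u\)''; your case analysis on the last rule and the observation that \(\tty \DCOL \sty\) implies \(\order(\ttytosty{\tty}) \le \order(\sty)\) are exactly the details that induction unfolds into. No divergence from the paper's approach.
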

\begin{proof}
By straightforward induction on 
\(x_1 : \tty^1_{1}, \dots, x_1 : \tty^1_{k_1}, \dots, x_m : \tty^m_{1}, \dots, x_m : \tty^m_{k_m} \ps t : \tty \tr u\).
\end{proof}

\begin{figure}
\typicallabel{STr-NT}
\InfruleSR{0.3}{STr-Var}{} %
{%
x\COL\tty \ps x\COL\tty\tr x_{\tty}}
\ \ 
\InfruleSR{0.3}{STr-Const0}
{}
{%
\ps \Teps\COL\Tempty\tr \Teps}
\ \ 
\InfruleSR{0.35}{STr-Const1}
{\TERMS(a)=0\andalso a\neq \Teps}
{%
\ps a\COL\Tplus\tr a}\\

\infrule[STr-Const2]
{\TTE_0 \ps t_0\COL\tty_0\tr u_0\andalso \TTE_1 \ps t_1\COL\tty_1\tr u_1\\
  (\TTE,\tty,u) = \left\{
\arraycolsep=1.3pt
  \begin{array}{rllllllllllllllllll}
    (\TTE_0 \cup \TTE_1,& \Tplus,&  \TT{br}\,u_0\,u_1 ) && \mbox{ if \(\tty_0=\tty_1=\Tplus\)}\\
    (\TTE_i,            & \Tplus,&  u_i               ) && \mbox{ if \(\tty_i=\Tplus\) and \(\tty_{1-i}=\Tempty\)}\\
    (\eset,             & \Tempty,& \Teps             ) && \mbox{ if \(\tty_0=\tty_1=\Tempty\)}\\
  \end{array}\right.
}
{\TTE  \ps \TT{br}\,t_0\,t_1\COL\tty\tr u}

\infrule[STr-App]
  {\TTE_0 \ps s\COL \tty_1\land\cdots\land \tty_k\ra\tty\tr v\andalso
  \TTE_i \ps t\COL\tty_i\tr u_i \mbox{ (for each $i\in\set{1,\ldots,k}$)}}
{\TTE_0 \cup \TTE_1 \cup \dots \cup \TTE_k \ps st:\tty\tr vu_1\cdots u_k}

\infrule[STr-Abs]
{\TTE,x\COL\tty_1,\ldots,x\COL\tty_k \ps t:\tty\tr u}
{\TTE\ps \lambda x.t:\tty_1\land\cdots\land\tty_k\ra\tty\tr 
\lambda x_{\tty_1}\cdots\lambda x_{\tty_k}.u}
\caption{Second transformation}
\label{fig:SecondTransformation}
\end{figure}

\subsection{First and Second Transformations Preserve Meaning}
\label{sec:propOfFSTrans}

Here we show that the first and second transformations preserve the meaning of a term.
Before that, let us review the corresponding theorems in~\cite{asada_et_al:LIPIcs:2016:6246};
here for a word \(w\), we write \(\remeps{w}\) for the word obtained
by removing all the occurrences of \(\Te\) in \(w\), and
\(\remeps{\Lang}\) for \(\set{\remeps{w} \mid w \in \Lang}\).
\begin{theorem}[\mbox{\cite[Theorem~7]{asada_et_al:LIPIcs:2016:6246}}]
\label{th:tr1-correctness}
If \(\pof \GRAM\tr \GRAM'\), then \(\Wlang(\GRAM) = \remeps{\LLang(\GRAM')}\).
\end{theorem}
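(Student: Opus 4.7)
The plan is to lift the term-level, intersection-type-directed correspondence \(\pf\) to a grammar-level correspondence between \(\GRAM\) and \(\GRAM'\), and then to prove both inclusions by iterating subject reduction and subject expansion (Lemma~\ref{lem:fstSubConv}). First, I would view the order-\((n{+}1)\) word grammar \(\GRAM\) as a closed ground \(\lambdayp\)-term via the \(Y\)-combinator correspondence mentioned at the end of Section~\ref{sec:termANDgrammar}, so that the rewriting \(\redswith{\GRAM}\) becomes \(\beta\)-reduction (plus \(Y\)-unfolding) on that term. The grammar \(\GRAM'\) is then defined by: for each non-terminal \(A\COL\sty\) of \(\GRAM\) and each well-formed \(\uty\DCOL\sty\), introduce a new non-terminal \(A_\uty\); and for each rule \(A\to t\) of \(\GRAM\) and each derivation \(\Psi\pf t:\uty\tr u\) whose context \(\Psi\) consists of bindings \(B\COL\uty'\), add to \(\GRAM'\) the rule \(A_\uty \to u\) (with the \(B_{\uty'}\)'s as the non-terminals appearing in \(u\)).

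For the inclusion \(\Wlang(\GRAM)\subseteq \remeps{\LLang(\GRAM')}\), take \(w\in\Wlang(\GRAM)\); then \(S\redswith{\GRAM}^* \pi\) for a word-shaped tree \(\pi=a_1(\cdots(a_n\,\Te)\cdots)\) with \(\word(\pi)=w\). Lemma~\ref{lem:leaf-fwd} gives \(\pf \pi:\T\tr \wtt{w}\), and \(\leaves(\wtt{w})=a_1\cdots a_n\,\Te\), hence \(\remeps{\leaves(\wtt{w})}=w\). Applying subject expansion (Lemma~\ref{lem:fstSubConv}) backwards along the reduction \(S\redswith{\GRAM}^* \pi\), using the construction of \(\GRAM'\) to lift each step \(t\redwith{\GRAM}t'\) to a sequence of steps in \(\GRAM'\), yields a derivation \(\pf S:\T\tr S_{\GRAM'}\) together with \(S_{\GRAM'}\redswith{\GRAM'}^* \wtt{w}\). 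Thus \(\wtt{w}\in\Lang(\GRAM')\), and we are done for this direction.

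For the reverse inclusion \(\Wlang(\GRAM)\supseteq \remeps{\LLang(\GRAM')}\), take a tree \(\pi'\in\Lang(\GRAM')\) with \(S_{\GRAM'}\redswith{\GRAM'}^* \pi'\) and \(\remeps{\leaves(\pi')}=w\). Starting from \(\pf S:\T\tr S_{\GRAM'}\) (obtained by construction) and applying subject reduction (Lemma~\ref{lem:fstSubConv}) repeatedly along the \(\GRAM'\)-reduction, I would obtain a term \(t\) with \(S\redswith{\GRAM}^* t\) and \(\pf t:\T\tr \pi'\). Since \(\pi'\) is a tree of the \(\br\)-alphabet, inspection of the typing rules forces \(t\) to be itself a word-shaped tree \(a_1(\cdots(a_n\,\Te)\cdots)\); analysing how \(\pi'\) is produced from \(t\) by the rules \rname{FTr-Const0}, \rname{FTr-Const1}, \rname{FTr-App1}, \rname{FTr-App2} shows that \(\leaves(\pi')\) is \(\word(t)\) with extra \(\Te\)'s interleaved, so \(\remeps{\leaves(\pi')}=\word(t)=w\), giving \(w\in\Wlang(\GRAM)\).

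The main obstacle will be step (a): the rule \rname{FTr-Abs2} substitutes \(\Te\) for the transformed variable \(x_{\T}\), so the source-language term and its image can differ substantially in structure, and one must show carefully that this is precisely what produces, and only produces, the extra \(\Te\)-leaves that \(\remeps{\cdot}\) subsequently erases. Handling (b), the non-determinism of \(\pf\) (from the choice of the multi-set of argument types in \rname{FTr-App1} and the ordering of intersection types), requires keeping track of which branch of the type derivation corresponds to which occurrence of the subscripted non-terminal \(A_\uty\) in \(\GRAM'\); this is what makes the grammar-level lift of subject reduction/expansion go through, and is the technical heart of the proof in~\cite{asada_et_al:LIPIcs:2016:6246}.
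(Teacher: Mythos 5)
The paper does not actually prove this statement: it is imported verbatim from \cite[Theorem~7]{asada_et_al:LIPIcs:2016:6246} and used as a black box (e.g., inside the proof of Lemma~\ref{lem:fst-crrectness}), so there is no in-paper proof to compare against. Your overall architecture --- lift the type-directed transformation to the grammar level, prove \(\subseteq\) by subject expansion from the base case supplied by Lemma~\ref{lem:leaf-fwd}, prove \(\supseteq\) by subject reduction --- is the standard and correct one, and it matches the term-level machinery the paper does develop in Appendix~\ref{sec:wordleaf}. However, two steps as written would fail.

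First, Lemma~\ref{lem:fstSubConv} is too weak for what you ask of it. It only states that \emph{typability} is preserved across a reduction step; the output terms \(u\) on the two sides of the ``iff'' are existentially quantified and unrelated. To conclude \(\wtt{w}\in\Lang(\GRAM')\) in the \(\subseteq\) direction, and to track the frontier of \(\pi'\) back to a word of \(\GRAM\) in the \(\supseteq\) direction, you need the quantitative form: if \(t\red t'\) and \(\pf t':\uty\tr u'\), then \(\pf t:\uty\tr u\) for some \(u\) with \(u\reds u'\) (and a converse). That is exactly what the substitution and de-substitution lemmas (Lemmas~\ref{lem:substFst} and~\ref{lem:desubstFst}) supply, and the argument must be routed through them rather than through the bare equivalence of Lemma~\ref{lem:fstSubConv}.

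Second, the \(\supseteq\) direction contains a false claim: from \(\pf t:\T\tr\pi'\) with \(\pi'\) a tree you infer that \(t\) ``is forced to be a word-shaped tree''. It is not. For instance \((\lambda x.x)\,\Te\) is transformed by \rname{FTr-Abs2} and \rname{FTr-App2} to the tree \(\br\,\Te\,\Te\), yet is itself a \(\beta\)-redex; more generally, every order-1 application in the source becomes an inert \(\br\)-node in the target, so the target can be a normal form while the source still contains redexes and non-terminals. The repair is to keep reducing \(t\) to its word-shaped normal form \(\pi\), carrying the typing along by subject reduction, and then use the output-preservation property above to argue that the resulting output has the same \(\Te\)-erased frontier as \(\pi'\); only at that point does the analysis of \rname{FTr-Const0}, \rname{FTr-Const1}, \rname{FTr-App1}, \rname{FTr-App2} on a genuine tree close the argument. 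With those two repairs the proposal becomes a faithful reconstruction of the cited proof.
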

\begin{theorem}[\mbox{\cite[Theorem~9]{asada_et_al:LIPIcs:2016:6246}}]
\label{th:tr2-correctness}
If \(\pos \GRAM\tr \GRAM'\), then 
\(\remeps{\LLang(\GRAM)}=\LLangE(\GRAM')\).
\end{theorem}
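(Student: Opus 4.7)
The statement as displayed is attributed to \cite[Theorem~9]{asada_et_al:LIPIcs:2016:6246}, so the most economical route is to appeal to that prior work directly. However, if one wanted to prove it from scratch in the style of the present paper, my plan would be a type-directed semantic correspondence argument based on the intersection type system \(\ps\) of Section~\ref{sec:SndTrans}, using Lemma~\ref{lem:sndSubConv} (subject reduction/expansion) as the central tool.

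The core technical lemma I would aim for is a \emph{ground-type correspondence} for closed terms of type \(\T\) over a \(\br\)-alphabet: for any such \(t\),
\begin{itemize}
\item if \(\ps t:\Tplus\tr u\), then there exists a tree \(\pi\) with \(t\reds\pi\) and a \(\Te\)-free \(\br\)-tree \(\pi'\) with \(u\reds\pi'\) such that \(\leaves(\pi')=\remeps{\leaves(\pi)}\);
\item if \(\ps t:\Tempty\tr u\), then \(t\reds\pi\) for some \(\pi\) with \(\remeps{\leaves(\pi)}=\empword\), and \(u\reds\Te\);
\item and conversely, every reduction \(t\reds\pi\) can be matched by a suitably typed \(u\).
\end{itemize}
The ``forward'' direction would be established by first verifying the statement for trees (by straightforward induction on the derivation tree of \(\ps\pi:\tty\tr u\), unfolding rule \rname{STr-Const2}), and then lifting it to arbitrary closed ground terms via subject reduction. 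The ``backward'' direction is then a symmetric argument using subject expansion. Once the ground-type correspondence is proved, both inclusions \(\remeps{\LLang(\GRAM)}\subseteq\LLangE(\GRAM')\) and \(\LLangE(\GRAM')\subseteq\remeps{\LLang(\GRAM)}\) follow by instantiating \(t\) at the start symbol of \(\GRAM\), since the grammar-level transformation \(\pos\) is by definition generated pointwise from \(\ps\).

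The main obstacle will be rule \rname{STr-Const2}, where a \(\br\)-node collapses to a single subtree whenever one side has type \(\Tempty\). To verify the correspondence here I would need the auxiliary fact that if a closed ground subterm \(s\) is typed at \(\Tempty\), then every tree \(\pi_s\) with \(s\reds\pi_s\) satisfies \(\leaves(\pi_s)\in\set{\Te}^*\); this requires a small normalization-style argument for \(\Tempty\) (showing that a term of type \(\Tempty\) cannot contribute any non-\(\Te\) terminals). A secondary subtlety is the non-determinism of the transformation: a single type derivation can yield multiple target terms because \rname{STr-Const2} makes a choice when both subtrees admit type \(\Tempty\). Rather than attempting to pick a canonical target, I would prove that \emph{every} choice preserves the frontier language modulo \(\Te\), which is enough for the set equality in the conclusion. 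A reduction-based proof along these lines is essentially the one used in the cited paper, and transporting it to the present restricted setting (where inputs are \(\stlambda\)-terms rather than arbitrary grammars) requires only cosmetic adjustments.
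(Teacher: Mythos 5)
Your primary move---appealing directly to Theorem~9 of the cited ICALP 2016 paper---is exactly what the present paper does: Theorem~\ref{th:tr2-correctness} is imported as a black box and no proof is given here. Your from-scratch sketch (ground-type correspondence via subject reduction/expansion for \(\ps\), with the \(\Tempty\)-collapse in the \(\br\)-rule as the delicate case) is a faithful reconstruction of how such results are proved in the cited work, so there is nothing to object to.
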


\begin{lemma}
\label{lem:fst-crrectness}
If \(\pf t : \T \tr u\), then \(\word(\tree(t)) = \remeps{\leaves(\tree(u))}\).
\end{lemma}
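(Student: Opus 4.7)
The plan is to reduce to Theorem~\ref{th:tr1-correctness} via the embedding of $\stlambda$-terms into (word/tree) grammars that is invoked in the Remark after the definition of higher-order grammars in Section~\ref{sec:termANDgrammar}. Given the closed ground $\stlambda$-term $t$, represent it as an order-$(n{+}1)$ word grammar $\GRAM_t$ whose start symbol unfolds (via the $Y$-combinator encoding) to $t$. Because $t$ contains no non-deterministic choice, $\Lang(\GRAM_t)=\{\tree(t)\}$ and hence $\Wlang(\GRAM_t)=\{\word(\tree(t))\}$. The target $u$ arising from $\pf t:\T\tr u$ likewise contains no choice (an inspection of the rules in Figure~\ref{fig:FirstTransformation} shows that no choice is introduced), so if we can exhibit an order-$n$ $\br$-alphabet grammar $\GRAM_u$ whose start symbol unfolds to $u$, we obtain $\LLang(\GRAM_u)=\{\leaves(\tree(u))\}$.

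The key step is to show that the derivation of $\pf t:\T\tr u$ can be promoted to a grammar-level transformation $\pof \GRAM_t \tr \GRAM_u$ in the sense of \cite{asada_et_al:LIPIcs:2016:6246}. I would prove this by induction on the derivation of $\pf t:\uty \tr u$, simultaneously for all types $\uty$ and all subterms: each rule \rname{FTr-Var}, \rname{FTr-Const$i$}, \rname{FTr-App$i$}, \rname{FTr-Abs$i$} in Figure~\ref{fig:FirstTransformation} is, by design, the term-level image of the corresponding rule of the original first transformation on grammars, via the canonical embedding of $\lambda$-terms as non-terminals. Concretely, an abstraction $\lambda x.t$ with a refinement $\uty_1\land\cdots\land\uty_k\to\uty$ corresponds to introducing a non-terminal with $k$ copies of the argument (one per refinement), which matches exactly what the grammar-level transformation does when duplicating non-terminals indexed by intersection types. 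Once this correspondence is set up, applying Theorem~\ref{th:tr1-correctness} gives $\Wlang(\GRAM_t)=\remeps{\LLang(\GRAM_u)}$, which is the claim.

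The main obstacle is aligning the non-determinism: the grammar-level transformation in \cite{asada_et_al:LIPIcs:2016:6246} collects \emph{all} intersection type derivations via non-deterministic choice in the output grammar, whereas the term-level $\pf t:\T\tr u$ picks one derivation and produces a single $u$. I would handle this by either (i) packaging the specific $u$ as one nondeterministic branch of $\GRAM_u$ (so that $\tree(u)\in\Lang(\GRAM_u)$), and noting that Theorem~\ref{th:tr1-correctness} is an equality of \emph{sets}, so each element of $\Wlang(\GRAM_t)$ corresponds to the removal of $\Te$ from some element of $\LLang(\GRAM_u)$, and conversely; together with the fact that $\Wlang(\GRAM_t)$ is a singleton, this pins down $\remeps{\leaves(\tree(u))}=\word(\tree(t))$; or (ii) giving a direct proof by induction on the derivation of $\pf t:\T\tr u$, generalised to open higher-order terms via a logical-relation-style statement indexed by refinement types, which is essentially the term-level restatement of the argument in \cite{asada_et_al:LIPIcs:2016:6246} and uses Lemmas~\ref{lem:fstSubConv} and~\ref{lem:leaf-fwd} as the base case (since the normal form $\tree(t) = a_1(\cdots(a_n\Te)\cdots)$ transforms, up to $\tree(\cdot)$, to $\wtt{a_1\cdots a_n}$, whose leaves are $a_1\cdots a_n\Te$).
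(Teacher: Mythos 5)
Your proposal is correct and follows essentially the same route as the paper: embed \(t\) and \(u\) as (deterministic) grammars, observe that the term-level rules of Figure~\ref{fig:FirstTransformation} are the determinized image of the grammar-level first transformation so that \(\tree(u)\) is one branch of the non-deterministic output grammar \(\G''\), and then conclude from Theorem~\ref{th:tr1-correctness} together with the fact that \(\Wlang(\GRAM_t)\) is a singleton (the paper phrases the branch-selection step via the relation \(\G'\sle\G''\) and the inclusion \(\Lang(\G')\subseteq\Lang(\G'')\)). Your option (i) is precisely the paper's argument; option (ii) is not needed.
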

\begin{proof}
Figure~\ref{fig:FirstTransformationHOG} shows the definition of the original first transformation \(\pof\)
given in~\cite{asada_et_al:LIPIcs:2016:6246}.
Here, meta variables \(U\) and \(V\) represent a set \(\set{u_1,\dots,u_k}\) of terms, which is nothing but
the non-deterministic choice \(u_1+ \cdots + u_k\).
Figure~\ref{fig:FirstTransformationDHOG} are obtained from Figure~\ref{fig:FirstTransformationHOG}
by removing non-deterministic choices; the change is just for \rname{Tr1-App1}, \rname{Tr1-App2},
 \rname{Tr1-Set}, and \rname{Tr1-Gram}.
We write \(\pofd\) for the transformation by Figure~\ref{fig:FirstTransformationDHOG}.
When applied to recursion-free higher-order grammars,
Figure~\ref{fig:FirstTransformationDHOG} is essentially the same as Figure~\ref{fig:FirstTransformation};
the condition \((**)\) in \rname{Tr1-Gram-Det} in Figure~\ref{fig:FirstTransformationDHOG}
ensures that \(\pofd\) produces a deterministic higher-order grammar,
and as in the case of \(\pf\), the fact that \(\pofd\) produces at least one deterministic higher-order grammar 
is shown by subject expansion.

For a ground closed \(\stlambda\)-term \(t\), let \(\pf t : \T \tr u\)
and let \(\G\) and \(\G'\) be the deterministic higher-order grammars obtained from \(t\) and \(u\),
respectively.
Then clearly we have \(\pofd \G \tr \G'\); also let \(\pof \G \tr \G''\).
Now, since the set of rules in Figure~\ref{fig:FirstTransformationDHOG} is just a subset of
that in Figure~\ref{fig:FirstTransformationHOG},
and since the rule \rname{Tr1-Gram} in Figure~\ref{fig:FirstTransformationHOG} gathers all
the derived rewriting rules, \(\G'\) is a ``syntactical determinization'' of the choices in \(\G''\);
i.e., \(\G' \sle \G''\) where we write \(\sle\) for the least congruence relation such that \(s \sle s+t\) and \(t \sle
s+t\) and we regard a grammar as a ground closed \(\lambdap\)-term with \(Y\)-combinator.
Hence, we have \((\set{\tree(\G')} =) \Lang(\G') \subseteq \Lang(\G'')\).

Now by Theorem~7 in~\cite{asada_et_al:LIPIcs:2016:6246}, \(\Wlang(\G) = \remeps{\LLang(\G'')}\). Hence,
\[
\set{\word(\tree(t))} = \set{\word(\tree(\G))} = \Wlang(\G) = \remeps{\LLang(\G'')} \supseteq 
\remeps{\set{\leaves(\tree(\G'))}} = \remeps{\set{\leaves(\tree(u))}}
\]
and therefore \(\word(\tree(t)) = \remeps{\leaves(\tree(u))}\).
\end{proof}

\begin{lemma}
\label{lem:snd-crrectness}
If \(\ps t : \T \tr u\), then
\[
\remeps{\leaves(\tree(t))} = 
\begin{cases}
\empword & (\leaves(\tree(u)) = \Te)
\\
\leaves(\tree(u)) & (\leaves(\tree(u)) \neq \Te).
\end{cases}
\]
\end{lemma}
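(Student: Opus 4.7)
The plan is to follow exactly the strategy used in the proof of Lemma~\ref{lem:fst-crrectness}, reducing the statement to the correctness of the original grammar-level second transformation (Theorem~\ref{th:tr2-correctness}) via the standard embedding of closed ground $\stlambda$-terms into deterministic higher-order grammars. Let $\G$ and $\G'$ be the deterministic higher-order grammars canonically obtained from $t$ and $u$ respectively; then $\LLang(\G) = \set{\leaves(\tree(t))}$ and $\LLang(\G') = \set{\leaves(\tree(u))}$, and moreover $\LLangE(\G') = \set{\empword}$ if $\leaves(\tree(u))=\Te$ and $\LLangE(\G') = \set{\leaves(\tree(u))}$ otherwise.

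Next I would introduce a deterministic variant $\posd$ of $\pos$ by restricting the transformation rules of~\cite{asada_et_al:LIPIcs:2016:6246} so that each rewriting rule produces a single output (instead of gathering all alternatives via $+$), mirroring the construction of $\pofd$ used in the proof of Lemma~\ref{lem:fst-crrectness}. A rule-by-rule comparison shows that the restricted type system of Figure~\ref{fig:SecondTransformation} is essentially identical to $\posd$ applied to the embedded grammar: in particular the three cases of \rname{STr-Const2} correspond exactly to the three deterministic specializations of the grammar-level \TT{br}-rule depending on which of $\tty_0,\tty_1$ equals $\Tempty$ or $\Tplus$. Hence $\posd \G \tr \G'$.

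Letting $\pos \G \tr \G''$, the same congruence argument as in Lemma~\ref{lem:fst-crrectness} yields $\G' \sle \G''$, so $\Lang(\G') \subseteq \Lang(\G'')$ and therefore $\LLang(\G') \subseteq \LLang(\G'')$ (and, treating $\Te$ as $\empword$ in the leaves, $\LLangE(\G') \subseteq \LLangE(\G'')$). By Theorem~\ref{th:tr2-correctness}, $\remeps{\LLang(\G)} = \LLangE(\G'')$, i.e.\ $\set{\remeps{\leaves(\tree(t))}} = \LLangE(\G'')$. Combining this with the inclusion $\LLangE(\G') \subseteq \LLangE(\G'')$ and the fact that both sides are singletons forces the single element of $\LLangE(\G')$ to equal $\remeps{\leaves(\tree(t))}$, which is precisely the statement of the lemma once the two cases on whether $\leaves(\tree(u)) = \Te$ are unfolded.

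The only real obstacle is the careful bookkeeping in the rule-by-rule matching between Figure~\ref{fig:SecondTransformation} and the deterministic restriction $\posd$ of the grammar-level transformation in~\cite{asada_et_al:LIPIcs:2016:6246}: one must confirm that the case analysis in \rname{STr-Const2} on whether the arguments have type $\Tempty$ or $\Tplus$ coincides with the corresponding case analysis built into the original transformation's treatment of $\TT{br}$-nodes, so that the singleton output of the type-based derivation coincides with one of the non-deterministic choices produced by $\pos$. Once this correspondence is established, the rest is a straightforward transfer of the earlier theorem to the $\stlambda$-term setting.
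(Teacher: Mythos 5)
Your proposal is correct and follows essentially the same route as the paper: the paper's proof of Lemma~\ref{lem:snd-crrectness} is a sketch stating that the argument is analogous to Lemma~\ref{lem:fst-crrectness}, resting on exactly the three points you spell out (the rules of Figure~\ref{fig:SecondTransformation} form a deterministic restriction of the original grammar-level transformation, the output is deterministic so its language is a singleton, and an inclusion between singletons forces equality via Theorem~\ref{th:tr2-correctness}). You have simply made explicit the bookkeeping the paper leaves implicit.
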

\begin{proof}
The proof is quite similar to that of Lemma~\ref{lem:fst-crrectness}: the points are
(i) Figure~\ref{fig:SecondTransformation} is a subset of the set of rules of the original second
 transformation (up to the embedding of \(\lambda\)-terms to higher-order grammars),
(ii) a result of the second transformation is a deterministic \(\lambda\)-term, and hence its meaning, i.e., the language,
 is a singleton,
(iii) subset relation between singleton sets implies equality between their elements.
\end{proof}

\begin{figure}
\typicallabel{Tr1-Const}
\InfruleSR{0.4}{Tr1-Var}{%
}
{%
x\COL\uty \p x\COL\uty\tr x_{\uty}}
\InfruleSR{0.55}{Tr1-NT}
{\uty\DCOL \NONTERMS(A)%
}
{%
\p A\COL\uty \tr A_\uty}\\[1ex]

\InfruleSR{0.4}{Tr1-Const0}
{%
}
{%
\p \Te\COL\T\tr \Te}
\InfruleSR{0.55}{Tr1-Const1}
{\TERMS(a)=1%
}
{%
\p a\COL\T\ra\T\tr a}

\infrule[Tr1-App1-Det]
  {\TE_0 \p s\COL \uty_1\land\cdots\land \uty_k\ra\uty\tr v \\
\TE_i\p t\COL\uty_i\tr u_i \text{ and }\uty_i\neq \T
\text{ (for each $i \in \set{1,\dots,k}$)} %
}
{\TE_0\cup\TE_1\cup\cdots\cup\TE_k\p st:\uty\tr vu_1\cdots u_k}

\infrule[Tr1-App2-Det]
{\TE_0\p s\COL \T\ra\uty\tr v\andalso
\TE_1\p t\COL\T\tr u}
{\TE_0\cup\TE_1 \p st\COL\uty\tr \TT{br}\,v\,u}

\infrule[Tr1-Abs1]
{\TE,x\COL\uty_1,\ldots,x\COL\uty_k \p t:\uty\tr u\andalso x \notin \envdom{\TE}\\
\uty_i\neq \T\mbox{ for each $i\in\set{1,\ldots,k}$}}
{\TE\p \lambda x.t:\uty_1\land\cdots\land\uty_k\ra\uty\tr 
\lambda x_{\uty_1}\cdots\lambda x_{\uty_k}.u}

\infrule[Tr1-Abs2]
{\TE,x\COL\T\p t:\uty\tr u}
{\TE\p \lambda x.t:\T\ra\uty \tr [\Teps/x_{\T}]u}

\infrule[Tr1-Rule]
{\emptyset \p \lambda x_1.\cdots\lambda x_k.t:\uty
 \tr \lambda x'_1.\cdots\lambda x'_\ell.u \andalso \uty\DCOL\NONTERMS(A)}
{(A\,x_1\,\cdots\,x_k\Hra t)\tr
 (A_{\uty}\,x'_1\,\cdots\,x'_{\ell} \Hra u)}

\infrule[Tr1-Gram-Det]
{
{\TERMS' = \set{\TT{br}\mapsto 2,\Te\mapsto 0}\cup\set{a\mapsto 0\mid \TERMS(a)=1}}
\\
\NONTERMS' = \set{A_\uty\COL \utytosty{\uty\DCOL\sty} \mid \NONTERMS(A)=\sty \land \uty\DCOL\sty}\andalso
\RULES' \subseteq \set{ r' \mid \exists r\in \RULES. r\tr r'}
\\
\forall (A\,x_1\,\cdots\,x_k\Hra t) \in \RULES. 
\forall \uty\DCOL\NONTERMS(A).
\exists ! r' \in \RULES'. (A\,x_1\,\cdots\,x_k\Hra t) \tr r' \quad(**)
}
{\p(\TERMS,\NONTERMS,\RULES,S)\tr 
({\TERMS'},\NONTERMS',\RULES',S_\T)}

\caption{First transformation for deterministic higher-order grammar}
\label{fig:FirstTransformationDHOG}
\end{figure}

\begin{figure}
\typicallabel{Tr1-Const}
\InfruleSR{0.4}{Tr1-Var}{\balanced(\TE)}
{\TE,x\COL\uty \p x\COL\uty\tr x_{\uty}}
\InfruleSR{0.55}{Tr1-NT}
{\uty\DCOL \NONTERMS(A)\andalso \balanced(\TE)}
{\TE\p A\COL\uty \tr A_\uty}\\[1ex]

\InfruleSR{0.4}{Tr1-Const0}
{\balanced(\TE)}
{\TE\p \Te\COL\T\tr \Te}
\InfruleSR{0.55}{Tr1-Const1}
{\TERMS(a)=1\andalso \balanced(\TE)}
{\TE \p a\COL\T\ra\T\tr a}

\infrule[Tr1-App1]
  {\TE_0 \p s\COL \uty_1\land\cdots\land \uty_k\ra\uty\tr v \\
\TE_i\p t\COL\uty_i\tr U_i \text{ and }\uty_i\neq \T
\text{ (for each $i \in \set{1,\dots,k}$)} %
}
{\TE_0\cup\TE_1\cup\cdots\cup\TE_k\p st:\uty\tr vU_1\cdots U_k}

\infrule[Tr1-App2]
{\TE_0\p s\COL \T\ra\uty\tr V\andalso
\TE_1\p t\COL\T\tr U}
{\TE_0\cup\TE_1 \p st\COL\uty\tr \TT{br}\,V\,U}

\infrule[Tr1-Set]
{\TE\p t\COL \uty \tr u_i\mbox{ (for each $i\in\set{1,\ldots,k}$)}
{\andalso k \ge 1}\\
}
{\TE \p t\COL\uty \tr \set{u_1,\ldots,u_k}}

\infrule[Tr1-Abs1]
{\TE,x\COL\uty_1,\ldots,x\COL\uty_k \p t:\uty\tr u\andalso x \notin \envdom{\TE}\\
\uty_i\neq \T\mbox{ for each $i\in\set{1,\ldots,k}$}}
{\TE\p \lambda x.t:\uty_1\land\cdots\land\uty_k\ra\uty\tr 
\lambda x_{\uty_1}\cdots\lambda x_{\uty_k}.u}

\infrule[Tr1-Abs2]
{\TE,x\COL\T\p t:\uty\tr u}
{\TE\p \lambda x.t:\T\ra\uty \tr [\Teps/x_{\T}]u}

\infrule[Tr1-Rule]
{\emptyset \p \lambda x_1.\cdots\lambda x_k.t:\uty
 \tr \lambda x'_1.\cdots\lambda x'_\ell.u \andalso \uty\DCOL\NONTERMS(A)}
{(A\,x_1\,\cdots\,x_k\Hra t)\tr
 (A_{\uty}\,x'_1\,\cdots\,x'_{\ell} \Hra u)}

\infrule[Tr1-Gram]
{
{\TERMS' = \set{\TT{br}\mapsto 2,\Te\mapsto 0}\cup\set{a\mapsto 0\mid \TERMS(a)=1}}
\\
\NONTERMS' = \set{A_\uty\COL \utytosty{\uty\DCOL\sty} \mid \NONTERMS(A)=\sty \land \uty\DCOL\sty}\andalso
\RULES' = \set{ r' \mid \exists r\in \RULES. r\tr r'}
}
{\p (\TERMS,\NONTERMS,\RULES,S)\tr 
({\TERMS'},\NONTERMS',\RULES',S_\T)}

\caption{Original first transformation for higher-order grammar}
\label{fig:FirstTransformationHOG}
\end{figure}

\anp
\subsection{First and Second Transformations Preserve Linearity}
\label{sec:linPres}

Here we prove Lemma~\ref{lem:iteLinPres}, by showing that the first and second transformations preserve linearity.
Roughly speaking, this consists of showing two lemmas:
(i) simulation of a reduction of an input term of the (first or second) transformation by reductions of an output term, and
(ii) any linear normal form is transformed to a linear term.
For \(u \red u'\), if \(u'\) is linear then so is \(u\), and hence
(i) and (ii) imply the preservation of linearity.
However, for the first transformation, actually (i) does not hold due to 
the complicated rules \rname{FTr-App2} and \rname{FTr-Abs2};
so we relax (i) to
(i)': if \(t\red t'\) and \(t\) is transformed to \(u\),
then \(t'\) is transformed to some \(u'\) such that
if \(u'\) is linear so is \(u\).
Here we concentrate on the first transformation; 
the proof for the second transformation is analogous.
In the rest of this section, we call a call-by-name normal form a \emph{normal form}.

\begin{lemma}
\label{lem:env}
For \(\TE \pf t :\uty' \tr u\),
\(x_\uty\) occurs in \(u\) iff \((x:\uty) \in \TE\).
If \(\uty\) is unbalanced and \((x: \uty) \in \TE\), then 
\(x_\uty\) occurs in \(u\) exactly once.
\end{lemma}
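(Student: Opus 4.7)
The plan is to induct on the derivation of \(\TE \pf t:\uty'\tr u\), with case analysis on the final rule. Throughout I read ``\(x_\uty\) occurs in \(u\)'' as ``\(x_\uty\) occurs freely in \(u\),'' which is the only reading consistent with rule \rname{FTr-Abs1}: there \(\lambda x_{\uty_i}\) binds its variable while the environment \(\TE\) of the conclusion drops all bindings on \(x\), so the occurrence being counted must be a free one.

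The base cases \rname{FTr-Var}, \rname{FTr-Const0}, \rname{FTr-Const1} are immediate: in the variable rule \(u = x_{\uty'}\) with \(\TE = \{x\COL\uty'\}\), giving the single free occurrence; in the constant rules \(u\) contains no variables and \(\TE\) is empty. The two application cases \rname{FTr-App1} and \rname{FTr-App2} reduce to a componentwise application of the induction hypothesis to the premises \(\TE_0\pf s\tr v\) and \(\TE_i\pf t\tr u_i\) (resp.\ \(\TE_1\pf t\tr u\)): the free variables of \(v\,u_1\cdots u_k\) (resp.\ \(\TT{br}\,v\,u\)) are the union of those of the subterms, which matches \(\TE_0\cup\cdots\cup\TE_k\). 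For the uniqueness clause, I would invoke the side condition that \(\TE_0\cup\cdots\cup\TE_k\) is defined only when every \emph{unbalanced} binding comes from exactly one \(\TE_i\); the induction hypothesis then places its unique occurrence of \(x_\uty\) in the matching subterm, and the first clause, applied to the remaining subderivations, rules out any further occurrence.

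For \rname{FTr-Abs1} the side condition \(x\notin\envdom{\TE}\) is precisely what is needed. The free variables of \(\lambda x_{\uty_1}\cdots\lambda x_{\uty_k}.u\) are those of \(u\) minus \(\{x_{\uty_1},\ldots,x_{\uty_k}\}\); since no \((y\COL\uty'')\in\TE\) has \(y=x\), the first clause in the conclusion follows directly from the IH on the premise, and uniqueness transfers because the unique occurrence of an unbalanced \(y_{\uty''}\) with \(y\neq x\) is never captured by the outer abstractions. In \rname{FTr-Abs2} the output \([\Teps/x_\T]u\) differs from \(u\) only at free occurrences of \(x_\T\); the capture-avoiding substitution eliminates exactly those occurrences (consistent with \((x\COL\T)\notin\TE\)) and leaves every other free variable untouched, so the IH for the premise \(\TE,x\COL\T \pf t\tr u\) transfers to the conclusion. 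The only step that needs a moment's care is this \rname{FTr-Abs2} case: one must verify that \([\Teps/x_\T](\cdot)\) preserves the free-occurrence count of every \(y_{\uty''}\) with \(y\neq x\) and destroys all \(x_\T\)-occurrences. I do not expect a genuine obstacle anywhere; the lemma is pure variable bookkeeping, and the side conditions of the typing rules are tailored so that the induction closes.
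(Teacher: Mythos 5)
Your proof is correct and takes essentially the same approach as the paper, which disposes of this lemma with ``by straightforward induction on \(\TE \pf t :\uty' \tr u\)''; your case analysis (reading occurrences as free occurrences, using the side condition on \(\cup\) that overlapping bindings must be balanced to get uniqueness for unbalanced types, and checking that the substitution \([\Teps/x_{\T}]\) in the second abstraction rule removes exactly the \(x_{\T}\)-occurrences) fills in the details the paper leaves implicit.
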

\begin{proof}
By straightforward induction on \(\TE \pf t :\uty' \tr u\).
\end{proof}

\begin{lemma}[substitution]\label{lem:substFst}
Given
\(\TE_0, x \COL \uty_1,\dots,x\COL \uty_k \pf s \COL \uty \tr v\) 
where \(x \notin \dom(\TE_0)\) and \(k \ge 0\), and given
\(\TE_i \pf t \COL \uty_i \tr u_i\) for each \(\ind{i}{k}\), 
if \(\TE_0\cup \cdots \cup \TE_k\) is well-defined, then
we have
\[
\TE_0\cup \cdots \cup \TE_k \pf [t/x]s \COL \uty \tr 
[u_i/x_{\uty_i}]_{\ind{i}{k}} v\,.
\]
\end{lemma}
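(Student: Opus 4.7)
The plan is to prove this by induction on the derivation $\mathcal{D}$ of $\TE_0,\,x\COL\uty_1,\dots,x\COL\uty_k \pf s\COL\uty\tr v$, with case analysis on the last rule used. This is the standard structure for substitution lemmas in intersection type systems; the twist here is that the target term $v$ depends on the derivation and a single binding $x\COL\uty_i$ must be tracked all the way down to exactly the leaves of $\mathcal{D}$ that consume it.

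First I would handle the leaf rules. For \rname{FTr-Var} with $s=x$, the shape of the rule forces $\TE_0=\emptyset$, $k=1$, $\uty_1=\uty$, and $v=x_{\uty_1}$, so the conclusion reduces directly to the given $\dt_1\pf t\COL\uty_1\tr u_1$. For \rname{FTr-Var} with $s=y\neq x$ we have $k=0$ and the substitutions on both sides are vacuous. The constant rules \rname{FTr-Const0} and \rname{FTr-Const1} are immediate since $s$ has no free variables. For the abstraction rules \rname{FTr-Abs1} and \rname{FTr-Abs2} I would use $\alpha$-conversion to assume the bound variable is fresh with respect to $t$ and all the $\TE_i$, and then apply the induction hypothesis directly (noting that \rname{FTr-Abs2} introduces an additional substitution $[\Te/y_{\T}]$ that commutes with the substitutions for the $x_{\uty_i}$ since $y$ is fresh).

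The substantive case is application, \rname{FTr-App1} (and \rname{FTr-App2} is analogous). Suppose $s=s'\,s''$ and the rule splits the environment as $\TE_0'\cup\TE_1'\cup\cdots\cup\TE_\ell'$. For each $j\in\{0,1,\dots,\ell\}$ let $I_j\subseteq\{1,\dots,k\}$ be the indices such that $x\COL\uty_i\in\TE_j'$, and let $\TE_j''$ be $\TE_j'$ with its $x$-bindings removed. Applying the induction hypothesis to the $j$-th subderivation together with the family $\{\dt_i\}_{i\in I_j}$ yields a transformation of $[t/x]$ applied to that subterm, whose output is $[u_i/x_{\uty_i}]_{i\in I_j}$ applied to the original output; re-assembling via \rname{FTr-App1} then produces the required judgment. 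The equality of the resulting output term with $[u_i/x_{\uty_i}]_{i\le k}v$ follows from Lemma~\ref{lem:env}: the marker $x_{\uty_i}$ can only appear in the portion of $v$ coming from a subderivation whose environment actually contained $x\COL\uty_i$.

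The main obstacle is verifying that the environment obtained in the \rname{FTr-App1} step is still well-defined. For unbalanced $\uty_i$ the index $i$ lies in exactly one $I_j$, so there is no collision, but for a balanced $\uty_i$ the same $\dt_i$ may be re-used across several subderivations, introducing the environment $\TE_i$ multiple times into the union. Well-definedness of $(\TE_0''\cup\bigcup_{i\in I_0}\TE_i)\cup\cdots\cup(\TE_\ell''\cup\bigcup_{i\in I_\ell}\TE_i)$ then needs the hypothesis that $\TE_0\cup\cdots\cup\TE_k$ is well-defined, together with the observation that any duplicated binding $(y\COL\uty')$ that arises must carry a balanced $\uty'$ (otherwise two of the $\TE_i$'s would already have clashed). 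Once this bookkeeping is in place, each inductive case goes through mechanically.
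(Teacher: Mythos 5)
Your proof is correct and follows exactly the route the paper takes: the paper's own proof is the one-line ``by straightforward induction on the derivation of \(\TE_0, x \COL \uty_1,\dots,x\COL \uty_k \pf s \COL \uty \tr v\)'', and your write-up simply fills in the case analysis, correctly identifying the two genuine subtleties (tracking which subderivations consume each \(x\COL\uty_i\) via Lemma~\ref{lem:env}, and well-definedness of the recombined environments when a balanced \(\uty_i\) is shared across premises of \rname{FTr-App1}).
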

\begin{proof}
By straightforward induction on \(\TE_0, x \COL \uty_1,\dots,x\COL \uty_k \pf s \COL \uty \tr v\).
\end{proof}

For intersection type systems, also de-substitution is standard:
\begin{lemma}[de-substitution]\label{lem:desubstFst}
Given \(\TE \pf [t/x]s \COL \uty \tr v'\), %
there exist \(k \ge 0\), \((\TE_i)_{0\le i \le k}\),
\((\uty_i)_{1\le i \le k}\), \((u_i)_{1 \le i \le k}\), and \(v\) such that
\begin{enumerate}
\item\label{item:desubs}
\(\TE_0, x \COL \uty_1,\dots,x\COL \uty_k \pf s \COL \uty \tr v\)
\item\label{item:desubt}
\(\TE_i\pf t \COL \uty_i \tr u_i  \quad  (1 \le i \le k)\)
\item\label{item:env}
\(\TE = \TE_0 \cup \cdots \cup \TE_k\)
\item\label{item:subs}
\(v' =
[u_i/x_{\uty_i}]_{\ind{i}{k}} v
\).
\end{enumerate}
\end{lemma}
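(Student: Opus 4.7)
The proof proceeds by structural induction on $s$. In the variable case $s = x$, the hypothesized derivation $\TE \pf t : \uty \tr v'$ is already a derivation of $t$, so I take $k = 1$, $\TE_0 = \emptyset$, $\uty_1 = \uty$, $\TE_1 = \TE$, $u_1 = v'$, and $v = x_{\uty}$; item~\ref{item:desubs} is then just \rname{FTr-Var}, and item~\ref{item:subs} reduces to $[v'/x_\uty]\,x_\uty = v'$. When $s$ is a variable $y \neq x$ or a constant ($\Te$ or a unary $a$), the substituted term equals $s$ itself, so taking $k = 0$, $\TE_0 = \TE$, and $v = v'$ suffices.

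For the inductive case $s = s_1\,s_2$ derived by \rname{FTr-App1}, the last rule gives $v' = v_0\,u'_1 \cdots u'_m$ with subderivations $\TE'_0 \pf [t/x]s_1 : \uty'_1 \land \cdots \land \uty'_m \to \uty \tr v_0$ and $\TE'_j \pf [t/x]s_2 : \uty'_j \tr u'_j$ for $j = 1, \dots, m$. Applying the induction hypothesis to each of these $m+1$ subderivations produces a decomposition, each contributing its own list of $x$-typings of $t$. I concatenate those lists to form the global sequence $\uty_1, \dots, \uty_k$ of length $k = \sum_{j=0}^{m} k_j$ and reassemble by \rname{FTr-App1} to obtain the desired derivation of $s_1\,s_2$. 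The reconstructed union $\TE_0 \cup \cdots \cup \TE_k$ is well-defined because the balanced/unbalanced status of each type is preserved through the induction hypothesis, so any clash in the reconstruction would already have been present in the original well-defined union $\TE = \TE'_0 \cup \cdots \cup \TE'_m$; item~\ref{item:subs} then follows from the compositionality of substitution across an application. The case \rname{FTr-App2} is analogous, with an outer $\TT{br}$ constructor on the output.

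For $s = \lambda y.s'$, I choose $y$ by $\alpha$-conversion so that $y \notin \FV(t) \cup \{x\}$ and consider the case of \rname{FTr-Abs1}: $v' = \lambda y_{\uty'_1}\cdots\lambda y_{\uty'_m}.u$ and $\TE, y\COL \uty'_1, \dots, y\COL \uty'_m \pf [t/x]s' : \uty_0 \tr u$. Applying the induction hypothesis decomposes this premise into $\TE_0, x\COL \uty_1, \dots, x\COL \uty_k \pf s' : \uty_0 \tr v$ together with derivations of $t$ under $\TE_1, \dots, \TE_k$. By Lemma~\ref{lem:env} and the correspondence between free variables of the source and the transformed term, $y \notin \FV(t)$ forces $y \notin \dom(\TE_i)$ for every $i \ge 1$; thus all $y$-bindings must reside in $\TE_0$, and I can write $\TE_0 = \TE'_0, y\COL \uty'_1, \dots, y\COL \uty'_m$ with $y \notin \dom(\TE'_0)$. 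A final application of \rname{FTr-Abs1} then yields the required derivation of $\lambda y.s'$. The case \rname{FTr-Abs2} is similar, using that $[\Te/y_\T]$ commutes with the outer substitutions $[u_i/x_{\uty_i}]$ because $y_\T \notin \FV(u_i)$.

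The main obstacle is the bookkeeping in the application case: concatenating the several lists of $x$-typings returned by the induction hypothesis and verifying that the reassembled environment union remains well-defined. No genuinely new idea is needed beyond the standard de-substitution pattern for intersection type systems, but the absence of a weakening rule makes it essential to route all $y$-bindings in the abstraction case into $\TE_0$, which is precisely what Lemma~\ref{lem:env} ensures.
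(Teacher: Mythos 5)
Your proof is correct and follows exactly the route the paper takes: the paper's own proof is the one-line remark ``by induction on $s$ and case analysis on the last rule used for deriving $\TE \pf [t/x]s \COL \uty \tr v'$,'' and your case analysis (variable, constants, \rname{FTr-App1}/\rname{FTr-App2}, \rname{FTr-Abs1}/\rname{FTr-Abs2}) is the standard expansion of that. The only nitpick is that the fact you need in the abstraction case ($y \notin \FV(t)$ implies $y \notin \dom(\TE_i)$) is Lemma~\ref{lem:strengthening} rather than Lemma~\ref{lem:env}, but the required fact is available in the paper either way.
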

\begin{proof}
By induction on \(s\) and case analysis on the
last rule used for deriving \(\TE \pf [t/x]s \COL \uty \tr v'\).
\end{proof}

We call a type \(\uty\) \emph{inhabited} if there exist \(s\) and \(v\) such that
\(\pf s: \uty \tr v\).
\begin{lemma}
\label{lem:unblin}
If \(z : \uty, x: \T \pf t : \T \tr u\), \(u \reds u'\), and \(\uty\) is inhabited, 
then \(x_\T\) occurs in \(u'\) exactly once.
\end{lemma}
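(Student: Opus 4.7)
The plan is to reduce the statement to the case where the environment is the singleton \(\{x:\T\}\), and then establish the count-preserving property by induction on the reduction length, relying on an intersection-type annotation of the target.

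Since \(\uty\) is inhabited, I would fix a closed derivation \(\pf s_z: \uty \tr v_z\). The substitution lemma (Lemma~\ref{lem:substFst}) then yields \(x:\T \pf [s_z/z]t: \T \tr [v_z/z_\uty]u\). Because \(v_z\) is closed, in particular it contains no occurrence of \(x_\T\), the number of occurrences of \(x_\T\) in \([v_z/z_\uty]u\) and in any of its reducts matches that in \(u\) and in the corresponding reducts. Since \(\beta\)-reduction commutes with substitution, any reduction \(u \reds u'\) lifts to \([v_z/z_\uty]u \reds [v_z/z_\uty]u'\). Hence it suffices to prove the lemma under the environment \(\{x:\T\}\).

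In that simplified setting, I would proceed by induction on the length of \(u \reds u'\). The base case is immediate from Lemma~\ref{lem:env}, since \(\T\) is unbalanced and \(x:\T\) lies in the environment. For the inductive step, every target \(\beta\)-redex has the form \((\lambda f_{\uty_f}. v)\,w\) with \(\uty_f \neq \T\): target-level abstractions arise exclusively from \rname{FTr-Abs1}, whereas \rname{FTr-Abs2} introduces no new binder but merely performs a meta-level substitution of \(\Te\) for \(z_\T\). Reducing such a redex replaces \(f_{\uty_f}\) by \(w\) in \(v\), and the count of \(x_\T\) changes only if \(x_\T \in \FV(w)\); in that case the new count equals the number of occurrences of \(f_{\uty_f}\) in \(v\). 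The induction step therefore reduces to showing that whenever \(x_\T \in \FV(w)\) for a redex in a reduct of \(u\), the bound variable \(f_{\uty_f}\) occurs in \(v\) exactly once.

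The claim is obtained by equipping the target with an intersection-type annotation inherited from the derivation and establishing the following propagation invariant: for every annotated subterm whose environment contains \(x:\T\), the ascribed type is unbalanced. For the original \(u\) this is shown by induction on the derivation using the refinement rules --- the arrow cases in \rname{FTr-App1} and \rname{FTr-Abs1} yield unbalanced conclusion types via Rules~4 and~5, while the potentially problematic \rname{FTr-Abs2} case (whose conclusion type \(\T \to \uty'\) is only balanced) is ruled out because, in a word-alphabet \(\stlambda\)-term satisfying the simple-type side condition of Section~\ref{sec:FstTrans}, no sub-derivation can simultaneously carry two distinct free \(\T\)-bindings in its environment (the absence of binary tree constructors and the non-refinability of types like \(\T \to \T \to \T\) preclude fusing two ground-type free variables in one sub-expression). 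Given the invariant, \(w\) has unbalanced type \(\uty_f\), and Lemma~\ref{lem:env} then forces \(f_{\uty_f}\) to occur in \(v\) exactly once. The main obstacle is maintaining the annotation across target \(\beta\)-steps; this amounts to a subject-reduction property for the annotated target calculus, which for the redexes produced by the transformation follows by applying Lemma~\ref{lem:substFst} at the level of the annotated subderivation for the argument, together with the standard commutation of \(\beta\)-reduction with substitution.
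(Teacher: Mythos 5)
Your opening move --- using the inhabitant of \(\uty\) and Lemma~\ref{lem:substFst} to close off \(z\) and reduce to the environment \(\set{x\COL\T}\) --- is exactly what the paper does. After that the paper simply cites the linear type system of the earlier work \cite{asada_et_al:LIPIcs:2016:6246} (its Lemmas~22 and~24-(1) and the linearity of the unbalanced ground type): the target of the transformation is typable in a system where unbalanced types are linear, that system enjoys subject reduction, and linearly-typed terms contain their linear variables exactly once. You instead try to rebuild this from scratch by induction on the length of \(u\reds u'\), and that is where the argument has a genuine gap.

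The crux of your induction is the claim that for \emph{every} redex \((\lambda f_{\uty_f}.v)\,w\) occurring in \emph{any} reduct of \(u\) with \(x_\T\in\FV(w)\), the variable \(f_{\uty_f}\) occurs in \(v\) exactly once. Lemma~\ref{lem:env} gives this only for the original transformation derivation of \(u\), so you must carry a typing of the target across target-level \(\beta\)-steps. The tool you propose --- Lemma~\ref{lem:substFst}, the substitution lemma for the \emph{source} transformation relation --- cannot do this: a single target \(\beta\)-step does not correspond to a source step. Rule \rname{FTr-App1} produces applications \((\lambda f_{\uty_1}\cdots f_{\uty_k}.v)\,u_1\cdots u_k\), and after contracting just one of these \(k\) redexes the resulting term is no longer the image of any source term, so the ``annotated subderivation'' you want to substitute into no longer exists; moreover the lemma quantifies over \emph{all} reducts \(u'\), so you cannot restrict attention to reduction sequences that mirror source reductions. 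What is actually needed is subject reduction for a target-level linear/non-linear type system (the analogue, for this transformation, of Lemmas~\ref{lem:linear-substitution} and~\ref{lem:linear-subject-reduction} in Appendix~\ref{sec:linear-type}), which you neither define nor prove --- and this is precisely the content the paper imports from \cite{asada_et_al:LIPIcs:2016:6246}. A secondary weakness is your propagation invariant (``environment contains \(x\COL\T\) implies the ascribed type is unbalanced''): the intuition that a word has a single end is right, but ruling out two unbalanced bindings in one environment requires a precise counting invariant proved by induction on derivations, not the informal remark about the non-refinability of \(\T\to\T\to\T\). Either develop the target linear type system explicitly, or exploit relevance (Lemma~\ref{lem:relevance}) to argue that the occurrence count of \(x_\T\) is non-decreasing along reduction and pin it down at the normal form --- but as written the inductive step does not go through.
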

\begin{proof}
Since there exist \(s\) and \(v\) such that
\(\pf s: \uty \tr v\),
we have \(x:\T \pf [s/z]t : \T \tr [v/z_\uty]u\) by Lemma~\ref{lem:substFst} above.
Then we use the results in~\cite{asada_et_al:LIPIcs:2016:6246}:
Lemmas~22,~24-(1), and the linearity of the type \(\rt\).
\end{proof}

The following lemma is a kind of subject reduction as well as a kind of left-to-right simulation.
\begin{lemma}
\label{lem:subRedLin}
If \(z: \uty \pf t : \T \tr u\), 
\(t \red t'\), and
\(\uty\) is inhabited,
then 
there exists \(u'\) such that
\(z: \uty \pf t' : \T \tr u'\) and if 
\(u'\) is linear
then so is \(u\).
\end{lemma}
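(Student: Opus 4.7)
The plan is to proceed by induction on the structure of \(t\), with case analysis on the position of the \(\beta\)-redex \(t \red t'\). Existence of some \(u'\) with \(z\COL\uty \pf t'\COL\T \tr u'\) follows immediately from subject reduction (Lemma~\ref{lem:fstSubConv}); the substance of the claim is to select \(u'\) from among the possible translations of \(t'\) so that linearity transfers backward.

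The critical case is head reduction, \(t = (\lambda x.s_0)\,s_1 \red [s_1/x]s_0 = t'\). I first apply de-substitution (Lemma~\ref{lem:desubstFst}) to the derivation of \(u\) to expose the subderivations for \(\lambda x.s_0\) and \(s_1\), then case-split on the last rule used for each. In the \rname{FTr-App1}/\rname{FTr-Abs1} sub-case, \(u\) has the form \((\lambda x_{\uty_1}\cdots\lambda x_{\uty_k}.v_0)\,w_1\cdots w_k\) with \(\TE_0, x\COL\uty_1,\ldots,x\COL\uty_k \pf s_0\COL\T \tr v_0\) and \(\TE_i \pf s_1\COL\uty_i \tr w_i\). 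I choose \(u' \defe [w_i/x_{\uty_i}]_{i\le k}\,v_0\); by Lemma~\ref{lem:substFst} this is a valid translation of \(t'\), and \(u\) \(\beta\)-reduces to \(u'\), so the two share the same call-by-name normal form and linearity transfers in either direction.

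The main obstacle is the \rname{FTr-App2}/\rname{FTr-Abs2} sub-case. Here \(u = \TT{br}\,([\Te/x_\T]v_0)\,w_1\) where \(\TE_0, x\COL\T \pf s_0\COL\T \tr v_0\) and \(\TE_1 \pf s_1\COL\T \tr w_1\), while the natural choice is \(u' \defe [w_1/x_\T]v_0\), obtained via Lemma~\ref{lem:substFst}. Here \(u\) is not a \(\beta\)-expansion of \(u'\), so a direct reduction argument fails. My plan is to invoke Lemma~\ref{lem:unblin}: since \(\uty\) is inhabited, \(x_\T\) occurs in the normal form of \(v_0\) exactly once. Consequently, in the normal form of \(u'\) the unique occurrence of \(x_\T\) is replaced by the normal form of \(w_1\), while in the normal form of \(u\) that same occurrence is replaced by \(\Te\) and a separate copy of the normal form of \(w_1\) is attached on the right branch via \(\TT{br}\). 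Hence the multisets of \(z_\uty\)-occurrences in the two normal forms coincide (neither \(\TT{br}\) nor the extra \(\Te\) is a \(z_\uty\)), so linearity of \(u'\) with respect to \(z_\uty\) implies linearity of \(u\).

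For the inductive case, where the redex lies strictly inside a subterm, I dispatch by examining the last typing rule applied at the root of the derivation of \(u\), apply the induction hypothesis to the relevant subderivation, and reassemble using the same rule. The one subtlety is the context-splitting enforced by \rname{FTr-App1}: when a balanced binding \(z\COL\uty\) is shared among several subderivations, I apply the induction hypothesis to each component, noting that \(\uty\) remains inhabited throughout, and then combine the resulting translations to build the required \(u'\).
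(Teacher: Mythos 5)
Your overall plan coincides with the paper's proof: existence of \(u'\) is routine, the \rname{FTr-App1} case is settled by observing that \(u\) \(\beta\)-reduces to \(u'\), and the crux is the \rname{FTr-App2} case, where you correctly bring in Lemma~\ref{lem:unblin}. The gap is that the sentence ``the multisets of \(z_\uty\)-occurrences in the two normal forms coincide'' is exactly the claim that needs proving, and the justification you offer for it rests on a false picture of the two normal forms. Linearity is defined via the \emph{call-by-name} normal form, and which subterms get evaluated there differs between \(u=\br\,([\Te/x_{\T}]v_0)\,w_1\) and \(u'=[w_1/x_{\T}]v_0\). In \(u'\), the substituted \(w_1\) may land in the evaluation position (when the normal form of \(v_0\) is \(E_0[x_{\T}]\)), in which case \(w_1\) is reduced and, if it yields a tree, the trailing arguments of \(E_0\) are unblocked and reduced as well; but when the normal form of \(v_0\) is \(E_0[z_\uty]\) with \(x_{\T}\) sitting in an unevaluated argument, the copy of \(w_1\) is \emph{not} reduced and its syntactic \(z_\uty\)-occurrences are what count. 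Dually, in \(u\) the right branch \(w_1\) is reduced only if the left branch \([\Te/x_{\T}]v_0\) normalizes to a tree. So ``replace the unique \(x_{\T}\) by the normal form of \(w_1\)'' does not describe the normal form of \(u'\), and ``attach the normal form of \(w_1\) on the right'' does not describe the normal form of \(u\). Closing this is precisely the content of the paper's case analysis --- on whether \(z_\uty\) occurs in \(w_1\), and on whether \(v_0\) normalizes to \(E_0[x_{\T}]\) or to \(E_0[z_\uty]\) --- including the labelling argument needed to compare the normal forms of \(E_0[\pi]\) and \(E_0[\Te]\) when a tree fills the hole and evaluation continues past it. You would need to supply these cases (using Lemma~\ref{lem:env} for the syntactic uniqueness of \(x_{\T}\) and Lemma~\ref{lem:unblin} to locate it in reducts) before the conclusion follows.

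Two smaller points. Since \(\red\) is call-by-name, a redex never sits strictly inside an argument of an application, so the ``context-splitting'' difficulty you anticipate for \rname{FTr-App1} in the inductive case does not arise; the only non-head case is a redex inside a ground argument of a terminal, which keeps the judgment at type \(\T\) with environment \(z\COL\uty\) and lets the induction hypothesis apply directly (this is why the paper inducts on the head of \(t\)). Also, the head case is really \(t=(\lambda x.s_0)\,s_1\,s_2\cdots s_m\) with trailing arguments; in the \rname{FTr-App2} sub-case these carry empty intersections and vanish from \(u\), which your sketch should at least acknowledge.
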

\begin{proof}
The proof proceeds by induction on term \(t\) and by case analysis on the head of \(t\).
The case \(t= a s\) is clear by induction hypothesis.
We consider the other
case that \(t= (\lambda x . t_0)t_1\,t_2\dots,t_m\) (\(m \ge 1\)).
We further perform case analysis on the transformation rule used
for the application \((\lambda x . t_0)t_1\): \rname{FTr-App1} or \rname{FTr-App2}.
\begin{itemize}
\item Case of \rname{FTr-App1}:
Let the simple type of \(\lambda x . t_0\) be
\[
\sty_1 \to \dots \to \sty_m \to \T
\]
and let \(\sty_\ell\) be the first type of order-0 among \(\sty_i\).
Because of the assumption on simple types, \(\sty_i = \T\) for \(i \ge \ell\).
We have
\[
\begin{aligned}&
z: \uty \pf (\lambda x . t_0)t_1\,t_2\dots t_m : \T \tr 
\big((\lambda x_{\uty_1} \cdots x_{\uty_{k_1}} . u_0)
\appseq{u^1_i}{i\le k_1}\cdots\appseq{u^{\ell-1}_i}{i\le k_{\ell-1}}\big)
\ibr u^{{\ell}} \ibr \cdots \ibr u^m
\\&
(\lambda x . t_0)t_1\,t_2\cdots t_m \red
([t_1/x]t_0)t_2\cdots t_m
\\&
z: \uty \pf ([t_1/x]t_0)t_2\cdots t_m : \T \tr 
\big(([u^1_{i} / x_{\uty_{k_i}}]_{i\le k_1}u_0)
\appseq{u^2_i}{i\le k_2}\cdots\appseq{u^{\ell-1}_i}{i\le k_{\ell-1}}\big)
\ibr u^{{\ell}} \ibr \cdots \ibr u^m
\end{aligned}
\]
Then we have
\begin{align*}
& \big((\lambda x_{\uty_1} \cdots x_{\uty_{k_1}} . u_0)
\appseq{u^1_i}{i\le k_1}\cdots\appseq{u^{\ell-1}_i}{i\le k_{\ell-1}}\big)
\ibr u^{{\ell}} \ibr \cdots \ibr u^m
\\ \reds\ &
\big(([u^1_{i} / x_{\uty_{k_i}}]_{i\le k_1}u_0)
\appseq{u^2_i}{i\le k_2}\cdots\appseq{u^{\ell-1}_i}{i\le k_{\ell-1}}\big)
\ibr u^{{\ell}} \ibr \cdots \ibr u^m
\end{align*}
where if the latter is linear then so is the former.
\item Case of \rname{FTr-App2}:
In this case the type of \((\lambda x . t_0)\) is of the form
\(\T \to \top \to \dots \to \top \to \T\),
and we have
\[
\begin{aligned}&
z: \uty \pf (\lambda x . t_0)t_1\,t_2\dots,t_m : \T \tr 
\br\,([\Te/x_\T]u_0)\, u_1
\\&
(\lambda x . t_0)t_1\,t_2\dots,t_m \red
([t_1/x]t_0)t_2\dots,t_m
\\&
z: \uty \pf ([t_1/x]t_0)t_2\dots,t_m : \T \tr 
[u_1 / x_{\T}]u_0
\end{aligned}
\]
By Lemma~\ref{lem:env},
\(x_o\) occurs in \(u_0\) exactly once.
From now we assume that \([u_1 / x_{\T}]u_0\) is linear,
and show that \(\br\,([\Te/x_\T]u_0)\, u_1\) is also linear.
We further perform case analysis:
\begin{itemize}
\item Case where \(z_\uty\) occurs in \(u_1\):
We have
\[
z: \uty \pf t_1 : \T \tr u_1
\]
and the normal form of \(u_1\) is of the form \(E_1[z_\uty]\).

The normal form of \(u_0\) is of the form either 
\(E_0[z_\uty]\) or \(E_0[x_\T]\).
If the former, then
\[
[u_1/x_\T]u_0 \reds [u_1/x_\T](E_0[z_\uty])
=
([u_1/x_\T]E_0)[z_\uty]
\]
where the last term is normal form since \([u_1/x_\T]E_0\) is an evaluation context.
Since \([u_1/x_\T]u_0\) is linear, \(z_\uty\) occurs in
\(([u_1/x_\T]E_0)[z_\uty]\) exactly once.
Since \(z_\uty\) occurs in \(u_1\), \(x_\T\) must not occur in \(E_0\).
Thus \(x_\T\) does not occur in \(E_0[z_\uty]\) and hence by Lemma~\ref{lem:unblin}, neither in \(u_0\).
This is a contradiction since 
\(x_\T\) occurs in \(u_0\) by Lemma~\ref{lem:env}.
Thus, the normal form of \(u_0\) is of the form \(E_0[x_\T]\).

Then, we have
\[
[u_1/x_\T]u_0 \reds [u_1/x_\T](E_0[x_\T])
=
([u_1/x_\T]E_0)[u_1]
\reds
([u_1/x_\T]E_0)[E_1[z_\uty]]
\]
where note that \([u_1/x_\T]E_0\) is an evaluation context
and the last term is a normal form.
Since
\([u_1/x_\T]u_0 \reds [u_1/x_\T](E_0[x_\T])\) is linear,
\(z_\uty\) does not occur in
\([u_1/x_\T]E_0\) nor in \(E_1\).
Since \(z_\uty\) occurs in \(u_1\), 
\(x_\T\) does not occur in \(E_0\).
Hence \(E_0[x_\T]\) is closed; let \(E_0[x_\T] \reds \pi\).

Now we have
\[
\br\,([\Te/x_\T]u_0)\, u_1
\reds
\br\,([\Te/x_\T](E_0[x_\T]))\, u_1
=
\br\,(E_0[\Te])\, u_1
\reds
\br\,\pi\, E_1[z_\uty]
\]
where the last term is a normal form and contains exactly one \(z_\uty\).
Thus, \(\br\,([\Te/x_\T]u_0)\, u_1\) is linear.
\item Case where \(z_\uty\) does not occur in \(u_1\):
In this case we have
\begin{align*}&
z: \uty, x : \T \pf t_0 : \top \to \dots \to \top \to \T \tr u_0
\\&
\pf t_1 : \T \tr u_1
\end{align*}
and hence \(u_1\) is closed; let \(u_1 \reds \pi\).
The normal form of \(u_0\) is of the form either
\(E_0[x_\T]\) or \(E_0[z_\uty]\).

\begin{itemize}
\item
Case of \(u_0 \reds E_0[x_\T]\):
By Lemma~\ref{lem:unblin}, \(E_0\) does not contain \(x_\T\), and
we have
\[
[u_1/x_\T]u_0 \reds [u_1/x_\T](E_0[x_\T])
= E_0[u_1]
\reds E_0[\pi]
\]
Here we put a label on \(\pi\) and then reduce further to a normal form \(N\):
\[
[u_1/x_\T]u_0 
\reds E_0[\pi^{\lab}]
\reds N
\]
Let \(N'\) be the term obtained by replacing \(\pi^{\lab}\) in \(N\) with \(\Te\).
Then we have
\[
E_0[\Te] \reds N'
\]
and \(N'\) is also a normal form.
Further, by the linearity of \([u_1/x_\T]u_0 \), \(z_\uty\) occurs in \(N\) exactly once,
and so does in \(N'\).
Now we have
\[
\br\,([\Te/x_\T]u_0)\, u_1
\reds
\br\,([\Te/x_\T](E_0[x_\T]))\, u_1
=
\br\,(E_0[\Te])\, u_1
\reds
\br\,N'\, u_1
\]
where the last term is a normal form and contains exactly one \(z_\uty\).
Thus \(\br\,([\Te/x_\T]u_0)\, u_1\) is linear.
\item
Case of \(u_0 \reds E_0[z_\uty]\):
We have
\[
[u_1/x_\T]u_0 \reds [u_1/x_\T](E_0[z_\uty])
=
([u_1/x_\T]E_0)[z_\uty]
\]
where the last term is a normal form, and by the linearity of \([u_1/x_\T]u_0\), 
\(z_\uty\) does not occur in \(E_0\).
Now we have
\[
\br\,([\Te/x_\T]u_0)\, u_1
\reds
\br\,([\Te/x_\T](E_0[z_\uty]))\, u_1
=
\br\,(([\Te/x_\T]E_0)[z_\uty])\, u_1
\]
where the last term is a normal form and contains exactly one \(z_\uty\).
Thus \(\br\,([\Te/x_\T]u_0)\, u_1\) is linear.
\end{itemize}
\end{itemize}
\end{itemize}
\end{proof}

\begin{lemma}
\label{lem:strengthening}
For \(\TE \pf t : \uty \tr u\),
if \(x\) does not occur in \(t\), then \(x \notin \dom(\TE)\).
\end{lemma}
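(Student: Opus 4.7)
The plan is to proceed by induction on the derivation of \(\TE \pf t : \uty \tr u\), with case analysis on the last rule applied. The argument will rely on a routine strengthening property of the intersection type system of Figure~\ref{fig:FirstTransformation}: each rule only introduces into \(\TE\) bindings for variables that actually occur (as free occurrences) in the source term of the conclusion.

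For the base cases \rname{FTr-Var}, \rname{FTr-Const0}, and \rname{FTr-Const1}, the claim is immediate: the environment is either empty or consists of a single binding for the variable that is the term itself. For the application cases \rname{FTr-App1} and \rname{FTr-App2}, the conclusion's environment is the union of the premises' environments; a variable \(x\) not occurring in an application \(s\,t\) cannot occur in either \(s\) or \(t\), so applying the induction hypothesis to each premise yields \(x \notin \dom(\TE_i)\) for every component, and hence \(x \notin \dom(\TE)\).

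For the abstraction cases \rname{FTr-Abs1} and \rname{FTr-Abs2}, I will treat \rname{FTr-Abs1} explicitly, \rname{FTr-Abs2} being analogous. The source term is \(\lambda y.t\) and the premise is \(\TE, y\COL\uty_1,\ldots,y\COL\uty_k \pf t \COL \uty \tr u\) with the side condition \(y \notin \envdom{\TE}\). Suppose \(x\) does not occur in \(\lambda y.t\). If \(x=y\), the side condition gives \(x\notin\dom(\TE)\) directly. Otherwise \(x\neq y\), so \(x\) does not occur in \(t\) either, and the induction hypothesis applied to the premise yields \(x \notin \dom(\TE,y\COL\uty_1,\ldots,y\COL\uty_k)\), which in particular implies \(x\notin\dom(\TE)\). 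For \rname{FTr-Abs2}, whose premise extends the environment with \(y\COL\T\), I will invoke the standing \(\alpha\)-conversion convention to rename the bound variable so that \(y\notin\dom(\TE)\), after which the same case split applies.

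The main, and essentially only, subtlety I expect is handling the bound variable in the abstraction rules, and this is cleanly resolved by the explicit side condition of \rname{FTr-Abs1} together with the \(\alpha\)-conversion convention for \rname{FTr-Abs2}. Beyond that, the proof is a routine induction with no further obstacles.
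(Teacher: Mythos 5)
Your proof is correct and matches the paper's approach: the paper dispatches this lemma with a one-line ``straightforward induction on the term,'' and your derivation-level induction with the rule-by-rule case analysis (including the treatment of the bound variable in the abstraction rules via the side condition \(x \notin \envdom{\TE}\) and \(\alpha\)-conversion) is just the spelled-out version of that same argument.
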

\begin{proof}
By straightforward induction on term \(t\).
\end{proof}

\begin{lemma}
\label{lem:linStrNorm}
For a linear normal form \(N\),
if \(x : \uty_1, \dots, x : \uty_k \pf N : \T \tr u\),
then \(k=1\) and \(u\) is a linear normal form.
\end{lemma}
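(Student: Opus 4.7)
I would prove this by structural induction on the linear normal form \(N\). A ground normal form is either a closed tree---ruled out since \(x\) occurs in \(N\)---or has the shape \(E[y]\) for an evaluation context \(E\) and a free variable \(y\). The assumption that the environment consists only of \(x\)-bindings forces \(y = x\): otherwise \(y\) would have to be introduced by \rname{FTr-Var} somewhere on the derivation, and would thus appear in the environment. Unwinding the evaluation context around \(x\) then leaves two top-level shapes for \(N\): either \(N = x\,M_1 \cdots M_n\), or \(N = a\,M_1\) for a unary terminal \(a\), with \(M_1\) a smaller linear normal form.

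In the first case, the unique occurrence of \(x\) sits at the head, so every \(M_i\) is closed. The derivation applies \rname{FTr-Var} to this single \(x\) at some type \(\uty\), contributing the sole binding \(x : \uty\); any subderivations for the \(M_i\), possibly instantiated multiply through intersection-typed argument positions, introduce no further binding for \(x\) by Lemma~\ref{lem:strengthening}. Hence \(k = 1\) and \(\uty_1 = \uty\). The image \(u\) has \(x_{\uty_1}\) at the head applied to transformations of the \(M_i\) (possibly with \(\br\)-insertions from \rname{FTr-App2}); Lemma~\ref{lem:env} shows \(x_{\uty_1}\) is absent from these arguments, so it occurs in \(u\) exactly once, making \(u\) linear. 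Its head being a variable rather than a \(\lambda\)-abstraction, \(u\) is in normal form.

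In the second case, \(N = a\,M_1\) with \(x\) linear in \(M_1\). The derivation is forced to use \rname{FTr-Const1} for \(a\) and \rname{FTr-App2} for the application, giving \(u = \br\,a\,u_1\), where the subderivation of \(M_1\) has the same environment as the whole. Applying the induction hypothesis to \(M_1\) yields \(k = 1\) and \(u_1\) a linear normal form; hence so is \(u\).

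The principal subtlety is that intersection types can duplicate subderivations and so, \emph{a priori}, produce multiple \(x\)-bindings in the aggregated environment. The argument above circumvents this because such duplication only affects argument subderivations, and in the head case the arguments are \(x\)-free, so no copy contributes an \(x\)-binding; the unique \(x\)-binding arises solely from the single \rname{FTr-Var} application at the head of \(N\). The only other non-routine step is the observation that heads of normal forms are never \(\lambda\)-abstractions, which is inherited by the transformation and thereby secures that the image remains a normal form.
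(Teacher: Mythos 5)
Your proof is correct and takes essentially the same route as the paper, which disposes of this lemma with a one-line appeal to ``straightforward induction on the open normal form $N$ and Lemma~\ref{lem:strengthening}.'' Your case split on the two shapes of a ground normal form $E[x]$ (head-variable spine versus unary-terminal wrapper), together with Lemma~\ref{lem:strengthening} to show the argument subderivations contribute no $x$-binding and Lemma~\ref{lem:env} to pin down the single occurrence of $x_{\uty_1}$ in the output, is exactly the argument the paper leaves implicit.
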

\begin{proof}
By straightforward induction on open normal form \(N\) and by Lemma~\ref{lem:strengthening}.
\end{proof}

\begin{lemma}
\label{lem:linPres}
If \(x:\uty_1,\dots,x:\uty_k \pf t : \T \tr u\)
and \(t\) is linear,
then \(k=1\).

Further if \(\uty_1\) is inhabited, %
then \(u\) is linear.
\end{lemma}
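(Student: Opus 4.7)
My plan is to reduce Lemma~\ref{lem:linPres} to Lemma~\ref{lem:linStrNorm} by reducing \(t\) to its call-by-name normal form \(N\) and tracing the transformation through that reduction. The argument splits naturally into two parts, one per conclusion of the statement.

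For the first conclusion (\(k=1\)): by hypothesis \(t\) is linear in \(x\), so its normal form \(N\) exists and contains exactly one occurrence of \(x\). Applying Lemma~\ref{lem:fstSubConv} iteratively along \(t \reds N\) yields some \(u_N\) with \(x:\uty_1,\dots,x:\uty_k \pf N:\T \tr u_N\). Lemma~\ref{lem:linStrNorm} applied to this derivation immediately gives \(k=1\).

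For the second conclusion (\(u\) linear when \(\uty_1\) is inhabited): now knowing \(k=1\), the original derivation is \(x:\uty_1 \pf t:\T \tr u\). I would apply Lemma~\ref{lem:subRedLin} step by step along the sequence \(t = t_0 \red t_1 \red \cdots \red t_m = N\). Each invocation, justified by the inhabitation of \(\uty_1\), produces a derivation \(x:\uty_1 \pf t_{i+1}:\T \tr u_{i+1}\) together with the implication ``\(u_{i+1}\) linear \(\Rightarrow\) \(u_i\) linear''. At the end of the chain, Lemma~\ref{lem:linStrNorm} applied to \(x:\uty_1 \pf N:\T \tr u_m\) certifies that \(u_m\) is a linear normal form; propagating the backward implication through the chain then yields the linearity of \(u = u_0\). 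The derivation of \(N\) used here need not coincide with the one obtained in the first part, because Lemma~\ref{lem:subRedLin} is existential in its output term, but this causes no trouble since Lemma~\ref{lem:linStrNorm} applies uniformly to every derivation of \(N\).

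The main technical content has already been absorbed into Lemma~\ref{lem:subRedLin}, whose intricate case analysis for rule \rname{FTr-App2} (where the transformation inserts a \(\Te\) substitution that blocks any naive one-step simulation of source reductions by target reductions) is precisely what supplies the linearity-transfer property exploited above. Consequently, the present argument is essentially a chaining of previously established lemmas, and I do not expect any substantively new obstacle; the only care required is to ensure that the reduction sequence \(t \reds N\) is finite (which follows from \(t\) being ground and linear, so that its call-by-name normal form exists) so that the iteration of Lemma~\ref{lem:subRedLin} can reach \(N\) in finitely many steps.
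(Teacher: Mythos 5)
Your proof is correct and follows essentially the same route as the paper: the first conclusion via Lemma~\ref{lem:fstSubConv} and Lemma~\ref{lem:linStrNorm} applied to the normal form, and the second by chaining Lemma~\ref{lem:subRedLin} along the reduction sequence to the normal form and propagating linearity backwards from Lemma~\ref{lem:linStrNorm}. Your side remarks (that the derivation of \(N\) need not coincide across the two parts, and that strong normalization guarantees the chain is finite) are sound and do not change the argument.
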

\begin{proof}
Let \(N\) be the normal form of \(t\).
By Lemma~\ref{lem:fstSubConv}, we have
\(x:\uty_1,\dots,x:\uty_k \pf N : \T \tr u'\)
for some \(u'\).
By Lemma~\ref{lem:linStrNorm}, \(k=1\).

When \(\uty_1\) is inhabited,
for the reduction sequence
\[
t = t_0 \red t_1 \red \dots \red t_\ell = N
\]
by Lemma~\ref{lem:subRedLin}, there exist \(u_0,\dots,u_\ell\) such that
\begin{align*}&
u_0 = u
\\&
x : \uty_1 \pf t_i :\T \tr u_i \quad(i=0,\dots,\ell)
\\&
\text{if \(u_{i+1}\) is linear then so is \(u_i\)} \quad(i=0,\dots,\ell-1)
\end{align*}
By Lemma~\ref{lem:linStrNorm}
\(u_\ell\) is linear, and hence \(u_0 = u\) is linear.
\end{proof}

As a corollary of Lemmas~\ref{lem:linPres} and~\ref{lem:substFst}, we have:
\begin{lemma}
\label{lem:iteLin}
For \(z : \uty \pf t : \T \tr u\) and
\(y:\uty_1 ,\dots, y:\uty_k \pf s : \uty \tr v\),
if \([s/z]t\) is linear, \(k=1\).
\end{lemma}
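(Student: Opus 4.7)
The plan is to recognize \texttt{Lemma~\ref{lem:iteLin}} as an immediate corollary of the substitution lemma (\texttt{Lemma~\ref{lem:substFst}}) composed with the linearity-preservation lemma (\texttt{Lemma~\ref{lem:linPres}}), exactly as the text already hints.

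First I would match the hypothesis against the shape expected by \texttt{Lemma~\ref{lem:substFst}}. Taking the source term there to be our \(t\), the substituted variable to be \(z\), and noting that \(z\) appears with a single type \(\uty\) in the environment of \(z : \uty \pf t : \T \tr u\), the substitution lemma specializes with \(\TE_0 = \emptyset\) and a single premise \(\TE_1 = y : \uty_1, \dots, y : \uty_k\) coming from the given transformation \(y : \uty_1, \dots, y : \uty_k \pf s : \uty \tr v\). Its side condition on the union of environments is trivially satisfied since there is only one environment in the union.

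Applying \texttt{Lemma~\ref{lem:substFst}} therefore yields
\[
y : \uty_1, \dots, y : \uty_k \pf [s/z]t : \T \tr [v/z_{\uty}]u.
\]
Now the hypothesis that \([s/z]t\) is linear puts this judgment exactly in the situation of \texttt{Lemma~\ref{lem:linPres}}, whose first conclusion gives \(k = 1\).

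There is no real obstacle to overcome: the only thing to be careful about is bookkeeping of the pattern-matching with \texttt{Lemma~\ref{lem:substFst}}, in particular that the \(k\) in the statement of \texttt{Lemma~\ref{lem:iteLin}} corresponds to the multiplicity of the free variable \(y\) in the typing of \(s\) (not to the \(k\) in the substitution lemma, which is \(1\) here because \(z\) is assigned the single type \(\uty\)). Once the naming is aligned, the proof is a two-line composition and needs nothing beyond the two lemmas already proved.
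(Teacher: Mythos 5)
Your proof is correct and is exactly the paper's intended argument: the paper introduces Lemma~\ref{lem:iteLin} precisely as a corollary of Lemmas~\ref{lem:substFst} and~\ref{lem:linPres}, composed in the way you describe. The bookkeeping point you flag (that the substitution lemma's own $k$ is $1$ here, while the $k$ to be bounded is the multiplicity of $y$ in the typing of $s$) is the only subtlety, and you handle it correctly.
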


A \(\stlambda\)-term \(t\) is \emph{relevant} if,
for any subterm \(\lambda x . s\) in \(t\), \(x\) occurs in \(s\).
\begin{lemma}
\label{lem:relevance}
For \(\TE \pf t : \uty \tr u\), \(u\) is relevant, and if \(u\) is linear on some variable \(z\),
then \(z\) occurs in \(u\) exactly once.
\end{lemma}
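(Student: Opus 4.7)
The plan is to prove the two conjuncts independently, tackling relevance by induction on the transformation derivation and then reducing the occurrence-count statement to a monotonicity property of $\beta$-reduction in relevant $\stlambda$-terms.

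For relevance, I would induct on the derivation of $\TE \pf t : \uty \tr u$. The base rules \rname{FTr-Var}, \rname{FTr-Const0}, \rname{FTr-Const1} produce no $\lambda$-binder. The rules \rname{FTr-App1} and \rname{FTr-App2} build only applications on top of their premises, so the induction hypothesis applies directly. In \rname{FTr-Abs2}, the substitution $[\Teps/x_{\T}]$ eliminates the $\lambda$-binder corresponding to the source abstraction; no new binder is introduced, and existing binders inside the transformed subterm remain relevant because the substitution touches neither $\lambda$-abstractions nor the bound-variable occurrences within them. The only substantive case is \rname{FTr-Abs1}, where the binders $\lambda x_{\uty_1}\cdots\lambda x_{\uty_k}$ are added on top of the premise $\TE, x\COL\uty_1, \ldots, x\COL\uty_k \pf t : \uty \tr u_0$. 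Lemma~\ref{lem:env} then gives $x_{\uty_i} \in \FV(u_0)$ for each $i$, so each new binder is relevant, and combined with the inductive relevance of $u_0$, so is the whole term.

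For the second part I would isolate the following standard fact: in a relevant $\stlambda$-term, a single $\beta$-step $(\lambda y . s)\, t \red [t/y]s$ cannot decrease the number of free occurrences of any variable $z \neq y$. Indeed, the count changes from $\#z(s) + \#z(t)$ to $\#z(s) + \#y(s)\cdot \#z(t)$, which is at least $\#z(s) + \#z(t)$ since $\#y(s) \geq 1$ by relevance. Moreover, relevance is preserved under $\beta$-reduction, since every $\lambda y'.s'$ appearing in $s$ still satisfies $\#y'([t/y]s') \geq 1$ (substitution of $y$ does not disturb occurrences of $y'$), and any duplicated copies of $\lambda$-abstractions originating in $t$ remain relevant. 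Consequently, along any call-by-name reduction $u \reds N$ to the normal form, every intermediate term is relevant and the count of $z$ is monotone non-decreasing.

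Now assume $u$ is linear on $z$, so that $z$ occurs exactly once in $N$; by the monotonicity above $\#z(u) \le 1$. Since the free variables of any reduct are a subset of those of $u$, $z \in \FV(N)$ forces $z \in \FV(u)$, giving $\#z(u) \ge 1$. Hence $z$ occurs in $u$ exactly once. I do not expect a serious obstacle: the argument is a straightforward induction on the transformation rules combined with a routine calculation about $\beta$-reduction. The only point requiring care is the \rname{FTr-Abs2} case of the relevance induction, where one has to verify that substituting $\Teps$ for $x_{\T}$ neither creates a new $\lambda$-binder nor makes any existing binder in $u_0$ become unused.
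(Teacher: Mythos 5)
Your proposal is correct and follows essentially the same route as the paper: relevance by induction on the transformation derivation using Lemma~\ref{lem:env} to justify the \rname{FTr-Abs1} binders, and the occurrence-count claim from the fact that $\beta$-reduction of a relevant term cannot decrease the number of free occurrences of any variable. The paper states both steps in a single terse paragraph; your write-up just supplies the details (the $\#z(s)+\#y(s)\cdot\#z(t)$ calculation and the preservation of relevance under reduction) that the paper leaves implicit.
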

\begin{proof}
The former part is shown by straightforward induction on \(\TE \pf t : \uty \tr u\), with
using Lemma~\ref{lem:env}.
The latter part is clear from the former part since, during reduction of a relevant term toward the normal form,
the number of free occurrences of each variable does not decrease.
\end{proof}

\begin{lemma}
\label{lem:iteLinPresFst}
Given order-\(n\) \(\stlambda\)-contexts \(C\), \(D\), and order-\(n\) \(\stlambda\)-term \(t\)
such that 
\begin{itemize}
\item
the constants in \(C,\,D,\,t\) are in a word alphabet
\item
\(\set{ \tree(C[D^{\ell_i}[t]]) \mid i \ge 0 }\) is infinite for any strictly increasing sequence
 \((\ell_i)_i\).
\item
\(C\) and \(D\) are %
 linear
\end{itemize}
there exist order-\((n-1)\) \(\stlambda\)-contexts \(G\), \(H\), order-\((n-1)\) \(\stlambda\)-term \(u\),
and some constant numbers \(c,\,d \ge 1\)
such that 
\begin{itemize}
\item
the constants in \(G,\,H,\,u\) are in a \(\br\)-alphabet
\item
\(\word(\tree(C[D^{ci+d}[t]])) = \remeps{\leaves(\tree(G[H^i[u]]))} \quad (i \ge 0)\)
\item
\(G\) and \(H\) are %
 linear.
\end{itemize}
\end{lemma}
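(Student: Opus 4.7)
The plan is to apply the first transformation to the iterated term $C[D^{ci+d}[t]]$ and extract a uniform decomposition of shape $G[H^i[u]]$ in the image; by Lemma~\ref{lem:fst-crrectness}, such a decomposition immediately yields the required equation $\word(\tree(C[D^{ci+d}[t]])) = \remeps{\leaves(\tree(G[H^i[u]]))}$. The $\br$-alphabet constraint on $G, H, u$ is automatic from the codomain of the first transformation, and the order drop from $n$ to $n-1$ is supplied by Lemma~\ref{lem:orderDecreaseFst}; so the real work is in finding a single triple $(G, H, u)$ that serves uniformly for all $i$ and showing that $G, H$ are linear single-hole contexts.

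First I would introduce a fresh variable $z$ of the simple type $\sty$ of the hole of $C$ and $D$. By Lemma~\ref{lem:leaf-fwd} combined with Lemma~\ref{lem:fstSubConv}, for each $j \ge 0$ we have a derivation $\pf C[D^j[t]] : \T \tr v_j$; applying the de-substitution Lemma~\ref{lem:desubstFst} twice (first to isolate $t$, then to isolate the $D$-block) yields intersection types $\uty^{(j)}$ and $\uty^{(j)}_t$ refining $\sty$, and decomposed derivations
\[
z \COL \uty^{(j)} \pf C[z] : \T \tr \tilde G_j[z_{\uty^{(j)}}], \quad
z\COL\uty^{(j)}_t \pf D^j[z] : \uty^{(j)} \tr \tilde H_j[z_{\uty^{(j)}_t}], \quad
\pf t : \uty^{(j)}_t \tr \tilde u_j.
\]
Since the overall output type $\T$ is unbalanced and the linearity of $C$ (via Lemma~\ref{lem:iteLin}) prevents $z_{\uty^{(j)}}$ from being typed by a balanced type that would be duplicated, I would argue that $\uty^{(j)}$ and $\uty^{(j)}_t$ may be chosen unbalanced; Lemma~\ref{lem:env} then ensures that each of $z_{\uty^{(j)}}$ and $z_{\uty^{(j)}_t}$ occurs exactly once in the corresponding transform, so each $\tilde G_j$ and $\tilde H_j$ is a genuine single-hole context.

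Next I would apply a two-dimensional pigeonhole argument on the pair of refinement types arriving at and leaving each $D$-block: since there are only finitely many unbalanced refinements of $\sty$, there exist an unbalanced $\uty_\star \DCOL \sty$ and constants $c, d \ge 1$ such that
\[
z\COL\uty_\star \pf C[D^d[z]] : \T \tr G[z_{\uty_\star}], \quad
z\COL\uty_\star \pf D^c[z] : \uty_\star \tr H[z_{\uty_\star}], \quad
\pf t : \uty_\star \tr u
\]
are all derivable with unbalanced hole-type $\uty_\star$ on both ends. Iterated application of the substitution Lemma~\ref{lem:substFst} then yields $\pf C[D^{ci+d}[t]] : \T \tr G[H^i[u]]$ for every $i \ge 0$, and another invocation of Lemma~\ref{lem:env} (with $\uty_\star$ unbalanced) gives that $G$ and $H$ are single-hole contexts.

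The main obstacle will be the linearity of the pair $(G, H)$, in the sense that the normal form of $G[H^i[x]]$ contains $x$ exactly once for every $i$. I plan to establish this by propagating linearity through the transformation: since $(C, D)$ is linear, the normal form of $C[D^i[x]]$ contains $x$ exactly once for each $i$, and so Lemma~\ref{lem:linPres} together with Lemma~\ref{lem:iteLin} (forcing $k = 1$ whenever a substituted term is linear) tracks the single source occurrence across the transformation into a single occurrence of $x_{\uty_\star}$ in the normal form of $G[H^i[x_{\uty_\star}]]$. The subtlety is that the transformation is non-deterministic and the pigeonhole step commits to one particular derivation branch, so the linearity preservation argument must be run on the specific derivations chosen at the pigeonhole step; once this is verified, Lemma~\ref{lem:relevance} and the unbalancedness of $\uty_\star$ finish the argument.
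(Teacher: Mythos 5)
Your overall strategy is the one the paper uses: decompose the derivation $\pf C[D^j[t]]:\T\tr u_0$ by repeated de-substitution (Lemma~\ref{lem:desubstFst}) into a chain $z\COL\uty_0\pf C[z]:\T\tr v_0$, $z\COL\uty_i\pf D[z]:\uty_{i-1}\tr v_i$, $\pf t:\uty_j\tr u'$, use linearity (Lemmas~\ref{lem:linPres} and~\ref{lem:iteLin}) to force $k=1$ at every cut so that the types really do form a single chain, pigeonhole over the finitely many refinements of the hole's simple type to find $\uty_{j_1}=\uty_{j_2}$, set $c=j_2-j_1$, reassemble with Lemma~\ref{lem:substFst}, and conclude with Lemmas~\ref{lem:fst-crrectness} and~\ref{lem:orderDecreaseFst}. (Your ``two-dimensional'' pigeonhole is redundant: once $k=1$, the exit type of one $D$-block is the entry type of the next, so an ordinary pigeonhole on the chain $\uty_0,\dots,\uty_j$ suffices.)

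There is, however, one genuinely incorrect step: the claim that linearity of $C$ forces the hole types $\uty^{(j)}$ to be unbalanced, so that Lemma~\ref{lem:env} yields the single-occurrence (single-hole) property. This is false. Take $C[z]=a\,(z\,(b\,\Te))$ with the hole of simple type $\T\to\T$: the context is linear in $z$, yet the only refinement $z$ can receive is $\T\to\T$, and by the refinement rules $\urr{\T\to\T}{\T\to\T}$ would require $\brr{\T}{\T}$, which is not derivable, whereas $\brr{\T\to\T}{\T\to\T}$ holds (via $\urr{\T}{\T}$ on both sides). So a linear hole can carry a \emph{balanced} type, and Lemma~\ref{lem:env} then gives you nothing about the number of occurrences of $z_{\uty}$ in the output. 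The way out --- and what the paper actually does --- is to not go through balancedness at all: Lemma~\ref{lem:linPres} (second part) shows the \emph{output} $G[H^i[z_{\uty}]]$ is linear, and Lemma~\ref{lem:relevance} upgrades linearity of the output to ``$z_{\uty}$ occurs exactly once syntactically,'' which is what makes the replacement of $z_{\uty_{j_1}}$ and $z_{\uty_{j_2}}$ by $\Hole$ produce genuine single-hole contexts. Since your final paragraph already invokes exactly these lemmas for the linearity of $(G,H)$, the fix is to delete the unbalancedness detour and route the single-hole claim through Lemmas~\ref{lem:linPres} and~\ref{lem:relevance} as well; with that repair the argument matches the paper's proof.
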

\begin{proof}
Since \(C[D^j[t]]\) is a ground closed term, we have \(\pf C[D^j[t]] : \T \tr u_0\) for some \(u_0\).
By Lemmas~\ref{lem:desubstFst},~\ref{lem:linPres},~\ref{lem:iteLin}, and the %
 linearity, we have:
\begin{align*}&
z: \uty_0 \pf C[z] : \T \tr v_0
\\&
z: \uty_{i} \pf D[z] : \uty_{i-1} \tr v_{i} \quad(i=1,\dots,j)
\\&
\pf t : \uty_j \tr u'
\\&
u_0=
[[\dots
[u'/z_{\uty_j}]v_j
\dots
/z_{\uty_1}]v_1
/z_{\uty_0}]v_0
\end{align*}

Now we use a ``pumping'' argument:
Let \(\sty\) be the simple type of \(t\)
and let \(j_0\) be the number of intersection types \(\uty\) such that \(\uty \DCOL \sty\).
When \(j = j_0+1\), among \(\uty_j\) above,
we have \(\uty_{j_1} = \uty_{j_2}\) for some \(1 \le {j_1} < {j_2} \le j\).
Then we define
\begin{align*}
G &\defe 
[\Hole/z_{\uty_{j_1}}]([[
\dots
[v_{j_1}/z_{\uty_{{j_1}-1}}]v_{{j_1}-1}
\dots
/z_{\uty_1}]v_1
/z_{\uty_0}]v_0)
\\
H &\defe
[\Hole/z_{\uty_{{j_2}}}](
[
\dots
[v_{j_2}/z_{\uty_{{j_2}-1}}]v_{{j_2}-1}
\dots
/z_{\uty_{{j_1}+1}}]v_{{j_1}+1})
\\
u &\defe
[
\dots
[u'/z_{\uty_{j}}]v_{j}
\dots
/z_{\uty_{{j_2}+1}}]v_{{j_2}+1}
\end{align*}
where \([\Hole / z_{\uty_{j}}]\) represents the replacement of unique \(z_{\uty_{j}}\) with \(\Hole\),
and this uniqueness follows from Lemma~\ref{lem:relevance}.
Since \(\uty_{j_1} = \uty_{j_2}\), for any \(i \ge 0\), \(G[H^i[u]]\) is well-typed.

Let \(c = j_2 - j_1\) and \(d = (j_0 + 1) - (j_2 - j_1)\).
Let \(i \ge 0\). 
By Lemma~\ref{lem:substFst} we have
\[
\pf C[D^{ci+d}[t]] : \T \tr 
G[H^i[u]]
\]
and hence 
\[
\tree(C[D^{ci+d}[t]]) = \remeps{\leaves(\tree(G[H^i[u]]))}
\]
by Lemma~\ref{lem:fst-crrectness}.
Also, by Lemma~\ref{lem:orderDecreaseFst}, \(G[H^i[u]]\) is order-\((n-1)\).

Finally, by Lemma~\ref{lem:substFst} we have
\[
z : \uty_{j_2} \pf C[D^{ci+j_1}[z]] : \T \tr 
G[H^i[z_{\uty_{j_2}}]]
\]
and hence
\(G[H^i[z_{\uty_{j_2}}]]\) is linear by Lemma~\ref{lem:linPres}; thus, \(G\) and \(H\) are %
 linear.
\end{proof}

For the second transformation, we have the following:
\begin{lemma}
\label{lem:iteLinPresSnd}
Given order-\(n\) \(\stlambda\)-contexts \(C\), \(D\), and order-\(n\) \(\stlambda\)-term \(t\)
such that 
\begin{itemize}
\item
the constants in \(C,\,D,\,t\) are in a \(\br\)-alphabet
\item
\(\set{ \remeps{\leaves(\tree(C[D^{\ell_i}[t]]))} \mid i \ge 0 }\) is infinite for any strictly increasing sequence
 \((\ell_i)_i\).
\item
\(C\) and \(D\) are %
 linear
\end{itemize}
there exist order-\(n\) \(\stlambda\)-contexts \(G\), \(H\), order-\(n\) \(\stlambda\)-term \(u\),
and some constant numbers \(c,\,d \ge 1\)
such that 
\begin{itemize}
\item
the constants in \(G,\,H,\,u\) are in a \(\br\)-alphabet
\item
for \(i \ge 0\),
\[
\remeps{\leaves(\tree(C[D^{ci+d}[t]]))} = 
\begin{cases}
\empword & (\leaves(\tree(G[H^i[u]])) = \Te)
\\
\leaves(\tree(G[H^i[u]])) & (\leaves(\tree(G[H^i[u]])) \neq \Te)
\end{cases}
\]
\item
\(G\) and \(H\) are %
 linear.
\end{itemize}
\end{lemma}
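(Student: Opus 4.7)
The plan is to mirror the proof of Lemma~\ref{lem:iteLinPresFst}, replacing the first-transformation type system $\pf$ with the second-transformation type system $\ps$ throughout, and using Lemma~\ref{lem:snd-crrectness} in place of Lemma~\ref{lem:fst-crrectness} for the meaning-preservation step. Since $C[D^j[t]]$ is a ground closed term over a $\br$-alphabet, subject reduction/expansion for $\ps$ (Lemma~\ref{lem:sndSubConv}) gives $\ps C[D^j[t]] : \T \tr u_0$ for some $u_0$. The strategy is then to decompose this derivation via de-substitution, apply a pigeonhole argument in the finite set of intersection types refining the simple type of $t$, and read off $G$, $H$, $u$ from the resulting decomposition.

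First I would prove analogues of the substitution and de-substitution lemmas (Lemmas~\ref{lem:substFst} and~\ref{lem:desubstFst}) for $\ps$. These go by straightforward induction on the derivation; the $\ps$-system is structurally simpler than the $\pf$-system, since there are no special rules analogous to \textsc{FTr-App2} or \textsc{FTr-Abs2} that introduce $\br$-nodes or silently substitute $\Te$ for eliminated variables. Next I would establish the preservation of linearity, i.e., the analogue of Lemma~\ref{lem:linPres}: if $\TTE \ps t : \T \tr u$ and $t$ is linear, then each source variable has at most one binding in $\TTE$ and $u$ is linear. The proof follows the same pattern as for $\pf$, but the underlying subject-reduction step is cleaner because $\ps$ treats all arrow positions uniformly, so it reduces to the analogue of Lemma~\ref{lem:linStrNorm} for normal forms together with a direct simulation of $t \red t'$ by $u \reds u'$.

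With these tools in place, I would run the pumping argument of Lemma~\ref{lem:iteLinPresFst}: decompose into $z \COL \tty_0 \ps C[z] : \T \tr v_0$, $z \COL \tty_i \ps D[z] : \tty_{i-1} \tr v_i$ for $i = 1,\dots,j$, and $\ps t : \tty_j \tr u'$; by the linearity results and the linearity of $C$ and $D$, each binding on $z$ is unique. Since the set $\{\tty \mid \tty \DCOL \sty\}$ is finite of some size $j_0$ (where $\sty$ is the simple type of $t$), taking $j = j_0 + 1$ yields indices $1 \le j_1 < j_2 \le j$ with $\tty_{j_1} = \tty_{j_2}$; then $G$, $H$, $u$ are defined exactly as in the proof of Lemma~\ref{lem:iteLinPresFst}, with $c = j_2 - j_1$ and $d = j_0 + 1 - (j_2 - j_1)$. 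Meaning preservation then follows from the substitution lemma together with Lemma~\ref{lem:snd-crrectness}, which directly yields the $\empword$-vs-$\leaves$ case split in the conclusion.

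The main obstacle I anticipate is the careful case analysis inside the linearity-preservation lemma, particularly around the rule \textsc{STr-Const2}, whose three sub-cases (both subterms of type $\Tempty$, one of each, or both $\Tplus$) discard different halves of the $\br$-pair in the image and thus have genuinely different effects on where variables occur in $u$. A secondary subtlety is that the infiniteness hypothesis on $\{\remeps{\leaves(\tree(C[D^{\ell_i}[t]]))} \mid i \ge 0\}$ must be used to rule out the degenerate situation in which the pumped cycle always produces a $\Tempty$-typed subtree at the pumping site, which would collapse the output to $\Te$; once a single non-degenerate witness is located along the cycle, periodicity of the pumped type derivation propagates it to infinitely many~$i$, and the conclusion of the lemma follows uniformly.
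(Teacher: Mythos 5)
Your proposal matches the paper's approach: the paper gives no separate proof of this lemma, remarking only that the argument for the second transformation is analogous to the detailed proof of Lemma~\ref{lem:iteLinPresFst}, and your plan is exactly that analogous argument (substitution/de-substitution for the $\ps$ system, linearity preservation, pigeonhole on the finitely many intersection types refining the simple type of $t$, and Lemma~\ref{lem:snd-crrectness} for meaning preservation, which directly yields the $\empword$-versus-leaves case split). You also correctly identify the one place where ``analogous'' is not literal---the rule \textsc{STr-Const2} can discard the subderivation containing the hole variable, so the binding count at the pumping site can drop to zero, and the infiniteness hypothesis is what excludes that degenerate case.
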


Lemma~\ref{lem:iteLinPres} is a corollary of Lemmas~\ref{lem:iteLinPresFst} and~\ref{lem:iteLinPresSnd}.

\section{Properties of Homeomorphic Embedding on Tree Functions}
\label{sec:treeFunction}

This section shows the well-definedness of \(\tle_{\sty}\) and its reflexivity and transitivity.
We write \(\eqbetaeta\) for the \(\beta\eta\)-equivalence relation on the simply-typed
\(\lambda\)-terms.
\begin{lemma}[well-definedness]
\label{lem:red-preserves-tle}
Let \(t,s,t',s'\) be terms of type \(\sty\).
If \(t\eqbetaeta t'\) and \(s\eqbetaeta s'\), then
\(t\tle_{\sty} s\) if and only if \(t'\tle_{\sty} s'\).
\end{lemma}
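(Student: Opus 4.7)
The plan is to prove this by straightforward induction on the simple type $\sty$, the point being that the definition of $\tle_\sty$ is purely compositional and at the base type it factors through $\tree(\cdot)$, which is visibly $\beta\eta$-invariant on closed ground $\stlambda$-terms. So the whole statement will follow once the base case is handled, by unfolding the definition at arrow types and invoking the induction hypothesis at the smaller type.

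For the base case $\sty = \T$, I would argue as follows. Both $t, t'$ are closed $\stlambda$-terms of ground type, so by strong normalization they each reduce to a unique tree, and $\eqbetaeta$ between closed ground $\stlambda$-terms forces their trees to coincide: $\tree(t) = \tree(t')$, and likewise $\tree(s) = \tree(s')$. Hence $\tree(t) \tle \tree(s)$ iff $\tree(t') \tle \tree(s')$, which is exactly $t \tle_\T s$ iff $t' \tle_\T s'$ by the definition of $\tle_\T$.

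For the inductive case $\sty = \sty_1 \to \sty_2$, assume $t \tle_\sty s$ and let $u_1, u_2$ with $u_1 \tle_{\sty_1} u_2$ be arbitrary closed $\stlambda$-terms of type $\sty_1$; I want to show $t' u_1 \tle_{\sty_2} s' u_2$. By hypothesis applied to the pair $(u_1, u_2)$, we already have $t u_1 \tle_{\sty_2} s u_2$. Since $\eqbetaeta$ is a congruence, $t \eqbetaeta t'$ gives $t u_1 \eqbetaeta t' u_1$, and similarly $s u_2 \eqbetaeta s' u_2$. Now apply the induction hypothesis at the strictly smaller type $\sty_2$ to the pairs $(t u_1, t' u_1)$ and $(s u_2, s' u_2)$: from $t u_1 \tle_{\sty_2} s u_2$ we conclude $t' u_1 \tle_{\sty_2} s' u_2$, as required. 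The converse direction is symmetric.

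There is no real obstacle here: the argument is essentially bookkeeping around the standard fact that a logical relation defined on terms in terms of a base-type predicate that is $\beta\eta$-invariant is itself $\beta\eta$-invariant at every type. The only thing one has to be a bit careful about is quantifying the induction hypothesis correctly: at the step for $\sty_1 \to \sty_2$ I use it only at the type $\sty_2$, applied to two particular pairs of $\beta\eta$-equivalent terms, which is legitimate since $\sty_2$ is structurally smaller than $\sty$. This also confirms that the quotient of closed $\stlambda$-terms by $\eqbetaeta$ is the right setting in which to state Conjectures~\ref{conj:abstHighKruskal} and~\ref{conj:abstHighPeriodic}.
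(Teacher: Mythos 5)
Your proof is correct and takes essentially the same approach as the paper: the paper writes $\sty=\sty_1\to\cdots\to\sty_k\to\T$ and unfolds the logical relation all the way to ground type in one chain of equivalences, using exactly your two ingredients (congruence of $\eqbetaeta$ under application, and $\tree(t)=\tree(t')$ for $\beta\eta$-equivalent closed ground terms). Your version merely packages that unfolding as an explicit structural induction on the type, which is a cosmetic difference.
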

\begin{proof}
Let \(\sty=\sty_1\to\cdots\to\sty_k\to\T\).
Then 
\[\begin{array}{l}
t\tle_{\sty} s\\
\IFF t\, s_1\,\cdots\,s_k \tle_{\T}s\, s'_1\,\cdots\,s'_k 
\mbox{ for every \(s_1,\ldots,s_k,s'_1,\ldots,s'_k\) such that
\(s_i\tle_{\sty_i}s_i'\)}\\
\IFF \eval{t\, s_1\,\cdots\,s_k} \tle \eval{s\, s'_1\,\cdots\,s'_k}
\mbox{ for every \(s_1,\ldots,s_k,s'_1,\ldots,s'_k\) such that
\(s_i\tle_{\sty_i}s_i'\)}\\
\IFF \eval{t'\, s_1\,\cdots\,s_k} \tle \eval{s'\, s'_1\,\cdots\,s'_k}
\mbox{ for every \(s_1,\ldots,s_k,s'_1,\ldots,s'_k\) such that
\(s_i\tle_{\sty_i}s_i'\)}\\
\IFF t'\, s_1\,\cdots\,s_k \tle_{\T}s'\, s'_1\,\cdots\,s'_k 
\mbox{ for every \(s_1,\ldots,s_k,s'_1,\ldots,s'_k\) such that
\(s_i\tle_{\sty_i}s_i'\)}\\
\IFF t'\tle_{\sty} s'
\end{array}
\]
\end{proof}

The following is the abstraction lemma of the logical relation.
\begin{lemma}
\label{lem:Gtle-reflexivity}
If \(\TE \p t:\sty\), then \(\TE \models t \tle_{\sty} t\).
\end{lemma}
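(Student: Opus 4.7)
The plan is to prove this by induction on the derivation of $\TE \p t:\sty$, after unfolding the meaning of $\TE \models t \tle_\sty t$: for every pair of substitutions $\theta=[s_i/x_i]_i$ and $\theta'=[s'_i/x_i]_i$ with $s_i \tle_{\sty_i} s'_i$ for each $i$ (where $\TE = x_1\COL\sty_1,\ldots,x_k\COL\sty_k$), we must show $\theta t \tle_\sty \theta' t$. Note that since the definition of $\tle_{\sty_i}$ forces each $s_i$ and $s'_i$ to be closed of type $\sty_i$, both $\theta t$ and $\theta' t$ are automatically closed of type $\sty$, which is required by the definition of $\tle_\sty$.

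The variable case $t=x_i$ is immediate: $\theta t = s_i \tle_{\sty_i} s'_i = \theta' t$ by assumption. The application case $t = t_1\,t_2$ follows directly from the induction hypothesis and the arrow-type clause of the logical relation: $\theta t_1 \tle_{\sty_2 \to \sty} \theta' t_1$ and $\theta t_2 \tle_{\sty_2} \theta' t_2$ together yield $\theta(t_1 t_2) \tle_\sty \theta'(t_1 t_2)$. The constant case $t = a\,t_1\cdots t_k$ of type $\T$ uses the first rule of Definition~\ref{def:homeoEmb}: by the induction hypothesis, $\tree(\theta t_i) \tle \tree(\theta' t_i)$ for each $i$, and since both trees have root $a$ with matching arity, $\tree(\theta(a\,t_1\cdots t_k)) = a\,\tree(\theta t_1)\cdots\tree(\theta t_k) \tle a\,\tree(\theta' t_1)\cdots\tree(\theta' t_k) = \tree(\theta'(a\,t_1\cdots t_k))$, giving $\theta t \tle_\T \theta' t$.

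The main step is the abstraction case $t = \lambda x\COL\sty_1.t_0$ of type $\sty_1 \to \sty_2$. By the arrow-type clause, we must show that for every pair $s \tle_{\sty_1} s'$, $(\theta(\lambda x.t_0))\,s \tle_{\sty_2} (\theta'(\lambda x.t_0))\,s'$. By a single $\beta$-step, $(\lambda x.\theta t_0)\,s \eqbetaeta (\theta,[s/x])t_0$ and likewise on the primed side. Invoking Lemma~\ref{lem:red-preserves-tle} (well-definedness of $\tle_\sty$ modulo $\eqbetaeta$), it suffices to show $(\theta,[s/x])t_0 \tle_{\sty_2} (\theta',[s'/x])t_0$, which is exactly the induction hypothesis applied to the extended substitutions (together with $s \tle_{\sty_1} s'$).

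The main obstacle, such as it is, is the abstraction case: the arrow-type clause of $\tle$ talks about applications to closed arguments, while the naive inductive step produces a $\lambda$-abstraction under substitution. Lemma~\ref{lem:red-preserves-tle} is precisely what bridges this gap, so provided one invokes it and is careful to maintain the closedness invariant when extending $\theta$ and $\theta'$ (the added binding $[s/x]$ is closed because $s \tle_{\sty_1} s'$ implies $s$ is closed), the argument goes through uniformly.
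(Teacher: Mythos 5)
Your proof is correct and follows essentially the same route as the paper: induction on the typing derivation, with the abstraction case resolved by $\beta$-reducing the applied substituted abstraction and invoking Lemma~\ref{lem:red-preserves-tle} to transfer the goal to the extended substitutions, where the induction hypothesis applies. The paper simply dismisses the non-abstraction cases as trivial, whereas you spell them out; the substance is identical.
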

\begin{proof}
This follows by induction on the derivation of \(\TE\p t:\sty\).
Since the other cases are trivial, we show only the case where
\(t\) is a \(\lambda\)-abstraction. 
In this case, we have
\[
\begin{array}{l}
t=\lambda x_0\COL\sty_0.t' \qquad \sty=\sty_0\to\sty'\qquad
\TE,x_0\COL\sty_0 \p t':\sty'
\end{array}
\]
Let \(\TE = x_1\COL\sty_1,\ldots,x_k\COL\sty_k\).
We need to show that
\([s_1/x_1,\ldots,s_k/x_k]t \tle_{\sty} 
[s'_1/x_1,\ldots,s'_k/x_k]t \)
holds for every \(s_1,\ldots,s_k,s'_1,\ldots,s'_k\) such that \(s_i\tle_{\sty_i}s'_i\)
for each \(i\in\set{1,\ldots,k}\).
i.e.,
\[([s_1/x_1,\ldots,s_k/x_k]t) s_0 \tle_{\sty'} 
([s'_1/x_1,\ldots,s'_k/x_k]t) s_0' \]
holds for every \(s_0,s_1,\ldots,s_k,s'_0,s'_1,\ldots,s'_k\) such that \(s_i\tle_{\sty_i}s'_i\)
for each \(i\in\set{0,1,\ldots,k}\).
By Lemma~\ref{lem:red-preserves-tle}, it suffices to show
\[[s_0/x_0,s_1/x_1,\ldots,s_k/x_k]t' \tle_{\sty'} 
[s'_0/x_0,s'_1/x_1,\ldots,s'_k/x_k]t'. \]
This follows immediately from 
the induction hypothesis and \(\TE,x_0\COL\sty_0\p t':\sty'\).
\end{proof}

We are now ready to show that \(\tle_{\sty}\) is reflexive and transitive.
\begin{lemma}
\label{lem:tle-is-transitive}
\(\tle_{\sty}\) is reflexive and transitive.
\end{lemma}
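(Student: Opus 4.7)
The plan is to treat the two claims separately: reflexivity will drop out almost immediately from the already-established abstraction lemma, while transitivity will require induction on the simple type together with reflexivity at lower types.

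For reflexivity, I would simply instantiate Lemma~\ref{lem:Gtle-reflexivity} with the empty type environment: every closed term \(t\) of type \(\sty\) satisfies \(\emptyset \models t \tle_\sty t\), which is by definition the same as \(t \tle_\sty t\).

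For transitivity, the proof goes by induction on \(\sty\). In the base case \(\sty = \T\), given \(t_1 \tle_\T t_2\) and \(t_2 \tle_\T t_3\) we have \(\tree(t_1) \tle \tree(t_2) \tle \tree(t_3)\), and transitivity reduces to the classical fact that the homeomorphic embedding \(\tle\) on \(\TERMS\)-ranked trees is itself transitive (i.e., a quasi-order), which can be verified by an easy induction on the derivation of the second inequality. In the inductive case \(\sty = \sty_1\to\sty_2\), assume \(t_1\tle_\sty t_2\) and \(t_2\tle_\sty t_3\), and pick any \(s_1,s_3\) with \(s_1\tle_{\sty_1}s_3\). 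The trick is to interpose the term \(s_1\) (or \(s_3\)) on the left: by reflexivity at \(\sty_1\) (which I already have from the first part of the lemma) we also have \(s_1\tle_{\sty_1}s_1\), so \(t_1\tle_\sty t_2\) yields \(t_1\,s_1\tle_{\sty_2}t_2\,s_1\); and \(t_2\tle_\sty t_3\) with \(s_1\tle_{\sty_1}s_3\) gives \(t_2\,s_1\tle_{\sty_2}t_3\,s_3\). The induction hypothesis at \(\sty_2\) then chains these two facts into \(t_1\,s_1\tle_{\sty_2}t_3\,s_3\), which is exactly what the definition of \(\tle_\sty\) at arrow type requires.

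There is essentially no hard step here, since the heavy lifting was already done by the abstraction lemma (Lemma~\ref{lem:Gtle-reflexivity}). The only subtlety to be careful about is that the definition of \(\tle_\sty\) at arrow type demands comparing \(t_1\,s_1\) against \(t_3\,s_3\) with possibly different arguments, so we cannot avoid the intermediate application \(t_2\,s_1\) and therefore genuinely need reflexivity at \(\sty_1\) before we can invoke the inductive hypothesis at \(\sty_2\); this is why establishing reflexivity first, via the abstraction lemma, is essential to the argument.
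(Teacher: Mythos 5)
Your proposal is correct and follows essentially the same route as the paper's proof: reflexivity as the closed-term instance of the abstraction lemma, and transitivity by induction on the type, interposing \(t_2\,s_1\) via reflexivity at \(\sty_1\) and chaining with the induction hypothesis at \(\sty_2\). The only cosmetic difference is that you spell out the base case (transitivity of \(\he\) on trees) which the paper leaves as immediate from the definition.
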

\begin{proof}
The reflexivity follows immediately as a special case of Lemma~\ref{lem:Gtle-reflexivity},
where \(\Gamma=\emptyset\).
We show the transitivity by induction on \(\sty\).
The base case follows immediately from the definition.
For the induction step, suppose 
\(t_1\tle_{\sty_1\to\sty_2} t_2\) and \(t_2\tle_{\sty_1\to\sty_2} t_3\).
Suppose also \(s_1\tle_{\sty_1}s_3\). We need to show \(t_1s_1\tle_{\sty_2}t_3s_3\).
Since \(s_1\tle_{\sty_1}s_1\), we have 
\(t_1s_1\tle_{\sty_2} t_2s_1\) and \(t_2s_1\tle_{\sty_2} t_3s_3\).
By the induction hypothesis, we obtain 
\(t_1s_1\tle_{\sty_2} t_3s_3\) as required.
\end{proof}

As a corollary, it follows that every tree function represented by the
simply-typed \(\lambda\)-calculus is monotonic.
\begin{corollary}[monotonicity]
\label{lem:monotonicity}
If \(\emptyset\p t :\sty_1\to \sty_2\), then 
\(t\,s_1 \tle_{\sty_2} t\,s'_1\)
for every \(s_1,s_1'\) such that
\(s_1\tle_{\sty_1}s_1'\).
\end{corollary}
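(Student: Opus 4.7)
The plan is to derive this directly from the abstraction lemma (Lemma~\ref{lem:Gtle-reflexivity}) combined with the defining clause of the logical relation at an arrow type; no induction or extra machinery is needed.

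First I would invoke Lemma~\ref{lem:Gtle-reflexivity} on the hypothesis \(\emptyset \p t : \sty_1 \to \sty_2\) to obtain \(\emptyset \models t \tle_{\sty_1\to\sty_2} t\). Since the type environment is empty, there are no substitutions to perform, so this conclusion is just the plain statement \(t \tle_{\sty_1\to\sty_2} t\).

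Next I would unfold the definition of \(\tle_{\sty_1\to\sty_2}\): the relation \(t_1 \tle_{\sty_1\to\sty_2} t_2\) was defined to mean that \(t_1 s \tle_{\sty_2} t_2 s'\) holds for every pair \(s \tle_{\sty_1} s'\). Applying this to \(t_1 = t_2 = t\) with the given pair \(s_1 \tle_{\sty_1} s_1'\) yields exactly \(t\,s_1 \tle_{\sty_2} t\,s_1'\), as required.

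There is no real obstacle here; the content is packaged entirely inside Lemma~\ref{lem:Gtle-reflexivity}, whose proof in turn relies on the standard abstraction-lemma induction. The only thing worth being careful about is the bookkeeping between the relation \(\TE \models t \tle_{\sty} t'\) (which quantifies over closing substitutions) and \(\tle_{\sty}\) itself (which is defined only on closed terms); when \(\TE = \emptyset\) these coincide, which is what makes the corollary immediate.
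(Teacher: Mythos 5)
Your proposal is correct and matches the paper's intended derivation: the corollary is stated immediately after reflexivity (itself the \(\Gamma=\emptyset\) instance of Lemma~\ref{lem:Gtle-reflexivity}), and the conclusion follows by unfolding the arrow-type clause of the logical relation exactly as you describe. Your remark about the coincidence of \(\emptyset \models t \tle_\sty t\) with the closed-term relation \(t \tle_\sty t\) is the only bookkeeping point, and you handle it correctly.
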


\end{document}